\keywords{algebraic effects, scoped effects, calculus, operational semantics,
  type- and effect system}
\theoremstyle{plain} %
\newcommand{\mathscalefactor}{1}
   \newcommand\SkipToFmtEnd{}%
   \newcommand\EndFmtInput{}%
   \long\def\SkipToFmtEnd#1\EndFmtInput{}%
\newcommand\ReadOnlyOnce[1]{\@ifundefined{#1}{\@namedef{#1}{}}\SkipToFmtEnd}
\DeclareFontFamily{OT1}{cmtex}{}
\DeclareFontShape{OT1}{cmtex}{m}{n}
  {<5><6><7><8>cmtex8
   <9>cmtex9
   <10><10.95><12><14.4><17.28><20.74><24.88>cmtex10}{}
\DeclareFontShape{OT1}{cmtex}{m}{it}
  {<-> ssub * cmtt/m/it}{}
\DeclareFontShape{OT1}{cmtt}{bx}{n}
  {<5><6><7><8>cmtt8
   <9>cmbtt9
   <10><10.95><12><14.4><17.28><20.74><24.88>cmbtt10}{}
\DeclareFontShape{OT1}{cmtex}{bx}{n}
  {<-> ssub * cmtt/bx/n}{}
\newcommand{\Conid}[1]{\mathit{#1}}
\newcommand{\Varid}[1]{\mathit{#1}}
\newcommand{\anonymous}{\kern0.06em \vbox{\hrule\@width.5em}}
\newcommand{\bind}{\mathbin{>\!\!\!>\mkern-6.7mu=}}
\renewcommand{\leq}{\leqslant}
\newdimen\mathindent\mathindent\leftmargini}%
\def\resethooks{%
  \global\let\SaveRestoreHook\empty
  \global\let\ColumnHook\empty}
\newcommand*{\savecolumns}[1][default]%
  {\g@addto@macro\SaveRestoreHook{\savecolumns[#1]}}
\newcommand*{\restorecolumns}[1][default]%
  {\g@addto@macro\SaveRestoreHook{\restorecolumns[#1]}}
\newcommand*{\aligncolumn}[2]%
  {\g@addto@macro\ColumnHook{\column{#1}{#2}}}
\newcommand{\onelinecommentchars}{\quad-{}- }
\newcommand{\commentbeginchars}{\enskip\{-}
\newcommand{\commentendchars}{-\}\enskip}
\newcommand{\visiblecomments}{%
  \let\onelinecomment=\onelinecommentchars
  \let\commentbegin=\commentbeginchars
  \let\commentend=\commentendchars}
\newcommand{\invisiblecomments}{%
  \let\onelinecomment=\empty
  \let\commentbegin=\empty
  \let\commentend=\empty}
\newlength{\blanklineskip}
\newcommand{\hsindent}[1]{\quad}%
\let\hspre\empty
\let\hspost\empty
\newcommand{\hsnewpar}[1]%
  {{\parskip=0pt\parindent=0pt\par\vskip #1\noindent}}
\newcommand{\hscodestyle}{}
\newcommand{\sethscode}[1]%
  {\expandafter\let\expandafter\hscode\csname #1\endcsname
   \expandafter\let\expandafter\endhscode\csname end#1\endcsname}
   \let\hspre\(\let\hspost\)%
   \let\hspre\(\let\hspost\)%
\newcommand{\plainhs}{\sethscode{plainhscode}}
\def\codeframewidth{\arrayrulewidth}
   \let\endoflinesave=\\
   \framedhslinecorrect\endoflinesave{.5ex}\hline
\newcommand{\framedhslinecorrect}[2]%
  {#1[#2]}
\def\column##1##2{}%
   \newcommand\>[1][]{}\newcommand\<[1][]{}\newcommand\\[1][]{}%
   \def\fromto##1##2##3{##3}%
\let\orighscode=\hscode
   \let\origendhscode=\endhscode
   \def\endhscode{\def\hscode{\endgroup\def\@currenvir{hscode}\\}\begingroup}
\def\hscode{\endgroup\def\@currenvir{hscode}}}%
   \global\let\hscode=\orighscode
   \global\let\endhscode=\origendhscode}%
\newlist{qwq}{itemize}{1}
\setlist[qwq]{label={}, nosep, leftmargin=0.5cm}
\newcommand{\indentbegin}{\begin{qwq} \item}
  \newcommand{\indentend}{\end{qwq}}
\newcommand{\grammarspacing}{0.1cm}
\newcommand{\xmark}{\ding{55}}%
\newcommand{\ruleform}[1]{\fbox{$#1$}}
\crefname{lem}{lemma}{lemmas}
\Crefname{lem}{Lemma}{Lemmas}
\crefname{thm}{theorem}{theorems}
\Crefname{thm}{Theorem}{Theorems}
\begin{document}

\title{A Calculus for Scoped Effects \& Handlers}

\author[R.~Bosman]{Roger Bosman$^*$\lmcsorcid{0000-0002-6693-4653}}[a]
\author[B.~van den Berg]{Birthe van den Berg$^*$\lmcsorcid{0000-0002-0088-9546}}[a]
\author[W.~Tang]{Wenhao Tang$^*$\lmcsorcid{0009-0000-6589-3821}}[b]
\author[T.~Schrijvers]{Tom Schrijvers\lmcsorcid{0000-0001-8771-5559}}[a]

\address{KU Leuven, Celestijnenlaan 200A, Belgium}
\email{\{roger.bosman/birthe.vandenberg/tom.schrijvers\}@kuleuven.be}

\address{The University of Edinburgh, 10 Crichton Street, United Kingdom}
\email{wenhao.tang@ed.ac.uk}

\begin{abstract}
Algebraic effects \& handlers have become a standard approach for 
side-effects in functional programming.
Their modular composition between different effects and clean separation of syntax and
semantics make them attractive to a wide audience.
However, not all effects can be classified as algebraic; some need a more
sophisticated handling.
In particular, effects that have or create a delimited scope need special care,
as their continuation consists of two parts---in and out of
the scope---and their modular composition introduces additional complexity.
These effects are called \emph{scoped} and have gained attention by their
growing applicability and adoption in popular libraries.
While calculi have been designed with algebraic effects \& handlers 
built in to facilitate their use, a
calculus that supports scoped effects \& handlers in a similar manner 
does not yet exist.
This work fills this gap: we present \ensuremath{\lambda_{\mathit{sc}}}, a calculus with native support
for both algebraic and scoped effects \& handlers.
The most novel part of \ensuremath{\lambda_{\mathit{sc}}} is the support for modular composition
of different scoped effects \& handlers by extending handlers with
forwarding clauses, which make \ensuremath{\lambda_{\mathit{sc}}} much more expressive than
existing calculi with algebraic effects \& handlers.
Our calculus is based on Eff, an existing calculus for algebraic effects, extended
with Koka-style row polymorphism, and
consists of a formal grammar, operational semantics, and a (type-safe) type-and-effect system.
We give a prototype implementation of \ensuremath{\lambda_{\mathit{sc}}} with type inference and
demonstrate \ensuremath{\lambda_{\mathit{sc}}} on a range of examples.
\end{abstract}

\maketitle
\def\thefootnote{*}\footnotetext{These authors contributed equally to this work.}\def\thefootnote{\arabic{footnote}}

\section{Introduction}
\label{sec:introductions}

While monads \cite{Moggi89,Moggi95,DBLP:conf/afp/Wadler95} have long been the
go-to approach for modelling effects, \emph{algebraic effects \& handlers}
\cite{Plotkin03,Plotkin09} are steadily gaining more traction. They offer a more
structured and modular approach to composing effects, based on an algebraic
model. The approach consists of two parts: effects denote the syntax of
operations, and handlers interpret them by means of structural recursion.
By composing handlers that each interpret only a part of the syntax in the
desired order, one can modularly build an interpretation for the entire program.
Algebraic
effects \& handlers  have been adopted
in several libraries (e.g., fused-effects \cite{fused-effects},
extensible-effects \cite{extensible-effects}, Eff in OCaml \cite{eff-ocaml})
and languages (e.g., Eff~\cite{pretnar15}, Links \cite{Hillerstrom16}, Koka \cite{Leijen17}, Effekt
\cite{effekt}).

Although the modular approach of algebraic effects \& handlers is desirable for
every effectful program, it is not always applicable.
In particular, those effects that have or introduce a delimited scope (e.g.,
exceptions, concurrency, local state) are not algebraic.
Essentially, these so-called \emph{scoped effects} \cite{wu14} split the program
into two: a part \emph{in} the scope of the effect (called the scoped
computation) and a part \emph{out} of the scope (called the continuation).

This separation breaks algebraicity, which states that operations commute with
sequencing.
Modeling scoped effects as handlers \cite{Plotkin03} has been proposed as a way
of encoding scoped effects in an algebraic framework.
However, this comes at the cost of modularity \cite{DBLP:conf/esop/YangPWBS22}
(and thus expressiveness).
Wu et al.\ proposes to treat scoped effects as separate, built-in operations,
and present their solution in a denotational setting.
However, their solution of modular composition of scoped handlers is rather
ad-hoc, and not a generalization of the algebraic case.
Therefore, a treatment of scoped effects \& handlers in an operational setting
with full support for modular composition is desirable.
The growing interest in scoped effects \& handlers, evidenced by their adoption
at GitHub~\cite{DBLP:journals/pacmpl/ThomsonRWS22} and in Haskell libraries
(e.g., eff~\cite{eff}, polysemy~\cite{polysemy},
fused-effects~\cite{fused-effects}), motivates the need for such a calculus, all
the more because they use the same ad-hoc approach of modular composition as
suggested by Wu et al.

This paper aims to fill this gap in the literature: we cover scoped effects \&
handlers, which in previous work has only been covered in a denotational
setting, in an operational setting by presenting \ensuremath{\lambda_{\mathit{sc}}}, a calculus with scoped
effects \& handlers.
Our main source of inspiration is Eff \cite{DBLP:conf/calco/BauerP13,bauer15,pretnar15}, %
a calculus for algebraic effects \& handlers, effectively easing programming with those features.
Although Eff is an appropriate starting point, the extension to support scoped effects
\& handlers is non-trivial, for two reasons.
First, scoped effects require polymorphic handlers, which we
support by let-polymorphism and \ensuremath{F_{\omega}}-style type operators. %
Second, we need to be able to forward unknown operations in order to keep the desired modularity.
Whereas algebraic effects \& handlers have a uniform (and implicit) forwarding
mechanism, scoped effects \& handlers need explicit forwarding clauses in
order to allow sufficient freedom in their implementation.

In what follows,
after introducing the appropriate background (\Cref{sec:background-motivation})
and informally motivating the challenges and design choices of our calculus
(\Cref{sec:design-decisions}), we formalize \ensuremath{\lambda_{\mathit{sc}}}.
We make the following contributions:
\begin{itemize}
\item We present our solution for the modular composition of scoped handlers,
  called forwarding (\Cref{sec:scoped-result-type}), and generalize it later
  (\Cref{sec:genfwd}).
\item We design formal syntax for \ensuremath{\lambda_{\mathit{sc}}} terms, types and contexts
(\Cref{sec:syntax}).
\item We provide an operational semantics (\Cref{sec:operational-semantics}).
\item We define a type-and-effect system for \ensuremath{\lambda_{\mathit{sc}}} (\Cref{sec:typing}).
\item We formulate and prove \ensuremath{\lambda_{\mathit{sc}}}'s metatheoretical properties (\Cref{sec:metatheory}).
\item We show the usability of our calculus on a range of examples (\Cref{sec:examples}). %
\item We provide an interpreter of our calculus with type inference in which we implement all our examples (supplementary material).
\end{itemize}
\section{Background \& Motivation}
\label{sec:background-motivation}

This section provides the necessary background and motivates our goal. We review \emph{algebraic effects \& handlers} as a
modular approach to composing side-effects in effectful programs.
Next, we present \emph{scoped effects \& handlers}: effects that have or create
a delimited scope (such as \ensuremath{\mathtt{once}} for nondeterminism \cite{pirog18,wu14}), and
motivate the need for a calculus with built-in support for these scoped effects.

\subsection{Algebraic effects \& handlers}

Algebraic effects \& handlers consist of \emph{operations}, denoting
their syntax, and \emph{handlers}, denoting their semantics.
This separation gives us modular composition, which has
intrinsic value \emph{and} allows controlling effect
interaction.

\subsubsection{Algebraic Operations}
Effects are denoted by a name (or \emph{label}) and characterized by a \emph
{signature} \ensuremath{\Conid{A}\rightarrowtriangle\Conid{B}}, taking a value of type \ensuremath{\Conid{A}} and producing a value of type
\ensuremath{\Conid{B}}.
For example, \ensuremath{\mathtt{choose}\mathbin{:}()\rightarrowtriangle\mathsf{Bool}} takes a unit value and produces a boolean
(e.g., nondeterministically).
\emph{Operations} invoke effects, combining the \ensuremath{{\mathbf{op}}} keyword, an effect to
invoke, a \emph{parameter} passed to the effect, and a \emph{continuation},
containing the rest of the program.
For brevity of presentation, we follow the syntax of
Eff~\cite{pretnar15} to have an explicit continuation for every
operation.
\label{eq:cnd1}
\indentbegin \begin{hscode}\SaveRestoreHook
\column{B}{@{}>{\hspre}l<{\hspost}@{}}%
\column{3}{@{}>{\hspre}l<{\hspost}@{}}%
\column{11}{@{}>{\hspre}l<{\hspost}@{}}%
\column{36}{@{}>{\hspre}l<{\hspost}@{}}%
\column{42}{@{}>{\hspre}l<{\hspost}@{}}%
\column{57}{@{}>{\hspre}l<{\hspost}@{}}%
\column{E}{@{}>{\hspre}l<{\hspost}@{}}%
\>[3]{}\Varid{c_{\mathsf{ND1}}}\mathrel{=}{}\<[11]%
\>[11]{}{\mathbf{op}}\;\mathtt{choose}\;\mathsf{()}\;(\Varid{b}\, .\,\mathbf{if}\;\Varid{b}\;{}\<[36]%
\>[36]{}\mathbf{then}\;{}\<[42]%
\>[42]{}{\mathbf{return}}\;\mathrm{1}\;\mathbf{else}\;{}\<[57]%
\>[57]{}{\mathbf{return}}\;\mathrm{2}){}\<[E]%
\ColumnHook
\end{hscode}\resethooks
\indentend %
In accordance with its signature, \ensuremath{\mathtt{choose}} is passed \ensuremath{()}, and in the supplied
contination \ensuremath{\Varid{b}} has type \ensuremath{\mathsf{Bool}}.
As a result, \ensuremath{\Varid{c_{\mathsf{ND1}}}} is a computation that returns either 1 or 2.

Some operations commute with sequencing.
For example:\indentbegin \begin{hscode}\SaveRestoreHook
\column{B}{@{}>{\hspre}l<{\hspost}@{}}%
\column{3}{@{}>{\hspre}l<{\hspost}@{}}%
\column{10}{@{}>{\hspre}l<{\hspost}@{}}%
\column{E}{@{}>{\hspre}l<{\hspost}@{}}%
\>[10]{}\mathbf{do}\;\Varid{x}\leftarrow {\mathbf{op}}\;\mathtt{choose}\;\mathsf{()}\;(\Varid{b}\, .\,\mathbf{if}\;\Varid{b}\;\mathbf{then}\;{\mathbf{return}}\;\mathrm{1}\;\mathbf{else}\;{\mathbf{return}}\;\mathrm{2})\, ;{\mathbf{return}}\;\Varid{x}^{\mathrm{2}}{}\<[E]%
\\
\>[3]{}\equiv\;{}\<[10]%
\>[10]{}{\mathbf{op}}\;\mathtt{choose}\;\mathsf{()}\;(\Varid{b}\, .\,\mathbf{do}\;\Varid{x}\leftarrow \mathbf{if}\;\Varid{b}\;\mathbf{then}\;{\mathbf{return}}\;\mathrm{1}\;\mathbf{else}\;{\mathbf{return}}\;\mathrm{2}\, ;{\mathbf{return}}\;\Varid{x}^{\mathrm{2}}){}\<[E]%
\ColumnHook
\end{hscode}\resethooks
\indentend This equivalence is an instance of the \emph{algebraicity property}, and
operations are \emph{algebraic} if they satisfy this property.
Algebraicity states that the sequencing
of a computation \ensuremath{\Varid{c}_{2}} after an operation \ensuremath{{\mathbf{op}}\;\ell\;\Varid{v}\;(\Varid{y}\, .\,\Varid{c}_{1})} is equivalent to
sequencing the same computation after the \emph{continuation} of this operation:
\indentbegin \begin{hscode}\SaveRestoreHook
\column{B}{@{}>{\hspre}l<{\hspost}@{}}%
\column{3}{@{}>{\hspre}l<{\hspost}@{}}%
\column{E}{@{}>{\hspre}l<{\hspost}@{}}%
\>[3]{}\mathbf{do}\;\Varid{x}\leftarrow {\mathbf{op}}\;\ell\;\Varid{v}\;(\Varid{y}\, .\,\Varid{c}_{1})\, ;\Varid{c}_{2}\;\equiv\;{\mathbf{op}}\;\ell\;\Varid{v}\;(\Varid{y}\, .\,\mathbf{do}\;\Varid{x}\leftarrow \Varid{c}_{1}\, ;\Varid{c}_{2}{}\<[E]%
\ColumnHook
\end{hscode}\resethooks
\indentend \subsubsection{Handlers}
\emph{Handlers} give meaning to operations. For example, handler \ensuremath{\Varid{h_{\mathsf{ND}}}}
interprets \ensuremath{\mathtt{choose}} by collecting all the choices in a list:
\indentbegin \begin{hscode}\SaveRestoreHook
\column{B}{@{}>{\hspre}l<{\hspost}@{}}%
\column{3}{@{}>{\hspre}l<{\hspost}@{}}%
\column{13}{@{}>{\hspre}l<{\hspost}@{}}%
\column{22}{@{}>{\hspre}c<{\hspost}@{}}%
\column{22E}{@{}l@{}}%
\column{25}{@{}>{\hspre}l<{\hspost}@{}}%
\column{41}{@{}>{\hspre}l<{\hspost}@{}}%
\column{92}{@{}>{\hspre}c<{\hspost}@{}}%
\column{92E}{@{}l@{}}%
\column{E}{@{}>{\hspre}l<{\hspost}@{}}%
\>[3]{}\Varid{h_{\mathsf{ND}}}\mathrel{=}{}\<[13]%
\>[13]{}{\mathbf{handler}}\;{}\<[22]%
\>[22]{}\{\mskip1.5mu {}\<[22E]%
\>[25]{}{\mathbf{return}}\;\Varid{x}{}\<[41]%
\>[41]{}\mapsto{\mathbf{return}}\;[\mskip1.5mu \Varid{x}\mskip1.5mu]{}\<[E]%
\\
\>[22]{},{}\<[22E]%
\>[25]{}{\mathbf{op}}\;\mathtt{choose}\;\anonymous \;\Varid{k}{}\<[41]%
\>[41]{}\mapsto\mathbf{do}\;\Varid{xs}\leftarrow \Varid{k}\;\mathsf{true}\, ;\mathbf{do}\;\Varid{ys}\leftarrow \Varid{k}\;\mathsf{false}\, ;\Varid{xs}+\hspace{-0.4em}+\Varid{ys}{}\<[92]%
\>[92]{}\mskip1.5mu\}{}\<[92E]%
\ColumnHook
\end{hscode}\resethooks
\indentend This handler has two clauses. The first clause returns a singleton list in case
a value \ensuremath{\Varid{x}} is returned. The second clause, which interprets \ensuremath{\mathtt{choose}},
executes both branches by applying the continuation \ensuremath{\Varid{k}} to both
\ensuremath{\mathsf{true}} and \ensuremath{\mathsf{false}}, and concatenates their resulting lists with the \ensuremath{(+\hspace{-0.4em}+)}-operator.
We apply \ensuremath{\Varid{h_{\mathsf{ND}}}} to \ensuremath{\Varid{c_{\mathsf{ND1}}}} to obtain both of its results:
\indentbegin \begin{hscode}\SaveRestoreHook
\column{B}{@{}>{\hspre}l<{\hspost}@{}}%
\column{3}{@{}>{\hspre}c<{\hspost}@{}}%
\column{3E}{@{}l@{}}%
\column{8}{@{}>{\hspre}l<{\hspost}@{}}%
\column{E}{@{}>{\hspre}l<{\hspost}@{}}%
\>[8]{}{\mathbf{with}}\;\Varid{h_{\mathsf{ND}}}\;{\mathbf{handle}}\;\Varid{c_{\mathsf{ND1}}}{}\<[E]%
\\
\>[3]{}\leadsto^{\ast}{}\<[3E]%
\>[8]{}[\mskip1.5mu \mathrm{1},\mathrm{2}\mskip1.5mu]{}\<[E]%
\ColumnHook
\end{hscode}\resethooks
\indentend %
The algebraicity property plays a critical role in giving semantics to
algebraic effects and handlers.
For example, consider the following program which uses \ensuremath{\Varid{h_{\mathsf{ND}}}} to handle
a computation that sequences two \ensuremath{\mathtt{choose}} operations using the \ensuremath{\mathbf{do}}
syntax.
\indentbegin \begin{hscode}\SaveRestoreHook
\column{B}{@{}>{\hspre}l<{\hspost}@{}}%
\column{3}{@{}>{\hspre}l<{\hspost}@{}}%
\column{11}{@{}>{\hspre}l<{\hspost}@{}}%
\column{E}{@{}>{\hspre}l<{\hspost}@{}}%
\>[3]{}\Varid{c_{\mathsf{ND2}}}\mathrel{=}{}\<[11]%
\>[11]{}\mathbf{do}\;\Varid{p}\leftarrow {\mathbf{op}}\;\mathtt{choose}\;\mathsf{()}\;(\Varid{b}\, .\,{\mathbf{return}}\;\Varid{b})\, ;{\mathbf{op}}\;\mathtt{choose}\;\mathsf{()}\;(\Varid{q}\, .\,{\mathbf{return}}\;(\Varid{p},\Varid{q})){}\<[E]%
\ColumnHook
\end{hscode}\resethooks
\indentend %
In order to handle the first \ensuremath{\mathtt{choose}} operation, the handler \ensuremath{\Varid{h_{\mathsf{ND}}}}
needs to get access to the continuation of it. Thus, we use the
algebraicity property of \ensuremath{\mathtt{choose}} to put the second \ensuremath{\mathtt{choose}} operation
in the explicit continuation of the first \ensuremath{\mathtt{choose}} operation.
\indentbegin \begin{hscode}\SaveRestoreHook
\column{B}{@{}>{\hspre}l<{\hspost}@{}}%
\column{7}{@{}>{\hspre}l<{\hspost}@{}}%
\column{9}{@{}>{\hspre}l<{\hspost}@{}}%
\column{E}{@{}>{\hspre}l<{\hspost}@{}}%
\>[7]{}{\mathbf{with}}\;\Varid{h_{\mathsf{ND}}}\;{\mathbf{handle}}\;\Varid{c_{\mathsf{ND2}}}{}\<[E]%
\\
\>[B]{}\leadsto{}\<[7]%
\>[7]{}{\mathbf{with}}\;\Varid{h_{\mathsf{ND}}}\;{\mathbf{handle}}\;{\mathbf{op}}\;\mathtt{choose}\;\mathsf{()}\;(\Varid{b}\, .\,{}\<[E]%
\\
\>[7]{}\hsindent{2}{}\<[9]%
\>[9]{}\mathbf{do}\;\Varid{p}\leftarrow {\mathbf{return}}\;\Varid{b}\, ;{\mathbf{op}}\;\mathtt{choose}\;\mathsf{()}\;(\Varid{q}\, .\,{\mathbf{return}}\;(\Varid{p},\Varid{q}))){}\<[E]%
\\
\>[B]{}\leadsto^{\ast}{}\<[7]%
\>[7]{}[\mskip1.5mu (\mathsf{true},\mathsf{true}),(\mathsf{true},\mathsf{false}),(\mathsf{false},\mathsf{true}),(\mathsf{false},\mathsf{false})\mskip1.5mu]{}\<[E]%
\ColumnHook
\end{hscode}\resethooks
\indentend %
\vspace{0cm}
\vspace{0cm}
\vspace{0cm}
\vspace{0cm}
\vspace{0cm}

Algebraic effects \& handlers bring several interesting advantages. Most
interestingly, their separation of syntax and semantics allows a modular
composition of different effects, which in turn allows for altering the meaning
of a program by different effect interactions.

\label{sec:denotational-handlers-are-fold-intro}
For those familiar with the category-theoretical backings of folds, handler
application can be seen as a fold over an abstract syntax tree, where operations
are nodes in that tree, and handlers are the fold algebra.
The result type of the fold is the carrier of the algebra.

\subsubsection{Modular Composition}
\label{sec:modularcomposition}
Effects can be composed by combining different primitive operations.
For example, computation \ensuremath{\Varid{c_{\mathsf{c},\mathsf{g}}}} below uses \ensuremath{\mathtt{get}\mathbin{:}\mathsf{()}\rightarrowtriangle\mathsf{String}} in addition to \ensuremath{\mathtt{choose}}.
\indentbegin \begin{hscode}\SaveRestoreHook
\column{B}{@{}>{\hspre}l<{\hspost}@{}}%
\column{3}{@{}>{\hspre}l<{\hspost}@{}}%
\column{10}{@{}>{\hspre}l<{\hspost}@{}}%
\column{E}{@{}>{\hspre}l<{\hspost}@{}}%
\>[3]{}\Varid{c_{\mathsf{c},\mathsf{g}}}\mathrel{=}{}\<[10]%
\>[10]{}{\mathbf{op}}\;\mathtt{choose}\;\mathsf{()}\;(\Varid{b}\, .\,\mathbf{if}\;\Varid{b}\;\mathbf{then}\;{\mathbf{return}}\;\mathrm{1}\;\mathbf{else}\;{\mathbf{op}}\;\mathtt{get}\;\mathsf{()}\;(\Varid{x}\, .\,{\mathbf{return}}\;\Varid{x})){}\<[E]%
\ColumnHook
\end{hscode}\resethooks
\indentend \noindent
Instead of having to write a handler for each combination of effects,
algebraic effects \& handlers allow us to write a handler specific for the effect \ensuremath{\mathtt{get}},
and to compose it with the existing handler \ensuremath{\Varid{h_{\mathsf{ND}}}}.
\indentbegin \begin{hscode}\SaveRestoreHook
\column{B}{@{}>{\hspre}l<{\hspost}@{}}%
\column{3}{@{}>{\hspre}l<{\hspost}@{}}%
\column{14}{@{}>{\hspre}l<{\hspost}@{}}%
\column{23}{@{}>{\hspre}c<{\hspost}@{}}%
\column{23E}{@{}l@{}}%
\column{26}{@{}>{\hspre}l<{\hspost}@{}}%
\column{41}{@{}>{\hspre}l<{\hspost}@{}}%
\column{56}{@{}>{\hspre}l<{\hspost}@{}}%
\column{71}{@{}>{\hspre}l<{\hspost}@{}}%
\column{E}{@{}>{\hspre}l<{\hspost}@{}}%
\>[3]{}\Varid{h_{\mathsf{get}}}\mathrel{=}{}\<[14]%
\>[14]{}{\mathbf{handler}}\;{}\<[23]%
\>[23]{}\{\mskip1.5mu {}\<[23E]%
\>[26]{}{\mathbf{return}}\;\Varid{x}{}\<[41]%
\>[41]{}\mapsto{\mathbf{return}}\;\Varid{x},{}\<[56]%
\>[56]{}{\mathbf{op}}\;\mathtt{get}\;\anonymous \;\Varid{k}{}\<[71]%
\>[71]{}\mapsto\Varid{k}\;\mathrm{2}\mskip1.5mu\}{}\<[E]%
\ColumnHook
\end{hscode}\resethooks
\indentend When composing handlers \ensuremath{{\mathbf{with}}\;\Varid{h_{\mathsf{ND}}}\;{\mathbf{handle}}\;({\mathbf{with}}\;\Varid{h_{\mathsf{get}}}\;{\mathbf{handle}}\;\Varid{c_{\mathsf{c},\mathsf{g}}})}, \ensuremath{\Varid{h_{\mathsf{get}}}} is
applied first, and handles \ensuremath{\mathtt{get}}.
Since \ensuremath{\Varid{h_{\mathsf{get}}}} does not contain a clause for \ensuremath{\mathtt{choose}}, it leaves (we say
\emph{``forwards''}, see below) the \ensuremath{\mathtt{choose}} operation to be handled by another
handler.
This forwarding behavior is key to the modular reuse and composition of
handlers.
Handler \ensuremath{\Varid{h_{\mathsf{ND}}}} then takes care of the remaining effects.
\indentbegin \begin{hscode}\SaveRestoreHook
\column{B}{@{}>{\hspre}l<{\hspost}@{}}%
\column{7}{@{}>{\hspre}l<{\hspost}@{}}%
\column{E}{@{}>{\hspre}l<{\hspost}@{}}%
\>[7]{}{\mathbf{with}}\;\Varid{h_{\mathsf{ND}}}\;{\mathbf{handle}}\;({\mathbf{with}}\;\Varid{h_{\mathsf{get}}}\;{\mathbf{handle}}\;\Varid{c_{\mathsf{c},\mathsf{g}}}){}\<[E]%
\\
\>[B]{}\leadsto^{\ast}{}\<[7]%
\>[7]{}{\mathbf{with}}\;\Varid{h_{\mathsf{ND}}}\;{\mathbf{handle}}\;\Varid{c_{\mathsf{ND1}}}{}\<[E]%
\\
\>[B]{}\leadsto^{\ast}{}\<[7]%
\>[7]{}[\mskip1.5mu \mathrm{1},\mathrm{2}\mskip1.5mu]{}\<[E]%
\ColumnHook
\end{hscode}\resethooks
\indentend \subsubsection{Forwarding}
\label{sec:algebraic-forwarding}
  Critical to modular composition is the possibility to apply partial handlers,
  i.e.\ to apply handlers to computations where the handlers may only handle a
  subset of the effects present in the computations.
  As discussed in \Cref{sec:denotational-handlers-are-fold-intro}, handler
  application can be seen as a fold, where handlers form the folding algebra.
  Therefore, the partial algebra formed by partial handlers must be supplemented
  by a \emph{forwarding} algebra that interprets the effects not handled by the
  handler such that they can be handled by another handler.

  Generally, calculi for algebraic effects require that handler
  clauses leave computations as their output.
  The forwarding algebra must ``reinterpret'' computations into
  computations, which is essentially the identity function.
  Therefore, for these calculi, forwarding can be and is done uniformly, which
  means handlers do not need to give specialised forwarding clauses for effects they do not
  handle.
  For example, \ensuremath{\mathtt{choose}} is forwarded by \ensuremath{\Varid{h_{\mathsf{get}}}} by simply leaving the effect with
  the handled computation.
\indentbegin \begin{hscode}\SaveRestoreHook
\column{B}{@{}>{\hspre}l<{\hspost}@{}}%
\column{3}{@{}>{\hspre}c<{\hspost}@{}}%
\column{3E}{@{}l@{}}%
\column{7}{@{}>{\hspre}l<{\hspost}@{}}%
\column{E}{@{}>{\hspre}l<{\hspost}@{}}%
\>[7]{}{\mathbf{with}}\;\Varid{h_{\mathsf{get}}}\;{\mathbf{handle}}\;{\mathbf{op}}\;\mathtt{choose}\;()\;(\Varid{b}\, .\,{\mathbf{return}}\;\Varid{b}){}\<[E]%
\\
\>[3]{}\leadsto{}\<[3E]%
\>[7]{}{\mathbf{op}}\;\mathtt{choose}\;()\;(\Varid{b}\, .\,{\mathbf{with}}\;\Varid{h_{\mathsf{get}}}\;{\mathbf{handle}}\;{\mathbf{return}}\;\Varid{b}){}\<[E]%
\ColumnHook
\end{hscode}\resethooks
\indentend %
In the calculus we will present we cannot forward generically, and instead
utilize explicit forwarding clauses.
This is not novel: it already shows up in the algebraic case when
generalizing handler clauses when they may return types that are not
computations.
Since the return type of handlers corresponds to the carrier of the fold, when
generalizing handler clauses like this we get a fold with a carrier type
that is not a computation.
In this case, we can no longer uniformly forward the unhandled operations.
Instead, we must specify, per carrier, how to forward unhandled operations by
means of an explicit forwarding clause \cite{DBLP:conf/haskell/SchrijversPWJ19}.
Requirement of explicit forwarding algebras for algebraic effects and
handlers often happens in libraries, but not in languages,
because languages with primitive support for algebraic effects
and handlers always require the carriers of handlers to be
computations.

As stated, in the case of scoped effects, we are required to specify such an
explicit forwarding clause even when the carrier is a computation.
For now, we would like to observe that this forwarding clause is \emph{not} a
consequence of implementing scoped effects, but merely a consequence of moving
away from a (very specific) instance of algebraic effects.

\subsubsection{Effect Interaction}
One of the valuable features of the modular composition of algebraic effects \&
handlers is that effects can %
interact differently by applying their handlers in a different order.
Consider the effect \ensuremath{\mathtt{inc}\mathbin{:}\mathsf{()}\rightarrowtriangle\mathsf{Int}}, which produces an (incremented) integer.
\label{sec:hinc-intro}
The handler \ensuremath{\Varid{h_{\mathsf{inc}}}} turns computations into state-passing functions.\indentbegin \begin{hscode}\SaveRestoreHook
\column{B}{@{}>{\hspre}l<{\hspost}@{}}%
\column{3}{@{}>{\hspre}l<{\hspost}@{}}%
\column{9}{@{}>{\hspre}c<{\hspost}@{}}%
\column{9E}{@{}l@{}}%
\column{16}{@{}>{\hspre}l<{\hspost}@{}}%
\column{25}{@{}>{\hspre}c<{\hspost}@{}}%
\column{25E}{@{}l@{}}%
\column{28}{@{}>{\hspre}l<{\hspost}@{}}%
\column{42}{@{}>{\hspre}c<{\hspost}@{}}%
\column{42E}{@{}l@{}}%
\column{47}{@{}>{\hspre}l<{\hspost}@{}}%
\column{62}{@{}>{\hspre}l<{\hspost}@{}}%
\column{70}{@{}>{\hspre}c<{\hspost}@{}}%
\column{70E}{@{}l@{}}%
\column{E}{@{}>{\hspre}l<{\hspost}@{}}%
\>[3]{}\Varid{h_{\mathsf{inc}}}{}\<[9]%
\>[9]{}\mathrel{=}{}\<[9E]%
\>[16]{}{\mathbf{handler}}\;{}\<[25]%
\>[25]{}\{\mskip1.5mu {}\<[25E]%
\>[28]{}{\mathbf{return}}\;\Varid{x}{}\<[42]%
\>[42]{}\mapsto{}\<[42E]%
\>[47]{}{\mathbf{return}}\;(\boldsymbol{\lambda}\Varid{s}\, .\,{}\<[62]%
\>[62]{}{\mathbf{return}}\;(\Varid{x},\Varid{s})){}\<[E]%
\\
\>[25]{},{}\<[25E]%
\>[28]{}{\mathbf{op}}\;\mathtt{inc}\;\anonymous \;\Varid{k}{}\<[42]%
\>[42]{}\mapsto{}\<[42E]%
\>[47]{}{\mathbf{return}}\;(\boldsymbol{\lambda}\Varid{s}\, .\,{}\<[62]%
\>[62]{}\mathbf{do}\;\Varid{s'}\leftarrow \Varid{s}\mathbin{+}\mathrm{1}\, ;{}\<[E]%
\\
\>[62]{}\mathbf{do}\;\Varid{k'}\leftarrow \Varid{k}\;\Varid{s'}\, ;{}\<[E]%
\\
\>[62]{}\Varid{k'}\;\Varid{s'}){}\<[70]%
\>[70]{}\mskip1.5mu\}{}\<[70E]%
\ColumnHook
\end{hscode}\resethooks
\indentend %
The state \ensuremath{\Varid{s}} represents the current counter value. On every occurrence of
\ensuremath{\mathtt{inc}}, the incremented value is passed to the continuation twice: (1) for
updating the counter value and (2) for returning the result of the operation.
The latter is for the continuation and the former for serving the next \ensuremath{\mathtt{inc}}
operation. We use the folowing syntactic
sugar:
\indentbegin \begin{hscode}\SaveRestoreHook
\column{B}{@{}>{\hspre}l<{\hspost}@{}}%
\column{3}{@{}>{\hspre}l<{\hspost}@{}}%
\column{E}{@{}>{\hspre}l<{\hspost}@{}}%
\>[3]{}\Varid{run_{\mathsf{inc}}}\;\Varid{s}\;\Varid{c}\;\equiv\;\mathbf{do}\;\Varid{c'}\leftarrow {\mathbf{with}}\;\Varid{h_{\mathsf{inc}}}\;{\mathbf{handle}}\;\Varid{c}\, ;\Varid{c'}\;\Varid{s}{}\<[E]%
\ColumnHook
\end{hscode}\resethooks
\indentend %
\label{sec:example-cInc}
The computation \ensuremath{\Varid{c_{\mathsf{inc}}}} combines \ensuremath{\mathtt{choose}} and \ensuremath{\mathtt{inc}}:
\indentbegin \begin{hscode}\SaveRestoreHook
\column{B}{@{}>{\hspre}l<{\hspost}@{}}%
\column{3}{@{}>{\hspre}l<{\hspost}@{}}%
\column{9}{@{}>{\hspre}l<{\hspost}@{}}%
\column{35}{@{}>{\hspre}l<{\hspost}@{}}%
\column{41}{@{}>{\hspre}l<{\hspost}@{}}%
\column{69}{@{}>{\hspre}l<{\hspost}@{}}%
\column{E}{@{}>{\hspre}l<{\hspost}@{}}%
\>[3]{}\Varid{c_{\mathsf{inc}}}{}\<[9]%
\>[9]{}\mathrel{=}{\mathbf{op}}\;\mathtt{choose}\;()\;(\Varid{b}\, .\,\mathbf{if}\;\Varid{b}\;{}\<[35]%
\>[35]{}\mathbf{then}\;{}\<[41]%
\>[41]{}{\mathbf{op}}\;\mathtt{inc}\;()\;(\Varid{x}\, .\,\Varid{x}\mathbin{+}\mathrm{5})\;\mathbf{else}\;{}\<[69]%
\>[69]{}{\mathbf{op}}\;\mathtt{inc}\;()\;(\Varid{y}\, .\,\Varid{y}\mathbin{+}\mathrm{2})){}\<[E]%
\ColumnHook
\end{hscode}\resethooks
\indentend %
When handling \ensuremath{\mathtt{inc}} first, each \ensuremath{\mathtt{choose}} branch gets the same initial counter
value \ensuremath{\mathrm{0}}.
\label{eq:hdntoruninc}

\indentbegin \begin{hscode}\SaveRestoreHook
\column{B}{@{}>{\hspre}l<{\hspost}@{}}%
\column{3}{@{}>{\hspre}c<{\hspost}@{}}%
\column{3E}{@{}l@{}}%
\column{8}{@{}>{\hspre}l<{\hspost}@{}}%
\column{10}{@{}>{\hspre}l<{\hspost}@{}}%
\column{20}{@{}>{\hspre}l<{\hspost}@{}}%
\column{26}{@{}>{\hspre}l<{\hspost}@{}}%
\column{E}{@{}>{\hspre}l<{\hspost}@{}}%
\>[8]{}{\mathbf{with}}\;\Varid{h_{\mathsf{ND}}}\;{\mathbf{handle}}\;\Varid{run_{\mathsf{inc}}}\;\mathrm{0}\;\Varid{c_{\mathsf{inc}}}{}\<[E]%
\\
\>[3]{}\leadsto^{\ast}{}\<[3E]%
\>[8]{}{\mathbf{with}}\;\Varid{h_{\mathsf{ND}}}\;{\mathbf{handle}}\;{\mathbf{op}}\;\mathtt{choose}\;()\;(\Varid{b}\, .\,{}\<[E]%
\\
\>[8]{}\hsindent{2}{}\<[10]%
\>[10]{}\mathbf{do}\;\Varid{p'}\leftarrow {}\<[20]%
\>[20]{}\mathbf{if}\;\Varid{b}\;{}\<[26]%
\>[26]{}\mathbf{then}\;{\mathbf{with}}\;\Varid{h_{\mathsf{inc}}}\;{\mathbf{handle}}\;({\mathbf{op}}\;\mathtt{inc}\;()\;(\Varid{x}\, .\,\Varid{x}\mathbin{+}\mathrm{5})){}\<[E]%
\\
\>[26]{}\mathbf{else}\;{\mathbf{with}}\;\Varid{h_{\mathsf{inc}}}\;{\mathbf{handle}}\;({\mathbf{op}}\;\mathtt{inc}\;()\;(\Varid{y}\, .\,\Varid{y}\mathbin{+}\mathrm{2}))\, ;{}\<[E]%
\\
\>[8]{}\hsindent{2}{}\<[10]%
\>[10]{}\Varid{p'}\;\mathrm{0}){}\<[E]%
\\
\>[3]{}\leadsto^{\ast}{}\<[3E]%
\>[8]{}[\mskip1.5mu (\mathrm{6},\mathrm{1}),(\mathrm{3},\mathrm{1})\mskip1.5mu]{}\<[E]%
\ColumnHook
\end{hscode}\resethooks
\indentend %
In contrast, when handling \ensuremath{\mathtt{choose}} first, the counter value is threaded through
the successive branches, showing that with algebraic effects \& handlers, the
manner in which effects interact can be controlled by the order in which
handlers are applied.
\indentbegin \begin{hscode}\SaveRestoreHook
\column{B}{@{}>{\hspre}l<{\hspost}@{}}%
\column{3}{@{}>{\hspre}c<{\hspost}@{}}%
\column{3E}{@{}l@{}}%
\column{8}{@{}>{\hspre}l<{\hspost}@{}}%
\column{20}{@{}>{\hspre}l<{\hspost}@{}}%
\column{24}{@{}>{\hspre}l<{\hspost}@{}}%
\column{E}{@{}>{\hspre}l<{\hspost}@{}}%
\>[8]{}\Varid{run_{\mathsf{inc}}}\;\mathrm{0}\;({\mathbf{with}}\;\Varid{h_{\mathsf{ND}}}\;{\mathbf{handle}}\;\Varid{c_{\mathsf{inc}}}){}\<[E]%
\\
\>[3]{}\leadsto^{\ast}{}\<[3E]%
\>[8]{}\Varid{run_{\mathsf{inc}}}\;\mathrm{0}\;({}\<[20]%
\>[20]{}\mathbf{do}\;{}\<[24]%
\>[24]{}\Varid{xs}\leftarrow {\mathbf{with}}\;\Varid{h_{\mathsf{ND}}}\;{\mathbf{handle}}\;{\mathbf{op}}\;\mathtt{inc}\;()\;(\Varid{x}\, .\,\Varid{x}\mathbin{+}\mathrm{5})\, ;{}\<[E]%
\\
\>[20]{}\mathbf{do}\;{}\<[24]%
\>[24]{}\Varid{ys}\leftarrow {\mathbf{with}}\;\Varid{h_{\mathsf{ND}}}\;{\mathbf{handle}}\;{\mathbf{op}}\;\mathtt{inc}\;()\;(\Varid{y}\, .\,\Varid{y}\mathbin{+}\mathrm{2})\, ;{}\<[E]%
\\
\>[20]{}{\mathbf{return}}\;\Varid{xs}+\hspace{-0.4em}+\Varid{ys}){}\<[E]%
\\
\>[3]{}\leadsto^{\ast}{}\<[3E]%
\>[8]{}([\mskip1.5mu \mathrm{6},\mathrm{4}\mskip1.5mu],\mathrm{2}){}\<[E]%
\ColumnHook
\end{hscode}\resethooks
\indentend 

\subsection{Scoped effects}
\label{sec:scoped-eff}
Not all effects are algebraic.
There is a wide class of higher-order effects \cite{VANDENBERG2024103086} which do not satisfy
the algebraicity property including scoped effects \cite{wu14}, latent effects
\cite{DBLP:conf/aplas/BergSPW21}, and parallel effects \cite{parallel-effects}.
In this paper, we focus on scoped effects.
Consider extending nondeterminism with an effect that takes a computation that
contains nondeterministic choice and returns only its first result.
This is known as the \ensuremath{\mathtt{once}} operation~\cite{pirog18}.
Since effect operations syntactically already receive the continuation as and
argument, which looks like a computation, we might be tempted to use it for this
purpose, i.e.\ to use it to denote the scope of \ensuremath{\mathtt{once}}.
Subscripting \xmark{} to indicate an erroneous example, we could attempt to
syntactically write this as the algebraic operation \ensuremath{\mathtt{once}_{\text{\xmark}}\mathbin{:}()\rightarrowtriangle()} by
extending \ensuremath{\Varid{h_{\mathsf{ND}}}} with a clause for \ensuremath{\mathtt{once}_{\text{\xmark}}} that executes the continuation,
and then takes the head of the result if possible. Otherwise it just
returns the empty list.
We then try to prune the first \ensuremath{\mathtt{choose}} of \ensuremath{\Varid{c_{\mathsf{ND2}}}} by using \ensuremath{\mathtt{once}}.

\indentbegin \begin{hscode}\SaveRestoreHook
\column{B}{@{}>{\hspre}l<{\hspost}@{}}%
\column{3}{@{}>{\hspre}l<{\hspost}@{}}%
\column{53}{@{}>{\hspre}l<{\hspost}@{}}%
\column{E}{@{}>{\hspre}l<{\hspost}@{}}%
\>[3]{}\Varid{h_{\mathsf{once}}}_{\text{\xmark}}\mathrel{=}{\mathbf{handler}}\;\{\mskip1.5mu \ldots,{\mathbf{op}}\;\mathtt{once}_{\text{\xmark}}\;\anonymous \;\Varid{k}\mapsto{}\<[53]%
\>[53]{}\mathbf{do}\;\Varid{ts}\leftarrow \Varid{k}\;()\, ;{}\<[E]%
\\
\>[53]{}\mathbf{do}\;\Varid{b}\leftarrow \Varid{ts}=[\mskip1.5mu \mskip1.5mu]\, ;{}\<[E]%
\\
\>[53]{}\mathbf{if}\;\Varid{b}\;\mathbf{then}\;{\mathbf{return}}\;[\mskip1.5mu \mskip1.5mu]\;\mathbf{else}\;\mathsf{head}\;\Varid{ts}\mskip1.5mu\}{}\<[E]%
\ColumnHook
\end{hscode}\resethooks
\indentend \indentbegin \begin{hscode}\SaveRestoreHook
\column{B}{@{}>{\hspre}l<{\hspost}@{}}%
\column{3}{@{}>{\hspre}l<{\hspost}@{}}%
\column{17}{@{}>{\hspre}l<{\hspost}@{}}%
\column{21}{@{}>{\hspre}l<{\hspost}@{}}%
\column{40}{@{}>{\hspre}l<{\hspost}@{}}%
\column{44}{@{}>{\hspre}l<{\hspost}@{}}%
\column{E}{@{}>{\hspre}l<{\hspost}@{}}%
\>[3]{}\Varid{c_{\mathsf{once}}}_{\text{\xmark}}\mathrel{=}{}\<[17]%
\>[17]{}\mathbf{do}\;{}\<[21]%
\>[21]{}\Varid{p}\leftarrow {\mathbf{op}}\;\mathtt{once}_{\text{\xmark}}\;{}\<[40]%
\>[40]{}()\;{}\<[44]%
\>[44]{}(\anonymous \, .\,{\mathbf{op}}\;\mathtt{choose}\;()\;(\Varid{b}\, .\,{\mathbf{return}}\;\Varid{b}))\, ;{}\<[E]%
\\
\>[17]{}{\mathbf{op}}\;\mathtt{choose}\;()\;(\Varid{q}\, .\,{\mathbf{return}}\;(\Varid{p},\Varid{q})){}\<[E]%
\ColumnHook
\end{hscode}\resethooks
\indentend We intend for \ensuremath{{\mathbf{with}}\;\Varid{h_{\mathsf{once}}}_{\text{\xmark}}\;{\mathbf{handle}}\;\Varid{c_{\mathsf{once}}}_{\text{\xmark}}} to return \ensuremath{[\mskip1.5mu (\mathsf{true},\mathsf{true}),(\mathsf{true},\mathsf{false})\mskip1.5mu]} as the first \ensuremath{\mathtt{choose}} is pruned by \ensuremath{\mathtt{once}_{\text{\xmark}}} to only return
the first alternative.
The second \ensuremath{\mathtt{choose}} is out of scope of \ensuremath{\mathtt{once}_{\text{\xmark}}}, so should still return both
results.
However, algebraicity pulls the second \ensuremath{\mathtt{choose}} inside the scope of \ensuremath{\mathtt{once}_{\text{\xmark}}}:\indentbegin \begin{hscode}\SaveRestoreHook
\column{B}{@{}>{\hspre}l<{\hspost}@{}}%
\column{9}{@{}>{\hspre}c<{\hspost}@{}}%
\column{9E}{@{}l@{}}%
\column{14}{@{}>{\hspre}l<{\hspost}@{}}%
\column{40}{@{}>{\hspre}l<{\hspost}@{}}%
\column{44}{@{}>{\hspre}l<{\hspost}@{}}%
\column{55}{@{}>{\hspre}l<{\hspost}@{}}%
\column{63}{@{}>{\hspre}l<{\hspost}@{}}%
\column{67}{@{}>{\hspre}l<{\hspost}@{}}%
\column{E}{@{}>{\hspre}l<{\hspost}@{}}%
\>[14]{}{\mathbf{with}}\;\Varid{h_{\mathsf{once}}}_{\text{\xmark}}\;{\mathbf{handle}}\;({}\<[40]%
\>[40]{}\mathbf{do}\;{}\<[44]%
\>[44]{}\Varid{p}\leftarrow {\mathbf{op}}\;\mathtt{once}_{\text{\xmark}}\;{}\<[63]%
\>[63]{}()\;{}\<[67]%
\>[67]{}(\anonymous \, .\,{\mathbf{op}}\;\mathtt{choose}\;()\;(\Varid{b}\, .\,{\mathbf{return}}\;\Varid{b}))\, ;{}\<[E]%
\\
\>[40]{}{\mathbf{op}}\;\mathtt{choose}\;()\;(\Varid{q}\, .\,{\mathbf{return}}\;(\Varid{p},\Varid{q}))){}\<[E]%
\\
\>[9]{}\makebox[\widthof{\ensuremath{\leadsto^{\ast}}}][l]{\ensuremath{\leadsto}}{}\<[9E]%
\>[14]{}{\mathbf{with}}\;\Varid{h_{\mathsf{once}}}_{\text{\xmark}}\;{\mathbf{handle}}\;\mathtt{once}_{\text{\xmark}}\;()\;(\anonymous \, .\,{}\<[55]%
\>[55]{}\mathbf{do}\;{\mathbf{op}}\;\mathtt{choose}\;()\;(\Varid{b}\, .\,{\mathbf{return}}\;\Varid{b})\, ;{}\<[E]%
\\
\>[55]{}{\mathbf{op}}\;\mathtt{choose}\;()\;(\Varid{q}\, .\,{\mathbf{return}}\;(\Varid{p},\Varid{q}))){}\<[E]%
\\
\>[9]{}\leadsto^{\ast}{}\<[9E]%
\>[14]{}[\mskip1.5mu (\mathsf{true},\mathsf{true})\mskip1.5mu]{}\<[E]%
\ColumnHook
\end{hscode}\resethooks
\indentend %
There are many more examples of operations that have a scope; some of which we
present in \Cref{sec:examples}:
\begin{itemize}
\item \ensuremath{\mathtt{catch}} for catching exceptions that are raised during program execution;
\item \ensuremath{\mathtt{local}} for creating local variables (local state);
\item \ensuremath{\mathtt{call}} for creating a scope in a nondeterministic program, where branches
  can be cut using the algebraic \ensuremath{\mathtt{cut}} operation;
\item \ensuremath{\mathtt{depth}} for bounding the depth in the depth-bounded search strategy;
\end{itemize}
Following Wu et al. \cite{wu14}, we call them \emph{scoped operations}.
Plotkin and Power \cite{Plotkin03} have already realised that
algebraic effects are unable to represent so-called effect deconstructors (e.g.\
scoped effects) and propose
to model them as handlers.
Although this solution is used (for example by Thompson et al.\
\cite{DBLP:journals/pacmpl/ThomsonRWS22}), it merges
syntax and semantics at the cost of modularity
\cite{wu14,DBLP:conf/esop/YangPWBS22}. In \Cref{sec:examples} we revisit this
issue, showing attempts at encoding scoped effects as handlers, their problems,
and how \ensuremath{\lambda_{\mathit{sc}}} remedies the situation.

The goal of this work
is to implement scoped effects while maintaining a separation between
syntax and semantics, and thus preserve modular composition and control over
effect interaction.
It follows a line of research~\cite{pirog18,wu14,DBLP:conf/esop/YangPWBS22} that
has developed denotational semantic domains, backed by categorical
models.
However, they did not consider the modular composition of scoped
effects and handlers in their denotional semantics.
What is lacking from the literature is a calculus that allows
programming with both algebraic and scoped operations and their
handlers as well as supports the modular composition between them.

\section{Design Decisions \& Challenges}
\label{sec:design-decisions}
This section informally discusses the design of \ensuremath{\lambda_{\mathit{sc}}}, a novel calculus with
support for scoped effects \& handlers as built-in features.
We present the main challenges and design choices, and address how to forward
unknown operations.

\subsection{Scoped Effects as Built-in Operations}
\ensuremath{\lambda_{\mathit{sc}}} supports scoped effects as built-in operations, signalled by the \ensuremath{{\mathbf{sc}}}
keyword.
\indentbegin \begin{hscode}\SaveRestoreHook
\column{B}{@{}>{\hspre}l<{\hspost}@{}}%
\column{3}{@{}>{\hspre}l<{\hspost}@{}}%
\column{E}{@{}>{\hspre}l<{\hspost}@{}}%
\>[3]{}{\mathbf{sc}}\;\mathtt{once}\;()\;(\Varid{y}\, .\,\Varid{c}_{1})\;(\Varid{z}\, .\,\Varid{c}_{2}){}\<[E]%
\ColumnHook
\end{hscode}\resethooks
\indentend %
Like algebraic operations, scoped operations feature a label \ensuremath{\mathtt{once}} to identify
the effect, a parameter, in this case \ensuremath{()}, and a continuation (\ensuremath{\Varid{z}\, .\,\Varid{c}_{2}}).
However, unlike algebraic operations, scoped operations feature an additional
\emph{scoped computation} (\ensuremath{\Varid{y}\, .\,\Varid{c}_{1}}).
For \ensuremath{\mathtt{once}}, the scoped computation entails the computation to be
restricted to the first result (i.e. the computation in scope).
The result from the scoped computation \ensuremath{\Varid{c}_{1}} is passed to the continuation \ensuremath{\Varid{c}_{2}}
by means of argument \ensuremath{\Varid{z}}.

Scoped handler clauses feature an additional argument when
compared to algebraic handler clauses, corresponding to the scoped computation.
For example, below we define a handler for \ensuremath{\mathtt{once}}, which extends \ensuremath{\Varid{h_{\mathsf{ND}}}}.
\indentbegin \begin{hscode}\SaveRestoreHook
\column{B}{@{}>{\hspre}l<{\hspost}@{}}%
\column{3}{@{}>{\hspre}l<{\hspost}@{}}%
\column{12}{@{}>{\hspre}l<{\hspost}@{}}%
\column{21}{@{}>{\hspre}c<{\hspost}@{}}%
\column{21E}{@{}l@{}}%
\column{24}{@{}>{\hspre}l<{\hspost}@{}}%
\column{31}{@{}>{\hspre}l<{\hspost}@{}}%
\column{50}{@{}>{\hspre}l<{\hspost}@{}}%
\column{E}{@{}>{\hspre}l<{\hspost}@{}}%
\>[3]{}\Varid{h_{\mathsf{once}}}\mathrel{=}{}\<[12]%
\>[12]{}{\mathbf{handler}}\;{}\<[21]%
\>[21]{}\{\mskip1.5mu {}\<[21E]%
\>[24]{}\ldots,{}\<[31]%
\>[31]{}{\mathbf{sc}}\;\mathtt{once}\;\anonymous \;\Varid{p}\;\Varid{k}\mapsto{}\<[50]%
\>[50]{}\mathbf{do}\;\Varid{ts}\leftarrow \Varid{p}\;()\, ;{}\<[E]%
\\
\>[50]{}\mathbf{do}\;\Varid{b}\leftarrow \Varid{ts}=[\mskip1.5mu \mskip1.5mu]\, ;{}\<[E]%
\\
\>[50]{}\mathbf{if}\;\Varid{b}\;\mathbf{then}\;{\mathbf{return}}\;[\mskip1.5mu \mskip1.5mu]\;\mathbf{else}\;\mathbf{do}\;\Varid{t}\leftarrow \mathsf{head}\;\Varid{ts}\, ;\Varid{k}\;\Varid{t}\mskip1.5mu\}{}\<[E]%
\ColumnHook
\end{hscode}\resethooks
\indentend %
Adding scoped operations gives rise to a variant of the algebraicity property,
which models the desired behavior of sequencing for scoped operations:
scoped operations commute with sequencing in the continuation, but leave the
scoped computation intact.
\label{sec:scopedalgebraicity}
\indentbegin \begin{hscode}\SaveRestoreHook
\column{B}{@{}>{\hspre}l<{\hspost}@{}}%
\column{3}{@{}>{\hspre}l<{\hspost}@{}}%
\column{E}{@{}>{\hspre}l<{\hspost}@{}}%
\>[3]{}\mathbf{do}\;\Varid{x}\leftarrow {\mathbf{sc}}\;\ell^{\textsf{sc}}\;\Varid{v}\;(\Varid{y}\, .\,\Varid{c}_{1})\;(\Varid{z}\, .\,\Varid{c}_{2})\, ;\Varid{c}_{3}\;\equiv\;{\mathbf{sc}}\;\ell^{\textsf{sc}}\;\Varid{v}\;(\Varid{y}\, .\,\Varid{c}_{1})\;(\Varid{z}\, .\,\mathbf{do}\;\Varid{x}\leftarrow \Varid{c}_{2}\, ;\Varid{c}_{3}){}\<[E]%
\ColumnHook
\end{hscode}\resethooks
\indentend %
Using \ensuremath{\mathtt{once}} as a scoped operation correctly restrict only the first \ensuremath{\mathtt{choose}},
as can be seen when applying \ensuremath{\Varid{h_{\mathsf{once}}}} to \ensuremath{\Varid{c_{\mathsf{once}}}}, which is like
\ensuremath{\Varid{c_{\mathsf{once}}}_{\text{\xmark}}} defined in \Cref{sec:scoped-eff} but with the scoped
effect \ensuremath{\mathtt{once}} instead of the algebraic effect \ensuremath{\mathtt{once}_{\text{\xmark}}}.
\label{eq:honcetoconce}\indentbegin \begin{hscode}\SaveRestoreHook
\column{B}{@{}>{\hspre}l<{\hspost}@{}}%
\column{3}{@{}>{\hspre}l<{\hspost}@{}}%
\column{12}{@{}>{\hspre}l<{\hspost}@{}}%
\column{24}{@{}>{\hspre}l<{\hspost}@{}}%
\column{29}{@{}>{\hspre}l<{\hspost}@{}}%
\column{E}{@{}>{\hspre}l<{\hspost}@{}}%
\>[3]{}\Varid{c_{\mathsf{once}}}\mathrel{=}{}\<[12]%
\>[12]{}{\mathbf{sc}}\;\mathtt{once}\;()\;{}\<[24]%
\>[24]{}(\anonymous \, .\,{}\<[29]%
\>[29]{}{\mathbf{op}}\;\mathtt{choose}\;()\;(\Varid{b}\, .\,{\mathbf{return}}\;\Varid{b}))\;{}\<[E]%
\\
\>[24]{}(\Varid{p}\, .\,{}\<[29]%
\>[29]{}\mathbf{do}\;\Varid{q}\leftarrow {\mathbf{op}}\;\mathtt{choose}\;()\;(\Varid{b}\, .\,{\mathbf{return}}\;\Varid{b})\, ;{\mathbf{return}}\;(\Varid{p},\Varid{q})){}\<[E]%
\ColumnHook
\end{hscode}\resethooks
\indentend %
\indentbegin \begin{hscode}\SaveRestoreHook
\column{B}{@{}>{\hspre}l<{\hspost}@{}}%
\column{E}{@{}>{\hspre}l<{\hspost}@{}}%
\>[B]{}{\mathbf{with}}\;\Varid{h_{\mathsf{once}}}\;{\mathbf{handle}}\;\Varid{c_{\mathsf{once}}}\leadsto^{\ast}[\mskip1.5mu (\mathsf{true},\mathsf{true}),(\mathsf{true},\mathsf{false})\mskip1.5mu]{}\<[E]%
\ColumnHook
\end{hscode}\resethooks
\indentend %
\paragraph{Signatures}
Scoped effects, like algebraic effects, have signatures \ensuremath{\Conid{A}\rightarrowtriangle\Conid{B}}.
Like algebraic effects, the left-hand side of this signature refers to the type
of the value passed to the operation (\ensuremath{\mathsf{()}} in the case of \ensuremath{\mathtt{once}}).
However, where the right-hand side of the signature in the case of algebraic
effects refers to the argument of the \emph{continuation}, in the case of scoped
effects it refers to the \emph{scoped computation}.
Whilst this may seem strange from this point of view, it makes sense from the
point of view of writing handler clauses: in both cases, a value of type \ensuremath{\Conid{A}} is
given, and a value of type \ensuremath{\Conid{B}} is produced by the handler clause.
The difference comes from what computation this produced value of type \ensuremath{\Conid{B}} is
passed to: algebraic effect handler clauses pass this value to the continuation,
whereas scoped effect handler clauses pass this value to the scoped computation
(whose result is passed to the continuation).

Therefore, the signature of \ensuremath{\mathtt{once}} is \ensuremath{()\rightarrowtriangle()} since the handler clause calls
the scoped computation with \ensuremath{()}.
The scoped result type (\Cref{sec:scoped-result-type}) varies depending on the
computation \ensuremath{\mathtt{once}} is applied to, and is therefore not a part of the signature.

\subsection{\ensuremath{\mathsf{Eff}} with Row Typing}
Our calculus is based on Eff~\cite{pretnar15,DBLP:conf/calco/BauerP13,bauer15},
an existing calculus for algebraic effects \& handlers.
However, instead of using the subtyping-based effect system of Eff, we
use a row-based effect system in the style of Koka~\cite{Leijen17} for
simplicity.
Therefore, computations have types of shape \ensuremath{\Conid{A}\mathbin{!}\langle\Conid{E}\rangle}, where \ensuremath{\Conid{A}} is the type
of the value returned by the computation, and \ensuremath{\Conid{E}} is a collection (row) of
effects that \emph{may} occur during its evaluation.
For example, \ensuremath{(\mathsf{Bool},\mathsf{Bool})\hspace{0.1em}!\hspace{0.1em}\langle\mathtt{choose},\mathtt{once}\rangle} is the type of \ensuremath{\Varid{c_{\mathsf{once}}}}.

Handler types, of shape \ensuremath{\Conid{A}\mathbin{!}\langle\Conid{E}\rangle\Rightarrow \Conid{B}\mathbin{!}\langle\Conid{F}\rangle}, reflect that handlers turn
one computation (of type \ensuremath{\Conid{A}\mathbin{!}\langle\Conid{E}\rangle}) into another (of type \ensuremath{\Conid{B}\mathbin{!}\langle\Conid{F}\rangle}).
For example, \ensuremath{\Varid{h_{\mathsf{once}}}} handles \ensuremath{\mathtt{choose}} and \ensuremath{\mathtt{once}}.
\indentbegin \begin{hscode}\SaveRestoreHook
\column{B}{@{}>{\hspre}l<{\hspost}@{}}%
\column{3}{@{}>{\hspre}l<{\hspost}@{}}%
\column{E}{@{}>{\hspre}l<{\hspost}@{}}%
\>[3]{}\Varid{h_{\mathsf{once}}}\mathbin{:}(\mathsf{Bool},\mathsf{Bool})\mathbin{!}\langle\mathtt{choose},\mathtt{once}\rangle\Rightarrow \mathsf{List}\;(\mathsf{Bool},\mathsf{Bool})\mathbin{!}\langle\rangle{}\<[E]%
\ColumnHook
\end{hscode}\resethooks
\indentend %
\subsection{Scoped result type}
\label{sec:scoped-result-type}
As we have seen before, the two computations taken by a scoped
operation \ensuremath{{\mathbf{sc}}\;\ell^{\textsf{sc}}\;\Varid{v}\;(\Varid{y}\, .\,\Varid{c}_{1})\;(\Varid{z}\, .\,\Varid{c}_{2})} are not just two irrelevant terms;
they are connected in the sense that the result of the scoped
computation \ensuremath{\Varid{y}\, .\,\Varid{c}_{1}} is passed to the continuation \ensuremath{\Varid{z}\, .\,\Varid{c}_{2}} and exactly
bound as \ensuremath{\Varid{z}} in \ensuremath{\Varid{c}_{2}}.
Therefore, they must agree on the type of this result, which we name the
\emph{scoped result type}.
For example, consider \ensuremath{\Varid{c_{\mathsf{once}}}} with overall type \ensuremath{(\mathsf{Bool},\mathsf{Bool})\hspace{0.1em}!\hspace{0.1em}\langle\mathtt{once}\, ;\mathtt{choose}\rangle}.
Its scoped result type is (a singular) \ensuremath{\mathsf{Bool}}: it is the type that is
\emph{produced} by the scoped computation, and \emph{consumed} by the
continuation.
\indentbegin \begin{hscode}\SaveRestoreHook
\column{B}{@{}>{\hspre}l<{\hspost}@{}}%
\column{3}{@{}>{\hspre}l<{\hspost}@{}}%
\column{E}{@{}>{\hspre}l<{\hspost}@{}}%
\>[3]{}{\mathbf{sc}}\;\mathtt{once}\;()\;(\underbrace{\anonymous \, .\,{\mathbf{op}}\;\mathtt{choose}\;()\;(\Varid{b}\, .\,{\mathbf{return}}\;\Varid{b})}_{{\scriptsize ()\to \colorbox{lightgray}{$\mathsf{Bool}$}\hspace{0.1em}!\hspace{0.1em}\langle\mathtt{once}\, ;\mathtt{choose}\rangle}})\;(\underbrace{\Varid{p}\, .\,\mathbf{do}\;\Varid{q}\leftarrow {\mathbf{op}}\ldots\vphantom{{\mathbf{op}}}\;{\mathbf{return}}\;(\Varid{p},\Varid{q})}_{{\scriptsize \colorbox{lightgray}{$\mathsf{Bool}$}\to (\mathsf{Bool},\mathsf{Bool})\hspace{0.1em}!\hspace{0.1em}\langle\mathtt{once}\, ;\mathtt{choose}\rangle}}){}\<[E]%
\ColumnHook
\end{hscode}\resethooks
\indentend %
Dealing with the presence of this scoped result type is the most
non-trivial part for designing a calculus with composable scoped
effects. There are two main complications. First, since the type does
not occur in the computation's overall type, \emph{polymorphic
handlers} are required to handle scoped effects. Second, the scoped
result type describes a dependency between the scoped computation and
continuation: if one changes its type, the other must match this. This
makes generic \emph{forwarding} impossible: it alters the type of the
scoped computation, but does not make up for it in the continuation.

  The rest of this section covers these two complications. While understanding
  these issues is important for fully understanding the semantics presented in
  \Cref{sec:operational-semantics}, it may be hard to fully comprehend the
  remainder section without having seen the type system presented in
  \Cref{sec:typing}.
  We recommend the reader to skim over parts of the remainder of this section
  that are unclear, and revisit them after having seen the type system.

\subsubsection{Polymorphic Handlers}
\label{sec:motivation-polymorphic-handers}
Applying a handler to a computation involves recursively
applying the handler to the computation's subcomputations as well.
In the case of algebraic effects, these subcomputations always have the same
type as the operation itself, as witnessed by the algebraicity property.
This means that %
calculi that only support algebraic effects \& handlers, such as Eff,
allow typing handlers monomorphically, without limitations in
expressivity.

However, typing scoped effect handlers monomorphically \emph{does} limit their
implementation freedom: it only allows scoped operations of which the scoped
result type matches the operation's overall type.
For example, consider the type \ensuremath{(\mathsf{Bool},\mathsf{Bool})\hspace{0.1em}!\hspace{0.1em}\langle\mathtt{choose}\, ;\mathtt{once}\rangle\Rightarrow \mathsf{List}\;(\mathsf{Bool},\mathsf{Bool})\hspace{0.1em}!\hspace{0.1em}\langle\rangle} we previously assigned to \ensuremath{\Varid{h_{\mathsf{once}}}}.
This monomorphic type requires the scoped result type to be \ensuremath{(\mathsf{Bool},\mathsf{Bool})} as
well, as it is the only type of computation monomorphic \ensuremath{\Varid{h_{\mathsf{once}}}} can handle.
This is not the case for \ensuremath{\Varid{c_{\mathsf{once}}}}: as established, its scoped result type is \ensuremath{\mathsf{Bool}}.
Therefore, scoped computations such as \ensuremath{\Varid{c_{\mathsf{once}}}}, cannot be handled by monomorphic
handlers.
The solution is to let handlers abstract over the value type of computations,
allowing for the handling of scoped operations with \emph{any} scoped result
type.
This way, \ensuremath{\Varid{h_{\mathsf{once}}}} can be typed as follows:
\indentbegin \begin{hscode}\SaveRestoreHook
\column{B}{@{}>{\hspre}l<{\hspost}@{}}%
\column{3}{@{}>{\hspre}l<{\hspost}@{}}%
\column{E}{@{}>{\hspre}l<{\hspost}@{}}%
\>[3]{}\Varid{h_{\mathsf{once}}}\mathbin{:}\forall\;\alpha\, .\,\alpha\hspace{0.1em}!\hspace{0.1em}\langle\mathtt{choose}\, ;\mathtt{once}\rangle\Rightarrow \mathsf{List}\;\alpha\hspace{0.1em}!\hspace{0.1em}\langle\rangle{}\<[E]%
\ColumnHook
\end{hscode}\resethooks
\indentend %
With this polymorphic typing in place, \ensuremath{{\mathbf{with}}\;\Varid{h_{\mathsf{once}}}\;{\mathbf{handle}}\;\Varid{c_{\mathsf{once}}}} can now be
evaluated by \emph{polymorphic recursion}.
To support this, \ensuremath{\lambda_{\mathit{sc}}} features \ensuremath{\mathbf{let}}-polymorphism, \ensuremath{F_{\omega}}-style
type operators, and requires all handlers for scoped effects to be polymorphic.

\subsubsection{Forwarding Unknown Operations}
\label{sec:motivation-forwarding}
In order to retain the modularity of composing different effects, as discussed
in \Cref{sec:algebraic-forwarding}, we allow for partial handlers (i.e.\
handlers that handle only a subset of the effects they are applied to) to obtain
dedicated handlers that interpret only their part of the syntax.
This requires that all remaining operations be \emph{forwarded} to other
handlers.
As established, this forwarding can, in the case for algebraic effects and in
some specific situations, be done in a canonical way.

One might hope to forward scoped effects in a similar way, where the inner
computations are handled, and the effect is left as-is for future handlers to
handle.
For example, consider the following computation which uses a scoped
operation \ensuremath{\mathtt{catch}\mathbin{:}\mathsf{String}\rightarrowtriangle\mathsf{Bool}} for catching exceptions.
\indentbegin \begin{hscode}\SaveRestoreHook
\column{B}{@{}>{\hspre}l<{\hspost}@{}}%
\column{3}{@{}>{\hspre}l<{\hspost}@{}}%
\column{15}{@{}>{\hspre}l<{\hspost}@{}}%
\column{E}{@{}>{\hspre}l<{\hspost}@{}}%
\>[3]{}\Varid{c_{\mathsf{catch}}}\mathrel{=}{\mathbf{sc}}\;\mathtt{catch}\;\text{\ttfamily \char34 err\char34}\;(\Varid{b}\, .\,\mathbf{if}\;\Varid{b}\;\mathbf{then}\;{\mathbf{return}}\;\mathrm{1}\;\mathbf{else}\;{\mathbf{return}}\;\mathrm{2})\;(\Varid{x}\, .\,{\mathbf{return}}\;\Varid{x})\;{}\<[E]%
\\
\>[3]{}\hsindent{12}{}\<[15]%
\>[15]{}{\mathbf{with}}\;\Varid{h_{\mathsf{once}}}\;{\mathbf{handle}}\;\Varid{c_{\mathsf{catch}}}{}\<[E]%
\ColumnHook
\end{hscode}\resethooks
\indentend %
Applying the handler \ensuremath{\Varid{h_{\mathsf{once}}}} to it gives us\indentbegin \begin{hscode}\SaveRestoreHook
\column{B}{@{}>{\hspre}l<{\hspost}@{}}%
\column{3}{@{}>{\hspre}l<{\hspost}@{}}%
\column{15}{@{}>{\hspre}l<{\hspost}@{}}%
\column{31}{@{}>{\hspre}l<{\hspost}@{}}%
\column{E}{@{}>{\hspre}l<{\hspost}@{}}%
\>[3]{}\leadsto_{\text{\xmark}}\;{}\<[15]%
\>[15]{}{\mathbf{sc}}\;\mathtt{catch}\;\text{\ttfamily \char34 err\char34}\;{}\<[31]%
\>[31]{}(\Varid{b}\, .\,{\mathbf{with}}\;\Varid{h_{\mathsf{once}}}\;{\mathbf{handle}}\;\mathbf{if}\;\Varid{b}\ldots)\;(\Varid{x}\, .\,{\mathbf{with}}\;\Varid{h_{\mathsf{once}}}\;{\mathbf{handle}}\;{\mathbf{return}}\;\Varid{x}){}\<[E]%
\ColumnHook
\end{hscode}\resethooks
\indentend %
Unfortunately, this does not work: again, the hurdle is in the scoped result
type, in combination with the fact that applying a handler to a computation
changes its type.
In our example, \ensuremath{\Varid{h_{\mathsf{once}}}} applies the type operator \ensuremath{\mathsf{List}} when handling a
computation, resulting in a handled scoped computation of type \ensuremath{\mathsf{Bool}\to \mathsf{List}\;\mathsf{Int}\hspace{0.1em}!\hspace{0.1em}\langle\mathtt{catch}\rangle}, and a handled continuation of type \ensuremath{\mathsf{Int}\to \mathsf{List}\;\mathsf{Int}\hspace{0.1em}!\hspace{0.1em}\langle\mathtt{catch}\rangle}.
\indentbegin \begin{hscode}\SaveRestoreHook
\column{B}{@{}>{\hspre}l<{\hspost}@{}}%
\column{3}{@{}>{\hspre}l<{\hspost}@{}}%
\column{E}{@{}>{\hspre}l<{\hspost}@{}}%
\>[3]{}{\mathbf{sc}}\;\mathtt{catch}\;\text{\ttfamily \char34 err\char34}\;(\underbrace{\Varid{b}\, .\,{\mathbf{with}}\;\Varid{h_{\mathsf{once}}}\;{\mathbf{handle}}\;\mathbf{if}\;\Varid{b}\;\mathbf{then}\ldots}_{\mathsf{Bool}\to {\scriptsize \colorbox{lightgray}{$\mathsf{List}\;\mathsf{Int}$}}\hspace{0.1em}!\hspace{0.1em}\langle\mathtt{catch}\rangle})\;(\underbrace{\Varid{x}\, .\,{\mathbf{with}}\;\Varid{h_{\mathsf{once}}}\;{\mathbf{handle}}\;{\mathbf{return}}\;\Varid{x}}_{{\scriptsize \colorbox{lightgray}{$\mathsf{Int}$}}\to \mathsf{List}\;\mathsf{Int}\hspace{0.1em}!\hspace{0.1em}\langle\mathtt{catch}\rangle}){}\<[E]%
\ColumnHook
\end{hscode}\resethooks
\indentend %
This introduced a type mismatch, as the return type of the scoped computation
has changed (\ensuremath{\mathsf{List}\;\mathsf{Int}}), whereas the continuation still expects the original
type (\ensuremath{\mathsf{Int}}).
In other words: the scoped computation and continuation no longer agree on a
scoped result type.
Thus, scoped effects cannot be forwarded uniformly.
Therefore, we require that every handler is equipped with an explicit forwarding
clause for unknown scoped operations, specific to the handler's carrier (i.e.\
the type constructor is applied to a computation when handled).
This forwarding clause is used to resolve the type mismatch between
the result type of the handled scoped computation and the parameter
type of continuation.

\label{sec:bind-intro}
Considering the above example, the goal of applying a function of type
\ensuremath{\mathsf{Int}\to \mathsf{List}\;\mathsf{Int}\hspace{0.1em}!\hspace{0.1em}\langle\Conid{E}\rangle} to a value of type \ensuremath{\mathsf{List}\;\mathsf{Int}} naturally
reminds us of monadic binding.
Indeed, for most carriers, this forwarding clause is very similar to
that of the monadic bind \ensuremath{\bind }.
In fact, for now we will assume this to be the case for all effects, and
introduce \ensuremath{\lambda_{\mathit{sc}}} based on these bind-like forwarding clauses.
In \Cref{sec:genfwd} we will revisit this firstly by giving examples
where bind-like clauses are not expressive enough, secondly by generalizing the
forwarding clause, and lastly by showing how to desugar bind-like clauses to
generalized forwarding clauses.
Therefore, forwarding clauses are---for now---of shape \ensuremath{{\mathbf{bind}}\;\Varid{x}\;\Varid{k}}, where \ensuremath{\Varid{x}}
represents the result of the scoped computation, and \ensuremath{\Varid{k}} represents the
continuation, yielding a type similar to \ensuremath{\bind }.

What should we write for a forwarding clause?
The answer depends on what we expect when we want to connect the results of a
handler to outside.
For example, consider the handler \ensuremath{\Varid{h}\;\Conid{Once}}.
The \ensuremath{\Varid{x}} here is the list of results we get from the nondeterministic
program in the scope after handled by \ensuremath{\Varid{h_{\mathsf{once}}}}. The \ensuremath{\Varid{k}} here is what
remains to do for each result.
The most intuitive way to connect \ensuremath{\Varid{x}} and \ensuremath{\Varid{k}} is to run \ensuremath{\Varid{k}} on
every element in \ensuremath{\Varid{x}} and connect their results together.
This is exactly what the \ensuremath{\mathsf{concatMap}} function in Haskell does.
\indentbegin \begin{hscode}\SaveRestoreHook
\column{B}{@{}>{\hspre}l<{\hspost}@{}}%
\column{3}{@{}>{\hspre}l<{\hspost}@{}}%
\column{20}{@{}>{\hspre}c<{\hspost}@{}}%
\column{20E}{@{}l@{}}%
\column{23}{@{}>{\hspre}c<{\hspost}@{}}%
\column{23E}{@{}l@{}}%
\column{28}{@{}>{\hspre}c<{\hspost}@{}}%
\column{28E}{@{}l@{}}%
\column{31}{@{}>{\hspre}l<{\hspost}@{}}%
\column{E}{@{}>{\hspre}l<{\hspost}@{}}%
\>[3]{}\Varid{h_{\mathsf{once}}}\mathrel{=}{\mathbf{handler}}\;{}\<[20]%
\>[20]{}\{\mskip1.5mu {}\<[20E]%
\>[23]{}\ldots{}\<[23E]%
\>[28]{},{}\<[28E]%
\>[31]{}{\mathbf{bind}}\;\Varid{x}\;\Varid{k}\mapsto\mathsf{concatMap}\;\Varid{x}\;\Varid{k}\mskip1.5mu\}{}\<[E]%
\ColumnHook
\end{hscode}\resethooks
\indentend %
We define \ensuremath{\mathsf{concatMap}} in \ensuremath{\lambda_{\mathit{sc}}} as follows.
\indentbegin \begin{hscode}\SaveRestoreHook
\column{B}{@{}>{\hspre}l<{\hspost}@{}}%
\column{12}{@{}>{\hspre}l<{\hspost}@{}}%
\column{22}{@{}>{\hspre}l<{\hspost}@{}}%
\column{25}{@{}>{\hspre}c<{\hspost}@{}}%
\column{25E}{@{}l@{}}%
\column{28}{@{}>{\hspre}l<{\hspost}@{}}%
\column{32}{@{}>{\hspre}l<{\hspost}@{}}%
\column{E}{@{}>{\hspre}l<{\hspost}@{}}%
\>[B]{}\mathsf{concatMap}{}\<[12]%
\>[12]{}\mathbin{:}\forall\;\alpha\;\beta\;\mu\, .\,\mathsf{List}\;\beta\;{\rightarrow}^{\mu}\;(\beta\;{\rightarrow}^{\mu}\;\mathsf{List}\;\alpha)\;{\rightarrow}^{\mu}\;\mathsf{List}\;\alpha{}\<[E]%
\\
\>[B]{}\mathsf{concatMap}\;{}\<[12]%
\>[12]{}[\mskip1.5mu \mskip1.5mu]\;{}\<[22]%
\>[22]{}\Varid{f}{}\<[25]%
\>[25]{}\mathrel{=}{}\<[25E]%
\>[28]{}{\mathbf{return}}\;[\mskip1.5mu \mskip1.5mu]{}\<[E]%
\\
\>[B]{}\mathsf{concatMap}\;{}\<[12]%
\>[12]{}(\Varid{b}\mathbin{:}\Varid{bs})\;{}\<[22]%
\>[22]{}\Varid{f}{}\<[25]%
\>[25]{}\mathrel{=}{}\<[25E]%
\>[28]{}\mathbf{do}\;{}\<[32]%
\>[32]{}\Varid{as}\leftarrow \Varid{f}\;\Varid{b}\, ;\Varid{as'}\leftarrow \mathsf{concatMap}\;\Varid{bs}\;\Varid{f}\, ;\Varid{as}+\hspace{-0.4em}+\Varid{as'}{}\<[E]%
\ColumnHook
\end{hscode}\resethooks
\indentend %
Note that there is no unique correct forwarding clause for every handler; it is
up to the programmers to decide which kind of forwarding behaviours they want.
For example, for \ensuremath{\Varid{h_{\mathsf{once}}}} we can also use a reversed version of
\ensuremath{\mathsf{concatMap}} which does the same thing except for traversing the list
from right to left.
Alternatively, we can only pass the first element of \ensuremath{\Varid{x}} to \ensuremath{\Varid{k}}, or
even invoke other effects like printing some information when
forwarding.
Comparing these different styles of forwarding, we prefer the
original version using a left-to-right \ensuremath{\mathsf{concatMap}} since we usually do
not want other scoped effects to unexpectedly change the results of
\ensuremath{\Varid{h_{\mathsf{once}}}} by discarding some results or reversing the order of results.
From our experience of implementing a range of examples of scoped effects in
\Cref{sec:examples}, we found that a simple and intuitive forwarding clause
usually yields the expect semantics.

In what follows, we put our calculus on formal footing, discussing its syntax,
operational semantics and type-and-effect system.
\section{Term Syntax}
\label{sec:syntax}
\begin{figure}[t]
\begin{center}
\renewcommand\arraystretch{1}
\begin{tabular}[c]{rrlr}
  values \ensuremath{\Varid{v}} &\ensuremath{::=} &\ensuremath{()\ \mid\ (\Varid{v}_{1},\Varid{v}_{2})\ \mid\ \Varid{x}\ \mid\ \boldsymbol{\lambda}\Varid{x}\, .\,\Varid{c}\ \mid\ \Varid{h}}\\[\grammarspacing]

  handlers \ensuremath{\Varid{h}} &\ensuremath{::=} &\ensuremath{{\mathbf{handler}}} \:\ensuremath{\{\mskip1.5mu {\mathbf{return}}\;\Varid{x}\mapsto\Varid{c}_{\Varid{r}}} &return clause\\
      & & \phantom{\ensuremath{{\mathbf{handler}}}} \: \ensuremath{,\Varid{oprs}} &effect clauses\\
      & & \phantom{\ensuremath{{\mathbf{handler}}}} \: \ensuremath{,{\mathbf{bind}}\;\Varid{x}\;\Varid{k}\mapsto\Varid{c}_{\Varid{f}}\mskip1.5mu\}} &forwarding clause\\[\grammarspacing]

  operation clauses \ensuremath{\Varid{oprs}} &\ensuremath{::=} &\ensuremath{ \cdot} &empty \ensuremath{\Varid{oprs}}\\
        &$\mid$ &\ensuremath{{\mathbf{op}}\;\ell^{\textsf{op}}\;\Varid{x}\;\Varid{k}\mapsto\Varid{c},\Varid{oprs}}   &algebraic effect clauses\\
        &$\mid$ &\ensuremath{{\mathbf{sc}}\;\ell^{\textsf{sc}}\;\Varid{x}\;\Varid{p}\;\Varid{k}\mapsto\Varid{c},\Varid{oprs}} &scoped effect clauses\\[\grammarspacing]

  computations \ensuremath{\Varid{c}} &\ensuremath{::=} &\ensuremath{{\mathbf{return}}\;\Varid{v}} &return value\\
      &$\mid$ &\ensuremath{{\mathbf{op}}\;\ell^{\textsf{op}}\;\Varid{v}\;(\Varid{y}\, .\,\Varid{c})} &algebraic operation\\
      &$\mid$ &\ensuremath{{\mathbf{sc}}\;\ell^{\textsf{sc}}\;\Varid{v}\;(\Varid{y}\, .\,\Varid{c}_{1})\;(\Varid{z}\, .\,\Varid{c}_{2})} &scoped operation\\
      &$\mid$ &\ensuremath{{\mathbf{with}}\;\Varid{v}\;{\mathbf{handle}}\;\Varid{c}} &handle\\
      &$\mid$ &\ensuremath{\mathbf{do}\;\Varid{x}\leftarrow \Varid{c}_{1}\, ;\Varid{c}_{2}} &do-statement\\
      &$\mid$ &\ensuremath{\Varid{v}_{1}\;\Varid{v}_{2}} &application\\
      &$\mid$ &\ensuremath{\mathbf{let}\;\Varid{x}\mathrel{=}\Varid{v}\;\mathbf{in}\;\Varid{c}} &let \\[\grammarspacing]

\end{tabular}
\end{center}
\caption{Terms of \ensuremath{\lambda_{\mathit{sc}}}.}
\label{fig:term-syntax}
\end{figure}
As stated, \ensuremath{\lambda_{\mathit{sc}}} is based on Eff \cite{DBLP:conf/calco/BauerP13,bauer15}.
Before adding support for scoped effects, we have altered Eff from its
presentation in \cite{DBLP:conf/calco/BauerP13,bauer15} in two ways.
Firstly, we have made a number of cosmetic changes that arguably improve the
readability of the calculus.
Secondly, we adopt row-based typing in the style of Koka \cite{Leijen17}.
The row types will be introduced in \Cref{sec:typing}.

Like Eff we implement fine-grained call-by-value semantics
\cite{DBLP:journals/iandc/LevyPT03}.
Therefore, terms are split into inert values and computations that can be
reduced.

\subsection{Computations}
For computations, \ensuremath{{\mathbf{return}}} can be used to return values.
Handlers can be applied to values with the \ensuremath{{\mathbf{with}}} and \ensuremath{{\mathbf{handle}}} keywords.
As seen before, computations may be sequenced by means of do-statements (\ensuremath{\mathbf{do}\;\Varid{x}\leftarrow \Varid{c}_{1}\, ;\Varid{c}_{2}}).
Applications reduce, so they are computations.
As discussed in \Cref{sec:motivation-polymorphic-handers}, to support
polymorphic handlers we support let-polymorphism and thus let-bindings.
Finally, a computation may be the invocation of an effect by means of an
operation.

To be able to differentiate between algebraic and scoped effects, we add the
effect keyword \ensuremath{{\mathbf{sc}}} to model scoped effects.
Consequently, \ensuremath{{\mathbf{op}}} now ranges over algebraic effects only.
Furthermore, we annotate labels with either \ensuremath{{\textsf{op}}} or \ensuremath{{\textsf{sc}}} to indicate if they
are the label of an algebraic or scoped effect, respectively.
We implicitly assume any label \ensuremath{\ell} occurs either as an algebraic or scoped
effect label.
Like their algebraic counterparts, scoped effect operations feature a label
\ensuremath{\ell^{\textsf{sc}}}, argument \ensuremath{\Varid{v}} and continuation \ensuremath{(\Varid{z}\, .\,\Varid{c}_{2})}.
In addition, scoped effect operations feature a scoped computation \ensuremath{(\Varid{y}\, .\,\Varid{c}_{1})}.
This way, the scope of effect \ensuremath{\ell^{\textsf{sc}}} is delimited: (scoped) computation \ensuremath{(\Varid{y}\, .\,\Varid{c}_{1})}
is in scope, continuation \ensuremath{(\Varid{z}\, .\,\Varid{c}_{2})} is not.

\subsection{Values}
Values consist of the unit value \ensuremath{()}, value pairs \ensuremath{(\Varid{v}_{1},\Varid{v}_{2})}, variables \ensuremath{\Varid{x}},
functions \ensuremath{\boldsymbol{\lambda}\Varid{x}\, .\,\Varid{c}} and handlers \ensuremath{\Varid{h}}.
Handlers \ensuremath{\Varid{h}} have three kinds of clauses: one return clause, zero or more
operation clauses, and a forwarding clause.

The return clause \ensuremath{{\mathbf{return}}\;\Varid{x}\mapsto\Varid{c}} denotes that the result
\ensuremath{\Varid{x}} of a computation is processed by computation \ensuremath{\Varid{c}}.

Algebraic operation clauses \ensuremath{{\mathbf{op}}\;\ell^{\textsf{op}}\;\Varid{x}\;\Varid{k}\mapsto\Varid{c}} specify that handling an effect
with label \ensuremath{\ell^{\textsf{op}}}, parameter \ensuremath{\Varid{x}} and continuation \ensuremath{\Varid{k}} is processed by computation
\ensuremath{\Varid{c}} (e.g., \ensuremath{\Varid{h_{\mathsf{ND}}}}, \ensuremath{\Varid{h_{\mathsf{get}}}}, \ensuremath{\Varid{h_{\mathsf{inc}}}}).
For scoped effect clauses the extension is analogous to the operation case: we
take the algebraic clause and add support for a scoped computation, which in the
case for the clause has the form of parameter \ensuremath{\Varid{p}}.

  Finally, as motivated in \Cref{sec:motivation-forwarding}, we have forwarding
  clauses of shape \ensuremath{{\mathbf{bind}}\;\Varid{x}\;\Varid{k}\mapsto\Varid{c}} that deal with forwarding unknown scoped
  operations with some label \ensuremath{\ell^{\textsf{sc}}}. We will generalize these in
  \Cref{sec:genfwd}.
\color{black}
\section{Operational Semantics}
\label{sec:operational-semantics}
\Cref{fig:operational-semantics} displays the small-step operational semantics
of \ensuremath{\lambda_{\mathit{sc}}}.
Here, relation \ensuremath{\Varid{c}\leadsto\Varid{c'}} denotes that computation \ensuremath{\Varid{c}} steps to
computation \ensuremath{\Varid{c'}}, with \ensuremath{\leadsto^{\ast}} its reflexive, transitive closure.
The highlighted rules deal with the extensions that support scoped effects.
The following discussion of the semantics is exemplified by snippets of
derivations of computations\footnotemark{} used in
\Cref{sec:background-motivation}.
We refer to \Cref{app:opsem-derivations} for the full version of these
derivations.

\footnotetext{Following convention, these examples may contain elements not present in our
calculus, such as integers and if-then-else statements. These may be viewed as
syntactic sugar for their Church encodings.}

\begin{figure}[t]
\ruleform{\ensuremath{\Varid{c}\leadsto\Varid{c'}}} \quad\text{Computation reduction} \hspace*{\fill}

\begin{center}
\begin{mathpar}
  \inferrule*[right=E-AppAbs]
  { }
  { \ensuremath{(\boldsymbol{\lambda}\Varid{x}\, .\,\Varid{c})\;\Varid{v}\leadsto\Varid{c}\;[\mskip1.5mu \Varid{v}\mathbin{/}\Varid{x}\mskip1.5mu]} }

  \inferrule*[right=E-Let]
  { }
  { \ensuremath{\mathbf{let}\;\Varid{x}\mathrel{=}\Varid{v}\;\mathbf{in}\;\Varid{c}\leadsto\Varid{c}\;[\mskip1.5mu \Varid{v}\mathbin{/}\Varid{x}\mskip1.5mu]} }

  \inferrule*[right=E-Do]
  { \ensuremath{\Varid{c}_{1}\leadsto\Varid{c}_{1}'}
  }
  { \ensuremath{\mathbf{do}\;\Varid{x}\leftarrow \Varid{c}_{1}\, ;\Varid{c}_{2}\leadsto\mathbf{do}\;\Varid{x}\leftarrow \Varid{c}_{1}'\, ;\Varid{c}_{2}} }

  \inferrule*[right=E-DoRet]
  { }
  { \ensuremath{\mathbf{do}\;\Varid{x}\leftarrow {\mathbf{return}}\;\Varid{v}\, ;\Varid{c}_{2}\leadsto\Varid{c}_{2}\;[\mskip1.5mu \Varid{v}\mathbin{/}\Varid{x}\mskip1.5mu]} }

  \inferrule*[right=E-DoOp]
  { }
  { \ensuremath{\mathbf{do}\;\Varid{x}\leftarrow {\mathbf{op}}\;\ell^{\textsf{op}}\;\Varid{v}\;(\Varid{y}\, .\,\Varid{c}_{1})\, ;\Varid{c}_{2}\leadsto{\mathbf{op}}\;\ell^{\textsf{op}}\;\Varid{v}\;(\Varid{y}\, .\,\mathbf{do}\;\Varid{x}\leftarrow \Varid{c}_{1}\, ;\Varid{c}_{2})} }

  \inferrule*[right=E-DoSc]
  { }
  { \ensuremath{\mathbf{do}\;\Varid{x}\leftarrow {\mathbf{sc}}\;\ell^{\textsf{sc}}\;\Varid{v}\;(\Varid{y}\, .\,\Varid{c}_{1})\;(\Varid{z}\, .\,\Varid{c}_{2})\, ;\Varid{c}_{3}\leadsto{\mathbf{sc}}\;\ell^{\textsf{sc}}\;\Varid{v}\;(\Varid{y}\, .\,\Varid{c}_{1})\;(\Varid{z}\, .\,\mathbf{do}\;\Varid{x}\leftarrow \Varid{c}_{2}\, ;\Varid{c}_{3})} }

  \inferrule*[right=E-Hand]
  { \ensuremath{\Varid{c}\leadsto\Varid{c'}}
  }
  { \ensuremath{{\mathbf{with}}\;\Varid{h}\;{\mathbf{handle}}\;\Varid{c}\leadsto{\mathbf{with}}\;\Varid{h}\;{\mathbf{handle}}\;\Varid{c'}} }

  \inferrule*[right=E-HandRet]
  { \ensuremath{({\mathbf{return}}\;\Varid{x}\mapsto\Varid{c}_{\Varid{r}})\;\in\;\Varid{h}}
  }
  { \ensuremath{{\mathbf{with}}\;\Varid{h}\;{\mathbf{handle}}\;{\mathbf{return}}\;\Varid{v}\leadsto\Varid{c}_{\Varid{r}}\;[\mskip1.5mu \Varid{v}\mathbin{/}\Varid{x}\mskip1.5mu]} }

  \inferrule*[right=E-HandOp]
  { \ensuremath{({\mathbf{op}}\;\ell^{\textsf{op}}\;\Varid{x}\;\Varid{k}\mapsto\Varid{c})\;\in\;\Varid{h}}
  }
  { \ensuremath{{\mathbf{with}}\;\Varid{h}\;{\mathbf{handle}}\;{\mathbf{op}}\;\ell^{\textsf{op}}\;\Varid{v}\;(\Varid{y}\, .\,\Varid{c}_{1})\leadsto\Varid{c}\;[\mskip1.5mu \Varid{v}\mathbin{/}\Varid{x},(\boldsymbol{\lambda}\Varid{y}\, .\,{\mathbf{with}}\;\Varid{h}\;{\mathbf{handle}}\;\Varid{c}_{1})\mathbin{/}\Varid{k}\mskip1.5mu]} }

  \inferrule*[right=E-FwdOp]
  { \ensuremath{({\mathbf{op}}\;\ell^{\textsf{op}}\;\anonymous \;\anonymous )\;\notin\;\Varid{h}}
  }
  { \ensuremath{{\mathbf{with}}\;\Varid{h}\;{\mathbf{handle}}\;{\mathbf{op}}\;\ell^{\textsf{op}}\;\Varid{v}\;(\Varid{y}\, .\,\Varid{c}_{1})\leadsto{\mathbf{op}}\;\ell^{\textsf{op}}\;\Varid{v}\;(\Varid{y}\, .\,{\mathbf{with}}\;\Varid{h}\;{\mathbf{handle}}\;\Varid{c}_{1})} }

  \inferrule*[right=E-HandSc]
  { \ensuremath{({\mathbf{sc}}\;\ell^{\textsf{sc}}\;\Varid{x}\;\Varid{p}\;\Varid{k}\mapsto\Varid{c})\;\in\;\Varid{h}}
  }
  { \ensuremath{{\mathbf{with}}\;\Varid{h}\;{\mathbf{handle}}\;{\mathbf{sc}}\;\ell^{\textsf{sc}}\;\Varid{v}\;(\Varid{y}\, .\,\Varid{c}_{1})\;(\Varid{z}\, .\,\Varid{c}_{2})\leadsto} \ensuremath{\begin{array}{l} \ensuremath{\Varid{c}\;[\mskip1.5mu \Varid{v}\mathbin{/}\Varid{x},(\boldsymbol{\lambda}\Varid{y}\, .\,{\mathbf{with}}\;\Varid{h}\;{\mathbf{handle}}\;\Varid{c}_{1})\mathbin{/}\Varid{p},} \\
                                                             \qquad \ensuremath{(\boldsymbol{\lambda}\Varid{z}\, .\,{\mathbf{with}}\;\Varid{h}\;{\mathbf{handle}}\;\Varid{c}_{2})\mathbin{/}\Varid{k}\mskip1.5mu]} \end{array}} }

  \inferrule*[right=E-Bind]
  { \ensuremath{({\mathbf{sc}}\;\ell^{\textsf{sc}}\;\anonymous \;\anonymous \;\anonymous )\;\notin\;\Varid{h}}
    \\ \ensuremath{({\mathbf{bind}}\;\Varid{x}\;\Varid{k}\mapsto\Varid{c})\;\in\;\Varid{h}}
  }
  { \ensuremath{{\mathbf{with}}\;\Varid{h}\;{\mathbf{handle}}\;{\mathbf{sc}}\;\ell^{\textsf{sc}}\;\Varid{v}\;(\Varid{y}\, .\,\Varid{c}_{1})\;(\Varid{z}\, .\,\Varid{c}_{2})\leadsto} \ensuremath{\begin{array}{l} \ensuremath{{\mathbf{sc}}\;\ell^{\textsf{sc}}\;\Varid{v}\;(\Varid{y}\, .\,{\mathbf{with}}\;\Varid{h}\;{\mathbf{handle}}\;\Varid{c}_{1})\;(\Varid{z}\, .\,\Varid{c}\;[\mskip1.5mu \Varid{z}\mathbin{/}\Varid{x},} \\
                  \qquad  \ensuremath{(\boldsymbol{\lambda}\Varid{z}\, .\,{\mathbf{with}}\;\Varid{h}\;{\mathbf{handle}}\;\Varid{c}_{2})\mathbin{/}\Varid{k}\mskip1.5mu])} \end{array} } }
\end{mathpar}
\end{center}
\caption{Operational semantics of \ensuremath{\lambda_{\mathit{sc}}}.}
\label{fig:operational-semantics}
\end{figure}

Rules \textsc{E-AppAbs} and \textsc{E-Let} deal with function application and
let-binding, respectively, and are standard. The rest of the rules consist of
two parts: sequencing and handling.

\paragraph{Sequencing}
For sequencing computations \ensuremath{\mathbf{do}\;\Varid{x}\leftarrow \Varid{c}_{1}\, ;\Varid{c}_{2}}, we distinguish between the situation
where \ensuremath{\Varid{c}_{1}} can take a step (\textsc{E-Do}), and where \ensuremath{\Varid{c}_{1}} is in normal form (\ensuremath{{\mathbf{return}}}, \ensuremath{{\mathbf{op}}}, or \ensuremath{{\mathbf{sc}}}).
First, if \ensuremath{\Varid{c}_{1}} returns a value \ensuremath{\Varid{v}}, we substitute \ensuremath{\Varid{v}} for \ensuremath{\Varid{x}} in \ensuremath{\Varid{c}_{2}}
(\textsc{E-DoRet}).
Second, if \ensuremath{\Varid{c}_{1}} is an algebraic operation, we rewrite the computation using the
algebraicity property (\textsc{E-DoOp}), bubbling up the algebraic operation
to the front of the computation.
Third, the new case, where \ensuremath{\Varid{c}_{1}} is a scoped operation, is analogous: the
generalization of the algebraicity property (\Cref{sec:scopedalgebraicity}) is
used to rewrite the computation (\textsc{E-DoSc}).

\paragraph{Handling}
For handling computations with a handler of the form \ensuremath{{\mathbf{with}}\;\Varid{h}\;{\mathbf{handle}}\;\Varid{c}}, we
distinguish six situations.
First, if
possible, \ensuremath{\Varid{c}} takes a step (\textsc{E-Hand}); in the other cases, \ensuremath{\Varid{c}} is in normal form.
If \ensuremath{\Varid{c}} returns a value \ensuremath{\Varid{v}}, we use the handler's return clause \ensuremath{{\mathbf{return}}\;\Varid{x}\mapsto\Varid{c}_{\Varid{r}}},
switching evaluation to \ensuremath{\Varid{c}_{\Varid{r}}} with \ensuremath{\Varid{x}} replaced by \ensuremath{\Varid{v}} (\textsc{E-HandRet}).

If computation \ensuremath{\Varid{c}} is an algebraic operation \ensuremath{{\mathbf{op}}\;\ell^{\textsf{op}}\;\Varid{v}\;(\Varid{y}\, .\,\Varid{c}_{1})}, its label is
looked up in the handler \ensuremath{\Varid{h}}.
If the handler contains an algebraic clause with this label, evaluation switches
to the clause's computation \ensuremath{\Varid{c}} (\textsc{E-HandOp}), with \ensuremath{\Varid{v}} substituted for
parameter \ensuremath{\Varid{x}} and continuation \ensuremath{\Varid{k}} replaced by a function that, given the
original argument \ensuremath{\Varid{y}}, contains the already-handled\footnote{Like eff, \ensuremath{\lambda_{\mathit{sc}}}
  implements \emph{deep} handlers, as opposed to \emph{shallow}
  handlers.\cite{DBLP:conf/aplas/HillerstromL18}} continuation.
For example, \ensuremath{{\mathbf{with}}\;\Varid{h_{\mathsf{ND}}}\;{\mathbf{handle}}\;\Varid{c_{\mathsf{ND1}}}} (p. \pageref{eq:cnd1}) reduces as follows.

\indentbegin \begin{hscode}\SaveRestoreHook
\column{B}{@{}>{\hspre}l<{\hspost}@{}}%
\column{3}{@{}>{\hspre}c<{\hspost}@{}}%
\column{3E}{@{}l@{}}%
\column{8}{@{}>{\hspre}l<{\hspost}@{}}%
\column{12}{@{}>{\hspre}l<{\hspost}@{}}%
\column{45}{@{}>{\hspre}l<{\hspost}@{}}%
\column{51}{@{}>{\hspre}l<{\hspost}@{}}%
\column{66}{@{}>{\hspre}l<{\hspost}@{}}%
\column{E}{@{}>{\hspre}l<{\hspost}@{}}%
\>[8]{}{\mathbf{with}}\;\Varid{h_{\mathsf{ND}}}\;{\mathbf{handle}}\;\Varid{c_{\mathsf{ND1}}}{}\<[E]%
\\
\>[3]{}\makebox[\widthof{\ensuremath{\leadsto^{\ast}}}][l]{\ensuremath{\leadsto}}{}\<[3E]%
\>[8]{}\mathbf{do}\;{}\<[12]%
\>[12]{}\Varid{xs}\leftarrow (\boldsymbol{\lambda}\Varid{b}\, .\,{\mathbf{with}}\;\Varid{h_{\mathsf{ND}}}\;{\mathbf{handle}}\;\mathbf{if}\;\Varid{b}\;{}\<[45]%
\>[45]{}\mathbf{then}\;{}\<[51]%
\>[51]{}{\mathbf{return}}\;\mathrm{1}\;\mathbf{else}\;{}\<[66]%
\>[66]{}{\mathbf{return}}\;\mathrm{2})\;\mathsf{true}{}\<[E]%
\\
\>[8]{}\mathbf{do}\;{}\<[12]%
\>[12]{}\Varid{ys}\leftarrow (\boldsymbol{\lambda}\Varid{b}\, .\,{\mathbf{with}}\;\Varid{h_{\mathsf{ND}}}\;{\mathbf{handle}}\;\mathbf{if}\;\Varid{b}\;{}\<[45]%
\>[45]{}\mathbf{then}\;{}\<[51]%
\>[51]{}{\mathbf{return}}\;\mathrm{1}\;\mathbf{else}\;{}\<[66]%
\>[66]{}{\mathbf{return}}\;\mathrm{2})\;\mathsf{false}{}\<[E]%
\\
\>[12]{}\Varid{xs}+\hspace{-0.4em}+\Varid{ys}{}\<[E]%
\\
\>[3]{}\leadsto^{\ast}{}\<[3E]%
\>[8]{}{\mathbf{return}}\;[\mskip1.5mu \mathrm{1},\mathrm{2}\mskip1.5mu]{}\<[E]%
\ColumnHook
\end{hscode}\resethooks
\indentend In case \ensuremath{\Varid{h}} does not contain a clause
for label \ensuremath{\ell^{\textsf{op}}}, the effect is forwarded (\textsc{E-FwdOp}). %
Algebraic effects can be forwarded
generically: we re-invoke the operation and recursively apply the handler to
continuation \ensuremath{\Varid{c}_{1}}.
For example, during the application of \ensuremath{\Varid{run_{\mathsf{inc}}}} in \ensuremath{{\mathbf{with}}\;\Varid{h_{\mathsf{ND}}}\;{\mathbf{handle}}\;\Varid{run_{\mathsf{inc}}}\;\mathrm{0}\;\Varid{c_{\mathsf{inc}}}} (p. \pageref{eq:hdntoruninc}), \ensuremath{\mathtt{choose}} is forwarded:

\indentbegin \begin{hscode}\SaveRestoreHook
\column{B}{@{}>{\hspre}l<{\hspost}@{}}%
\column{3}{@{}>{\hspre}c<{\hspost}@{}}%
\column{3E}{@{}l@{}}%
\column{10}{@{}>{\hspre}l<{\hspost}@{}}%
\column{E}{@{}>{\hspre}l<{\hspost}@{}}%
\>[10]{}{\mathbf{with}}\;\Varid{h_{\mathsf{ND}}}\;{\mathbf{handle}}\;\Varid{run_{\mathsf{inc}}}\;\mathrm{0}\;\Varid{c_{\mathsf{inc}}}{}\<[E]%
\\
\>[3]{}\makebox[\widthof{\ensuremath{\leadsto^{\ast}}}][l]{\ensuremath{\leadsto}}{}\<[3E]%
\>[10]{}{\mathbf{with}}\;\Varid{h_{\mathsf{ND}}}\;{\mathbf{handle}}\;\mathbf{do}\;\Varid{p'}\leftarrow {\mathbf{op}}\;\mathtt{choose}\;()\;(\Varid{b}\, .\,{\mathbf{with}}\;\Varid{h_{\mathsf{inc}}}\;{\mathbf{handle}}\;\mathbf{if}\;\Varid{b}\;\mathbf{then}\;{\mathbf{op}}\;\mathtt{inc}\;(){}\<[E]%
\\
\>[10]{}\quad\;(\Varid{x}\, .\,\Varid{x}\mathbin{+}\mathrm{5})\;\mathbf{else}\;{\mathbf{op}}\;\mathtt{inc}\;()\;(\Varid{y}\, .\,\Varid{y}\mathbin{+}\mathrm{2}))\, ;\Varid{p'}\;\mathrm{0}{}\<[E]%
\\
\>[3]{}\leadsto^{\ast}{}\<[3E]%
\>[10]{}{\mathbf{return}}\;[\mskip1.5mu (\mathrm{6},\mathrm{1}),(\mathrm{3},\mathrm{1})\mskip1.5mu]{}\<[E]%
\ColumnHook
\end{hscode}\resethooks
\indentend %

If computation \ensuremath{\Varid{c}} is a scoped operation \ensuremath{{\mathbf{sc}}\;\ell^{\textsf{sc}}\;\Varid{v}\;(\Varid{y}\, .\,\Varid{c}_{1})\;(\Varid{z}\, .\,\Varid{c}_{2})}, we again distinguish
two situations: the case where \ensuremath{\Varid{h}} contains a clause for \ensuremath{\ell^{\textsf{sc}}}, and where it does not.
If \ensuremath{\Varid{h}} contains a clause for label \ensuremath{\ell^{\textsf{sc}}}, evaluation switches to the clause's
computation \ensuremath{\Varid{c}} (\textsc{E-HandSc}), with \ensuremath{\Varid{v}} substituted for parameter \ensuremath{\Varid{x}}.
Both the scoped computation and the continuation are replaced by a function that
contains the already-handled computations \ensuremath{\Varid{c}_{1}} and \ensuremath{\Varid{c}_{2}}. For example, this
happens for the scoped operation \ensuremath{\mathtt{once}} in \ensuremath{{\mathbf{with}}\;\Varid{h_{\mathsf{once}}}\;{\mathbf{handle}}\;\Varid{c_{\mathsf{once}}}} (p.
\pageref{eq:honcetoconce}).
\indentbegin \begin{hscode}\SaveRestoreHook
\column{B}{@{}>{\hspre}l<{\hspost}@{}}%
\column{3}{@{}>{\hspre}c<{\hspost}@{}}%
\column{3E}{@{}l@{}}%
\column{8}{@{}>{\hspre}l<{\hspost}@{}}%
\column{12}{@{}>{\hspre}l<{\hspost}@{}}%
\column{16}{@{}>{\hspre}l<{\hspost}@{}}%
\column{E}{@{}>{\hspre}l<{\hspost}@{}}%
\>[8]{}{\mathbf{with}}\;\Varid{h_{\mathsf{once}}}\;{\mathbf{handle}}\;\Varid{c_{\mathsf{once}}}{}\<[E]%
\\
\>[3]{}\makebox[\widthof{\ensuremath{\leadsto^{\ast}}}][l]{\ensuremath{\leadsto}}{}\<[3E]%
\>[8]{}\mathbf{do}\;{}\<[12]%
\>[12]{}\Varid{ts}{}\<[16]%
\>[16]{}\leftarrow (\boldsymbol{\lambda}\anonymous \, .\,{\mathbf{with}}\;\Varid{h_{\mathsf{once}}}\;{\mathbf{handle}}\;{\mathbf{op}}\;\mathtt{choose}\;(\Varid{b}\, .\,{\mathbf{return}}\;\Varid{b}))\;(){}\<[E]%
\\
\>[8]{}\mathbf{do}\;{}\<[12]%
\>[12]{}\Varid{t}{}\<[16]%
\>[16]{}\leftarrow \mathsf{head}\;\Varid{ts}{}\<[E]%
\\
\>[8]{}\quad\;(\boldsymbol{\lambda}\Varid{p}\, .\,{\mathbf{with}}\;\Varid{h_{\mathsf{once}}}\;{\mathbf{handle}}\;(\mathbf{do}\;\Varid{q}\leftarrow {\mathbf{op}}\;\mathtt{choose}\;(\Varid{b}\, .\,{\mathbf{return}}\;\Varid{b})\, ;{\mathbf{return}}\;(\Varid{p},\Varid{q})))\;\Varid{t}{}\<[E]%
\\
\>[3]{}\leadsto^{\ast}{}\<[3E]%
\>[8]{}{\mathbf{return}}\;[\mskip1.5mu (\mathsf{true},\mathsf{true}),(\mathsf{true},\mathsf{false})\mskip1.5mu]{}\<[E]%
\ColumnHook
\end{hscode}\resethooks
\indentend %

When \ensuremath{\Varid{h}} does not contain a clause for label \ensuremath{\ell^{\textsf{sc}}}, we must forward the effect.
As we showed in \Cref{sec:motivation-forwarding}, forwarding scoped effects
cannot happen canonically, but (for now) rather proceeds via the handler's
forwarding clause \ensuremath{{\mathbf{bind}}\;\Varid{x}\;\Varid{k}\mapsto\Varid{c}}.
Similar to the forwarding of algebraic effects, we recursively handle the
subcomputations.
However, in the case of \ensuremath{{\mathbf{bind}}}, we set the continuation of the to-be-forwarded
effect to the forwarding clause's body \ensuremath{\Varid{c}}, substituted with the relevant
arguments.

For a computation, consider again the example described in
\Cref{sec:motivation-forwarding}.
Even though we have not given any semantics to \ensuremath{\mathtt{catch}} yet (we will do so in
\Cref{eg:exceptions}), we know \ensuremath{\Varid{h_{\mathsf{once}}}} does not contain a clause for \ensuremath{\mathtt{catch}}.
As described, \ensuremath{\mathtt{catch}} must be forwarded, addressing the type mismatch between
the handled scoped computation and handled continuation with \ensuremath{\mathsf{concatMap}}.
\indentbegin \begin{hscode}\SaveRestoreHook
\column{B}{@{}>{\hspre}l<{\hspost}@{}}%
\column{6}{@{}>{\hspre}l<{\hspost}@{}}%
\column{23}{@{}>{\hspre}l<{\hspost}@{}}%
\column{E}{@{}>{\hspre}l<{\hspost}@{}}%
\>[6]{}{\mathbf{with}}\;\Varid{h_{\mathsf{once}}}\;{\mathbf{handle}}\;\Varid{c_{\mathsf{catch}}}{}\<[E]%
\\
\>[B]{}\leadsto{}\<[6]%
\>[6]{}{\mathbf{sc}}\;\mathtt{catch}\;\text{\ttfamily \char34 err\char34}\;{}\<[23]%
\>[23]{}(\Varid{b}\, .\,{\mathbf{with}}\;\Varid{h_{\mathsf{once}}}\;{\mathbf{handle}}\;\mathbf{if}\;\Varid{b}\;\mathbf{then}\;{\mathbf{return}}\;\mathrm{1}\;\mathbf{else}\;{\mathbf{return}}\;\mathrm{2})\;{}\<[E]%
\\
\>[23]{}(\Varid{x}\, .\,\mathsf{concatMap}\;\Varid{x}\;(\boldsymbol{\lambda}\Varid{z}\, .\,{\mathbf{with}}\;\Varid{h_{\mathsf{once}}}\;{\mathbf{handle}}\;{\mathbf{return}}\;\Varid{z})){}\<[E]%
\ColumnHook
\end{hscode}\resethooks
\indentend %
\section{Type-and-Effect System}
\label{sec:typing}

This section presents the type-and-effect system of \ensuremath{\lambda_{\mathit{sc}}}. As before, we
distinguish between values, computations and handlers.

\subsection{Grammar}
\begin{figure}[t]
\begin{center}
\renewcommand\arraystretch{1}
\begin{tabular}[c]{rrl@{\hskip -1.5cm}r}
  value types \ensuremath{\Conid{A}}, \ensuremath{\Conid{B}}, \ensuremath{\Conid{M}} &\ensuremath{::=} &\ensuremath{\mathsf{()}\ \mid\ (\Conid{A},\Conid{B})\ \mid\ \Conid{A}\to  \underline{C} \ \mid\  \underline{C} \Rightarrow  \underline{D} } &\\
      &$\mid$ &\ensuremath{\alpha} &type variable\\
      &$\mid$ &\ensuremath{\lambda\;\alpha\, .\,\Conid{A}} &type operator abstraction\\
      &$\mid$ &\ensuremath{\Conid{M}\;\Conid{A}} &type application\\[\grammarspacing]

  type schemes \ensuremath{\sigma} &\ensuremath{::=} &\ensuremath{\Conid{A}\ \mid\ \forall\;\mu\, .\,\sigma\ \mid\ \forall\;\alpha\, .\,\sigma} & \\ %
  computation types \ensuremath{ \underline{C} }, \ensuremath{ \underline{D} } &\ensuremath{::=} &\ensuremath{\Conid{A}\hspace{0.1em}!\hspace{0.1em}\langle\Conid{E}\rangle} & \\
  effect rows \ensuremath{\Conid{E}}, \ensuremath{\Conid{F}} &\ensuremath{::=} &\ensuremath{ \cdot\ \mid\ \mu\ \mid\ \ell\, ;\Conid{E}}\\[\grammarspacing]

  signature contexts \ensuremath{\Sigma} &\ensuremath{::=} &\ensuremath{ \cdot\ \mid\ \Sigma,\ell^{\textsf{op}}\mathbin{:}\Conid{A}\rightarrowtriangle\Conid{B}\ \mid\ \Sigma,\ell^{\textsf{sc}}\mathbin{:}\Conid{A}\rightarrowtriangle\Conid{B}} & \\
  type contexts \ensuremath{\Gamma} &\ensuremath{::=} &\ensuremath{ \cdot\ \mid\ \Gamma,\Varid{x}\mathbin{:}\sigma\ \mid\ \Gamma,\mu\ \mid\ \Gamma,\alpha} &
\end{tabular}
\end{center}
\caption{Types of \ensuremath{\lambda_{\mathit{sc}}}.}
\label{fig:type-syntax}
\end{figure}
\Cref{fig:type-syntax} displays the grammar of the types of \ensuremath{\lambda_{\mathit{sc}}}.
Like terms, types are split: values have value types \ensuremath{\Conid{A},\Conid{B}}, computations have
computation types \ensuremath{ \underline{C} , \underline{D} }.
Value types consist of the unit type \ensuremath{()}, pair types \ensuremath{(\Conid{A},\Conid{B})}, function types
\ensuremath{\Conid{A}\to  \underline{C} }, handler types \ensuremath{ \underline{C} \Rightarrow  \underline{D} }, type variables \ensuremath{\alpha} and abstraction over
them \ensuremath{\boldsymbol{\lambda}\alpha\, .\,\Conid{A}}, and type application \ensuremath{\Conid{M}\;\Conid{A}}.
Following convention and to allow for meaningful examples, we may add base value
types to the calculus, such as \ensuremath{\mathsf{String}}, \ensuremath{\mathsf{Int}} and \ensuremath{\mathsf{Bool}}.
Functions take a value of type \ensuremath{\Conid{A}} as argument and return a computation of type
\ensuremath{ \underline{C} }; handlers take a computation of type \ensuremath{ \underline{C} } as argument and return a
computation of type \ensuremath{ \underline{D} }.

A computation type \ensuremath{\Conid{A}\hspace{0.1em}!\hspace{0.1em}\langle\Conid{E}\rangle} consists of a value type \ensuremath{\Conid{A}}, representing the
type of the value the computation evaluates to, and an effect type \ensuremath{\Conid{E}},
representing the effects that \emph{may} be called during this evaluation.
Different from Eff, we implement effect types as effect rows using row
polymorphism \cite{DBLP:conf/sfp/Leijen05} in the style of Koka \cite{Leijen17}.
Therefore, rows \ensuremath{\Conid{E}} are represented as collections of the previously discussed
atomic labels \ensuremath{\ell^{\textsf{op}}} and \ensuremath{\ell^{\textsf{sc}}}, optionally terminated by a row variable \ensuremath{\mu}.
Finally, we can abstract over both type and row variables, giving rise to type
schemes \ensuremath{\sigma}.

\subsection{Value typing}
\label{sec:value-typing}
\Cref{fig:value-typing} displays the typing rules for values. Rules
\textsc{T-Var}, \textsc{T-Unit}, \textsc{T-Pair} and \textsc{T-Abs} type
variables, units, pairs and term abstractions, respectively, and are standard.
\begin{figure}[t]
\scalebox{\mathscalefactor}{\ruleform{\ensuremath{\Gamma\vdash\Varid{v}\mathbin{:}\sigma}} \quad\text{Value Typing} \hspace*{\fill}}

\begin{center}
\begin{mathpar}
  \inferrule*[right=T-Var]
  { \ensuremath{(\Varid{x}\mathbin{:}\sigma)\;\in\;\Gamma}
  }
  { \ensuremath{\Gamma\vdash\Varid{x}\mathbin{:}\sigma} }

  \inferrule*[right=T-Unit]
  { }
  { \ensuremath{\Gamma\vdash\mathsf{()}\mathbin{:}\mathsf{()}} }

  \inferrule*[right=T-Pair]
  { \ensuremath{\Gamma\vdash\Varid{v}_{1}\mathbin{:}\Conid{A}}
    \\ \ensuremath{\Gamma\vdash\Varid{v}_{2}\mathbin{:}\Conid{B}}
  }
  { \ensuremath{\Gamma\vdash(\Varid{v}_{1},\Varid{v}_{2})\mathbin{:}(\Conid{A},\Conid{B})} }

  \inferrule*[right=T-Abs]
  {
    \ensuremath{\Gamma,\Varid{x}\mathbin{:}\Conid{A}\vdash\Varid{c}\mathbin{:} \underline{C} }
  }
  { \ensuremath{\Gamma\vdash\boldsymbol{\lambda}\Varid{x}\, .\,\Varid{c}\mathbin{:}\Conid{A}\to  \underline{C} } }

  \inferrule*[right=T-EqV]
  { \ensuremath{\Gamma\vdash\Varid{v}\mathbin{:}\Conid{A}}
  \\ \ensuremath{\Conid{A}\;\equiv\;\Conid{B}}
  }
  { \ensuremath{\Gamma\vdash\Varid{v}\mathbin{:}\Conid{B}} }

  \inferrule*[right=T-Inst]
  { \ensuremath{\Gamma\vdash\Varid{v}\mathbin{:}\forall\;\alpha\, .\,\sigma}
    \\ \ensuremath{\Gamma\vdash\Conid{A}}
  }
  { \ensuremath{\Gamma\vdash\Varid{v}\mathbin{:}[\mskip1.5mu \Conid{A}\mathbin{/}\alpha\mskip1.5mu]\;\sigma} }

  \inferrule*[right=T-Gen]
  { \ensuremath{\Gamma,\alpha\vdash\Varid{v}\mathbin{:}\sigma}
  \\ \ensuremath{\alpha\;\notin\;\Gamma}
  }
  { \ensuremath{\Gamma\vdash\Varid{v}\mathbin{:}\forall\;\alpha\, .\,\sigma} }

  \inferrule*[right=T-InstEff]
  { \ensuremath{\Gamma\vdash\Varid{v}\mathbin{:}\forall\;\mu\, .\,\sigma}
    \\ \ensuremath{\Gamma\vdash\Conid{E}}
  }
  { \ensuremath{\Gamma\vdash\Varid{v}\mathbin{:}[\mskip1.5mu \Conid{E}\mathbin{/}\mu\mskip1.5mu]\;\sigma} }

  \inferrule*[right=T-GenEff]
  { \ensuremath{\Gamma,\mu\vdash\Varid{v}\mathbin{:}\sigma}
  \\ \ensuremath{\mu\;\notin\;\Gamma}
  }
  { \ensuremath{\Gamma\vdash\Varid{v}\mathbin{:}\forall\;\mu\, .\,\sigma} }
\end{mathpar}
\end{center}
\caption{Value typing.}
\label{fig:value-typing}
\end{figure}

Rule \textsc{T-EqV} expresses that typing holds up to equivalence of types. The
full type equivalence relation (\ensuremath{\Conid{A}\;\equiv\;\Conid{B}}), which also uses the equivalence of rows (\ensuremath{\Conid{E}\;\equiv_{\langle\rangle}\;\Conid{F}}),
is included in \Cref{app:type-equiv}.
However, these relations can be described as the congruence closure of the following two rules.
\begin{mathpar}
  \inferrule*[right=Q-AppAbs]
  { }
  { \ensuremath{(\lambda\;\alpha\, .\,\Conid{A})\;\Conid{B}\;\equiv\;\Conid{A}\;[\mskip1.5mu \Conid{B}\mathbin{/}\alpha\mskip1.5mu]} }

  \inferrule*[right=R-Swap]
  { \ensuremath{\ell_{1}\;\not =\;\ell_{2}}
  }
  { \ensuremath{\ell_{1}\, ;\ell_{2}\, ;\Conid{E}\;\equiv_{\langle\rangle}\;\ell_{2}\, ;\ell_{1}\, ;\Conid{E}} }
\end{mathpar}
Rule \textsc{Q-AppAbs} captures type application, following \ensuremath{F_{\omega}}, and \textsc{R-Swap} captures
the insignificance of the order in effect rows, following Koka's row typing
approach.

The final four value typing rules deal with generalization and instantiation of
type variables and row variables. Rule
\textsc{T-Inst} instantiates the type variables \ensuremath{\alpha} in a type scheme with a value type \ensuremath{\Conid{A}}.
Rule \textsc{T-Gen} is its dual, abstracting over a type variable.
The rules for row variables are similar:
\textsc{T-InstEff} instantiates row variable with an effect row \ensuremath{\Conid{E}};
\textsc{T-GenEff} abstracts over a row variable.
The definition of well-scopedness for types \ensuremath{\Gamma\vdash\sigma} and effect rows \ensuremath{\Gamma\vdash\Conid{E}} is straightforward (\Cref{app:well-scopedness}).

\subsection{Computation typing}
\label{sec:computation-typing}

\begin{figure}[t]
\ruleform{\ensuremath{\Gamma\vdash\Varid{c}\mathbin{:} \underline{C} }} \quad\text{Computation Typing} \hspace*{\fill}
\begin{center}
\begin{mathpar}
  \inferrule*[right=T-App]
  { \ensuremath{\Gamma\vdash\Varid{v}_{1}\mathbin{:}\Conid{A}\to  \underline{C} }
    \\ \ensuremath{\Gamma\vdash\Varid{v}_{2}\mathbin{:}\Conid{A}}
  }
  { \ensuremath{\Gamma\vdash\Varid{v}_{1}\;\Varid{v}_{2}\mathbin{:} \underline{C} } }

  \inferrule*[right=T-Do]
  { \ensuremath{\Gamma\vdash\Varid{c}_{1}\mathbin{:}\Conid{A}\hspace{0.1em}!\hspace{0.1em}\langle\Conid{E}\rangle}
    \\ \ensuremath{\Gamma,\Varid{x}\mathbin{:}\Conid{A}\vdash\Varid{c}_{2}\mathbin{:}\Conid{B}\hspace{0.1em}!\hspace{0.1em}\langle\Conid{E}\rangle}
  }
  { \ensuremath{\Gamma\vdash\mathbf{do}\;\Varid{x}\leftarrow \Varid{c}_{1}\, ;\Varid{c}_{2}\mathbin{:}\Conid{B}\hspace{0.1em}!\hspace{0.1em}\langle\Conid{E}\rangle} }

  \inferrule*[right=T-EqC]
  { \ensuremath{\Gamma\vdash\Varid{c}\mathbin{:} \underline{C} }
    \\ \ensuremath{ \underline{C} \;\equiv\; \underline{D} }
  }{ \ensuremath{\Gamma\vdash\Varid{c}\mathbin{:} \underline{D} } }

  \inferrule*[right=T-Let]
  { \ensuremath{\Gamma\vdash\Varid{v}\mathbin{:}\sigma}
    \\ \ensuremath{\Gamma,\Varid{x}\mathbin{:}\sigma\vdash\Varid{c}\mathbin{:} \underline{C} }
  }
  { \ensuremath{\Gamma\vdash\mathbf{let}\;\Varid{x}\mathrel{=}\Varid{v}\;\mathbf{in}\;\Varid{c}\mathbin{:} \underline{C} } }

  \inferrule*[right=T-Ret]
  { \ensuremath{\Gamma\vdash\Varid{v}\mathbin{:}\Conid{A}}
    \\ \ensuremath{\Gamma\vdash\Conid{E}}
  }
  { \ensuremath{\Gamma\vdash{\mathbf{return}}\;\Varid{v}\mathbin{:}\Conid{A}\hspace{0.1em}!\hspace{0.1em}\langle\Conid{E}\rangle} }

  \inferrule*[right=T-Hand]
  { \ensuremath{\Gamma\vdash\Varid{v}\mathbin{:} \underline{C} \Rightarrow  \underline{D} }
    \\ \ensuremath{\Gamma\vdash\Varid{c}\mathbin{:} \underline{C} }
  }
  { \ensuremath{\Gamma\vdash{\mathbf{with}}\;\Varid{v}\;{\mathbf{handle}}\;\Varid{c}\mathbin{:} \underline{D} } }

  \inferrule*[right=T-Op]
  { \ensuremath{(\ell^{\textsf{op}}\mathbin{:}\Conid{A}_{1}\rightarrowtriangle\Conid{A}_{2})\;\in\;\Sigma}
    \\ \ensuremath{\Gamma\vdash\Varid{v}\mathbin{:}\Conid{A}_{1}}
    \\ \ensuremath{\Gamma,\Varid{y}\mathbin{:}\Conid{A}_{2}\vdash\Varid{c}\mathbin{:}\Conid{A}_{3}\hspace{0.1em}!\hspace{0.1em}\langle\Conid{E}\rangle}
    \\ \ensuremath{\ell^{\textsf{op}}\;\in\;\Conid{E}}
  }
  { \ensuremath{\Gamma\vdash{\mathbf{op}}\;\ell^{\textsf{op}}\;\Varid{v}\;(\Varid{y}\, .\,\Varid{c})\mathbin{:}\Conid{A}_{3}\hspace{0.1em}!\hspace{0.1em}\langle\Conid{E}\rangle} }

  \inferrule*[right=T-Sc]
  { \ensuremath{(\ell^{\textsf{sc}}\mathbin{:}\Conid{A}_{1}\rightarrowtriangle\Conid{A}_{2})\;\in\;\Sigma}
    \\ \ensuremath{\Gamma\vdash\Varid{v}\mathbin{:}\Conid{A}_{1}}
    \\ \ensuremath{\Gamma,\Varid{y}\mathbin{:}\Conid{A}_{2}\vdash\Varid{c}_{1}\mathbin{:}\Conid{A}_{3}\hspace{0.1em}!\hspace{0.1em}\langle\Conid{E}\rangle}
    \\ \ensuremath{\Gamma,\Varid{z}\mathbin{:}\Conid{A}_{3}\vdash\Varid{c}_{2}\mathbin{:}\Conid{A}_{4}\hspace{0.1em}!\hspace{0.1em}\langle\Conid{E}\rangle}
    \\ \ensuremath{\ell^{\textsf{sc}}\;\in\;\Conid{E}}
  }
  { \ensuremath{\Gamma\vdash{\mathbf{sc}}\;\ell^{\textsf{sc}}\;\Varid{v}\;(\Varid{y}\, .\,\Varid{c}_{1})\;(\Varid{z}\, .\,\Varid{c}_{2})\mathbin{:}\Conid{A}_{4}\hspace{0.1em}!\hspace{0.1em}\langle\Conid{E}\rangle} }
\end{mathpar}
\end{center}
\caption{Computation typing.}
\label{fig:computation-typing}
\end{figure}

\Cref{fig:computation-typing} shows the rules for computation typing.
Rules \textsc{T-App} and \textsc{T-Do} capture application and sequencing, and are
standard.
Like value typing, typing of computations holds up to equivalence of types
(\textsc{T-EqC}).
Rule \textsc{T-Let} is part of our extension of Eff,
as scoped effects require introducing let-polymorphism.

Rule \textsc{T-Ret} assigns a computation type to a return statement. This type
consists of the value \ensuremath{\Varid{v}} in the return, together with a effect row \ensuremath{\Conid{E}}. Notice
that, as in Koka, this row can be freely chosen.

Rule \textsc{T-Hand} types handler application. The typing rules for handlers
and their clauses are discussed in \Cref{sec:handler-typing}. A handler of type
\ensuremath{\Conid{C}\Rightarrow \Conid{D}} denotes a handler that transforms computations of type \ensuremath{\Conid{C}} to a
computation of type \ensuremath{\Conid{D}}.

Rule \textsc{T-Op} types algebraic effects. Looking up label \ensuremath{\ell^{\textsf{op}}} in \ensuremath{\Sigma} yields a signature
\ensuremath{\Conid{A}_{1}\rightarrowtriangle\Conid{A}_{2}}, where \ensuremath{\Conid{A}_{1}} is the type of the operation's parameter \ensuremath{\Varid{v}},
and \ensuremath{\Conid{A}_{2}} is the type of argument \ensuremath{\Varid{y}} of the continuation.
The resulting effect row includes \ensuremath{\ell^{\textsf{op}}}.
Indeed, \ensuremath{\ell\;\in\;\Conid{E}} means that there is some \ensuremath{\Conid{E'}} such that \ensuremath{\Conid{E}\;\equiv_{\langle\rangle}\;\ell\, ;\Conid{E'}}.
Finally, the operation's type equals that of continuation \ensuremath{\Varid{c}}.

\label{sec:sc-clause-typing}
Similarly, rule \textsc{T-Sc} types scoped effects. Again, looking up label \ensuremath{\ell^{\textsf{sc}}} in
\ensuremath{\Sigma} yields signature \ensuremath{{A_{\textsf{sc}}}\rightarrowtriangle{B_{\textsf{sc}}}} where \ensuremath{{A_{\textsf{sc}}}} corresponds to the type of the operation's
parameter \ensuremath{\Varid{v}}.
However, where \ensuremath{{B_{\textsf{op}}}} in the algebraic case refers to the \emph{continuation's}
argument, \ensuremath{{B_{\textsf{sc}}}} now describes the \emph{scoped computation's} argument. This
leaves the the scoped result type undescribed by the signature, but as discussed
in \Cref{sec:motivation-polymorphic-handers}, this freedom is exactly what we
want.
As for the effect rows, \textsc{T-Sc} requires the rows of the scoped
computation to match.

\subsection{Handler typing}
\label{sec:handler-typing}

The typing rules for handlers and handler clauses are shown in
\Cref{fig:handler-typing}.
They consist of four judgments.
Judgment \ensuremath{\Gamma\vdash{\mathbf{return}}\;\Varid{x}\mapsto\Varid{c}_{\Varid{r}}\mathbin{:}\Conid{M}\;\Conid{A}\hspace{0.1em}!\hspace{0.1em}\langle\Conid{E}\rangle} types return clauses,
\ensuremath{\Gamma\vdash\Varid{oprs}\mathbin{:}\Conid{M}\;\Conid{A}\hspace{0.1em}!\hspace{0.1em}\langle\Conid{E}\rangle} types operation clauses, and \ensuremath{\Gamma\vdash{\mathbf{bind}}\;\Varid{x}\;\Varid{k}\mapsto\Varid{c}\mathbin{:}\Conid{M}\;\Conid{A}\hspace{0.1em}!\hspace{0.1em}\langle\Conid{E}\rangle} types forwarding clauses.
Finally, \ensuremath{\Gamma\vdash\Varid{h}\mathbin{:}\forall\;\alpha\, .\,\alpha\hspace{0.1em}!\hspace{0.1em}\langle\Conid{F}\rangle\Rightarrow \Conid{M}\;\alpha\hspace{0.1em}!\hspace{0.1em}\langle\Conid{E}\rangle} types
handlers, using the first three judgments.

\begin{figure}[t]
  \raggedright
  \ruleform{\ensuremath{\Gamma\vdash{\mathbf{return}}\;\Varid{x}\mapsto\Varid{c}_{\Varid{r}}\mathbin{:} \underline{C} }} \ruleform{\ensuremath{\Gamma\vdash\Varid{oprs}\mathbin{:} \underline{C} }} %
  \ruleform{\ensuremath{\Gamma\vdash{\mathbf{bind}}\;\Varid{x}\;\Varid{k}\mapsto\Varid{c}_{\Varid{f}}\mathbin{:} \underline{C} }}\\[0.07cm]
  \text{Handler clause typing} \hspace*{\fill}

  \begin{center}
  \begin{mathpar}
    \inferrule*[right=T-Return]
    { \ensuremath{\Gamma,\Varid{x}\mathbin{:}\Conid{A}\mapsto\Varid{c}_{\Varid{r}}\mathbin{:} \underline{C} }
    }
    { \ensuremath{\Gamma\vdash{\mathbf{return}}\;\Varid{x}\mapsto\Varid{c}_{\Varid{r}}\mathbin{:} \underline{C} } }

    \inferrule*[right=T-Empty]
    {
    }
    { \ensuremath{\Gamma\vdash \cdot\mathbin{:} \underline{C} } }

    \inferrule*[right=T-OprOp]
    { \ensuremath{\Gamma\vdash\Varid{oprs}\mathbin{:} \underline{C} }
      \\ \ensuremath{(\ell^{\textsf{op}}\mathbin{:}\Conid{A}_{1}\rightarrowtriangle\Conid{A}_{2})\;\in\;\Sigma}
      \\\\ \ensuremath{\Gamma,\Varid{x}\mathbin{:}\Conid{A}_{1},\Varid{k}\mathbin{:}\Conid{A}_{2}\to  \underline{C} \vdash\Varid{c}\mathbin{:} \underline{C} }
    }
    { \ensuremath{\Gamma\vdash{\mathbf{op}}\;\ell^{\textsf{op}}\;\Varid{x}\;\Varid{k}\mapsto\Varid{c},\Varid{oprs}\mathbin{:} \underline{C} } }

    \inferrule*[right=T-OprSc]
      { \ensuremath{\Gamma\vdash\Varid{oprs}\mathbin{:}\Conid{M}\;\Conid{A}_{1}\hspace{0.1em}!\hspace{0.1em}\langle\Conid{E}\rangle}
        \\ \ensuremath{(\ell^{\textsf{sc}}\mathbin{:}\Conid{A}_{2}\rightarrowtriangle\Conid{A}_{3})\;\in\;\Sigma}
        \\ \ensuremath{\Gamma,\beta,\Varid{x}\mathbin{:}\Conid{A}_{2},\Varid{p}\mathbin{:}\Conid{A}_{3}\to \Conid{M}\;\beta\hspace{0.1em}!\hspace{0.1em}\langle\Conid{E}\rangle,\Varid{k}\mathbin{:}\beta\to \Conid{M}\;\Conid{A}_{1}\hspace{0.1em}!\hspace{0.1em}\langle\Conid{E}\rangle\vdash\Varid{c}\mathbin{:}\Conid{M}\;\Conid{A}_{1}\hspace{0.1em}!\hspace{0.1em}\langle\Conid{E}\rangle}
      }
      { \ensuremath{\Gamma\vdash{\mathbf{sc}}\;\ell^{\textsf{sc}}\;\Varid{x}\;\Varid{p}\;\Varid{k}\mapsto\Varid{c},\Varid{oprs}\mathbin{:}\Conid{M}\;\Conid{A}_{1}\hspace{0.1em}!\hspace{0.1em}\langle\Conid{E}\rangle} }

    \inferrule*[right=T-Bind]
    {
      \ensuremath{\Gamma,\alpha,\Varid{x}\mathbin{:}\Conid{M}\;\alpha,\Varid{k}\mathbin{:}\alpha\to \Conid{M}\;\Conid{A}\hspace{0.1em}!\hspace{0.1em}\langle\Conid{E}\rangle\vdash\Varid{c}_{\Varid{f}}\mathbin{:}\Conid{M}\;\Conid{A}\hspace{0.1em}!\hspace{0.1em}\langle\Conid{E}\rangle}
    }
    { \ensuremath{\Gamma\vdash{\mathbf{bind}}\;\Varid{x}\;\Varid{k}\mapsto\Varid{c}_{\Varid{f}}\mathbin{:}\Conid{M}\;\Conid{A}\hspace{0.1em}!\hspace{0.1em}\langle\Conid{E}\rangle} }
  \end{mathpar}
  \end{center}

  \bigskip
  \raggedright
  \ruleform{\ensuremath{\Gamma\vdash\Varid{h}\mathbin{:}\forall\;\alpha\, .\,\alpha\hspace{0.1em}!\hspace{0.1em}\langle\Conid{E}\rangle\Rightarrow \Conid{M}\;\alpha\hspace{0.1em}!\hspace{0.1em}\langle\Conid{F}\rangle}} \quad \text{Handler typing} \hspace*{\fill}\\

  \begin{center}
  \begin{mathpar}

    \inferrule[T-\ensuremath{\text{Handler}_{\text{Bind}}}]
    { \ensuremath{\langle\Conid{F}\rangle\equiv_{\langle\rangle}\langle\mathit{labels\!}\;(\Varid{oprs})\, ;\Conid{E}\rangle}
      \\ \ensuremath{\Gamma,\alpha\vdash{\mathbf{return}}\;\Varid{x}\mapsto\Varid{c}_{\Varid{r}}\mathbin{:}\Conid{M}\;\alpha\hspace{0.1em}!\hspace{0.1em}\langle\Conid{E}\rangle}
      \\ \ensuremath{\Gamma,\alpha\vdash\Varid{oprs}\mathbin{:}\Conid{M}\;\alpha\hspace{0.1em}!\hspace{0.1em}\langle\Conid{E}\rangle}
      \\ \ensuremath{\Gamma,\alpha\vdash{\mathbf{bind}}\;\Varid{x}\;\Varid{k}\mapsto\Varid{c}_{\Varid{f}}\mathbin{:}\Conid{M}\;\alpha\hspace{0.1em}!\hspace{0.1em}\langle\Conid{E}\rangle}
    }
    { \ensuremath{\Gamma\vdash{\mathbf{handler}}\;\{\mskip1.5mu {\mathbf{return}}\;\Varid{x}\mapsto\Varid{c}_{\Varid{r}},\Varid{oprs},{\mathbf{bind}}\;\Varid{x}\;\Varid{k}\mapsto\Varid{c}_{\Varid{f}}\mskip1.5mu\}\mathbin{:}\forall\;\alpha\, .\,\alpha\hspace{0.1em}!\hspace{0.1em}\langle\Conid{F}\rangle\Rightarrow \Conid{M}\;\alpha\hspace{0.1em}!\hspace{0.1em}\langle\Conid{E}\rangle} }
  \end{mathpar}
  \end{center}
  \caption{Handler typing.}
  \label{fig:handler-typing}
\end{figure}

\paragraph{Return Clauses}
Rule \textsc{T-Return} types return clauses of the form \ensuremath{{\mathbf{return}}\;\Varid{x}\mapsto\Varid{c}_{\Varid{r}}}.
It binds variable \ensuremath{\Varid{x}} to type \ensuremath{\Conid{A}}, adds it to the context,
and returns the type \ensuremath{\Conid{M}\;\Conid{A}\hspace{0.1em}!\hspace{0.1em}\langle\Conid{E}\rangle} of \ensuremath{\Varid{c}_{\Varid{r}}} as the type of the return clause.

\paragraph{Operation Clauses}
The judgment \ensuremath{\Gamma\vdash\Varid{oprs}\mathbin{:}\Conid{M}\;\Conid{A}\hspace{0.1em}!\hspace{0.1em}\langle\Conid{E}\rangle} denotes that all operations in the
sequence of operations \ensuremath{\Varid{oprs}} have type \ensuremath{\Conid{M}\;\Conid{A}\hspace{0.1em}!\hspace{0.1em}\langle\Conid{E}\rangle}.
The base case \textsc{T-Empty} types the empty sequence.
The other two cases %
require the head of the sequence (either \ensuremath{{\mathbf{op}}} or \ensuremath{{\mathbf{sc}}}) to have the same
type as the tail.

Rule \textsc{T-OprOp} types algebraic operation clauses \ensuremath{{\mathbf{op}}\;\ell^{\textsf{op}}\;\Varid{x}\;\Varid{k}\mapsto\Varid{c}}.
Looking up label \ensuremath{\ell^{\textsf{op}}} in \ensuremath{\Sigma} yields signature \ensuremath{{A_{\textsf{op}}}\rightarrowtriangle{B_{\textsf{op}}}}, where \ensuremath{{A_{\textsf{op}}}}
describes the type of parameter \ensuremath{\Varid{x}}, and \ensuremath{{B_{\textsf{op}}}} the type of the argument of
continuation \ensuremath{\Varid{k}}.
In order for an operation \ensuremath{{\mathbf{op}}} to have type \ensuremath{\Conid{M}\;\Conid{A}\hspace{0.1em}!\hspace{0.1em}\langle\Conid{E}\rangle}, \ensuremath{\Varid{c}} should %
have the same type. %

Once again, the case for typing a scoped clause \ensuremath{{\mathbf{sc}}\;\ell^{\textsf{sc}}\;\Varid{x}\;\Varid{p}\;\Varid{k}\mapsto\Varid{c}}
(\textsc{T-OprSc}) is similar to its algebraic equivalent, extended to include
the scoped computation.
Notice the type of \ensuremath{\Varid{p}} and \ensuremath{\Varid{k}} when typing \ensuremath{\Varid{c}}.
First, as \ensuremath{\lambda_{\mathit{sc}}} allows freedom in the scoped result type, the type variable
\ensuremath{\beta} is used for this type.
Second, as shown in the operational semantics (rule \textsc{E-HandSc}), for a
clause \ensuremath{{\mathbf{sc}}\;\ell^{\textsf{sc}}\;\Varid{x}\;\Varid{p}\;\Varid{k}\mapsto\Varid{c}}, computation \ensuremath{\Varid{c}} uses the
\emph{already-handled} subcomputations \ensuremath{\Varid{p}} and \ensuremath{\Varid{k}}.
Therefore, type operator \ensuremath{\Conid{M}} occurs in the scoped result type as well as in the
continuation's result type:

\begin{equation*}
  \ensuremath{\Varid{p}\mathbin{:}{B_{\textsf{sc}}}\to \Conid{M}\;\beta\hspace{0.1em}!\hspace{0.1em}\langle\Conid{E}\rangle} \qquad \qquad \ensuremath{\Varid{k}\mathbin{:}\beta\to \Conid{M}\;\Conid{A}\hspace{0.1em}!\hspace{0.1em}\langle\Conid{E}\rangle}
\end{equation*}

This means that, even though our focus on mitigating the type mismatch between
scoped computation and continuation so far has been on forwarding \emph{unknown}
scoped effects, the same applies when handling \emph{known} scoped effects, where
computation \ensuremath{\Varid{c}} accounts for this discrepancy.

\paragraph{Forwarding Clause}
In accordance with \Cref{sec:bind-intro}, when an application of a handler to a
scoped effect the handler does not contain a clause for is encountered, the
effect must be forwarded.
Like the algebraic case, the effect is left in-place with handled
sub-computations.
However, as described in \Cref{sec:bind-intro}, simply handling the
sub-computations would result in a type mismatch.
It is clear from the typing rule that \ensuremath{{\mathbf{bind}}\;\Varid{x}\;\Varid{k}\mapsto\Varid{c}_{\Varid{f}}} basically
deals with this type mismatch between the type \ensuremath{\Conid{M}\;\alpha} of \ensuremath{\Varid{x}} and the
parameter type \ensuremath{\alpha} of \ensuremath{\Varid{k}}.

\paragraph{Handler}
Rule \textsc{T-Handler} types handlers with a polymorphic type of the form
\ensuremath{\forall\;\alpha\, .\,\alpha\mathbin{!}\langle\Conid{F}\rangle\Rightarrow \Conid{M}\;\alpha\mathbin{!}\langle\Conid{E}\rangle}.
A handler consists of a \ensuremath{{\mathbf{return}}}-clause (\textsc{T-Return}),
zero or more operation clauses (\textsc{T-Empty}, \textsc{T-OprOp} and \textsc{T-OprSc}),
and a forwarding clause (\textsc{T-Fwd}).
All clauses should agree on their result type \ensuremath{\Conid{M}\;\alpha\mathbin{!}\langle\Conid{E}\rangle}.
Notice that \ensuremath{\Conid{E}} denotes a collection with at least the labels of the present algebraic
and scoped operation clauses in the handler (computed by the \ensuremath{\mathit{labels\!}}\,-function).

\section{Generalized Forwarding}
\label{sec:genfwd}
In \Cref{sec:bind-intro} we showed how \ensuremath{{\mathbf{bind}}} can be used to forward certain
scoped effects.
Here we also noted that not all effects can be forwarded by \ensuremath{{\mathbf{bind}}}, and that
some require a more general forwarding construct.
In this section we revisit this issue.
First, we show how in some situations, we need a more expressive forwarding
clause.
We will then present a generalized forwarding clause, and show how the
\ensuremath{{\mathbf{bind}}}-based forwarding clauses can be straightforwardly desugared into
generalized forwarding clauses.
Finally, we will amend the calculus with rules for its grammar, semantics and
typing.

\subsection{\ensuremath{\Conid{Inc}}, revisited}
We introduced \ensuremath{\Varid{h_{\mathsf{inc}}}} in an algebraic setting, and have not returned to it.
We have not yet given a forwarding clause for \ensuremath{\Varid{h_{\mathsf{inc}}}} (with carrier
\ensuremath{(\mathsf{Int}\to (\alpha,\mathsf{Int})\hspace{0.1em}!\hspace{0.1em}\langle\Conid{E}\rangle)\hspace{0.1em}!\hspace{0.1em}\langle\Conid{E}\rangle} for some effect type \ensuremath{\Conid{E}}).
We could implement the forwarding clause of it using \ensuremath{{\mathbf{bind}}} as follows.
\indentbegin \begin{hscode}\SaveRestoreHook
\column{B}{@{}>{\hspre}l<{\hspost}@{}}%
\column{3}{@{}>{\hspre}l<{\hspost}@{}}%
\column{14}{@{}>{\hspre}l<{\hspost}@{}}%
\column{25}{@{}>{\hspre}c<{\hspost}@{}}%
\column{25E}{@{}l@{}}%
\column{28}{@{}>{\hspre}l<{\hspost}@{}}%
\column{42}{@{}>{\hspre}c<{\hspost}@{}}%
\column{42E}{@{}l@{}}%
\column{47}{@{}>{\hspre}l<{\hspost}@{}}%
\column{E}{@{}>{\hspre}l<{\hspost}@{}}%
\>[3]{}\Varid{h_{\mathsf{inc}}}_{\text{\xmark}}{}\<[14]%
\>[14]{}\mathrel{=}{\mathbf{handler}}\;{}\<[25]%
\>[25]{}\{\mskip1.5mu {}\<[25E]%
\>[28]{}{\mathbf{return}}\;\Varid{x}{}\<[42]%
\>[42]{}\mapsto{}\<[42E]%
\>[47]{}{\mathbf{return}}\;(\boldsymbol{\lambda}\Varid{s}\, .\,{\mathbf{return}}\;(\Varid{x},\Varid{s})){}\<[E]%
\\
\>[25]{},{}\<[25E]%
\>[28]{}{\mathbf{op}}\;\mathtt{inc}\;\anonymous \;\Varid{k}{}\<[42]%
\>[42]{}\mapsto{}\<[42E]%
\>[47]{}{\mathbf{return}}\;(\boldsymbol{\lambda}\Varid{s}\, .\,\mathbf{do}\;\Varid{s'}\leftarrow \Varid{s}\mathbin{+}\mathrm{1}\, ;\mathbf{do}\;\Varid{k'}\leftarrow \Varid{k}\;\Varid{s'}\, ;\Varid{k'}\;\Varid{s'}){}\<[E]%
\\
\>[25]{},{}\<[25E]%
\>[28]{}{\mathbf{bind}}\;\Varid{x}\;\Varid{k}{}\<[42]%
\>[42]{}\mapsto{}\<[42E]%
\>[47]{}{\mathbf{return}}\;(\boldsymbol{\lambda}\Varid{s}\, .\,\mathbf{do}\;(\Varid{y},\Varid{s'})\leftarrow \Varid{x}\;\Varid{s}\, ;\Varid{k'}\leftarrow \Varid{k}\;\Varid{y}\, ;\Varid{k'}\;\Varid{s'})\mskip1.5mu\}{}\<[E]%
\ColumnHook
\end{hscode}\resethooks
\indentend %
However, this implementation of forwarding would actually result in
unexpected behaviours.
Consider the derivation displayed in \Cref{fig:fwd-deriv-bind}.
The chosen example has an occurrence of \ensuremath{\mathtt{choose}} inside an occurrence of \ensuremath{\mathtt{inc}},
both of them inside an occurrence of \ensuremath{\mathtt{once}}.
Since \ensuremath{\mathtt{choose}} is in the scope of \ensuremath{\mathtt{once}}, we would expect that the
\ensuremath{\mathtt{choose}} is pruned by the \ensuremath{\mathtt{once}} to only keep the first branch.
However, the derivation shows that by using the \ensuremath{{\mathbf{bind}}}-based
forwarding clause of \ensuremath{\Varid{h_{\mathsf{inc}}}_{\text{\xmark}}} the \ensuremath{\mathtt{choose}} escapes the scope of
\ensuremath{\mathtt{once}}.
As can be seen, when \ensuremath{\mathsf{head}} (i.e.\ the embodiment of the \ensuremath{\mathtt{once}}
effect) is evaluated, \ensuremath{\mathtt{choose}} has not even been handled yet; it is
captured by a function in the result list.

For the brevity of presentation, we omitted the unimportant part of terms in the
derivation, and we inline the guard that checks for the empty list in
the if statement.
We underlined parts that will reduce in the next step(s).

\begin{figure}\indentbegin \begin{hscode}\SaveRestoreHook
\column{B}{@{}>{\hspre}l<{\hspost}@{}}%
\column{7}{@{}>{\hspre}l<{\hspost}@{}}%
\column{9}{@{}>{\hspre}l<{\hspost}@{}}%
\column{11}{@{}>{\hspre}l<{\hspost}@{}}%
\column{13}{@{}>{\hspre}l<{\hspost}@{}}%
\column{17}{@{}>{\hspre}l<{\hspost}@{}}%
\column{19}{@{}>{\hspre}l<{\hspost}@{}}%
\column{21}{@{}>{\hspre}l<{\hspost}@{}}%
\column{23}{@{}>{\hspre}l<{\hspost}@{}}%
\column{27}{@{}>{\hspre}l<{\hspost}@{}}%
\column{29}{@{}>{\hspre}l<{\hspost}@{}}%
\column{33}{@{}>{\hspre}l<{\hspost}@{}}%
\column{34}{@{}>{\hspre}l<{\hspost}@{}}%
\column{40}{@{}>{\hspre}l<{\hspost}@{}}%
\column{43}{@{}>{\hspre}l<{\hspost}@{}}%
\column{E}{@{}>{\hspre}l<{\hspost}@{}}%
\>[7]{}{\mathbf{with}}\;\Varid{h_{\mathsf{once}}}\;{\mathbf{handle}}{}\<[E]%
\\
\>[7]{}\hsindent{2}{}\<[9]%
\>[9]{}\mathbf{do}\;\Varid{f}\leftarrow \underline{{\mathbf{with}}\;\Varid{h_{\mathsf{inc}}}_{\text{\xmark}}\;{\mathbf{handle}}}{}\<[E]%
\\
\>[9]{}\hsindent{2}{}\<[11]%
\>[11]{}{\mathbf{sc}}\;\underline{\mathtt{once}}\;()\;(\anonymous \, .\,{\mathbf{op}}\;\mathtt{inc}\;()\;(\anonymous \, .\,{\mathbf{op}}\;\mathtt{choose}\;()\;(\Varid{b}\, .\,{\mathbf{return}}\;\Varid{b})))\;(\Varid{z}\, .\,{\mathbf{return}}\;\Varid{z})\, ;{}\<[E]%
\\
\>[7]{}\hsindent{2}{}\<[9]%
\>[9]{}\Varid{f}\;\mathrm{0}{}\<[E]%
\\
\>[B]{}\leadsto^{\ast}{}\<[7]%
\>[7]{}{\mathbf{with}}\;\Varid{h_{\mathsf{once}}}\;{\mathbf{handle}}\;{}\<[E]%
\\
\>[7]{}\hsindent{2}{}\<[9]%
\>[9]{}\underline{\;\mathbf{do}\;\Varid{f}}\leftarrow {}\<[E]%
\\
\>[9]{}\hsindent{2}{}\<[11]%
\>[11]{}{\mathbf{sc}}\;\mathtt{once}\;()\;{}\<[23]%
\>[23]{}(\anonymous \, .\,{\mathbf{with}}\;\Varid{h_{\mathsf{inc}}}_{\text{\xmark}}\;{\mathbf{handle}}\;({\mathbf{op}}\;\mathtt{inc}\;()\;(\anonymous \, .\,{\mathbf{op}}\;\mathtt{choose}\;()\;(\Varid{b}\, .\,{\mathbf{return}}\;\Varid{b}))))\;{}\<[E]%
\\
\>[23]{}(\Varid{z}\, .\,\ldots){}\<[E]%
\\
\>[7]{}\hsindent{2}{}\<[9]%
\>[9]{}\underline{\Varid{f}\;\mathrm{0}}{}\<[E]%
\\
\>[B]{}\leadsto{}\<[7]%
\>[7]{}\underline{{\mathbf{with}}\;\Varid{h_{\mathsf{once}}}\;{\mathbf{handle}}}\;{}\<[E]%
\\
\>[7]{}\hsindent{2}{}\<[9]%
\>[9]{}{\mathbf{sc}}\;\underline{\mathtt{once}}\;()\;{}\<[27]%
\>[27]{}(\anonymous \, .\,{}\<[33]%
\>[33]{}{\mathbf{with}}\;\Varid{h_{\mathsf{inc}}}_{\text{\xmark}}\;{\mathbf{handle}}\;({\mathbf{op}}\;\mathtt{inc}\;()\;(\anonymous \, .\,{\mathbf{op}}\;\mathtt{choose}\;()\;(\Varid{b}\, .\,{\mathbf{return}}\;\Varid{b})))))\;{}\<[E]%
\\
\>[9]{}\hsindent{12}{}\<[21]%
\>[21]{}(\Varid{z}\, .\,{}\<[27]%
\>[27]{}\ldots){}\<[E]%
\\
\>[B]{}\leadsto^{\ast}{}\<[7]%
\>[7]{}\mathbf{do}\;\Varid{ts}\leftarrow {}\<[17]%
\>[17]{}{\mathbf{with}}\;\Varid{h_{\mathsf{once}}}\;{\mathbf{handle}}\;{}\<[E]%
\\
\>[17]{}\hsindent{2}{}\<[19]%
\>[19]{}\underline{{\mathbf{with}}\;\Varid{h_{\mathsf{inc}}}_{\text{\xmark}}\;{\mathbf{handle}}}\;({\mathbf{op}}\;\underline{\mathtt{inc}}\;()\;(\anonymous \, .\,{\mathbf{op}}\;\mathtt{choose}\;()\;(\Varid{b}\, .\,{\mathbf{return}}\;\Varid{b}))){}\<[E]%
\\
\>[7]{}\mathbf{if}\;(\Varid{ts}=[\mskip1.5mu \mskip1.5mu])\;\mathbf{then}\;{\mathbf{return}}\;[\mskip1.5mu \mskip1.5mu]\;\mathbf{else}\;\mathbf{do}\;\Varid{t}\leftarrow \mathsf{head}\;\Varid{ts}\, ;\ldots{}\<[E]%
\\
\>[B]{}\leadsto^{\ast}{}\<[7]%
\>[7]{}\mathbf{do}\;\Varid{ts}\leftarrow {}\<[17]%
\>[17]{}\underline{{\mathbf{with}}\;\Varid{h_{\mathsf{once}}}\;{\mathbf{handle}}}\;{}\<[E]%
\\
\>[17]{}\hsindent{2}{}\<[19]%
\>[19]{}\underline{{\mathbf{return}}}\;(\boldsymbol{\lambda}\Varid{s}\, .\,{}\<[40]%
\>[40]{}\mathbf{do}\;\Varid{s'}\leftarrow \Varid{s}\mathbin{+}\mathrm{1}\, ;{}\<[E]%
\\
\>[19]{}\hsindent{15}{}\<[34]%
\>[34]{}\mathbf{do}\;\Varid{k'}\leftarrow {\mathbf{with}}\;\Varid{h_{\mathsf{inc}}}_{\text{\xmark}}\;{\mathbf{handle}}\;{\mathbf{op}}\;\mathtt{choose}\;()\;(\Varid{b}\, .\,{\mathbf{return}}\;\Varid{b})\, ;{}\<[E]%
\\
\>[19]{}\hsindent{15}{}\<[34]%
\>[34]{}\Varid{k'}\;\Varid{s'}){}\<[E]%
\\
\>[7]{}\mathbf{if}\;(\Varid{ts}=[\mskip1.5mu \mskip1.5mu])\;\mathbf{then}\;{\mathbf{return}}\;[\mskip1.5mu \mskip1.5mu]\;\mathbf{else}\;\mathbf{do}\;\Varid{t}\leftarrow \mathsf{head}\;\Varid{ts}\, ;\ldots{}\<[E]%
\\
\>[B]{}\leadsto^{\ast}{}\<[7]%
\>[7]{}\mathbf{do}\;\underline{\Varid{ts}}\leftarrow \underline{{\mathbf{return}}}\;[\mskip1.5mu \boldsymbol{\lambda}\Varid{s}\, .\,{}\<[43]%
\>[43]{}\mathbf{do}\;\Varid{s'}\leftarrow \Varid{s}\mathbin{+}\mathrm{1}\, ;{}\<[E]%
\\
\>[43]{}\mathbf{do}\;\Varid{k'}\leftarrow {\mathbf{with}}\;\Varid{h_{\mathsf{inc}}}_{\text{\xmark}}\;{\mathbf{handle}}\;{\mathbf{op}}\;\mathtt{choose}\;()\;(\Varid{b}\, .\,{\mathbf{return}}\;\Varid{b})\, ;{}\<[E]%
\\
\>[43]{}\Varid{k'}\;\Varid{s'}\mskip1.5mu]\, ;{}\<[E]%
\\
\>[7]{}\underline{\;\mathbf{if}\;}\;(\Varid{ts}=[\mskip1.5mu \mskip1.5mu])\;\underline{\;\mathbf{then}\;}\;{\mathbf{return}}\;[\mskip1.5mu \mskip1.5mu]\;\underline{\;\mathbf{else}\;}\;\mathbf{do}\;\Varid{t}\leftarrow \mathsf{head}\;\Varid{ts}\, ;\ldots{}\<[E]%
\\
\>[B]{}\leadsto^{\ast}{}\<[7]%
\>[7]{}\mathbf{do}\;\Varid{t}{}\<[13]%
\>[13]{}\leftarrow \mathsf{head}\;[\mskip1.5mu \boldsymbol{\lambda}\Varid{s}\, .\,{}\<[29]%
\>[29]{}\mathbf{do}\;\Varid{s'}\leftarrow \Varid{s}\mathbin{+}\mathrm{1}\, ;{}\<[E]%
\\
\>[29]{}\mathbf{do}\;\Varid{k'}\leftarrow {\mathbf{with}}\;\Varid{h_{\mathsf{inc}}}_{\text{\xmark}}\;{\mathbf{handle}}\;{\mathbf{op}}\;\mathtt{choose}\;()\;(\Varid{b}\, .\,{\mathbf{return}}\;\Varid{b})\, ;{}\<[E]%
\\
\>[29]{}\Varid{k'}\;\Varid{s'}\mskip1.5mu]\, ;{}\<[E]%
\\
\>[7]{}\ldots{}\<[E]%
\\
\>[B]{}\leadsto^{\ast}{}\<[7]%
\>[7]{}[\mskip1.5mu (\mathsf{true},\mathrm{1}),(\mathsf{false},\mathrm{1})\mskip1.5mu]{}\<[E]%
\ColumnHook
\end{hscode}\resethooks
\indentend \caption{\ensuremath{\mathtt{once}} escaping its scope when forwarded by \ensuremath{{\mathbf{bind}}}}
\label{fig:fwd-deriv-bind}
\end{figure}

The essence of this problem comes from the fact that the carrier of
\ensuremath{\Varid{h_{\mathsf{inc}}}} contain computations. As a result, other operations in scope
might be captured in these computations and then be carried out of the
scope.
The \ensuremath{{\mathbf{bind}}}-based forwarding of \ensuremath{\Varid{h_{\mathsf{inc}}}} is not expressive enough to
avoid this kind of unexpected behaviour, because it has no control
over what is returned by the scoped computation of scoped operations.
For the example above, to correctly put \ensuremath{\mathtt{choose}} in the scope, we need
to evaluate the result function of the scoped computation of \ensuremath{\mathtt{once}}
that is handled by \ensuremath{\Varid{h_{\mathsf{inc}}}}, instead of directly returning this function
which unexpectedly captures \ensuremath{\mathtt{choose}}.
We will show how to solve this problem using a generalized version of
forwarding clauses in the remaining of this section.

In general, this problem shows up for any handler whose carrier suspends
computations, such as the carrier of \ensuremath{\Varid{h_{\mathsf{inc}}}} which is a function.
The effects captured in the suspend computation can escape their scope without
being handled.
The generalized forwarding clause fixes this problem by providing us the
flexibility to early execute the suspend computation before leaving the scope.

\subsection{Adding generalized forwarding clauses}
Whereas in the case of \ensuremath{{\mathbf{bind}}}-like forwarding clauses the remittal of the
unknown, to-be-forwarded scoped effect is dealt with by the semantics (i.e., the
reinvocation of scoped effect is embedded in the rule \textsc{E-Bind}), in the
case of generalized forwarding clauses, denoted by \ensuremath{{\mathbf{fwd}}}, this responsibility is
deferred to the forwarding clause.
This allows generalized forwarding clauses to execute logic \emph{before}
remitting the forwarded effect, and to alter not only the continuation but also
the scoped computation of the forwarded effect.
An intuitive implementation of \ensuremath{{\mathbf{fwd}}} may look like this:
\begin{mathpar}
  \inferrule*[right=E-\ensuremath{\text{FwdSc'}}]
  { \ensuremath{({\mathbf{sc}}\;\ell^{\textsf{sc}}\;\anonymous \;\anonymous \;\anonymous )\;\notin\;\Varid{h}}
    \\ \ensuremath{({\mathbf{fwd}}'\;\ell^{\textsf{sc}\prime}\;\Varid{x}\;\Varid{p}\;\Varid{k}\mapsto\Varid{c}_{\Varid{f}})\;\in\;\Varid{h}}
  }
  { \ensuremath{{\mathbf{with}}\;\Varid{h}\;{\mathbf{handle}}\;{\mathbf{sc}}\;\ell^{\textsf{sc}}\;\Varid{v}\;(\Varid{y}\, .\,\Varid{c}_{1})\;(\Varid{z}\, .\,\Varid{c}_{2})\leadsto} \ensuremath{\begin{array}{l}\ensuremath{\Varid{c}_{\Varid{f}}\;[\mskip1.5mu \ell^{\textsf{sc}}\mathbin{/}\ell^{\textsf{sc}\prime},\Varid{v}\mathbin{/}\Varid{x},}
                                                            \\ \qquad \ensuremath{(\boldsymbol{\lambda}\Varid{y}\, .\,{\mathbf{with}}\;\Varid{h}\;{\mathbf{handle}}\;\Varid{c}_{1})\mathbin{/}\Varid{p},}
                                                            \\ \qquad \ensuremath{(\boldsymbol{\lambda}\Varid{z}\, .\,{\mathbf{with}}\;\Varid{h}\;{\mathbf{handle}}\;\Varid{c}_{2})\mathbin{/}\Varid{k}\mskip1.5mu]}\end{array}} }
\end{mathpar}
The forwarding clause has form \ensuremath{{\mathbf{fwd}}'\;\ell^{\textsf{sc}\prime}\;\Varid{x}\;\Varid{p}\;\Varid{k}\mapsto\Varid{c}_{\Varid{f}}}, where
\ensuremath{\Varid{x},\Varid{p},\Varid{k}} represents for the parameter, (deeply handled) scoped
computation, and (deeply handled) continuation, respectively.
While this would give us our desired expressivity, it requires us to
extend the calculus with the ability to abstract and substitute
operation labels.

To avoid unnecessary complication, we opt for another design choice. Because of
parametricity, there is only one thing a forwarding clause can do with an
unknown label: invoke it with some (possibly altered) scoped computation and
continuation.
Below we utilize this property by, instead of supplying the forwarding clause
with the label, we supply it with what is essentially a partial application of
the \ensuremath{{\mathbf{sc}}} operator to the unknown label.
This partial application can be used by the forwarding clause in exactly the
same way as the original label was. Therefore, this simplifies the type system,
but does not affect the expressivity of forwarding clauses.

\begin{mathpar}
\inferrule*[right=E-FwdSc]
  { \ensuremath{({\mathbf{sc}}\;\ell^{\textsf{sc}}\;\anonymous \;\anonymous \;\anonymous )\;\notin\;\Varid{h}}
  \\ \ensuremath{({\mathbf{fwd}}\;\Varid{f}\;\Varid{p}\;\Varid{k}\mapsto\Varid{c}_{\Varid{f}})\;\in\;\Varid{h}}
  }
  { \ensuremath{\begin{array}{l} \ensuremath{{\mathbf{with}}\;\Varid{h}\;{\mathbf{handle}}} \\\quad \ensuremath{{\mathbf{sc}}\;\ell^{\textsf{sc}}\;\Varid{v}\;(\Varid{y}\, .\,\Varid{c}_{1})\;(\Varid{z}\, .\,\Varid{c}_{2})}\end{array}} \ensuremath{\leadsto}
  \ensuremath{\begin{array}{l} \ensuremath{\Varid{c}_{\Varid{f}}\;[\mskip1.5mu (\boldsymbol{\lambda}\Varid{y}\, .\,{\mathbf{with}}\;\Varid{h}\;{\mathbf{handle}}\;\Varid{c}_{1})\mathbin{/}\Varid{p},}
                                                            \\ \qquad \ensuremath{(\boldsymbol{\lambda}\Varid{z}\, .\,{\mathbf{with}}\;\Varid{h}\;{\mathbf{handle}}\;\Varid{c}_{2})\mathbin{/}\Varid{k},}
                                                            \\ \qquad \ensuremath{(\boldsymbol{\lambda}(\Varid{p'},\Varid{k'})\, .\,{\mathbf{sc}}\;\ell^{\textsf{sc}}\;\Varid{v}\;(\Varid{y}\, .\,\Varid{p'}\;\Varid{y})\;(\Varid{z}\, .\,\Varid{k'}\;\Varid{z}))\mathbin{/}\Varid{f}\mskip1.5mu]} \end{array}} }
\end{mathpar}

\subsection{Syntax \& typing}
We replace the \ensuremath{{\mathbf{bind}}} syntax in handlers with \ensuremath{{\mathbf{fwd}}}, and implement
\ensuremath{{\mathbf{bind}}} as syntactic sugar for \ensuremath{{\mathbf{fwd}}}.
\begin{gather*}
  \begin{tabular}[c]{rrlr}
    handlers \ensuremath{\Varid{h}} &\ensuremath{::=} &\ensuremath{{\mathbf{handler}}} \:\ensuremath{\{\mskip1.5mu }\dots \\
                 & & \phantom{\ensuremath{{\mathbf{handler}}}} \: \ensuremath{,{\mathbf{fwd}}\;\Varid{f}\;\Varid{p}\;\Varid{k}\mapsto\Varid{c}_{\Varid{f}}\mskip1.5mu\}} &generalized forwarding clause \\
    \ensuremath{{\mathbf{bind}}\;\Varid{x}\;\Varid{k}\mapsto\Varid{c}} & \ensuremath{\equiv} & \ensuremath{{\mathbf{fwd}}\;\Varid{f}\;\Varid{p}\;\Varid{k}\mapsto\Varid{f}\;(\Varid{p},(\boldsymbol{\lambda}\Varid{x}\, .\,\Varid{c}))}
  \end{tabular}
\end{gather*}
Furthermore, we add a typing rule for typing \ensuremath{{\mathbf{fwd}}} clauses, and a new rule for
typing handlers containing \ensuremath{{\mathbf{fwd}}} clauses.
\begin{mathpar}
\inferrule*[right=T-Fwd]
  {    \ensuremath{\Conid{A}_{\Varid{p}}\mathrel{=}\alpha\to \Conid{M}\;\beta\hspace{0.1em}!\hspace{0.1em}\langle\Conid{E}\rangle}
    \\ \ensuremath{\Conid{A}_{\Varid{p}}'\mathrel{=}\alpha\to \gamma\hspace{0.1em}!\hspace{0.1em}\langle\Conid{E}\rangle}
    \\ \ensuremath{\Conid{A}_{\Varid{k}}\mathrel{=}\beta\to \Conid{M}\;\Conid{A}\hspace{0.1em}!\hspace{0.1em}\langle\Conid{E}\rangle}
    \\ \ensuremath{\Conid{A}_{\Varid{k}}'\mathrel{=}\gamma\to \delta\hspace{0.1em}!\hspace{0.1em}\langle\Conid{E}\rangle}
    \\ \ensuremath{\Gamma,\alpha,\beta,\Varid{p}\mathbin{:}\Conid{A}_{\Varid{p}},\Varid{k}\mathbin{:}\Conid{A}_{\Varid{k}},\Varid{f}\mathbin{:}\forall\;\gamma\;\delta\, .\,(\Conid{A}_{\Varid{p}}',\Conid{A}_{\Varid{k}}')\to \delta\hspace{0.1em}!\hspace{0.1em}\langle\Conid{E}\rangle\vdash\Varid{c}_{\Varid{f}}\mathbin{:}\Conid{M}\;\Conid{A}\hspace{0.1em}!\hspace{0.1em}\langle\Conid{E}\rangle}
  }
  { \ensuremath{\Gamma\vdash{\mathbf{fwd}}\;\Varid{f}\;\Varid{p}\;\Varid{k}\mapsto\Varid{c}_{\Varid{f}}\mathbin{:}\Conid{M}\;\Conid{A}\hspace{0.1em}!\hspace{0.1em}\langle\Conid{E}\rangle} }

\inferrule[T-Handler]
  {    \ensuremath{\langle\Conid{F}\rangle\equiv_{\langle\rangle}\langle\mathit{labels\!}\;(\Varid{oprs})\, ;\Conid{E}\rangle}
    \\ \ensuremath{\Gamma,\alpha\vdash{\mathbf{return}}\;\Varid{x}\mapsto\Varid{c}_{\Varid{r}}\mathbin{:}\Conid{M}\;\alpha\hspace{0.1em}!\hspace{0.1em}\langle\Conid{E}\rangle}
    \\ \ensuremath{\Gamma,\alpha\vdash\Varid{oprs}\mathbin{:}\Conid{M}\;\alpha\hspace{0.1em}!\hspace{0.1em}\langle\Conid{E}\rangle}
    \\ \ensuremath{\Gamma,\alpha\vdash{\mathbf{fwd}}\;\Varid{f}\;\Varid{p}\;\Varid{k}\mapsto\Varid{c}_{\Varid{f}}\mathbin{:}\Conid{M}\;\alpha\hspace{0.1em}!\hspace{0.1em}\langle\Conid{E}\rangle}
  }
  { \ensuremath{\Gamma\vdash{\mathbf{handler}}\;\{\mskip1.5mu {\mathbf{return}}\;\Varid{x}\mapsto\Varid{c}_{\Varid{r}},\Varid{oprs},{\mathbf{fwd}}\;\Varid{f}\;\Varid{p}\;\Varid{k}\mapsto\Varid{c}_{\Varid{f}}\mskip1.5mu\}\mathbin{:}\forall\;\alpha\, .\,\alpha\hspace{0.1em}!\hspace{0.1em}\langle\Conid{F}\rangle\Rightarrow \Conid{M}\;\alpha\hspace{0.1em}!\hspace{0.1em}\langle\Conid{E}\rangle} }
\end{mathpar}
Rule \textsc{E-Bind} and \textsc{$\text{T-Handler}_{\text{Bind}}$} can be
derived (and are therefore superseded by) \textsc{E-Fwd} and \textsc{T-Handler}.

\subsection{\ensuremath{\Varid{h_{\mathsf{inc}}}}, revisited again}
\label{sec:hinc-revisit-again}
Using generalized forwarding clauses, we can implement the forwarding
of \ensuremath{\Varid{h_{\mathsf{inc}}}} correctly as follows.
\indentbegin \begin{hscode}\SaveRestoreHook
\column{B}{@{}>{\hspre}l<{\hspost}@{}}%
\column{3}{@{}>{\hspre}l<{\hspost}@{}}%
\column{40}{@{}>{\hspre}l<{\hspost}@{}}%
\column{55}{@{}>{\hspre}l<{\hspost}@{}}%
\column{60}{@{}>{\hspre}l<{\hspost}@{}}%
\column{E}{@{}>{\hspre}l<{\hspost}@{}}%
\>[3]{}\Varid{h_{\mathsf{inc}}}\mathrel{=}{\mathbf{handler}}\;\{\mskip1.5mu \ldots,{\mathbf{fwd}}\;\Varid{f}\;\Varid{p}\;\Varid{k}\mapsto{}\<[40]%
\>[40]{}{\mathbf{return}}\;(\boldsymbol{\lambda}\Varid{s}\, .\,{}\<[55]%
\>[55]{}\Varid{f}\;({}\<[60]%
\>[60]{}\boldsymbol{\lambda}\Varid{y}\, .\,\mathbf{do}\;\Varid{p'}\leftarrow \Varid{p}\;\Varid{y}\, ;\Varid{p'}\;\Varid{s},{}\<[E]%
\\
\>[60]{}\boldsymbol{\lambda}(\Varid{z},\Varid{s'})\, .\,\mathbf{do}\;\Varid{k'}\leftarrow \Varid{k}\;\Varid{z}\, ;\Varid{k'}\;\Varid{s'}))\mskip1.5mu\}{}\<[E]%
\ColumnHook
\end{hscode}\resethooks
\indentend Instead of directly returning the function \ensuremath{\Varid{p'}} like the \ensuremath{{\mathbf{bind}}}-based
forwarding clause, the \ensuremath{{\mathbf{fwd}}}-based forwarding clause applies the
initial state \ensuremath{\Varid{s}} to \ensuremath{\Varid{p'}} to avoid the escaping of the \ensuremath{\mathtt{choose}}
operation captured in \ensuremath{\Varid{p'}}.
Then, to connect the results inside the scope with outside, we apply
the continuation \ensuremath{\Varid{k}} to the result \ensuremath{\Varid{z}} and the updated state \ensuremath{\Varid{s'}}.
The derivation is shown in \Cref{fig:fwd-deriv-fwd}. We only get the first result of
the \ensuremath{\mathtt{choose}} operation.

\begin{figure}\indentbegin \begin{hscode}\SaveRestoreHook
\column{B}{@{}>{\hspre}l<{\hspost}@{}}%
\column{7}{@{}>{\hspre}l<{\hspost}@{}}%
\column{9}{@{}>{\hspre}l<{\hspost}@{}}%
\column{11}{@{}>{\hspre}l<{\hspost}@{}}%
\column{17}{@{}>{\hspre}l<{\hspost}@{}}%
\column{19}{@{}>{\hspre}l<{\hspost}@{}}%
\column{20}{@{}>{\hspre}l<{\hspost}@{}}%
\column{45}{@{}>{\hspre}l<{\hspost}@{}}%
\column{E}{@{}>{\hspre}l<{\hspost}@{}}%
\>[7]{}{\mathbf{with}}\;\Varid{h_{\mathsf{once}}}\;{\mathbf{handle}}{}\<[E]%
\\
\>[7]{}\hsindent{2}{}\<[9]%
\>[9]{}\mathbf{do}\;\Varid{f}\leftarrow \underline{{\mathbf{with}}\;\Varid{h_{\mathsf{inc}}}\;{\mathbf{handle}}}{}\<[E]%
\\
\>[9]{}\hsindent{2}{}\<[11]%
\>[11]{}{\mathbf{sc}}\;\underline{\mathtt{once}}\;()\;(\anonymous \, .\,{\mathbf{op}}\;\mathtt{inc}\;()\;(\anonymous \, .\,{\mathbf{op}}\;\mathtt{choose}\;()\;(\Varid{b}\, .\,{\mathbf{return}}\;\Varid{b})))\;(\Varid{z}\, .\,{\mathbf{return}}\;\Varid{z})\, ;{}\<[E]%
\\
\>[7]{}\hsindent{2}{}\<[9]%
\>[9]{}\Varid{f}\;\mathrm{0}{}\<[E]%
\\
\>[B]{}\leadsto^{\ast}{}\<[7]%
\>[7]{}{\mathbf{with}}\;\Varid{h_{\mathsf{once}}}\;{\mathbf{handle}}{}\<[E]%
\\
\>[7]{}\hsindent{2}{}\<[9]%
\>[9]{}\mathbf{do}\;\underline{\Varid{f}}\leftarrow \underline{{\mathbf{return}}}\;(\boldsymbol{\lambda}\Varid{s}\, .\,{\mathbf{sc}}\;\mathtt{once}\;(){}\<[E]%
\\
\>[9]{}\hsindent{2}{}\<[11]%
\>[11]{}(\anonymous \, .\,{}\<[17]%
\>[17]{}\mathbf{do}\;\Varid{p'}\leftarrow ({\mathbf{with}}\;\Varid{h_{\mathsf{inc}}}\;{\mathbf{handle}}\;({\mathbf{op}}\;\mathtt{inc}\;()\;(\anonymous \, .\,{\mathbf{op}}\;\mathtt{choose}\;()\;(\Varid{b}\, .\,{\mathbf{return}}\;\Varid{b}))))\, ;{}\<[E]%
\\
\>[17]{}\hsindent{3}{}\<[20]%
\>[20]{}\Varid{p'}\;\Varid{s}){}\<[E]%
\\
\>[9]{}\hsindent{2}{}\<[11]%
\>[11]{}(\Varid{z}\, .\,\ldots)){}\<[E]%
\\
\>[7]{}\hsindent{2}{}\<[9]%
\>[9]{}\underline{\Varid{f}\;\mathrm{0}}{}\<[E]%
\\
\>[B]{}\leadsto^{\ast}{}\<[7]%
\>[7]{}\underline{{\mathbf{with}}\;\Varid{h_{\mathsf{once}}}\;{\mathbf{handle}}}\;{\mathbf{sc}}\;\underline{\mathtt{once}}\;()\;{}\<[E]%
\\
\>[7]{}\hsindent{4}{}\<[11]%
\>[11]{}(\anonymous \, .\,{}\<[17]%
\>[17]{}\mathbf{do}\;\Varid{p'}\leftarrow ({\mathbf{with}}\;\Varid{h_{\mathsf{inc}}}\;{\mathbf{handle}}\;({\mathbf{op}}\;\mathtt{inc}\;()\;(\anonymous \, .\,{\mathbf{op}}\;\mathtt{choose}\;()\;(\Varid{b}\, .\,{\mathbf{return}}\;\Varid{b}))))\, ;{}\<[E]%
\\
\>[17]{}\hsindent{3}{}\<[20]%
\>[20]{}\Varid{p'}\;\mathrm{0})\;{}\<[E]%
\\
\>[7]{}\hsindent{4}{}\<[11]%
\>[11]{}(\Varid{z}\, .\,\ldots){}\<[E]%
\\
\>[B]{}\leadsto^{\ast}{}\<[7]%
\>[7]{}\mathbf{do}\;\Varid{ts}\leftarrow {}\<[17]%
\>[17]{}{\mathbf{with}}\;\Varid{h_{\mathsf{once}}}\;{\mathbf{handle}}\;({}\<[E]%
\\
\>[7]{}\hsindent{2}{}\<[9]%
\>[9]{}\mathbf{do}\;\Varid{p'}\leftarrow (\underline{{\mathbf{with}}\;\Varid{h_{\mathsf{inc}}}\;{\mathbf{handle}}}\;({\mathbf{op}}\;\underline{\mathtt{inc}}\;()\;(\anonymous \, .\,{\mathbf{op}}\;\mathtt{choose}\;()\;(\Varid{b}\, .\,{\mathbf{return}}\;\Varid{b}))))\, ;{}\<[E]%
\\
\>[7]{}\hsindent{2}{}\<[9]%
\>[9]{}\Varid{p'}\;\mathrm{0}){}\<[E]%
\\
\>[7]{}\mathbf{if}\;(\Varid{ts}=[\mskip1.5mu \mskip1.5mu])\;\mathbf{then}\;{\mathbf{return}}\;[\mskip1.5mu \mskip1.5mu]\;\mathbf{else}\;\mathbf{do}\;\Varid{t}\leftarrow \mathsf{head}\;\Varid{ts}\, ;\ldots{}\<[E]%
\\
\>[B]{}\leadsto^{\ast}{}\<[7]%
\>[7]{}\mathbf{do}\;\Varid{ts}\leftarrow {}\<[17]%
\>[17]{}{\mathbf{with}}\;\Varid{h_{\mathsf{once}}}\;{\mathbf{handle}}\;({}\<[E]%
\\
\>[7]{}\hsindent{2}{}\<[9]%
\>[9]{}\mathbf{do}\;\underline{\Varid{p'}}\leftarrow \underline{{\mathbf{return}}}\;(\boldsymbol{\lambda}\Varid{s}\, .\,{}\<[45]%
\>[45]{}\mathbf{do}\;\Varid{s'}\leftarrow \Varid{s}\mathbin{+}\mathrm{1}\, ;{}\<[E]%
\\
\>[45]{}\mathbf{do}\;\Varid{k'}\leftarrow {\mathbf{with}}\;\Varid{h_{\mathsf{inc}}}\;{\mathbf{handle}}\;{\mathbf{op}}\;\mathtt{choose}\;()\;(\Varid{b}\, .\,{\mathbf{return}}\;\Varid{b})\, ;{}\<[E]%
\\
\>[45]{}\Varid{k'}\;\Varid{s'}){}\<[E]%
\\
\>[7]{}\hsindent{2}{}\<[9]%
\>[9]{}\underline{\Varid{p'}\;\mathrm{0}}){}\<[E]%
\\
\>[7]{}\mathbf{if}\;(\Varid{ts}=[\mskip1.5mu \mskip1.5mu])\;\mathbf{then}\;{\mathbf{return}}\;[\mskip1.5mu \mskip1.5mu]\;\mathbf{else}\;\mathbf{do}\;\Varid{t}\leftarrow \mathsf{head}\;\Varid{ts}\, ;\ldots{}\<[E]%
\\
\>[B]{}\leadsto^{\ast}{}\<[7]%
\>[7]{}\mathbf{do}\;\Varid{ts}\leftarrow {}\<[17]%
\>[17]{}{\mathbf{with}}\;\Varid{h_{\mathsf{once}}}\;{\mathbf{handle}}\;({}\<[E]%
\\
\>[7]{}\hsindent{2}{}\<[9]%
\>[9]{}\mathbf{do}\;\underline{\Varid{s'}}\leftarrow \underline{\mathrm{0}\mathbin{+}\mathrm{1}}\, ;{}\<[E]%
\\
\>[7]{}\hsindent{2}{}\<[9]%
\>[9]{}\mathbf{do}\;\Varid{k'}\leftarrow {\mathbf{with}}\;\Varid{h_{\mathsf{inc}}}\;{\mathbf{handle}}\;{\mathbf{op}}\;\mathtt{choose}\;()\;(\Varid{b}\, .\,{\mathbf{return}}\;\Varid{b})\, ;{}\<[E]%
\\
\>[7]{}\hsindent{2}{}\<[9]%
\>[9]{}\Varid{k'}\;\underline{\Varid{s'}}){}\<[E]%
\\
\>[7]{}\mathbf{if}\;(\Varid{ts}=[\mskip1.5mu \mskip1.5mu])\;\mathbf{then}\;{\mathbf{return}}\;[\mskip1.5mu \mskip1.5mu]\;\mathbf{else}\;\mathbf{do}\;\Varid{t}\leftarrow \mathsf{head}\;\Varid{ts}\, ;\ldots{}\<[E]%
\\
\>[B]{}\leadsto^{\ast}{}\<[7]%
\>[7]{}\mathbf{do}\;\Varid{ts}\leftarrow {}\<[17]%
\>[17]{}{\mathbf{with}}\;\Varid{h_{\mathsf{once}}}\;{\mathbf{handle}}\;({}\<[E]%
\\
\>[17]{}\hsindent{2}{}\<[19]%
\>[19]{}{\mathbf{op}}\;\mathtt{choose}\;()\;(\Varid{b}\, .\,\mathbf{do}\;\underline{\Varid{k'}}\leftarrow \underline{{\mathbf{return}}}\;(\boldsymbol{\lambda}\Varid{s}\, .\,(\Varid{b},\Varid{s}))\, ;\underline{\Varid{k'}}\;\mathrm{1}))\, ;{}\<[E]%
\\
\>[7]{}\mathbf{if}\;(\Varid{ts}=[\mskip1.5mu \mskip1.5mu])\;\mathbf{then}\;{\mathbf{return}}\;[\mskip1.5mu \mskip1.5mu]\;\mathbf{else}\;\mathbf{do}\;\Varid{t}\leftarrow \mathsf{head}\;\Varid{ts}\, ;\ldots{}\<[E]%
\\
\>[B]{}\leadsto^{\ast}{}\<[7]%
\>[7]{}\mathbf{do}\;\Varid{ts}\leftarrow {}\<[17]%
\>[17]{}\underline{{\mathbf{with}}\;\Varid{h_{\mathsf{once}}}\;{\mathbf{handle}}}\;({}\<[E]%
\\
\>[17]{}\hsindent{2}{}\<[19]%
\>[19]{}{\mathbf{op}}\;\underline{\mathtt{choose}}\;()\;(\Varid{b}\, .\,{\mathbf{return}}\;(\Varid{b},\mathrm{1})))\, ;{}\<[E]%
\\
\>[7]{}\mathbf{if}\;(\Varid{ts}=[\mskip1.5mu \mskip1.5mu])\;\mathbf{then}\;{\mathbf{return}}\;[\mskip1.5mu \mskip1.5mu]\;\mathbf{else}\;\mathbf{do}\;\Varid{t}\leftarrow \mathsf{head}\;\Varid{ts}\, ;\ldots{}\<[E]%
\\
\>[B]{}\leadsto^{\ast}{}\<[7]%
\>[7]{}\mathbf{do}\;\Varid{ts}\leftarrow {}\<[17]%
\>[17]{}{\mathbf{return}}\;[\mskip1.5mu (\mathsf{true},\mathrm{1}),(\mathsf{false},\mathrm{1})\mskip1.5mu]\, ;{}\<[E]%
\\
\>[7]{}\mathbf{if}\;(\Varid{ts}=[\mskip1.5mu \mskip1.5mu])\;\mathbf{then}\;{\mathbf{return}}\;[\mskip1.5mu \mskip1.5mu]\;\mathbf{else}\;\mathbf{do}\;\Varid{t}\leftarrow \mathsf{head}\;\Varid{ts}\, ;\ldots{}\<[E]%
\\
\>[B]{}\leadsto^{\ast}{}\<[7]%
\>[7]{}[\mskip1.5mu (\mathsf{true},\mathrm{1})\mskip1.5mu]{}\<[E]%
\ColumnHook
\end{hscode}\resethooks
\indentend \caption{\ensuremath{\mathtt{once}} staying in its scope forwarded by \ensuremath{{\mathbf{fwd}}}}
\label{fig:fwd-deriv-fwd}
\end{figure}

\section{Metatheory}
\label{sec:metatheory}
\subsection{Syntax-directed version of \ensuremath{\lambda_{\mathit{sc}}}}
\label{sec:syntax-directed}
\Cref{app:syntax-directed} contains a syntax-directed version of \ensuremath{\lambda_{\mathit{sc}}}.
We prove the type safety of \ensuremath{\lambda_{\mathit{sc}}} by connecting the typing judgement
of the declarative version of \ensuremath{\lambda_{\mathit{sc}}} to those of the syntax-directed
version of \ensuremath{\lambda_{\mathit{sc}}} in \Cref{sec:metatheory-appendix}.
The type inference algorithm implemented in our interpreter is also
built on this syntax-directed version.
The syntax-directed version was obtained by the following three transformations.

First we removed rule \textsc{T-EqV}, which re-types expressions to some
equivalent type.
This rule is used to make types line up exactly at the site of applications, for
example by changing the order of the labels in effect rows.
As a consequence, in the syntax directed version we essentially inline
\textsc{T-EqV} wherever it is needed.

Secondly, we removed the rules dealing with generalisation and instantiation
(\textsc{T-Inst}, \textsc{T-InstEff}, \textsc{T-Gen} and \textsc{T-GenEff}).
Instead, whenever rules insist on some kind of polymorphism on some
subderivation, we extend the environment with fresh type variables, and
generalize over them locally, instead of via axillary rules.

Finally, as dealing with higher-kinded polymorphism is orthogonal to our work
(and real programming languages like Haskell and OCaml already have their
solutions for higher-kinded polymorphism \cite{lhkp}), we avoid higher-order
unification by annotating handlers with the type operator they apply (e.g.
\ensuremath{{\mathbf{handler}_M}\;\{\mskip1.5mu \ldots\mskip1.5mu\}} instead of \ensuremath{{\mathbf{handler}}\;\{\mskip1.5mu \ldots\mskip1.5mu\}}).

\subsection{Safety}
\label{sec:safety}
The type-and-effect system of \ensuremath{\lambda_{\mathit{sc}}} is type safe, which we show by proving
Subject Reduction and Progress.
Here we briefly state the theorems; the proofs and used lemmas can be found in
\Cref{sec:metatheory-appendix}.

As values are inert, these theorems range over computations only.
The formulation of Subject Reduction (\Cref{thm:subjectreduction}) is standard.
Furthermore, apart from an additional normal form \ensuremath{{\mathbf{sc}}\;\ell^{\textsf{sc}}\;\Varid{v}\;(\Varid{y}\, .\,\Varid{c}_{1})\;(\Varid{z}\, .\,\Varid{c}_{2})},
Progress (\Cref{thm:progress}) is standard as well.
Note that Progess implies effect safety, which intuitively means that
only operations tracked in the effect type can be invoked.

\begin{restatable}[Subject Reduction]{thm}{subjectreduction}
  \label{thm:subjectreduction}
  If \ensuremath{\Gamma\vdash\Varid{c}\mathbin{:} \underline{C} } and \ensuremath{\Varid{c}\leadsto\Varid{c'}}, then there exists a \ensuremath{ \underline{C} '} such that \ensuremath{ \underline{C} \;\equiv\; \underline{C} '} and \ensuremath{\Gamma\vdash\Varid{c'}\mathbin{:} \underline{C} '}.
\end{restatable}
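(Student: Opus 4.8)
The plan is to prove \Cref{thm:subjectreduction} by induction on the derivation of the reduction $c \leadsto c'$, inverting the typing derivation of $c$ in each case and rebuilding a derivation for $c'$ whose type is equivalent to $\underline{C}$. Before starting the induction I would establish the usual supporting lemmas. The central one is a \emph{substitution lemma}: if $\Gamma, x : A \vdash c : \underline{C}$ and $\Gamma \vdash v : A$ then $\Gamma \vdash c[v/x] : \underline{C}$, together with its type-scheme variant (needed for \textsc{E-Let}, where the bound variable carries a scheme $\sigma$). I also need a \emph{type- and row-substitution lemma} stating that instantiating a generalized variable preserves typing, i.e.\ $\Gamma, \alpha \vdash c : \underline{C}$ and $\Gamma \vdash A$ give $\Gamma \vdash c[A/\alpha] : \underline{C}[A/\alpha]$, and analogously for row variables $\mu$, since the handler reductions instantiate the polymorphic handler type. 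Finally I would record that type equivalence $\equiv$ (and row equivalence $\equiv_{\langle\rangle}$) is a congruence closed under substitution, that typing is preserved when a context binding is replaced by an equivalent type (context conversion), and prove \emph{generation/inversion lemmas} recovering the premises of each syntactic typing rule up to $\equiv$; these are what absorb the non-syntax-directed rules \textsc{T-EqV} and \textsc{T-EqC}.

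The routine cases go through directly. For \textsc{E-AppAbs}, \textsc{E-Let} and \textsc{E-DoRet} I invert \textsc{T-App}/\textsc{T-Let}/\textsc{T-Do} (modulo $\equiv$), peel off \textsc{T-Abs}/\textsc{T-Ret} where relevant, and apply the substitution lemma, taking $\underline{C}' = \underline{C}$. The two congruence rules \textsc{E-Do} and \textsc{E-Hand} use the induction hypothesis on the reducing subterm; here the hypothesis only yields an \emph{equivalent} type $A' \mathbin{!}\langle E'\rangle$ for the reduct, so I retype the residual context ($c_2$ in \textsc{T-Do}, the handler in \textsc{T-Hand}) by context conversion and \textsc{T-EqC}, and close using that $\equiv$ is a congruence. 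The rules \textsc{E-DoOp} and \textsc{E-DoSc} are the type-level counterpart of (scoped) algebraicity: I invert \textsc{T-Do} together with \textsc{T-Op} (resp.\ \textsc{T-Sc}), observe that the effect row $E$ is shared across the premises, and reassemble the bubbled-up term with \textsc{T-Op}/\textsc{T-Sc} over a \textsc{T-Do}, keeping the same result type.

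The handler-application cases are where polymorphism enters. Inverting \textsc{T-Hand} gives a handler of type $\forall \alpha.\, \alpha\mathbin{!}\langle F\rangle \Rightarrow M\,\alpha\mathbin{!}\langle E\rangle$, and inverting \textsc{T-Handler} exposes the return-, operation- and forwarding-clause judgements under the context extended with $\alpha$. For \textsc{E-HandRet} I instantiate $\alpha$ at the value type of the handled $\mathbf{return}\,v$ via the type-substitution lemma and substitute $v$ for $x$ in the return clause, landing in $M\,\alpha\mathbin{!}\langle E\rangle$ at the chosen instance, which is exactly $\underline{D}$. For \textsc{E-HandOp} I instantiate similarly and additionally type the re-wrapped continuation $\lambda y.\,\mathbf{with}\ h\ \mathbf{handle}\ c_1$ at type $A_2 \to M\,\alpha\mathbin{!}\langle E\rangle$ by \textsc{T-Abs} and \textsc{T-Hand}. \textsc{E-HandSc} is analogous but also instantiates the scoped-result variable $\beta$ of \textsc{T-OprSc} at the operation's actual scoped result type, and types both re-wrapped subcomputations $\lambda y.\,\mathbf{with}\ h\ \mathbf{handle}\ c_1$ and $\lambda z.\,\mathbf{with}\ h\ \mathbf{handle}\ c_2$ with \textsc{T-Hand}. \textsc{E-FwdOp} simply rebuilds with \textsc{T-Op}, using that the unhandled label $\ell^{\textsf{op}}$ lies in the output row $E$ (it must, since $\langle F\rangle \equiv_{\langle\rangle} \langle \mathit{labels}(\mathit{oprs}); E\rangle$ and $\ell^{\textsf{op}} \notin \mathit{labels}(\mathit{oprs})$).

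I expect the main obstacle to be \textsc{E-FwdSc} (and its sugar \textsc{E-Bind}), the forwarding of an unhandled scoped operation, since this is the novel part of the calculus and \textsc{T-Fwd} is the most intricate rule. After inverting \textsc{T-Hand}/\textsc{T-Handler} to obtain the forwarding-clause judgement with $p : \alpha \to M\,\beta\mathbin{!}\langle E\rangle$, $k : \beta \to M\,A\mathbin{!}\langle E\rangle$ and $f : \forall \gamma\,\delta.\,((\alpha \to \gamma\mathbin{!}\langle E\rangle),(\gamma \to \delta\mathbin{!}\langle E\rangle)) \to \delta\mathbin{!}\langle E\rangle$, the core task is to check that the three substituends produced by the rule are well-typed at the clause's expected types (with $\alpha,\beta$ instantiated at the operation's parameter and scoped-result types). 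The delicate one is $f$: I must show that the partial application $\lambda(p',k').\,\mathbf{sc}\ \ell^{\textsf{sc}}\ v\ (y.\,p'\,y)\ (z.\,k'\,z)$ admits the polymorphic type of $f$ above, so that for arbitrary fresh $\gamma,\delta$ the re-invoked scoped operation type-checks under \textsc{T-Sc}, which forces $\ell^{\textsf{sc}}$ into the ambient row $E$ (consistent with it being unhandled and hence retained in the output), while the re-wrapped $p$ and $k$ receive their $M$-applied types from \textsc{T-Hand}. Once $f$, $p$ and $k$ are typed, the substitution lemma applied to the clause body $c_f : M\,A\mathbin{!}\langle E\rangle$ closes the case. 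Getting the quantifier structure of $f$ and the interaction between the scoped-result variable $\beta$ and the freshly quantified $\gamma,\delta$ to line up is the crux of the whole proof.
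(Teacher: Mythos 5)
Your proof plan is correct and follows the same core strategy as the paper: induction on the derivation of \ensuremath{\Varid{c}\leadsto\Varid{c'}}, a term-substitution lemma (with a type-scheme variant for \textsc{E-Let}), a type/row-substitution lemma to instantiate the polymorphic handler type in the handler cases, and the correct identification of \textsc{E-FwdSc} as the crux, where the partial application \ensuremath{\boldsymbol{\lambda}(\Varid{p'},\Varid{k'})\, .\,{\mathbf{sc}}\;\ell^{\textsf{sc}}\;\Varid{v}\;(\Varid{y}\, .\,\Varid{p'}\;\Varid{y})\;(\Varid{z}\, .\,\Varid{k'}\;\Varid{z})} must be given the polymorphic type \ensuremath{\forall\;\gamma\;\delta\, .\,(\Conid{A}_{\Varid{p}}',\Conid{A}_{\Varid{k}}')\to \delta\hspace{0.1em}!\hspace{0.1em}\langle\Conid{E}\rangle} of \ensuremath{\Varid{f}}. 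The one organizational difference is how the non-syntax-directed rules are discharged: you propose generation/inversion lemmas that recover each rule's premises up to \ensuremath{\equiv}, working directly in the declarative system, whereas the paper first proves an equivalence with an explicitly syntax-directed variant of \ensuremath{\lambda_{\mathit{sc}}} (\Cref{app:syntax-directed}, \Cref{lem:nsdiffsd}) and establishes subject reduction there (\Cref{thm:sdsubjectreduction}); the paper's route has the side benefit of reusing the syntax-directed system for type inference, while yours avoids defining a second system at the cost of proving a generation lemma per construct. Two small points where the paper's lemma inventory is worth adopting: its substitution lemma is stated up to type equivalence on both the binding and the result (absorbing your separate ``context conversion'' step), and you will also need weakening (\Cref{lem:termweakening}) for \textsc{E-DoOp}/\textsc{E-DoSc}, where the continuation is moved under the operation's binder, plus clause-membership lemmas (\Cref{lem:opmembership}, \Cref{lem:scmembership}) and the fact that handlers can be retyped at any value type (\Cref{lem:handlersarepolymorphic}) for the recursive \ensuremath{{\mathbf{with}}\;\Varid{h}\;{\mathbf{handle}}} wrappings in \textsc{E-HandSc} and \textsc{E-FwdSc}.
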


\begin{restatable}[Progress]{thm}{progress}
  \label{thm:progress}
  If \ensuremath{ \cdot\vdash\Varid{c}\mathbin{:}\Conid{A}\hspace{0.1em}!\hspace{0.1em}\langle\Conid{E}\rangle}, then either:
  \begin{itemize}
  \item there exists a computation \ensuremath{\Varid{c'}} such that \ensuremath{\Varid{c}\leadsto\Varid{c'}}, or
  \item c is in a \emph{normal form}, which means it is in one of the following
    forms: (1) \ensuremath{\Varid{c}\mathrel{=}{\mathbf{return}}\;\Varid{v}}, (2) \ensuremath{\Varid{c}\mathrel{=}{\mathbf{op}}\;\ell^{\textsf{op}}\;\Varid{v}\;(\Varid{y}\, .\,\Varid{c'})} where \ensuremath{\ell^{\textsf{op}}\;\in\;\Conid{E}},
    or (3) \ensuremath{\Varid{c}\mathrel{=}{\mathbf{sc}}\;\ell^{\textsf{sc}}\;\Varid{v}\;(\Varid{y}\, .\,\Varid{c}_{1})\;(\Varid{z}\, .\,\Varid{c}_{2})} where \ensuremath{\ell^{\textsf{sc}}\;\in\;\Conid{E}}.
  \end{itemize}
\end{restatable}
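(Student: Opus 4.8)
The plan is to prove Progress by rule induction on the typing derivation $\cdot \vdash c : A \mathbin{!}\langle E\rangle$ (equivalently, by case analysis on the shape of $c$), working with the syntax-directed presentation of \Cref{sec:syntax-directed} so that the generalisation, instantiation, and type-equivalence rules do not obscure the analysis. The only auxiliary result I expect to need is a \emph{Canonical Forms} lemma for closed values: a value of function type $A \to \underline{C}$ must be a $\lambda$-abstraction, and a value of handler type $\underline{C} \Rightarrow \underline{D}$ must be a $\mathbf{handler}\ \{\dots\}$. Because the context is empty there are no variables to rule out, and the remaining value forms ($()$, pairs, abstractions, handlers) are separated by their value types.

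The syntactic cases split into those that always reduce and those that are themselves normal forms. A $\mathbf{let}$ always steps by \textsc{E-Let}; an application $v_1\ v_2$ is typed by \textsc{T-App} with $v_1 : A \to \underline{C}$, so Canonical Forms gives $v_1 = \lambda x.c$ and \textsc{E-AppAbs} applies. For a sequencing $\mathbf{do}\ x \leftarrow c_1; c_2$, rule \textsc{T-Do} types $c_1$ at a computation type, so the induction hypothesis applies: either $c_1$ steps and \textsc{E-Do} lifts the step, or $c_1$ is in normal form, in which case exactly one of \textsc{E-DoRet}, \textsc{E-DoOp}, \textsc{E-DoSc} fires according to whether $c_1$ is a $\mathbf{return}$, an $\mathbf{op}$, or an $\mathbf{sc}$. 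The three remaining direct forms are precisely the normal forms of the statement: $\mathbf{return}\ v$ is case (1), and for $\mathbf{op}\ \ell^{\textsf{op}}\ v\ (y.c')$ and $\mathbf{sc}\ \ell^{\textsf{sc}}\ v\ (y.c_1)\ (z.c_2)$ the side conditions $\ell^{\textsf{op}} \in E$ and $\ell^{\textsf{sc}} \in E$ — exactly the effect-safety guarantees — are read off directly from the premises of \textsc{T-Op} and \textsc{T-Sc}.

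The substantive case is $\mathbf{with}\ v\ \mathbf{handle}\ c$. Here \textsc{T-Hand} gives $v : \underline{C} \Rightarrow \underline{D}$, so Canonical Forms yields $v = h$ for some handler $h$. Applying the induction hypothesis to $c$: if $c$ steps then \textsc{E-Hand} lifts it; otherwise $c$ is a normal form and I case on its shape. If $c = \mathbf{return}\ v'$, then since the grammar of handlers (\Cref{fig:term-syntax}) forces every handler to carry a return clause, \textsc{E-HandRet} applies. If $c = \mathbf{op}\ \ell^{\textsf{op}}\ v'\ (y.c_1)$, I split on whether $h$ contains a clause for $\ell^{\textsf{op}}$: if it does, \textsc{E-HandOp}; if not, \textsc{E-FwdOp}. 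If $c = \mathbf{sc}\ \ell^{\textsf{sc}}\ v'\ (y.c_1)\ (z.c_2)$, I split analogously: a matching scoped clause triggers \textsc{E-HandSc}, and otherwise the always-present forwarding clause triggers \textsc{E-Bind} (or its generalisation \textsc{E-FwdSc}). In every subcase the computation reduces, so no $\mathbf{with}$-$\mathbf{handle}$ term is ever stuck.

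The main obstacle I anticipate is making the Canonical Forms lemma precise in the presence of type equivalence. Even in the syntax-directed system the judged type of a value is reached only up to the congruence generated by \textsc{Q-AppAbs} and \textsc{R-Swap} (the inlined \textsc{T-EqV} is still applied at application sites), so a value of, say, $(\lambda \alpha.\,\alpha \to \underline{C})\ B$ should still count as having function type. I would therefore prove a small preservation-of-head-constructor fact: type equivalence never relates $A \to \underline{C}$ or $\underline{C} \Rightarrow \underline{D}$ to a unit, a product, or a fully applied abstract type, since $\equiv$ only $\beta$-reduces type-operator applications and permutes row labels. With that in hand the inversion of value typing is routine, and the two design points that matter — that reduction never gets stuck at a handler — reduce to the purely syntactic observation that handlers always provide both a return clause and a forwarding clause.
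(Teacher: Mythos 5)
Your proposal is correct and follows essentially the same route as the paper: induction on the syntax-directed typing derivation, a Canonical Forms lemma for closed values of function and handler type, and a case analysis in which every non-normal form is matched to a reduction rule, with the $\mathbf{with}$-$\mathbf{handle}$ case discharged by the syntactic guarantee that handlers always carry a return clause and a forwarding clause. Your additional remark about making Canonical Forms robust under the type-equivalence relation is a sensible refinement of a lemma the paper states without proof, but it does not change the structure of the argument.
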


\section{Examples \& Implementation}
\label{sec:examples}
Now that we have formalized the calculus we can cover some examples.
This serves two purposes.
First, we will highlight how the conventional encoding of scoped
effects as handlers, the solution proposed in Plotkin and Power
\cite{Plotkin03}, is not expressive enough, even though it is applied
in the real world \cite{DBLP:journals/pacmpl/ThomsonRWS22}.
We have postponed doing so, because now that we have formally introduced a
calculus, we can immediately show how \ensuremath{\lambda_{\mathit{sc}}} addresses these issues.
The first two examples in this section (exceptions with catch and reader with
local) therefore contain both an attempt at encoding them as an handler, as well
as a proper encoding as a \ensuremath{{\mathbf{sc}}} in \ensuremath{\lambda_{\mathit{sc}}}.
Secondly, the examples exemplify the expressivity of \ensuremath{\lambda_{\mathit{sc}}}.

\paragraph{Syntactic sugar}
To enhance readability, we write the examples in a higher-level syntax following
Eff's conventions: we use top-level definitions, coalesce values and
computations, implicitly sequence steps and insert \ensuremath{{\mathbf{return}}} where needed.
Furthermore, we drop trivial \ensuremath{{\mathbf{return}}} continuations of operations:
\begin{equation*}
  \begin{array}{rcl}
  \ensuremath{{\mathbf{op}}\;\ell^{\textsf{op}}\;\Varid{x}}&\ensuremath{\equiv}&\ensuremath{{\mathbf{op}}\;\ell^{\textsf{op}}\;\Varid{x}\;(\Varid{y}\, .\,{\mathbf{return}}\;\Varid{y})}\\
  \ensuremath{{\mathbf{sc}}\;\ell^{\textsf{sc}}\;\Varid{x}\;(\Varid{y}\, .\,\Varid{c}_{1})}&\ensuremath{\equiv}&\ensuremath{{\mathbf{sc}}\;\ell^{\textsf{sc}}\;\Varid{x}\;(\Varid{y}\, .\,\Varid{c}_{1})\;(\Varid{z}\, .\,{\mathbf{return}}\;\Varid{z})}
  \end{array}
\end{equation*}

\paragraph{Implementation}
A prototype implementation of an interpreter for \ensuremath{\lambda_{\mathit{sc}}} is available
at \url{https://github.com/thwfhk/lambdaSC}.
This implementation contains a Hindley-Milner
\cite{DBLP:journals/jcss/Milner78} type inference algorithm for
\ensuremath{\lambda_{\mathit{sc}}}, which is mostly an extension of the type inference of Koka
\cite{Leijen14}.
There are two main things that we need to additionally deal with for
scoped effects.
One is that we explicit force all handlers to be polymorphic.
Note that there is no requirement for undecidable polymorphic
recursion; deep handlers are implicitly recursive and always given
polymorphic types.
The other is that we require handlers to be annotated with the type
operators that they use for their carriers in order to avoid
higher-order unification.
With explicit annotations for type operators, our unification
algorithm just needs to reduce types first before unifying.
The details can be found in the code.

Each of the examples covered in the rest of this section have been implemented.
See the readme of the implementation's repository for more information.

\subsection{Exceptions}
\label{eg:exceptions}

Wu et al. \cite{wu14} have shown how to catch exceptions with a scoped operation.
Raising an exception is an algebraic operation \ensuremath{\mathtt{raise}\mathbin{:}\mathsf{String}\rightarrowtriangle\mathsf{Empty}}, and
catching an exception is a scoped operation \ensuremath{\mathtt{catch}\mathbin{:}\mathsf{String}\rightarrowtriangle\mathsf{Bool}}.
For example, consider that we are dealing with a counter with a maximum value of \ensuremath{\mathrm{10}}.
The following computation increases the counter by \ensuremath{\mathrm{1}} and raises an exception when
the counter exceeds \ensuremath{\mathrm{10}}:
\indentbegin \begin{hscode}\SaveRestoreHook
\column{B}{@{}>{\hspre}l<{\hspost}@{}}%
\column{3}{@{}>{\hspre}l<{\hspost}@{}}%
\column{11}{@{}>{\hspre}l<{\hspost}@{}}%
\column{E}{@{}>{\hspre}l<{\hspost}@{}}%
\>[3]{}\Varid{incr}\mathrel{=}{}\<[11]%
\>[11]{}\mathbf{do}\;\Varid{x}\leftarrow {\mathbf{op}}\;\mathtt{inc}\;()\, ;{}\<[E]%
\\
\>[11]{}\mathbf{if}\;\Varid{x}\mathbin{>}\mathrm{10}\;\mathbf{then}\;{\mathbf{op}}\;\mathtt{raise}\;\text{\ttfamily \char34 Overflow\char34}\;(\Varid{y}\, .\,\mathsf{absurd}\;\Varid{y})\;\mathbf{else}\;{\mathbf{return}}\;\Varid{x}{}\<[E]%
\ColumnHook
\end{hscode}\resethooks
\indentend %
Clearly, if we start with a state of 8 and call \ensuremath{\mathtt{inc}} thrice, we end up with an
exception.
We want to define a \ensuremath{\mathtt{catch}} operation that executes an alternative computation
when an exception is thrown inside its scope.

\subsubsection{Catch as handler}
One might attempt to write catch as a handler.
However, as we will see, this method does not have the same modularity and
expressivity as our calculus because it cannot achieve the local update
semantics \cite{wu14}.
\indentbegin \begin{hscode}\SaveRestoreHook
\column{B}{@{}>{\hspre}l<{\hspost}@{}}%
\column{3}{@{}>{\hspre}l<{\hspost}@{}}%
\column{17}{@{}>{\hspre}c<{\hspost}@{}}%
\column{17E}{@{}l@{}}%
\column{20}{@{}>{\hspre}l<{\hspost}@{}}%
\column{38}{@{}>{\hspre}l<{\hspost}@{}}%
\column{40}{@{}>{\hspre}c<{\hspost}@{}}%
\column{40E}{@{}l@{}}%
\column{43}{@{}>{\hspre}l<{\hspost}@{}}%
\column{44}{@{}>{\hspre}l<{\hspost}@{}}%
\column{57}{@{}>{\hspre}l<{\hspost}@{}}%
\column{71}{@{}>{\hspre}l<{\hspost}@{}}%
\column{74}{@{}>{\hspre}l<{\hspost}@{}}%
\column{E}{@{}>{\hspre}l<{\hspost}@{}}%
\>[3]{}\Varid{h_{\mathsf{except}}}_{\text{\xmark}}{}\<[17]%
\>[17]{}\mathrel{=}{}\<[17E]%
\>[20]{}{\mathbf{handler}}\;\{\mskip1.5mu {\mathbf{return}}\;\Varid{x}\mapsto{}\<[44]%
\>[44]{}\mathsf{right}\;\Varid{x},{\mathbf{op}}\;\mathtt{raise}\;\Varid{e}\;\anonymous \mapsto{}\<[71]%
\>[71]{}\mathsf{left}\;\Varid{e}\mskip1.5mu\}{}\<[E]%
\\
\>[3]{}\mathtt{catch}_{\text{\xmark}}\;\Varid{c}_{1}\;\Varid{c}_{2}\;\equiv\;{\mathbf{with}}\;{\mathbf{handler}}\;{}\<[40]%
\>[40]{}\{\mskip1.5mu {}\<[40E]%
\>[43]{}{\mathbf{return}}\;\Varid{x}{}\<[57]%
\>[57]{}\mapsto{\mathbf{return}}\;\Varid{x}\;{}\<[74]%
\>[74]{}\null{}\<[E]%
\\
\>[40]{},{}\<[40E]%
\>[43]{}{\mathbf{op}}\;\mathtt{raise}\;\anonymous \;\anonymous {}\<[57]%
\>[57]{}\mapsto\Varid{c}_{2}\;{}\<[74]%
\>[74]{}\null\mskip1.5mu\}\;{\mathbf{handle}}\;\Varid{c}_{1}{}\<[E]%
\\
\>[3]{}\Varid{c_{\mathsf{catch}}}_{\text{\xmark}}\mathrel{=}\mathbf{do}\;\Varid{incr}\, ;\mathtt{catch}_{\text{\xmark}}\;{}\<[38]%
\>[38]{}(\mathbf{do}\;\Varid{incr}\, ;\mathbf{do}\;\Varid{incr}\, ;{\mathbf{return}}\;\text{\ttfamily \char34 success\char34})\;{}\<[E]%
\\
\>[38]{}({\mathbf{return}}\;\text{\ttfamily \char34 fail\char34}){}\<[E]%
\ColumnHook
\end{hscode}\resethooks
\indentend %
The \ensuremath{\mathtt{catch}_{\text{\xmark}}} implements the \ensuremath{\mathtt{catch}} operation as a handler. The
\ensuremath{\Varid{h_{\mathsf{except}}}_{\text{\xmark}}} interprets exceptions using a sum type. It is used to
handle potential exceptions not captured by \ensuremath{\mathtt{catch}_{\text{\xmark}}}.
By handling exceptions before state we obtain global update semantics
where the state updates are not discarded when an error is raised:
\indentbegin \begin{hscode}\SaveRestoreHook
\column{B}{@{}>{\hspre}l<{\hspost}@{}}%
\column{3}{@{}>{\hspre}l<{\hspost}@{}}%
\column{E}{@{}>{\hspre}l<{\hspost}@{}}%
\>[3]{}\Varid{run_{\mathsf{inc}}}\;\mathrm{8}\;({\mathbf{with}}\;\Varid{h_{\mathsf{except}}}_{\text{\xmark}}\;{\mathbf{handle}}\;\Varid{c_{\mathsf{catch}}}_{\text{\xmark}})\leadsto^{\ast}(\mathsf{right}\;\text{\ttfamily \char34 fail\char34},\mathrm{11}){}\<[E]%
\ColumnHook
\end{hscode}\resethooks
\indentend %
When handling exceptions \emph{after} state, we would expect
\emph{local} update semantics, where we discard the state updates when
an error is raised.  Thus, the expected result should be
\ensuremath{\mathsf{right}\;(\text{\ttfamily \char34 fail\char34},\mathrm{9})}.
However, we again get the global update semantics:
\indentbegin \begin{hscode}\SaveRestoreHook
\column{B}{@{}>{\hspre}l<{\hspost}@{}}%
\column{3}{@{}>{\hspre}l<{\hspost}@{}}%
\column{E}{@{}>{\hspre}l<{\hspost}@{}}%
\>[3]{}{\mathbf{with}}\;\Varid{h_{\mathsf{except}}}_{\text{\xmark}}\;{\mathbf{handle}}\;(\Varid{run_{\mathsf{inc}}}\;\mathrm{8}\;\Varid{c_{\mathsf{catch}}}_{\text{\xmark}})\leadsto^{\ast}(\mathsf{right}\;\text{\ttfamily \char34 fail\char34},\mathrm{11}){}\<[E]%
\ColumnHook
\end{hscode}\resethooks
\indentend %
How can this be? By implementing \ensuremath{\mathtt{catch}_{\text{\xmark}}} as a handler, we have lost the
separation between syntax and semantics: \ensuremath{\mathtt{catch}_{\text{\xmark}}} is supposed to denote
syntax, but it contains semantics in the form of a handler.
Since we apply \ensuremath{\mathtt{catch}_{\text{\xmark}}} to a computation (\ensuremath{\Varid{c_{\mathsf{catch}}}_{\text{\xmark}}}), any containing
\ensuremath{\mathtt{raise}} will have already been handled by \ensuremath{\mathtt{catch}_{\text{\xmark}}} before \ensuremath{\Varid{h_{\mathsf{inc}}}} is applied.
In other words, we have lost modular composition because the handler of
\ensuremath{\mathtt{raise}} always come before the handler of \ensuremath{\mathtt{inc}}.
As a result, we cannot change the semantics of the interactions of
effects by swaping the order of their handlers, which is certainly
harmful to expressivity.

\subsubsection{Catch as scoped effect}
Let us implement \ensuremath{\mathtt{catch}} as a scoped operation in \ensuremath{\lambda_{\mathit{sc}}}.
\indentbegin \begin{hscode}\SaveRestoreHook
\column{B}{@{}>{\hspre}l<{\hspost}@{}}%
\column{3}{@{}>{\hspre}l<{\hspost}@{}}%
\column{13}{@{}>{\hspre}l<{\hspost}@{}}%
\column{15}{@{}>{\hspre}l<{\hspost}@{}}%
\column{32}{@{}>{\hspre}l<{\hspost}@{}}%
\column{E}{@{}>{\hspre}l<{\hspost}@{}}%
\>[3]{}\Varid{c_{\mathsf{catch}}}\mathrel{=}{}\<[13]%
\>[13]{}\mathbf{do}\;\Varid{incr}\, ;{\mathbf{sc}}\;\mathtt{catch}\;{}\<[32]%
\>[32]{}\text{\ttfamily \char34 Overflow\char34}\;(\Varid{b}\, .\,{}\<[E]%
\\
\>[13]{}\hsindent{2}{}\<[15]%
\>[15]{}\mathbf{if}\;\Varid{b}\;\mathbf{then}\;(\mathbf{do}\;\Varid{incr}\, ;\mathbf{do}\;\Varid{incr}\, ;{\mathbf{return}}\;\text{\ttfamily \char34 success\char34}){}\<[E]%
\\
\>[13]{}\hsindent{2}{}\<[15]%
\>[15]{}\mathbf{else}\;{\mathbf{return}}\;\text{\ttfamily \char34 fail\char34}){}\<[E]%
\ColumnHook
\end{hscode}\resethooks
\indentend %
The scoped computation's true branch is the program that may \emph{raise} exceptions,
while the false branch \emph{deals with} the exception.
Our handler interprets exceptions in terms of a sum type \ensuremath{\mathbf{data}\;\alpha\mathbin{+}\beta\mathrel{=}\mathsf{left}\;\alpha\ \mid\ \mathsf{right}\;\beta}, where \ensuremath{\mathsf{left}\;\Varid{v}} denotes an exception and \ensuremath{\mathsf{right}\;\Varid{v}} a
result.
\indentbegin \begin{hscode}\SaveRestoreHook
\column{B}{@{}>{\hspre}l<{\hspost}@{}}%
\column{3}{@{}>{\hspre}l<{\hspost}@{}}%
\column{6}{@{}>{\hspre}l<{\hspost}@{}}%
\column{10}{@{}>{\hspre}c<{\hspost}@{}}%
\column{10E}{@{}l@{}}%
\column{13}{@{}>{\hspre}l<{\hspost}@{}}%
\column{22}{@{}>{\hspre}l<{\hspost}@{}}%
\column{27}{@{}>{\hspre}l<{\hspost}@{}}%
\column{31}{@{}>{\hspre}l<{\hspost}@{}}%
\column{38}{@{}>{\hspre}l<{\hspost}@{}}%
\column{57}{@{}>{\hspre}l<{\hspost}@{}}%
\column{E}{@{}>{\hspre}l<{\hspost}@{}}%
\>[B]{}\Varid{h_{\mathsf{except}}}{}\<[10]%
\>[10]{}\mathbin{:}{}\<[10E]%
\>[13]{}\forall\;\alpha\;\mu\, .\,\alpha\mathbin{!}\langle\mathtt{raise}\, ;\mathtt{catch}\, ;\mu\rangle\Rightarrow \mathsf{String}\mathbin{+}\alpha\hspace{0.1em}!\hspace{0.1em}\langle\mu\rangle{}\<[E]%
\\
\>[B]{}\Varid{h_{\mathsf{except}}}{}\<[10]%
\>[10]{}\mathrel{=}{}\<[10E]%
\>[13]{}{\mathbf{handler}}{}\<[E]%
\\
\>[B]{}\hsindent{3}{}\<[3]%
\>[3]{}\{\mskip1.5mu {}\<[6]%
\>[6]{}{\mathbf{return}}\;\Varid{x}{}\<[22]%
\>[22]{}\mapsto{}\<[27]%
\>[27]{}\mathsf{right}\;\Varid{x}{}\<[E]%
\\
\>[B]{}\hsindent{3}{}\<[3]%
\>[3]{},{}\<[6]%
\>[6]{}{\mathbf{op}}\;\mathtt{raise}\;\Varid{e}\;\anonymous {}\<[22]%
\>[22]{}\mapsto{}\<[27]%
\>[27]{}\mathsf{left}\;\Varid{e}{}\<[E]%
\\
\>[B]{}\hsindent{3}{}\<[3]%
\>[3]{},{}\<[6]%
\>[6]{}{\mathbf{sc}}\;\mathtt{catch}\;\Varid{e}\;\Varid{p}\;\Varid{k}{}\<[22]%
\>[22]{}\mapsto{}\<[27]%
\>[27]{}\mathbf{do}\;{}\<[31]%
\>[31]{}\Varid{x}\leftarrow \Varid{p}\;\mathsf{true}\, ;{}\<[E]%
\\
\>[27]{}\mathbf{case}\;\Varid{x}\;\mathbf{of}\;{}\<[38]%
\>[38]{}\mathsf{left}\;\Varid{e'}\mid \Varid{e'}\mathrel{=}\Varid{e}{}\<[57]%
\>[57]{}\to \mathsf{exceptMap}\;(\Varid{p}\;\mathsf{false})\;\Varid{k}{}\<[E]%
\\
\>[38]{}\anonymous {}\<[57]%
\>[57]{}\to \mathsf{exceptMap}\;\Varid{x}\;\Varid{k}{}\<[E]%
\\
\>[B]{}\hsindent{3}{}\<[3]%
\>[3]{},{\mathbf{bind}}\;\Varid{x}\;\Varid{k}{}\<[22]%
\>[22]{}\mapsto\mathsf{exceptMap}\;\Varid{x}\;\Varid{k}\mskip1.5mu\}{}\<[E]%
\ColumnHook
\end{hscode}\resethooks
\indentend   %
The return clause and algebraic operation clause for \ensuremath{\mathtt{raise}} construct a
return value and raise an exception \ensuremath{\Varid{e}} by calling the \ensuremath{\mathsf{right}} and \ensuremath{\mathsf{left}} constructors, respectively.
The scoped operation clause for \ensuremath{\mathtt{catch}} catches an exception \ensuremath{\Varid{e}}.
If the scoped computation in \ensuremath{\Varid{p}\;\mathsf{true}} raises an exception \ensuremath{\Varid{e}}, it is caught by \ensuremath{\mathtt{catch}}
and replaced by the scoped computation \ensuremath{(\Varid{p}\;\mathsf{false})}.
Otherwise, it continues with \ensuremath{\Varid{p}\;\mathsf{true}} and its results are passed to the
continuation \ensuremath{\Varid{k}}.
The forwarding clause follows the same principle as that of how we
write the forwarding clause for \ensuremath{\Varid{h_{\mathsf{inc}}}} in
\Cref{sec:motivation-forwarding}.
We just need to think about how to continue with \ensuremath{\Varid{k}} when we have the
result \ensuremath{\Varid{x}} that potentially fails.
The most intuitive way is to return the exception if \ensuremath{\Varid{x}} fails
(\ensuremath{\mathsf{left}\;\Varid{e}}), and we run the continuation \ensuremath{\Varid{k}} with the result if \ensuremath{\Varid{x}}
succeeds (\ensuremath{\mathsf{right}\;\Varid{y}}).
\indentbegin \begin{hscode}\SaveRestoreHook
\column{B}{@{}>{\hspre}l<{\hspost}@{}}%
\column{9}{@{}>{\hspre}c<{\hspost}@{}}%
\column{9E}{@{}l@{}}%
\column{12}{@{}>{\hspre}l<{\hspost}@{}}%
\column{25}{@{}>{\hspre}l<{\hspost}@{}}%
\column{34}{@{}>{\hspre}l<{\hspost}@{}}%
\column{E}{@{}>{\hspre}l<{\hspost}@{}}%
\>[B]{}\mathsf{exceptMap}{}\<[9]%
\>[9]{}\mathbin{:}{}\<[9E]%
\>[12]{}\forall\;\alpha\;\beta\;\mu\, .\,\mathsf{String}\mathbin{+}\beta\;{\rightarrow}^{\mu}\;(\beta\;{\rightarrow}^{\mu}\;\mathsf{String}\mathbin{+}\alpha)\;{\rightarrow}^{\mu}\;\mathsf{String}\mathbin{+}\alpha{}\<[E]%
\\
\>[B]{}\mathsf{exceptMap}\;\Varid{x}\;\Varid{k}\mathrel{=}\mathbf{case}\;\Varid{x}\;\mathbf{of}\;{}\<[25]%
\>[25]{}\mathsf{left}\;\Varid{e}{}\<[34]%
\>[34]{}\to \mathsf{left}\;\Varid{e}{}\<[E]%
\\
\>[25]{}\mathsf{right}\;\Varid{y}{}\<[34]%
\>[34]{}\to \Varid{k}\;\Varid{y}{}\<[E]%
\ColumnHook
\end{hscode}\resethooks
\indentend %

Given an initial counter value \ensuremath{\mathrm{8}}, we can handle the program \ensuremath{\Varid{c_{\mathsf{catch}}}} with
\ensuremath{\Varid{h_{\mathsf{except}}}} and \ensuremath{\Varid{h_{\mathsf{inc}}}}.
Different orders of the application of handlers give us different semantics of
the interaction of effects \cite{wu14}.
Handling exceptions before increments gives us global updates:
\indentbegin \begin{hscode}\SaveRestoreHook
\column{B}{@{}>{\hspre}l<{\hspost}@{}}%
\column{3}{@{}>{\hspre}l<{\hspost}@{}}%
\column{E}{@{}>{\hspre}l<{\hspost}@{}}%
\>[3]{}\Varid{run_{\mathsf{inc}}}\;\mathrm{8}\;({\mathbf{with}}\;\Varid{h_{\mathsf{except}}}\;{\mathbf{handle}}\;\Varid{c_{\mathsf{catch}}})\leadsto^{\ast}(\mathsf{right}\;\text{\ttfamily \char34 fail\char34},\mathrm{11}){}\<[E]%
\ColumnHook
\end{hscode}\resethooks
\indentend %
Although an exception is raised and caught, the final value is still updated to
\ensuremath{\mathrm{11}} by the two \ensuremath{\mathtt{inc}} operations and exceeds the maximum value of our counter.
When handling exceptions after increments, we obtain the expected local update semantics:
\indentbegin \begin{hscode}\SaveRestoreHook
\column{B}{@{}>{\hspre}l<{\hspost}@{}}%
\column{3}{@{}>{\hspre}l<{\hspost}@{}}%
\column{E}{@{}>{\hspre}l<{\hspost}@{}}%
\>[3]{}{\mathbf{with}}\;\Varid{h_{\mathsf{except}}}\;{\mathbf{handle}}\;(\Varid{run_{\mathsf{inc}}}\;\mathrm{8}\;\Varid{c_{\mathsf{catch}}})\leadsto^{\ast}\mathsf{right}\;(\text{\ttfamily \char34 fail\char34},\mathrm{9}){}\<[E]%
\ColumnHook
\end{hscode}\resethooks
\indentend The state updates introduced by the two invocations of \ensuremath{\mathtt{inc}} inside
the scope of \ensuremath{\mathtt{catch}} are discarded. This correctly reflects our
intuition when the handler of \ensuremath{\mathtt{inc}} comes before the handler of
\ensuremath{\mathtt{raise}}.

\subsection{Reader with Local}
\label{eg:local}
Reader entails an \ensuremath{\mathtt{ask}} operation that lets one read the (integer) state that is
passed around.
The scoped effect \ensuremath{\mathtt{local}} takes a function \ensuremath{\Varid{f}} which alters the state, and a
computation for which the state should be altered, after which the state should
be returned to its original state.\footnote{The operation signature of \ensuremath{\mathtt{local}} requires polymorphic
parameter types like \ensuremath{\mathtt{local}\mathbin{:}(\forall\;\mu\, .\,\mathsf{Int}\;{\rightarrow}^{\mu}\;\mathsf{Int})\rightarrowtriangle\mathsf{()}}, which we do not support. It is easy to extend operations in
\ensuremath{\lambda_{\mathit{sc}}} with prenex polymorphic parameter types without any need of
other mechanism for higher-rank polymorphism.}
For example, in \ensuremath{{\mathbf{sc}}\;\mathtt{local}\;(\boldsymbol{\lambda}\Varid{i}\, .\,\Varid{i}\mathbin{*}\mathrm{2})\;({\mathbf{op}}\;\mathtt{ask}\;())\;({\mathbf{op}}\;\mathtt{ask}\;())}, the first ask
receives a state that is doubled, whereas the second ask receives the original
state.
To exemplify the problems that arise when implementing \ensuremath{\mathtt{local}} as a handler, our
example uses effect \ensuremath{\mathtt{foo}}, which is simply mapped to \ensuremath{\mathtt{ask}} by \ensuremath{\Varid{h_{\mathsf{foo}}}}:
\indentbegin \begin{hscode}\SaveRestoreHook
\column{B}{@{}>{\hspre}l<{\hspost}@{}}%
\column{3}{@{}>{\hspre}l<{\hspost}@{}}%
\column{19}{@{}>{\hspre}l<{\hspost}@{}}%
\column{33}{@{}>{\hspre}l<{\hspost}@{}}%
\column{E}{@{}>{\hspre}l<{\hspost}@{}}%
\>[3]{}\Varid{h_{\mathsf{foo}}}\mathrel{=}{\mathbf{handler}}\;{}\<[19]%
\>[19]{}\{\mskip1.5mu {\mathbf{return}}\;\Varid{x}{}\<[33]%
\>[33]{}\mapsto{\mathbf{return}}\;\Varid{x}{}\<[E]%
\\
\>[19]{},{\mathbf{op}}\;\mathtt{foo}\;\anonymous \;\Varid{k}{}\<[33]%
\>[33]{}\mapsto\mathbf{do}\;\Varid{x}\leftarrow {\mathbf{op}}\;\mathtt{ask}\;\mathsf{()}\;(\Varid{y}\, .\,\Varid{k}\;\Varid{y}){}\<[E]%
\\
\>[19]{},{\mathbf{bind}}\;\Varid{x}\;\Varid{k}{}\<[33]%
\>[33]{}\mapsto\Varid{k}\;\Varid{x}\mskip1.5mu\}{}\<[E]%
\ColumnHook
\end{hscode}\resethooks
\indentend \subsubsection{Local as a handler}
Whereas the lack of effect interaction control in example of \ensuremath{\mathtt{catch}} as a
handler could be described as unfortunate, in the case for ask there is arguably
only one correct interaction, which is not the one that arises from scoped
effects as handlers. Consider \ensuremath{\Varid{c_{\mathsf{local}}}} below, which includes \ensuremath{\mathtt{foo}}, which is
mapped to \ensuremath{\mathtt{ask}} by \ensuremath{\Varid{h_{\mathsf{foo}}}}.
\indentbegin \begin{hscode}\SaveRestoreHook
\column{B}{@{}>{\hspre}l<{\hspost}@{}}%
\column{3}{@{}>{\hspre}l<{\hspost}@{}}%
\column{18}{@{}>{\hspre}l<{\hspost}@{}}%
\column{27}{@{}>{\hspre}c<{\hspost}@{}}%
\column{27E}{@{}l@{}}%
\column{30}{@{}>{\hspre}l<{\hspost}@{}}%
\column{38}{@{}>{\hspre}c<{\hspost}@{}}%
\column{38E}{@{}l@{}}%
\column{41}{@{}>{\hspre}l<{\hspost}@{}}%
\column{44}{@{}>{\hspre}l<{\hspost}@{}}%
\column{51}{@{}>{\hspre}l<{\hspost}@{}}%
\column{52}{@{}>{\hspre}l<{\hspost}@{}}%
\column{70}{@{}>{\hspre}l<{\hspost}@{}}%
\column{77}{@{}>{\hspre}l<{\hspost}@{}}%
\column{84}{@{}>{\hspre}c<{\hspost}@{}}%
\column{84E}{@{}l@{}}%
\column{E}{@{}>{\hspre}l<{\hspost}@{}}%
\>[3]{}\mathtt{local}_{\text{\xmark}}\;\Varid{f}\;\Varid{c}\;\equiv\;{\mathbf{with}}\;{\mathbf{handler}}\;{}\<[38]%
\>[38]{}\{\mskip1.5mu {}\<[38E]%
\>[41]{}{\mathbf{return}}\;\Varid{x}{}\<[52]%
\>[52]{}\mapsto{\mathbf{return}}\;\Varid{x}{}\<[E]%
\\
\>[38]{},{}\<[38E]%
\>[41]{}{\mathbf{op}}\;\mathtt{ask}\;\anonymous {}\<[52]%
\>[52]{}\mapsto\Varid{x}\leftarrow \mathtt{ask}\, ;\Varid{f}\;\Varid{x}\mskip1.5mu\}\;{\mathbf{handle}}\;\Varid{c}{}\<[E]%
\\
\>[3]{}\Varid{h_{\mathsf{read}}}_{\text{\xmark}}\mathrel{=}{\mathbf{handler}}\;{}\<[27]%
\>[27]{}\{\mskip1.5mu {}\<[27E]%
\>[30]{}{\mathbf{return}}\;\Varid{x}\mapsto{}\<[44]%
\>[44]{}\boldsymbol{\lambda}\Varid{s}\, .\,{}\<[51]%
\>[51]{}\Varid{x},{\mathbf{op}}\;\mathtt{ask}\;\anonymous \;\Varid{k}\mapsto{}\<[70]%
\>[70]{}\boldsymbol{\lambda}\Varid{s}\, .\,{}\<[77]%
\>[77]{}\Varid{k}\;\Varid{s}\;\Varid{s}{}\<[84]%
\>[84]{}\mskip1.5mu\}{}\<[84E]%
\\
\>[3]{}\Varid{run_{\mathsf{read}}}_{\text{\xmark}}\;\Varid{s}\;\Varid{c}\;\equiv\;\mathbf{do}\;\Varid{c'}\leftarrow {\mathbf{with}}\;\Varid{h_{\mathsf{read}}}_{\text{\xmark}}\;{\mathbf{handle}}\;\Varid{c}\, ;\Varid{c'}\;\Varid{s}{}\<[E]%
\\
\>[3]{}\Varid{c_{\mathsf{local}}}_{\text{\xmark}}\mathrel{=}{}\<[18]%
\>[18]{}\mathbf{do}\;\Varid{x}\leftarrow {\mathbf{op}}\;\mathtt{ask}\;()\, ;\mathbf{do}\;\Varid{y}\leftarrow {\mathbf{op}}\;\mathtt{foo}\;(){}\<[E]%
\\
\>[18]{}\mathtt{local}_{\text{\xmark}}\;(\boldsymbol{\lambda}\Varid{a}\to \mathrm{2}\mathbin{*}\Varid{a})\;(\Varid{z}\leftarrow {\mathbf{op}}\;\mathtt{ask}\;()\, ;\Varid{u}\leftarrow {\mathbf{op}}\;\mathtt{foo}\;()\, ;{\mathbf{return}}\;(\Varid{x},\Varid{y},\Varid{z},\Varid{u})){}\<[E]%
\ColumnHook
\end{hscode}\resethooks
\indentend Since \ensuremath{\Varid{h_{\mathsf{foo}}}} introduces \ensuremath{\mathtt{ask}}, we must (re)apply \ensuremath{\Varid{h_{\mathsf{read}}}} after applying \ensuremath{\Varid{h_{\mathsf{foo}}}}.
Since \ensuremath{\mathtt{foo}} is mapped to \ensuremath{\mathtt{ask}}, in \ensuremath{\Varid{c_{\mathsf{local}}}_{\text{\xmark}}} we expect \ensuremath{\Varid{x}} to be equal to
\ensuremath{\Varid{y}}, and \ensuremath{\Varid{z}} equal to \ensuremath{\Varid{u}}.
Starting with the reader state set to 1, we expect the result \ensuremath{(\mathrm{1},\mathrm{1},\mathrm{2},\mathrm{2})}.
Instead, we get:
\indentbegin \begin{hscode}\SaveRestoreHook
\column{B}{@{}>{\hspre}l<{\hspost}@{}}%
\column{3}{@{}>{\hspre}l<{\hspost}@{}}%
\column{E}{@{}>{\hspre}l<{\hspost}@{}}%
\>[3]{}\Varid{run_{\mathsf{read}}}\;\mathrm{1}\;({\mathbf{with}}\;\Varid{h_{\mathsf{foo}}}\;{\mathbf{handle}}\;\Varid{c_{\mathsf{local}}}_{\text{\xmark}})\leadsto^{\ast}{\mathbf{return}}\;(\mathrm{1},\mathrm{1},\mathrm{2},\mathrm{1}){}\<[E]%
\ColumnHook
\end{hscode}\resethooks
\indentend 
Again, how can this be?
The cause is the same as the example with \ensuremath{\mathtt{catch}}: since we encode the semantics
of \ensuremath{\mathtt{local}_{\text{\xmark}}} in its definition, we are forced to perform the handling at the
moment of application.
Notice that \ensuremath{\mathtt{foo}} is not caught by \ensuremath{\mathtt{local}_{\text{\xmark}}}!
Therefore, \ensuremath{\Varid{f}} is only applied to \ensuremath{\mathtt{ask}}.
When \ensuremath{\mathtt{foo}} is mapped to \ensuremath{\mathtt{ask}} by \ensuremath{\Varid{h_{\mathsf{foo}}}}, \ensuremath{\mathtt{local}_{\text{\xmark}}}'s effect will already have
been triggered, which is why \ensuremath{\Varid{f}} is not applied to it.

\subsubsection{Local as a scoped effect}
Using a scoped effect we can properly encode \ensuremath{\mathtt{local}}:
\label{eq:hread}
\indentbegin \begin{hscode}\SaveRestoreHook
\column{B}{@{}>{\hspre}l<{\hspost}@{}}%
\column{8}{@{}>{\hspre}c<{\hspost}@{}}%
\column{8E}{@{}l@{}}%
\column{11}{@{}>{\hspre}l<{\hspost}@{}}%
\column{21}{@{}>{\hspre}l<{\hspost}@{}}%
\column{22}{@{}>{\hspre}c<{\hspost}@{}}%
\column{22E}{@{}l@{}}%
\column{25}{@{}>{\hspre}l<{\hspost}@{}}%
\column{34}{@{}>{\hspre}l<{\hspost}@{}}%
\column{53}{@{}>{\hspre}c<{\hspost}@{}}%
\column{53E}{@{}l@{}}%
\column{58}{@{}>{\hspre}l<{\hspost}@{}}%
\column{63}{@{}>{\hspre}l<{\hspost}@{}}%
\column{65}{@{}>{\hspre}l<{\hspost}@{}}%
\column{69}{@{}>{\hspre}l<{\hspost}@{}}%
\column{95}{@{}>{\hspre}c<{\hspost}@{}}%
\column{95E}{@{}l@{}}%
\column{E}{@{}>{\hspre}l<{\hspost}@{}}%
\>[B]{}\Varid{h_{\mathsf{read}}}{}\<[8]%
\>[8]{}\mathbin{:}{}\<[8E]%
\>[11]{}\forall\;\alpha\;\mu\, .\,{}\<[34]%
\>[34]{}\alpha\mathbin{!}\langle\mathtt{ask}\, ;\mathtt{local}\, ;\mu\rangle\Rightarrow {}\<[63]%
\>[63]{}(\mathsf{Int}\;{\rightarrow}^{\mu}\;\alpha)\hspace{0.1em}!\hspace{0.1em}\langle\mu\rangle{}\<[E]%
\\
\>[B]{}\Varid{h_{\mathsf{read}}}{}\<[8]%
\>[8]{}\mathrel{=}{}\<[8E]%
\>[11]{}{\mathbf{handler}}\;{}\<[22]%
\>[22]{}\{\mskip1.5mu {}\<[22E]%
\>[25]{}{\mathbf{return}}\;\Varid{x}{}\<[53]%
\>[53]{}\mapsto{}\<[53E]%
\>[58]{}\boldsymbol{\lambda}\Varid{s}\, .\,{}\<[65]%
\>[65]{}\Varid{x}{}\<[E]%
\\
\>[22]{},{}\<[22E]%
\>[25]{}{\mathbf{op}}\;\mathtt{ask}\;\anonymous \;\Varid{k}{}\<[53]%
\>[53]{}\mapsto{}\<[53E]%
\>[58]{}\boldsymbol{\lambda}\Varid{s}\, .\,{}\<[65]%
\>[65]{}\Varid{k}\;\Varid{s}\;\Varid{s}{}\<[E]%
\\
\>[22]{},{}\<[22E]%
\>[25]{}{\mathbf{sc}}\;\mathtt{local}\;\Varid{f}\;\Varid{p}\;\Varid{k}{}\<[53]%
\>[53]{}\mapsto{}\<[53E]%
\>[58]{}\boldsymbol{\lambda}\Varid{s}\, .\,{}\<[65]%
\>[65]{}\mathbf{do}\;{}\<[69]%
\>[69]{}\Varid{x}\leftarrow \Varid{p}\;\mathsf{()}\;(\Varid{f}\;\Varid{s})\, ;\Varid{k}\;\Varid{x}\;\Varid{s}{}\<[E]%
\\
\>[22]{},{}\<[22E]%
\>[25]{}{\mathbf{fwd}}\;\Varid{f}\;\Varid{p}\;\Varid{k}{}\<[53]%
\>[53]{}\mapsto{}\<[53E]%
\>[58]{}\boldsymbol{\lambda}\Varid{s}\, .\,{}\<[65]%
\>[65]{}\Varid{f}\;(\boldsymbol{\lambda}\Varid{y}\, .\,\Varid{p}\;\Varid{y}\;\Varid{s},\boldsymbol{\lambda}\Varid{z}\, .\,\Varid{k}\;\Varid{z}\;\Varid{s}){}\<[95]%
\>[95]{}\mskip1.5mu\}{}\<[95E]%
\\
\>[B]{}\Varid{run_{\mathsf{read}}}\;\Varid{s}\;\Varid{c}\;\equiv\;\mathbf{do}\;\Varid{c'}\leftarrow {\mathbf{with}}\;\Varid{h_{\mathsf{read}}}\;{\mathbf{handle}}\;\Varid{c}\, ;\Varid{c'}\;\Varid{s}{}\<[E]%
\\
\>[B]{}\Varid{c_{\mathsf{local}}}\mathrel{=}{}\<[11]%
\>[11]{}\mathbf{do}\;\Varid{x}\leftarrow {\mathbf{op}}\;\mathtt{ask}\;()\, ;\mathbf{do}\;\Varid{y}\leftarrow {\mathbf{op}}\;\mathtt{foo}\;()\, ;{}\<[E]%
\\
\>[11]{}{\mathbf{sc}}\;\mathtt{local}\;{}\<[21]%
\>[21]{}(\boldsymbol{\lambda}\Varid{a}\to \mathrm{2}\mathbin{*}\Varid{a})\;{}\<[E]%
\\
\>[21]{}(\mathbf{do}\;\Varid{z}\leftarrow {\mathbf{op}}\;\mathtt{ask}\;()\, ;\mathbf{do}\;\Varid{u}\leftarrow {\mathbf{op}}\;\mathtt{foo}\;()\, ;{\mathbf{return}}\;(\Varid{x},\Varid{y},\Varid{z},\Varid{u})){}\<[E]%
\ColumnHook
\end{hscode}\resethooks
\indentend %
Note that the forwarding clause of \ensuremath{\Varid{h_{\mathsf{read}}}} is similar to that of
\ensuremath{\Varid{h_{\mathsf{inc}}}} in \Cref{sec:hinc-revisit-again} except for passing the same
state to the scoped computation \ensuremath{\Varid{p}} and continuation \ensuremath{\Varid{k}}.
This makes natural sense because we definitely do not want the scopes
of other irrelevant operations to change what can be read.
Since \ensuremath{\mathtt{local}} is now purely syntactic, we can apply \ensuremath{\Varid{h_{\mathsf{foo}}}} before
\ensuremath{\Varid{h_{\mathsf{read}}}}, and have \ensuremath{\Varid{h_{\mathsf{read}}}} handle the \ensuremath{\mathtt{ask}} that \ensuremath{\Varid{h_{\mathsf{foo}}}} outputs:
\indentbegin \begin{hscode}\SaveRestoreHook
\column{B}{@{}>{\hspre}l<{\hspost}@{}}%
\column{3}{@{}>{\hspre}c<{\hspost}@{}}%
\column{3E}{@{}l@{}}%
\column{7}{@{}>{\hspre}l<{\hspost}@{}}%
\column{20}{@{}>{\hspre}l<{\hspost}@{}}%
\column{30}{@{}>{\hspre}l<{\hspost}@{}}%
\column{33}{@{}>{\hspre}l<{\hspost}@{}}%
\column{E}{@{}>{\hspre}l<{\hspost}@{}}%
\>[7]{}\Varid{run_{\mathsf{read}}}\;\mathrm{1}\;({\mathbf{with}}\;\Varid{h_{\mathsf{foo}}}\;{\mathbf{handle}}\;\Varid{c_{\mathsf{local}}}){}\<[E]%
\\
\>[3]{}\leadsto{}\<[3E]%
\>[7]{}\Varid{run_{\mathsf{read}}}\;\mathrm{1}\;({}\<[20]%
\>[20]{}\Varid{x}\leftarrow {\mathbf{op}}\;\mathtt{ask}\;()\, ;\Varid{y}\leftarrow {\mathbf{op}}\;\mathtt{ask}\;()\, ;{}\<[E]%
\\
\>[20]{}{\mathbf{sc}}\;\mathtt{local}\;{}\<[30]%
\>[30]{}(\boldsymbol{\lambda}\Varid{a}\to \mathrm{2}\mathbin{*}\Varid{a})\;{}\<[E]%
\\
\>[30]{}({}\<[33]%
\>[33]{}\mathbf{do}\;\Varid{z}\leftarrow {\mathbf{op}}\;\mathtt{ask}\;()\, ;\mathbf{do}\;\Varid{u}\leftarrow {\mathbf{op}}\;\mathtt{ask}\;()\, ;{\mathbf{return}}\;(\Varid{x},\Varid{y},\Varid{z},\Varid{u}))){}\<[E]%
\\
\>[3]{}\leadsto{}\<[3E]%
\>[7]{}{\mathbf{return}}\;(\mathrm{1},\mathrm{1},\mathrm{2},\mathrm{2}){}\<[E]%
\ColumnHook
\end{hscode}\resethooks
\indentend 

\subsection{Nondeterminism with Cut}
\label{eg:call-cut}

The algebraic operation \ensuremath{\mathtt{cut}\mathbin{:}\mathsf{()}\rightarrowtriangle\mathsf{()}} provides a different flavor of pruning nondeterminism
that has its origin as a Prolog primitive.
The idea is that \ensuremath{\mathtt{cut}} prunes all remaining branches and only allows the current branch to continue.
Typically, we want to keep the effect of \ensuremath{\mathtt{cut}} local. This is achieved
with the scoped operation \ensuremath{\mathtt{call}\mathbin{:}\mathsf{()}\rightarrowtriangle\mathsf{()}}, as proposed by
Wu et al. \cite{wu14}.
To handle \ensuremath{\mathtt{cut}} and \ensuremath{\mathtt{call}}, we use the \ensuremath{\mathsf{CutList}} datatype
\cite{DBLP:journals/jfp/PirogS17}.
\indentbegin \begin{hscode}\SaveRestoreHook
\column{B}{@{}>{\hspre}l<{\hspost}@{}}%
\column{3}{@{}>{\hspre}l<{\hspost}@{}}%
\column{E}{@{}>{\hspre}l<{\hspost}@{}}%
\>[3]{}\mathbf{data}\;\mathsf{CutList}\;\alpha\mathrel{=}\mathsf{opened}\;(\mathsf{List}\;\alpha)\ \mid\ \mathsf{closed}\;(\mathsf{List}\;\alpha){}\<[E]%
\ColumnHook
\end{hscode}\resethooks
\indentend %
We can think of \ensuremath{\mathsf{opened}\;\Varid{v}} as a list that may be extended and \ensuremath{\mathsf{closed}\;\Varid{v}} as a list
that may not be extended with further elements. This intention is captured in
the \ensuremath{\mathsf{append_{CutList}}} function, which discards the second list if the constructor of the first list is \ensuremath{\mathsf{closed}}.
\indentbegin \begin{hscode}\SaveRestoreHook
\column{B}{@{}>{\hspre}l<{\hspost}@{}}%
\column{9}{@{}>{\hspre}l<{\hspost}@{}}%
\column{20}{@{}>{\hspre}l<{\hspost}@{}}%
\column{31}{@{}>{\hspre}l<{\hspost}@{}}%
\column{39}{@{}>{\hspre}l<{\hspost}@{}}%
\column{E}{@{}>{\hspre}l<{\hspost}@{}}%
\>[B]{}\mathsf{append_{CutList}}\mathbin{:}\forall\;\alpha\;\mu\, .\,\mathsf{CutList}\;\alpha\;{\rightarrow}^{\mu}\;\mathsf{CutList}\;\alpha\;{\rightarrow}^{\mu}\;\mathsf{CutList}\;\alpha{}\<[E]%
\\
\>[B]{}\mathsf{append_{CutList}}\;{}\<[9]%
\>[9]{}(\mathsf{opened}\;\Varid{xs})\;{}\<[20]%
\>[20]{}(\mathsf{opened}\;\Varid{ys}){}\<[31]%
\>[31]{}\mathrel{=}\mathsf{opened}\;{}\<[39]%
\>[39]{}(\Varid{xs}+\hspace{-0.4em}+\Varid{ys}){}\<[E]%
\\
\>[B]{}\mathsf{append_{CutList}}\;{}\<[9]%
\>[9]{}(\mathsf{opened}\;\Varid{xs})\;{}\<[20]%
\>[20]{}(\mathsf{closed}\;\Varid{ys}){}\<[31]%
\>[31]{}\mathrel{=}\mathsf{closed}\;{}\<[39]%
\>[39]{}(\Varid{xs}+\hspace{-0.4em}+\Varid{ys}){}\<[E]%
\\
\>[B]{}\mathsf{append_{CutList}}\;{}\<[9]%
\>[9]{}(\mathsf{closed}\;\Varid{xs})\;{}\<[20]%
\>[20]{}\anonymous {}\<[31]%
\>[31]{}\mathrel{=}\mathsf{closed}\;{}\<[39]%
\>[39]{}\Varid{xs}{}\<[E]%
\ColumnHook
\end{hscode}\resethooks
\indentend 
The handler for nondeterminism with cut is defined as follows:
\indentbegin \begin{hscode}\SaveRestoreHook
\column{B}{@{}>{\hspre}l<{\hspost}@{}}%
\column{7}{@{}>{\hspre}c<{\hspost}@{}}%
\column{7E}{@{}l@{}}%
\column{10}{@{}>{\hspre}l<{\hspost}@{}}%
\column{19}{@{}>{\hspre}c<{\hspost}@{}}%
\column{19E}{@{}l@{}}%
\column{22}{@{}>{\hspre}l<{\hspost}@{}}%
\column{37}{@{}>{\hspre}c<{\hspost}@{}}%
\column{37E}{@{}l@{}}%
\column{42}{@{}>{\hspre}l<{\hspost}@{}}%
\column{E}{@{}>{\hspre}l<{\hspost}@{}}%
\>[B]{}\Varid{h_{\mathsf{cut}}}{}\<[7]%
\>[7]{}\mathbin{:}{}\<[7E]%
\>[10]{}\forall\;\alpha\;\mu\, .\,\alpha\mathbin{!}\langle\mathtt{choose}\, ;\mathtt{fail}\, ;\mathtt{cut}\, ;\mathtt{call}\, ;\mu\rangle\Rightarrow \mathsf{CutList}\;\alpha\hspace{0.1em}!\hspace{0.1em}\langle\mu\rangle{}\<[E]%
\\
\>[B]{}\Varid{h_{\mathsf{cut}}}{}\<[7]%
\>[7]{}\mathrel{=}{}\<[7E]%
\>[10]{}{\mathbf{handler}}\;{}\<[19]%
\>[19]{}\{\mskip1.5mu {}\<[19E]%
\>[22]{}{\mathbf{return}}\;\Varid{x}{}\<[37]%
\>[37]{}\mapsto{}\<[37E]%
\>[42]{}\mathsf{opened}\;[\mskip1.5mu \Varid{x}\mskip1.5mu]{}\<[E]%
\\
\>[19]{},{}\<[19E]%
\>[22]{}{\mathbf{op}}\;\mathtt{fail}\;\anonymous \;\anonymous {}\<[37]%
\>[37]{}\mapsto{}\<[37E]%
\>[42]{}\mathsf{opened}\;[\mskip1.5mu \mskip1.5mu]{}\<[E]%
\\
\>[19]{},{}\<[19E]%
\>[22]{}{\mathbf{op}}\;\mathtt{choose}\;\Varid{x}\;\Varid{k}{}\<[37]%
\>[37]{}\mapsto{}\<[37E]%
\>[42]{}\mathbf{do}\;\Varid{xs}\leftarrow \Varid{k}\;\mathsf{true}\, ;{}\<[E]%
\\
\>[42]{}\mathbf{if}\;\mathsf{isclose}\;\Varid{xs}\;\mathbf{then}\;\Varid{xs}\;\mathbf{else}\;\mathsf{append_{CutList}}\;\Varid{xs}\;(\Varid{k}\;\mathsf{false}){}\<[E]%
\\
\>[19]{},{}\<[19E]%
\>[22]{}{\mathbf{op}}\;\mathtt{cut}\;\anonymous \;\Varid{k}{}\<[37]%
\>[37]{}\mapsto{}\<[37E]%
\>[42]{}\mathsf{close}\;(\Varid{k}\;\mathsf{()}){}\<[E]%
\\
\>[19]{},{}\<[19E]%
\>[22]{}{\mathbf{sc}}\;\mathtt{call}\;\anonymous \;\Varid{p}\;\Varid{k}{}\<[37]%
\>[37]{}\mapsto{}\<[37E]%
\>[42]{}\mathsf{concatMap_{CutList}}\;(\mathsf{open}\;(\Varid{p}\;\mathsf{()}))\;\Varid{k}{}\<[E]%
\\
\>[19]{},{}\<[19E]%
\>[22]{}{\mathbf{bind}}\;\Varid{x}\;\Varid{k}{}\<[37]%
\>[37]{}\mapsto{}\<[37E]%
\>[42]{}\mathsf{concatMap_{CutList}}\;\Varid{x}\;\Varid{k}\mskip1.5mu\}{}\<[E]%
\ColumnHook
\end{hscode}\resethooks
\indentend %
The operation clause for \ensuremath{\mathtt{cut}} closes the cutlist and the clause for \ensuremath{\mathtt{call}} (re-)opens it
when coming out of the scope.

\noindent
\begin{tabular}{lr}
\begin{minipage}{0.45\linewidth}
\indentbegin \begin{hscode}\SaveRestoreHook
\column{B}{@{}>{\hspre}l<{\hspost}@{}}%
\column{18}{@{}>{\hspre}l<{\hspost}@{}}%
\column{E}{@{}>{\hspre}l<{\hspost}@{}}%
\>[B]{}\mathsf{close}\mathbin{:}\forall\;\alpha\;\mu\, .\,\mathsf{CutList}\;\alpha\;{\rightarrow}^{\mu}\;\mathsf{CutList}\;\alpha{}\<[E]%
\\
\>[B]{}\mathsf{close}\;(\mathsf{closed}\;\Varid{as}){}\<[18]%
\>[18]{}\mathrel{=}\mathsf{closed}\;\Varid{as}{}\<[E]%
\\
\>[B]{}\mathsf{close}\;(\mathsf{opened}\;\Varid{as}){}\<[18]%
\>[18]{}\mathrel{=}\mathsf{closed}\;\Varid{as}{}\<[E]%
\ColumnHook
\end{hscode}\resethooks
\indentend \end{minipage}
\begin{minipage}{0.45\linewidth}
\indentbegin \begin{hscode}\SaveRestoreHook
\column{B}{@{}>{\hspre}l<{\hspost}@{}}%
\column{17}{@{}>{\hspre}l<{\hspost}@{}}%
\column{E}{@{}>{\hspre}l<{\hspost}@{}}%
\>[B]{}\mathsf{open}\mathbin{:}\forall\;\alpha\;\mu\, .\,\mathsf{CutList}\;\alpha\;{\rightarrow}^{\mu}\;\mathsf{CutList}\;\alpha{}\<[E]%
\\
\>[B]{}\mathsf{open}\;(\mathsf{closed}\;\Varid{as}){}\<[17]%
\>[17]{}\mathrel{=}\mathsf{opened}\;\Varid{as}{}\<[E]%
\\
\>[B]{}\mathsf{open}\;(\mathsf{opened}\;\Varid{as}){}\<[17]%
\>[17]{}\mathrel{=}\mathsf{opened}\;\Varid{as}{}\<[E]%
\ColumnHook
\end{hscode}\resethooks
\indentend \end{minipage}
\end{tabular}

The function \ensuremath{\mathsf{isclose}} checks whether a cutlist is closed. It is used
in the \ensuremath{\mathtt{choose}} clause for efficiency; we do not need to execute \ensuremath{\Varid{k}\;\mathsf{false}} when \ensuremath{\Varid{k}\;\mathsf{true}} returns a closed list.
\indentbegin \begin{hscode}\SaveRestoreHook
\column{B}{@{}>{\hspre}l<{\hspost}@{}}%
\column{15}{@{}>{\hspre}l<{\hspost}@{}}%
\column{19}{@{}>{\hspre}l<{\hspost}@{}}%
\column{E}{@{}>{\hspre}l<{\hspost}@{}}%
\>[B]{}\mathsf{isclose}\mathbin{:}\forall\;\alpha\;\mu\, .\,\mathsf{CutList}\;\alpha\;{\rightarrow}^{\mu}\;\mathsf{CutList}\;\alpha{}\<[E]%
\\
\>[B]{}\mathsf{isclose}\;(\mathsf{closed}\;\anonymous ){}\<[19]%
\>[19]{}\mathrel{=}\mathsf{true}{}\<[E]%
\\
\>[B]{}\mathsf{isclose}\;(\mathsf{opened}\;{}\<[15]%
\>[15]{}\anonymous ){}\<[19]%
\>[19]{}\mathrel{=}\mathsf{false}{}\<[E]%
\ColumnHook
\end{hscode}\resethooks
\indentend %

The forwarding of \ensuremath{\Varid{h_{\mathsf{cut}}}} uses the function \ensuremath{\mathsf{concatMap_{CutList}}}, the cutlist
counterpart of \ensuremath{\mathsf{concatMap}} which takes the extensibility of \ensuremath{\mathsf{CutList}}
(signalled by \ensuremath{\mathsf{opened}} and \ensuremath{\mathsf{closed}}) into account when concatenating.
\indentbegin \begin{hscode}\SaveRestoreHook
\column{B}{@{}>{\hspre}l<{\hspost}@{}}%
\column{11}{@{}>{\hspre}c<{\hspost}@{}}%
\column{11E}{@{}l@{}}%
\column{14}{@{}>{\hspre}l<{\hspost}@{}}%
\column{25}{@{}>{\hspre}l<{\hspost}@{}}%
\column{30}{@{}>{\hspre}l<{\hspost}@{}}%
\column{34}{@{}>{\hspre}l<{\hspost}@{}}%
\column{E}{@{}>{\hspre}l<{\hspost}@{}}%
\>[B]{}\mathsf{concatMap_{CutList}}{}\<[11]%
\>[11]{}\mathbin{:}{}\<[11E]%
\>[14]{}\forall\;\alpha\;\beta\;\mu\, .\,\mathsf{CutList}\;\beta\;{\rightarrow}^{\mu}\;(\beta\;{\rightarrow}^{\mu}\;\mathsf{CutList}\;\alpha)\;{\rightarrow}^{\mu}\;\mathsf{CutList}\;\alpha{}\<[E]%
\\
\>[B]{}\mathsf{concatMap_{CutList}}\;(\mathsf{opened}\;[\mskip1.5mu \mskip1.5mu])\;{}\<[25]%
\>[25]{}\Varid{f}\mathrel{=}{\mathbf{return}}\;(\mathsf{opened}\;[\mskip1.5mu \mskip1.5mu]){}\<[E]%
\\
\>[B]{}\mathsf{concatMap_{CutList}}\;(\mathsf{closed}\;[\mskip1.5mu \mskip1.5mu])\;{}\<[25]%
\>[25]{}\Varid{f}\mathrel{=}{\mathbf{return}}\;(\mathsf{closed}\;[\mskip1.5mu \mskip1.5mu]){}\<[E]%
\\
\>[B]{}\mathsf{concatMap_{CutList}}\;(\mathsf{opened}\;(\Varid{b}\mathbin{:}\Varid{bs}))\;{}\<[25]%
\>[25]{}\Varid{f}\mathrel{=}{}\<[30]%
\>[30]{}\mathbf{do}\;{}\<[34]%
\>[34]{}\Varid{as}\leftarrow \Varid{f}\;\Varid{b}\, ;{}\<[E]%
\\
\>[34]{}\Varid{as'}\leftarrow \mathsf{concatMap_{CutList}}\;(\mathsf{opened}\;\Varid{bs})\;\Varid{f}\, ;{}\<[E]%
\\
\>[34]{}\mathsf{append_{CutList}}\;\Varid{as}\;\Varid{as'}{}\<[E]%
\\
\>[B]{}\mathsf{concatMap_{CutList}}\;(\mathsf{closed}\;(\Varid{b}\mathbin{:}\Varid{bs}))\;{}\<[25]%
\>[25]{}\Varid{f}\mathrel{=}{}\<[30]%
\>[30]{}\mathbf{do}\;{}\<[34]%
\>[34]{}\Varid{as}\leftarrow \Varid{f}\;\Varid{b}\, ;{}\<[E]%
\\
\>[34]{}\Varid{as'}\leftarrow \mathsf{concatMap_{CutList}}\;(\mathsf{closed}\;\Varid{bs})\;\Varid{f}\, ;{}\<[E]%
\\
\>[34]{}\mathsf{append_{CutList}}\;\Varid{as}\;\Varid{as'}{}\<[E]%
\ColumnHook
\end{hscode}\resethooks
\indentend %
In \Cref{eg:parser}, we give an example usage of \ensuremath{\mathtt{cut}} to improve parsers.

\subsection{Depth-Bounded Search}
\label{eg:dbs}

The handler \ensuremath{\Varid{h_{\mathsf{ND}}}} for nondeterminism shown in \Cref{sec:background-motivation} implements the \emph{depth-first search} (DFS) strategy.
However, with scoped effects and handlers we can implement other search strategies, such as \emph{depth-bounded search} (DBS) \cite{DBLP:conf/esop/YangPWBS22}, which uses the scoped operation \ensuremath{\mathtt{depth}\mathbin{:}\mathsf{Int}\rightarrowtriangle\mathsf{()}}
to bound the depth of the branches in the scoped computation.
The handler uses return type \ensuremath{\mathsf{Int}\;{\rightarrow}^{\mu}\;\mathsf{List}\;(\alpha,\mathsf{Int})}. Here, the \ensuremath{\mathsf{Int}}
parameter is the current depth bound, and the result is a list of \ensuremath{(\alpha,\mathsf{Int})}
pairs, where \ensuremath{\alpha} denotes the result and \ensuremath{\mathsf{Int}} reflects the remaining global
depth bound.\footnote{The pair type \ensuremath{(\alpha,\mathsf{Int})} differs from Yang et al.
  \cite{DBLP:conf/esop/YangPWBS22}'s \ensuremath{\alpha} in order to enable forwarding.}
\indentbegin \begin{hscode}\SaveRestoreHook
\column{B}{@{}>{\hspre}l<{\hspost}@{}}%
\column{3}{@{}>{\hspre}c<{\hspost}@{}}%
\column{3E}{@{}l@{}}%
\column{6}{@{}>{\hspre}l<{\hspost}@{}}%
\column{9}{@{}>{\hspre}c<{\hspost}@{}}%
\column{9E}{@{}l@{}}%
\column{12}{@{}>{\hspre}l<{\hspost}@{}}%
\column{23}{@{}>{\hspre}c<{\hspost}@{}}%
\column{23E}{@{}l@{}}%
\column{28}{@{}>{\hspre}l<{\hspost}@{}}%
\column{35}{@{}>{\hspre}l<{\hspost}@{}}%
\column{E}{@{}>{\hspre}l<{\hspost}@{}}%
\>[B]{}\Varid{h_{\mathsf{depth}}}{}\<[9]%
\>[9]{}\mathbin{:}{}\<[9E]%
\>[12]{}\forall\;\alpha\;\mu\, .\,\alpha\mathbin{!}\langle\mathtt{choose}\, ;\mathtt{fail}\, ;\mathtt{depth}\, ;\mu\rangle\Rightarrow (\mathsf{Int}\;{\rightarrow}^{\mu}\;\mathsf{List}\;(\alpha,\mathsf{Int}))\hspace{0.1em}!\hspace{0.1em}\langle\mu\rangle{}\<[E]%
\\
\>[B]{}\Varid{h_{\mathsf{depth}}}{}\<[9]%
\>[9]{}\mathrel{=}{}\<[9E]%
\>[12]{}{\mathbf{handler}}{}\<[E]%
\\
\>[B]{}\hsindent{3}{}\<[3]%
\>[3]{}\{\mskip1.5mu {}\<[3E]%
\>[6]{}{\mathbf{return}}\;\Varid{x}{}\<[23]%
\>[23]{}\mapsto{}\<[23E]%
\>[28]{}\boldsymbol{\lambda}\Varid{d}\, .\,[\mskip1.5mu (\Varid{x},\Varid{d})\mskip1.5mu]{}\<[E]%
\\
\>[B]{}\hsindent{3}{}\<[3]%
\>[3]{},{}\<[3E]%
\>[6]{}{\mathbf{op}}\;\mathtt{fail}\;\anonymous \;\anonymous {}\<[23]%
\>[23]{}\mapsto{}\<[23E]%
\>[28]{}\boldsymbol{\lambda}\anonymous \, .\,[\mskip1.5mu \mskip1.5mu]{}\<[E]%
\\
\>[B]{}\hsindent{3}{}\<[3]%
\>[3]{},{}\<[3E]%
\>[6]{}{\mathbf{op}}\;\mathtt{choose}\;\Varid{x}\;\Varid{k}{}\<[23]%
\>[23]{}\mapsto{}\<[23E]%
\>[28]{}\boldsymbol{\lambda}\Varid{d}\, .\,{}\<[35]%
\>[35]{}\mathbf{if}\;\Varid{d}=\mathrm{0}\;\mathbf{then}\;[\mskip1.5mu \mskip1.5mu]\;\mathbf{else}\;\Varid{k}\;\mathsf{true}\;(\Varid{d}\mathbin{-}\mathrm{1})+\hspace{-0.4em}+\Varid{k}\;\mathsf{false}\;(\Varid{d}\mathbin{-}\mathrm{1}){}\<[E]%
\\
\>[B]{}\hsindent{3}{}\<[3]%
\>[3]{},{}\<[3E]%
\>[6]{}{\mathbf{sc}}\;\mathtt{depth}\;\Varid{d'}\;\Varid{p}\;\Varid{k}{}\<[23]%
\>[23]{}\mapsto{}\<[23E]%
\>[28]{}\boldsymbol{\lambda}\Varid{d}\, .\,\mathsf{concatMap}\;(\Varid{p}\;\mathsf{()}\;\Varid{d'})\;(\boldsymbol{\lambda}(\Varid{v},\anonymous )\, .\,\Varid{k}\;\Varid{v}\;\Varid{d}){}\<[E]%
\\
\>[B]{}\hsindent{3}{}\<[3]%
\>[3]{},{}\<[3E]%
\>[6]{}{\mathbf{fwd}}\;\Varid{f}\;\Varid{p}\;\Varid{k}{}\<[23]%
\>[23]{}\mapsto{}\<[23E]%
\>[28]{}\boldsymbol{\lambda}\Varid{d}\, .\,{}\<[35]%
\>[35]{}\Varid{f}\;(\boldsymbol{\lambda}\Varid{y}\, .\,\Varid{p}\;\Varid{y}\;\Varid{d},\boldsymbol{\lambda}\Varid{vs}\, .\,\mathsf{concatMap}\;\Varid{vs}\;(\boldsymbol{\lambda}(\Varid{v},\Varid{d})\, .\,\Varid{k}\;\Varid{v}\;\Varid{d}))\mskip1.5mu\}{}\<[E]%
\ColumnHook
\end{hscode}\resethooks
\indentend   %
For the \ensuremath{\mathtt{depth}} operation, we locally use the given depth bound \ensuremath{\Varid{d'}} for the scoped computation
\ensuremath{\Varid{p}} and go back to using the global depth bound \ensuremath{\Varid{d}} for the continuation \ensuremath{\Varid{k}}.
In case of an unknown scoped operation, the forwarding clause just threads the depth bound through, first into the scoped computation and from there into the continuation.
It is similar to a combination of the forwarding of \ensuremath{\Varid{h_{\mathsf{once}}}} and \ensuremath{\Varid{h_{\mathsf{inc}}}}.

For example, the following program (\Cref{fig:cdepth}) has a local depth bound of \ensuremath{\mathrm{1}} and a global depth bound
of \ensuremath{\mathrm{2}}.
It discards the results \ensuremath{\mathrm{2}} and \ensuremath{\mathrm{3}} in the scoped computation as they appear after
the second \ensuremath{\mathtt{choose}} operation, and similarly, the results
\ensuremath{\mathrm{5}} and \ensuremath{\mathrm{6}} in the continuation are ignored.

\indentbegin \begin{hscode}\SaveRestoreHook
\column{B}{@{}>{\hspre}l<{\hspost}@{}}%
\column{14}{@{}>{\hspre}l<{\hspost}@{}}%
\column{20}{@{}>{\hspre}l<{\hspost}@{}}%
\column{24}{@{}>{\hspre}l<{\hspost}@{}}%
\column{28}{@{}>{\hspre}l<{\hspost}@{}}%
\column{E}{@{}>{\hspre}l<{\hspost}@{}}%
\>[B]{}\Varid{c_{\mathsf{depth}}}\mathrel{=}{\mathbf{sc}}\;{}\<[14]%
\>[14]{}\mathtt{depth}\;\mathrm{1}\;{}\<[E]%
\\
\>[14]{}(\anonymous \, .\,{}\<[20]%
\>[20]{}\mathbf{do}\;{}\<[24]%
\>[24]{}\Varid{b}_{1}{}\<[28]%
\>[28]{}\leftarrow {\mathbf{op}}\;\mathtt{choose}\;\mathsf{()}\, ;\mathbf{if}\;\Varid{b}_{1}\;\mathbf{then}\;{\mathbf{return}}\;\mathrm{1}\;\mathbf{else}{}\<[E]%
\\
\>[20]{}\mathbf{do}\;{}\<[24]%
\>[24]{}\Varid{b}_{2}{}\<[28]%
\>[28]{}\leftarrow {\mathbf{op}}\;\mathtt{choose}\;\mathsf{()}\, ;\mathbf{if}\;\Varid{b}_{2}\;\mathbf{then}\;{\mathbf{return}}\;\mathrm{2}\;\mathbf{else}\;{\mathbf{return}}\;\mathrm{3})\;{}\<[E]%
\\
\>[14]{}(\Varid{x}\, .\,{}\<[20]%
\>[20]{}\mathbf{do}\;{}\<[24]%
\>[24]{}\Varid{b}_{1}{}\<[28]%
\>[28]{}\leftarrow {\mathbf{op}}\;\mathtt{choose}\;\mathsf{()}\, ;\mathbf{if}\;\Varid{b}_{1}\;\mathbf{then}\;{\mathbf{return}}\;\Varid{x}\;\mathbf{else}{}\<[E]%
\\
\>[20]{}\mathbf{do}\;{}\<[24]%
\>[24]{}\Varid{b}_{2}{}\<[28]%
\>[28]{}\leftarrow {\mathbf{op}}\;\mathtt{choose}\;\mathsf{()}\, ;\mathbf{if}\;\Varid{b}_{2}\;\mathbf{then}\;{\mathbf{return}}\;\mathrm{4}\;\mathbf{else}{}\<[E]%
\\
\>[20]{}\mathbf{do}\;{}\<[24]%
\>[24]{}\Varid{b}_{3}{}\<[28]%
\>[28]{}\leftarrow {\mathbf{op}}\;\mathtt{choose}\;\mathsf{()}\, ;\mathbf{if}\;\Varid{b}_{3}\;\mathbf{then}\;{\mathbf{return}}\;\mathrm{5}\;\mathbf{else}\;{\mathbf{return}}\;\mathrm{6}){}\<[E]%
\ColumnHook
\end{hscode}\resethooks
\indentend \vspace{-0.4cm}\indentbegin \begin{hscode}\SaveRestoreHook
\column{B}{@{}>{\hspre}l<{\hspost}@{}}%
\column{3}{@{}>{\hspre}l<{\hspost}@{}}%
\column{E}{@{}>{\hspre}l<{\hspost}@{}}%
\>[3]{}\texttt{>>>}\;({\mathbf{with}}\;\Varid{h_{\mathsf{depth}}}\;{\mathbf{handle}}\;\Varid{c_{\mathsf{depth}}})\;\mathrm{2}{}\<[E]%
\\
\>[3]{}[\mskip1.5mu (\mathrm{1},\mathrm{1}),(\mathrm{4},\mathrm{0})\mskip1.5mu]{}\<[E]%
\ColumnHook
\end{hscode}\resethooks
\indentend The result is \ensuremath{[\mskip1.5mu (\mathrm{1},\mathrm{1}),(\mathrm{4},\mathrm{0})\mskip1.5mu]}, where the tuple's second parameter represents the
global depth bound.
Notice that \ensuremath{\mathtt{choose}} operations in the scoped computation \ensuremath{\mathtt{depth}} do not consume the global depth bound in the handler.
We also show an alternative implementation of the handler clause for \ensuremath{\mathtt{depth}} in the Supplementary Material.

\begin{figure}
\begin{tikzpicture}[node distance=1.5cm]
\small

\node (sc) [] {\ensuremath{{\mathbf{sc}}\;\mathtt{depth}\;\mathrm{1}}};

\node (top) [right of=sc, xshift=1.5cm] {\ensuremath{(\anonymous \, .\,{\mathbf{op}}\;\mathtt{choose}\;())}};
\node (1) [below of=top, left of=top, xshift=0.5cm, yshift=0.75cm] {\ensuremath{{\mathbf{return}}\;\mathrm{1}}};
\node (choose_sc) [below of=top, right of=top, xshift=-0.5cm, yshift=0.75cm] {\ensuremath{{\mathbf{op}}\;\mathtt{choose}\;()}};
\node (2) [below of=choose_sc, left of=choose_sc, xshift=0.5cm, yshift=0.75cm] {\ensuremath{{\mathbf{return}}\;\mathrm{2}}};
\node (3) [below of=choose_sc, right of=choose_sc, xshift=-0.5cm, yshift=0.75cm] {\ensuremath{{\mathbf{return}}\;\mathrm{3}}};

\draw [] (top) -- (1);
\draw [] (top) -- (choose_sc);
\draw [] (choose_sc) -- (2);
\draw [] (choose_sc) -- (3);

\node[draw=gray,thick,dotted,fit=(1) (choose_sc),label={[xshift=1.8cm,yshift=-0.1cm,font=\scriptsize,text=gray]depth 1},inner sep=-0.5mm] {};
\node[draw=gray,thick,dotted,fit=(2) (3),label={[xshift=1.6cm,yshift=-0.1cm,font=\scriptsize,text=gray]depth 2},inner sep=-0.5mm] {};

\node (cont) [right of=top, xshift=3.5cm] {\ensuremath{(\Varid{x}\, .\,{\mathbf{op}}\;\mathtt{choose}\;())}};
\node (x) [below of=cont, left of=cont, xshift=0.5cm, yshift=0.75cm] {\ensuremath{{\mathbf{return}}\;\Varid{x}}};
\node (choose1) [below of=cont, right of=cont, xshift=-0.5cm, yshift=0.75cm] {\ensuremath{{\mathbf{op}}\;\mathtt{choose}\;()}};
\node (4) [below of=choose1, left of=choose1, xshift=0.5cm, yshift=0.75cm] {\ensuremath{{\mathbf{return}}\;\mathrm{4}}};
\node (choose2) [below of=choose1, right of=choose1, xshift=-0.5cm, yshift=0.75cm] {\ensuremath{{\mathbf{op}}\;\mathtt{choose}\;()}};
\node (5) [below of=choose2, left of=choose2, xshift=0.5cm, yshift=0.75cm] {\ensuremath{{\mathbf{return}}\;\mathrm{5}}};
\node (6) [below of=choose2, right of=choose2, xshift=-0.5cm, yshift=0.75cm] {\ensuremath{{\mathbf{return}}\;\mathrm{6}}};

\draw [] (cont) -- (x);
\draw [] (cont) -- (choose1);
\draw [] (choose1) -- (4);
\draw [] (choose1) -- (choose2);
\draw [] (choose2) -- (5);
\draw [] (choose2) -- (6);

\node[draw=gray,thick,dotted,fit=(x) (choose1),label={[xshift=1.8cm,yshift=-0.1cm,font=\scriptsize,text=gray]depth 1},inner sep=-0.5mm] {};
\node[draw=gray,thick,dotted,fit=(4) (choose2),label={[xshift=1.8cm,yshift=-0.1cm,font=\scriptsize,text=gray]depth 2},inner sep=-0.5mm] {};
\node[draw=gray,thick,dotted,fit=(5) (6),label={[xshift=1.6cm,yshift=-0.1cm,font=\scriptsize,text=gray]depth 3},inner sep=-0.5mm] {};

\end{tikzpicture}
\caption{Visual representation of \ensuremath{\Varid{c_{\mathsf{depth}}}}.}
\label{fig:cdepth}
\end{figure}

\subsection{Parsers}
\label{eg:parser}

A parser effect can be achieved by combining the nondeterminism-with-cut effect and
a token-consuming effect \cite{wu14}.
The latter features the algebraic operation \ensuremath{\mathtt{token}\mathbin{:}\mathsf{Char}\rightarrowtriangle\mathsf{Char}} where
\ensuremath{{\mathbf{op}}\;\mathtt{token}\;\Varid{t}} consumes a single character from the implicit input string;
if it is \ensuremath{\Varid{t}}, it is passed on to the continuation; otherwise the operation fails.
The token handler has result type \ensuremath{\mathsf{String}\;{\rightarrow}^{\langle\mathtt{fail}\, ;\mu\rangle}\;(\alpha,\mathsf{String})}:
it threads through the remaining part of the input string. Observe that
the function type signals it may \ensuremath{\mathtt{fail}}, in case the token does not match.
\indentbegin \begin{hscode}\SaveRestoreHook
\column{B}{@{}>{\hspre}l<{\hspost}@{}}%
\column{3}{@{}>{\hspre}c<{\hspost}@{}}%
\column{3E}{@{}l@{}}%
\column{6}{@{}>{\hspre}l<{\hspost}@{}}%
\column{11}{@{}>{\hspre}c<{\hspost}@{}}%
\column{11E}{@{}l@{}}%
\column{14}{@{}>{\hspre}l<{\hspost}@{}}%
\column{21}{@{}>{\hspre}c<{\hspost}@{}}%
\column{21E}{@{}l@{}}%
\column{26}{@{}>{\hspre}l<{\hspost}@{}}%
\column{33}{@{}>{\hspre}l<{\hspost}@{}}%
\column{44}{@{}>{\hspre}l<{\hspost}@{}}%
\column{53}{@{}>{\hspre}l<{\hspost}@{}}%
\column{E}{@{}>{\hspre}l<{\hspost}@{}}%
\>[B]{}\Varid{h_{\mathsf{token}}}{}\<[11]%
\>[11]{}\mathbin{:}{}\<[11E]%
\>[14]{}\forall\;\alpha\;\mu\, .\,\alpha\mathbin{!}\langle\mathtt{token}\, ;\mathtt{fail}\, ;\mu\rangle\Rightarrow (\mathsf{String}\;{\rightarrow}^{\langle\mathtt{fail}\, ;\mu\rangle}\;(\alpha,\mathsf{String}))\hspace{0.1em}!\hspace{0.1em}\langle\mathtt{fail}\, ;\mu\rangle{}\<[E]%
\\
\>[B]{}\Varid{h_{\mathsf{token}}}{}\<[11]%
\>[11]{}\mathrel{=}{}\<[11E]%
\>[14]{}{\mathbf{handler}}{}\<[E]%
\\
\>[B]{}\hsindent{3}{}\<[3]%
\>[3]{}\{\mskip1.5mu {}\<[3E]%
\>[6]{}{\mathbf{return}}\;\Varid{x}{}\<[21]%
\>[21]{}\mapsto{}\<[21E]%
\>[26]{}\boldsymbol{\lambda}\Varid{s}\, .\,{}\<[33]%
\>[33]{}(\Varid{x},\Varid{s}){}\<[E]%
\\
\>[B]{}\hsindent{3}{}\<[3]%
\>[3]{},{}\<[3E]%
\>[6]{}{\mathbf{op}}\;\mathtt{token}\;\Varid{x}\;\Varid{k}{}\<[21]%
\>[21]{}\mapsto{}\<[21E]%
\>[26]{}\boldsymbol{\lambda}\Varid{s}\, .\,{}\<[33]%
\>[33]{}\mathbf{case}\;\Varid{s}\;\mathbf{of}\;{}\<[44]%
\>[44]{}[\mskip1.5mu \mskip1.5mu]{}\<[53]%
\>[53]{}\to \mathsf{failure}\;(){}\<[E]%
\\
\>[44]{}(\Varid{x'}\mathbin{:}\Varid{xs}){}\<[53]%
\>[53]{}\to \mathbf{if}\;\Varid{x}=\Varid{x'}\;\mathbf{then}\;\Varid{k}\;\Varid{x}\;\Varid{xs}\;\mathbf{else}\;\mathsf{failure}\;(){}\<[E]%
\\
\>[B]{}\hsindent{3}{}\<[3]%
\>[3]{},{}\<[3E]%
\>[6]{}{\mathbf{fwd}}\;\Varid{f}\;\Varid{p}\;\Varid{k}{}\<[21]%
\>[21]{}\mapsto{}\<[21E]%
\>[26]{}\boldsymbol{\lambda}\Varid{s}\, .\,{}\<[33]%
\>[33]{}\Varid{f}\;(\boldsymbol{\lambda}\Varid{y}\, .\,\Varid{p}\;\Varid{y}\;\Varid{s},\boldsymbol{\lambda}(\Varid{t},\Varid{s'})\, .\,\Varid{k}\;\Varid{t}\;\Varid{s'})\mskip1.5mu\}{}\<[E]%
\ColumnHook
\end{hscode}\resethooks
\indentend   %

The forwarding clause of \ensuremath{\Varid{h_{\mathsf{token}}}} is the same as that of \ensuremath{\Varid{h_{\mathsf{inc}}}} in
\Cref{sec:hinc-revisit-again}.
We uses the initial state \ensuremath{\Varid{s}} for the scoped computation \ensuremath{\Varid{p}}, and the
updated state \ensuremath{\Varid{s'}} for the continuation \ensuremath{\Varid{k}}.
The updated state \ensuremath{\Varid{s'}} is passed in from the scoped computation \ensuremath{\Varid{p}}
and reflects the consumption of tokens in \ensuremath{\Varid{p}}.
An alternative forwarding semantics is \ensuremath{{\mathbf{fwd}}\;\Varid{f}\;\Varid{p}\;\Varid{k}\mapsto\boldsymbol{\lambda}\Varid{s}\, .\,\Varid{f}\;(\boldsymbol{\lambda}\Varid{y}\, .\,\Varid{p}\;\Varid{y}\;\Varid{s},\boldsymbol{\lambda}(\Varid{t},\Varid{s'})\, .\,\Varid{k}\;\Varid{t}\;\Varid{s})}, where the continuation \ensuremath{\Varid{k}} also uses the
initial state \ensuremath{\Varid{s}}.
In this way, we actually discard all the changes to the state in the
scoped computation, which could potentially be useful to implement
parser lookahead.
It is up to the programmer to decide which semantics they want.

We give an example parser for a small expression language, in the typical parser
combinator style, built on top of the token-consumer and nondeterminism. For
convenience, it uses the syntactic sugar \ensuremath{\Varid{x}\diamond\Varid{y}\;\equiv\;{\mathbf{op}}\;\mathtt{choose}\;(\Varid{b}\, .\,\mathbf{if}\;\Varid{b}\;\mathbf{then}\;\Varid{x}\;\mathbf{else}\;\Varid{y})}.
\indentbegin \begin{hscode}\SaveRestoreHook
\column{B}{@{}>{\hspre}l<{\hspost}@{}}%
\column{9}{@{}>{\hspre}l<{\hspost}@{}}%
\column{12}{@{}>{\hspre}l<{\hspost}@{}}%
\column{16}{@{}>{\hspre}l<{\hspost}@{}}%
\column{E}{@{}>{\hspre}l<{\hspost}@{}}%
\>[B]{}\mathsf{digit}{}\<[12]%
\>[12]{}\mathbin{:}\forall\;\mu\, .\,\mathsf{()}\to \mathsf{Char}\mathbin{!}\langle\mathtt{token}\, ;\mathtt{choose}\, ;\mu\rangle{}\<[E]%
\\
\>[B]{}\mathsf{digit}\;\anonymous {}\<[12]%
\>[12]{}\mathrel{=}{\mathbf{op}}\;\mathtt{token}\;\text{\ttfamily '0'}\diamond{\mathbf{op}}\;\mathtt{token}\;\text{\ttfamily '1'}\diamond\ldots\diamond{\mathbf{op}}\;\mathtt{token}\;\text{\ttfamily '9'}{}\<[E]%
\\
\>[B]{}\mathsf{many}_{1}{}\<[12]%
\>[12]{}\mathbin{:}\forall\;\alpha\;\mu\, .\,(\mathsf{()}\;{\rightarrow}^{\mu}\;\alpha)\;{\rightarrow}^{\mu}\;\mathsf{List}\;\alpha{}\<[E]%
\\
\>[B]{}\mathsf{many}_{1}\;\Varid{p}{}\<[12]%
\>[12]{}\mathrel{=}\mathbf{do}\;\Varid{a}\leftarrow \Varid{p}\;()\, ;\mathbf{do}\;\Varid{as}\leftarrow \mathsf{many}_{1}\;\Varid{p}\diamond{\mathbf{return}}\;[\mskip1.5mu \mskip1.5mu]\, ;{\mathbf{return}}\;(\Varid{a}\mathbin{:}\Varid{as}){}\<[E]%
\\
\>[B]{}\mathsf{expr}'{}\<[12]%
\>[12]{}\mathbin{:}\forall\;\mu\, .\,\mathsf{()}\to \mathsf{Int}\mathbin{!}\langle\mathtt{token}\, ;\mathtt{choose}\, ;\mu\rangle{}\<[E]%
\\
\>[B]{}\mathsf{expr}'\;\anonymous {}\<[12]%
\>[12]{}\mathrel{=}{}\<[16]%
\>[16]{}(\mathbf{do}\;\Varid{i}\leftarrow \mathsf{term}'\;\mathsf{()}\, ;\mathbf{do}\;{\mathbf{op}}\;\mathtt{token}\;\text{\ttfamily '+'}\, ;\mathbf{do}\;\Varid{j}\leftarrow \mathsf{expr}'\;\mathsf{()}\, ;{\mathbf{return}}\;(\Varid{i}\mathbin{+}\Varid{j})){}\<[E]%
\\
\>[12]{}\diamond{}\<[16]%
\>[16]{}(\mathbf{do}\;\Varid{i}\leftarrow \mathsf{term}'\;\mathsf{()}\, ;{\mathbf{return}}\;\Varid{i}){}\<[E]%
\\
\>[B]{}\mathsf{term}'{}\<[12]%
\>[12]{}\mathbin{:}\forall\;\mu\, .\,\mathsf{()}\to \mathsf{Int}\mathbin{!}\langle\mathtt{token}\, ;\mathtt{choose}\, ;\mu\rangle{}\<[E]%
\\
\>[B]{}\mathsf{term}'\;\anonymous {}\<[12]%
\>[12]{}\mathrel{=}{}\<[16]%
\>[16]{}(\mathbf{do}\;\Varid{i}\leftarrow \mathsf{factor}\;\mathsf{()}\, ;\mathbf{do}\;{\mathbf{op}}\;\mathtt{token}\;\text{\ttfamily '*'}\, ;\mathbf{do}\;\Varid{j}\leftarrow \mathsf{term}\;\mathsf{()}\, ;{\mathbf{return}}\;(\Varid{i}\mathbin{*}\Varid{j})){}\<[E]%
\\
\>[12]{}\diamond{}\<[16]%
\>[16]{}(\mathbf{do}\;\Varid{i}\leftarrow \mathsf{factor}\;\mathsf{()}\, ;{\mathbf{return}}\;\Varid{i}){}\<[E]%
\\
\>[B]{}\mathsf{factor}{}\<[12]%
\>[12]{}\mathbin{:}\forall\;\mu\, .\,\mathsf{()}\to \mathsf{Int}\mathbin{!}\langle\mathtt{token}\, ;\mathtt{choose}\, ;\mu\rangle{}\<[E]%
\\
\>[B]{}\mathsf{factor}\;{}\<[9]%
\>[9]{}\anonymous {}\<[12]%
\>[12]{}\mathrel{=}{}\<[16]%
\>[16]{}(\mathbf{do}\;\Varid{ds}\leftarrow \mathsf{many}_{1}\;\mathsf{digit}\, ;{\mathbf{return}}\;(\mathsf{read}\;\Varid{ds})){}\<[E]%
\\
\>[12]{}\diamond{}\<[16]%
\>[16]{}(\mathbf{do}\;{\mathbf{op}}\;\mathtt{token}\;\text{\ttfamily '('}\, ;\mathbf{do}\;\Varid{i}\leftarrow \mathsf{expr}'\;\mathsf{()}\, ;\mathbf{do}\;{\mathbf{op}}\;\mathtt{token}\;\text{\ttfamily ')'}\, ;{\mathbf{return}}\;\Varid{i}){}\<[E]%
\ColumnHook
\end{hscode}\resethooks
\indentend 
The \ensuremath{\mathsf{expr}'} and \ensuremath{\mathsf{term}'} parsers are naive and can be improved by two
steps of refactoring: (1) factoring out the common prefix in the two
branches, and (2) pruning the second branch when the first branch
successfully consumes a \ensuremath{\mathbin{+}} or \ensuremath{\mathbin{*}}, respectively.
The optimised versions of \ensuremath{\mathsf{expr}}, \ensuremath{\mathsf{term}}, and \ensuremath{\mathsf{factor}} are shown as
follows using the \ensuremath{\mathtt{cut}} operation defined in \Cref{eg:call-cut}.
\indentbegin \begin{hscode}\SaveRestoreHook
\column{B}{@{}>{\hspre}l<{\hspost}@{}}%
\column{11}{@{}>{\hspre}l<{\hspost}@{}}%
\column{17}{@{}>{\hspre}l<{\hspost}@{}}%
\column{20}{@{}>{\hspre}l<{\hspost}@{}}%
\column{37}{@{}>{\hspre}l<{\hspost}@{}}%
\column{42}{@{}>{\hspre}l<{\hspost}@{}}%
\column{79}{@{}>{\hspre}l<{\hspost}@{}}%
\column{E}{@{}>{\hspre}l<{\hspost}@{}}%
\>[B]{}\mathsf{expr}{}\<[11]%
\>[11]{}\mathbin{:}\forall\;\mu\, .\,\mathsf{()}\to \mathsf{Int}\mathbin{!}\langle\mathtt{token}\, ;\mathtt{choose}\, ;\mathtt{cut}\, ;\mu\rangle{}\<[E]%
\\
\>[B]{}\mathsf{expr}\;\anonymous {}\<[11]%
\>[11]{}\mathrel{=}\mathbf{do}\;{}\<[17]%
\>[17]{}\Varid{i}{}\<[20]%
\>[20]{}\leftarrow \mathsf{term}\;\mathsf{()}\, ;{}\<[E]%
\\
\>[20]{}{\mathbf{sc}}\;\mathtt{call}\;\mathsf{()}\;(\anonymous {}\<[37]%
\>[37]{}\, .\,({}\<[42]%
\>[42]{}\mathbf{do}\;{\mathbf{op}}\;\mathtt{token}\;\text{\ttfamily '+'}\, ;\mathbf{do}\;{\mathbf{op}}\;\mathtt{cut}\;()\, ;{}\<[E]%
\\
\>[42]{}\mathbf{do}\;\Varid{j}\leftarrow \mathsf{expr}\;\mathsf{()}\, ;{\mathbf{return}}\;(\Varid{i}\mathbin{+}\Varid{j}))\diamond{}\<[79]%
\>[79]{}\Varid{i}){}\<[E]%
\\
\>[B]{}\mathsf{term}{}\<[11]%
\>[11]{}\mathbin{:}\forall\;\mu\, .\,\mathsf{()}\to \mathsf{Int}\mathbin{!}\langle\mathtt{token}\, ;\mathtt{choose}\, ;\mathtt{cut}\, ;\mu\rangle{}\<[E]%
\\
\>[B]{}\mathsf{term}\;\anonymous {}\<[11]%
\>[11]{}\mathrel{=}\mathbf{do}\;{}\<[17]%
\>[17]{}\Varid{i}{}\<[20]%
\>[20]{}\leftarrow \mathsf{factor}\;\mathsf{()}\, ;{}\<[E]%
\\
\>[20]{}{\mathbf{sc}}\;\mathtt{call}\;\mathsf{()}\;(\anonymous {}\<[37]%
\>[37]{}\, .\,({}\<[42]%
\>[42]{}\mathbf{do}\;{\mathbf{op}}\;\mathtt{token}\;\text{\ttfamily '*'}\, ;\mathbf{do}\;{\mathbf{op}}\;\mathtt{cut}\;()\, ;{}\<[E]%
\\
\>[42]{}\mathbf{do}\;\Varid{j}\leftarrow \mathsf{term}\;\mathsf{()}\, ;{\mathbf{return}}\;(\Varid{i}\mathbin{*}\Varid{j}))\diamond{}\<[79]%
\>[79]{}\Varid{i}){}\<[E]%
\ColumnHook
\end{hscode}\resethooks
\indentend 
Here is how we invoke the optimised parser on an example input.\indentbegin \begin{hscode}\SaveRestoreHook
\column{B}{@{}>{\hspre}l<{\hspost}@{}}%
\column{3}{@{}>{\hspre}l<{\hspost}@{}}%
\column{E}{@{}>{\hspre}l<{\hspost}@{}}%
\>[3]{}\texttt{>>>}\;{\mathbf{with}}\;\Varid{h_{\mathsf{cut}}}\;{\mathbf{handle}}\;({\mathbf{with}}\;\Varid{h_{\mathsf{token}}}\;{\mathbf{handle}}\;\mathsf{expr}\;\mathsf{()})\;\text{\ttfamily \char34 (2+5)*8\char34}{}\<[E]%
\\
\>[3]{}\mathsf{opened}\;[\mskip1.5mu (\mathrm{56},\text{\ttfamily \char34 \char34})\mskip1.5mu]{}\<[E]%
\ColumnHook
\end{hscode}\resethooks
\indentend %
Note that only the fully parsed result is returned because in \ensuremath{\mathsf{expr}}
and \ensuremath{\mathsf{term}} we prune other branches when the first branch succeeds.
If we parse the same input using the unoptimised parser \ensuremath{\mathsf{expr}'}, we
also get partially parsed results.\indentbegin \begin{hscode}\SaveRestoreHook
\column{B}{@{}>{\hspre}l<{\hspost}@{}}%
\column{3}{@{}>{\hspre}l<{\hspost}@{}}%
\column{E}{@{}>{\hspre}l<{\hspost}@{}}%
\>[3]{}\texttt{>>>}\;{\mathbf{with}}\;\Varid{h_{\mathsf{cut}}}\;{\mathbf{handle}}\;({\mathbf{with}}\;\Varid{h_{\mathsf{token}}}\;{\mathbf{handle}}\;\mathsf{expr}'\;\mathsf{()})\;\text{\ttfamily \char34 (2+5)*8\char34}{}\<[E]%
\\
\>[3]{}\mathsf{opened}\;[\mskip1.5mu (\mathrm{56},\text{\ttfamily \char34 \char34}),(\mathrm{7},\text{\ttfamily \char34 *8\char34})\mskip1.5mu]{}\<[E]%
\ColumnHook
\end{hscode}\resethooks
\indentend %
\section{Related Work}
\label{sec:related-work}

In this section, we discuss related work on algebraic effects, scoped effects, and effect systems.

\subsection{Algebraic Effects \& Handlers}
Many research languages for algebraic effects and handlers have been
proposed, including Eff
\cite{DBLP:conf/calco/BauerP13, pretnar15}, Frank \cite{frank}, Effekt
\cite{effekt}, or have
been extended to include them, such as Links \cite{Hillerstrom16} and
Koka \cite{Leijen17}. OCaml~\cite{multicore} is the first industrial
language supporting algebraic effects and handlers.
Although it is possible to use handlers or write algebraic operations
with function parameters to simulate the behaviours of scoped effects
in these languages, none of them can achieve the expressiveness of
real scoped effects \& handlers as we have compared in
\Cref{eg:exceptions} and \Cref{eg:local}.

There are also many packages for writing effect handlers in general purpose
languages like Haskell and OCaml \cite{extensible-effects, eff-ocaml,
fused-effects, polysemy, eff}.
Yet, as far as we know, \ensuremath{\lambda_{\mathit{sc}}} is the first \emph{calculus} that supports scoped
effects \& handlers.

\subsection{Effect Systems}

Most languages with support for algebraic effects are equipped with an effect system to keep track of the effects that are used in the programs.
There is already much work on different approaches to effect systems for algebraic effects.

Eff \cite{DBLP:conf/calco/BauerP13, pretnar15} uses an effect system
based on subtyping relations.  Each type of computation is decorated
with an effect type $\Delta$ to represent the set of operations that
might be invoked.
The subtyping relations are used to extend the effect type $\Delta$
with other effects, which makes it possible to compose programs in a
modular way.
The implementation of such a subtyping system is orthogonal to the
implementation of scoped effects, which is why we have opted for Koka-style row
polymorphisms, which has yielded a simpler type system.
Furthermore, supporting both polymorphism and non-trivial subtyping would
complicate the type inference a lot as shown in \cite{pretnar13} and
\cite{TangHLM24} which use type inference with constraints and qualified types,
respectively. Moreover, since \ensuremath{\lambda_{\mathit{sc}}} has type operators, we need to
additionally distinguish between the positive and negative positions in type
operators and propagate subtyping relations through these positions properly.
This would further complicate the formalisation of type system and type
inference.

Row polymorphism is another mainstream approach to effect systems.
Links \cite{Hillerstrom16} uses the R{\'e}my-style row polymorphism \cite{remy1994type}, where the row types are able to represent the absence of labels and each label is restricted to appear at most once.
Koka \cite{Leijen17} uses row polymorphism based on scoped labels \cite{DBLP:conf/sfp/Leijen05}, which allows duplicated labels and as a result is easier to implement.
We can use row polymorphism to write handlers that handle particular effects and forward other effects represented by a row variable.
In \ensuremath{\lambda_{\mathit{sc}}}, we opted for an effect system similar to Koka's, mainly because of its brevity.
We believe that the Links-style effect system should also work well with scoped effects.

\subsection{Scoped Effects \& Handlers}

Wu et al. \cite{wu14} first introduced the idea of scoped effects \& handlers to solve the problem of separating syntax from semantics in programming with effects that delimit the scope.
They proposed a syntax based on higher-order functors, which impose fewer restrictions on the shape of the signatures of scoped operations than \ensuremath{\lambda_{\mathit{sc}}}.
Their work also considers the problem of modular composition of handlers, and
presents a solution--called ``weaving''--based on threading a handler state through
unknown operations. This approach is rather ad-hoc; it is not a generalization
of the forwarding approach for algebraic effects.
This approach has been adopted by several Haskell packages
\cite{fused-effects, polysemy, eff}.

Pir{\'{o}}g et al. \cite{pirog18} and Yang et al. \cite{DBLP:conf/esop/YangPWBS22} have developed
denotational semantic domains of scoped effects, backed by category theoretical
models.
The key idea is to generalize the denotational approach of algebraic effects \&
handlers that is based on free monads and their unique homomorphisms. Indeed,
the underlying category can be seen as a parameter.
Then, by shifting from the base category of types and functions to a different
(indexed or functor) category, scoped operations and their handlers turn out to
be ``just'' an instance of the generalized notion of algebraic operations and
handlers with the same structure and properties.
We focus on a calculus for scoped effects instead of the denotational semantics
of scoped effects. We make a simplification with respect to Yang et al.
\cite{DBLP:conf/esop/YangPWBS22} where we avoid duplication of the base algebra and endoalgebra
(for the outer and inner scoped respectively), and thus duplication of the
scoped effect clauses in our handlers.
With respect to Pir{\'{o}}g et al. \cite{pirog18}, we specialize the generic
endofunctor \ensuremath{\Gamma} with signatures \ensuremath{{A_\ell}\rightarrowtriangle{B_\ell}} of endofunctors of the form \ensuremath{\Conid{A}\;\times\;(\Conid{B}\to \mathbin{-})}.
Our \ensuremath{\lambda_{\mathit{sc}}} calculus uses a similar idea to the `explicit substitution' monad of
Pir{\'{o}}g et al. \cite{pirog18}, a generalization of Ghani and Uustalu's
\cite{Ghani03} monad of explicit substitutions where each operation is
associated with two computations representing the computation in scope and out
of the scope (continuation) respectively.
Neither Pir{\'{o}}g et al. \cite{pirog18} nor Yang et al. \cite{DBLP:conf/esop/YangPWBS22} considered the modular
composition of scoped effects or the forwarding of operations in their
categorical models.

Yang and Wu~\cite{YangW23} develop a framework for (generalized) monoids with
operations, of which scope effects are an instance. They study the problem of
semantic modularity in this framework, and have some generic results which can
be applied to scoped effects. As far as we know, no language design
or library implementation of scoped effects has resulted from this yet.

Lindley et al.~\cite{LindleyMMSWY24} developed an equational reasoning
framework for scoped effects based on parameterised algebraic
theories~\cite{Staton13}. They do not consider handlers or forwarding.

\section{Conclusion and Future Work}
\label{sec:conclusion}
In this work, we have presented \ensuremath{\lambda_{\mathit{sc}}}, a novel calculus in which scoped
effects \& handlers are
built-in. %
We started from the core calculus of Eff, extended it with a row-based
effect system in the style of Koka, and added primitive support for
scoped operations and their handlers.
We introduced novel forwarding clauses as a means to obtain the
modular composition of handlers in the presence of scoped effects.
Finally, we have demonstrated the usability of \ensuremath{\lambda_{\mathit{sc}}} by implementing
a range of examples.
We believe that the features to support scoped effect in \ensuremath{\lambda_{\mathit{sc}}} are orthogonal to
other language features and can be added to any programming
language with algebraic effects, polymorphism and type operators.

Scoped effects require \emph{every} handler in \ensuremath{\lambda_{\mathit{sc}}} to be polymorphic and
equipped with an explicit forwarding clause.
This breaks backwards compatibility: calculi that support only algebraic
effects, such as Eff, miss an explicit forwarding clause for scoped operations
and allow monomorphic handlers.
This problem can be easily mitigated by
kinds and kind polymorphism. The core idea is that we extend \ensuremath{\lambda_{\mathit{sc}}} with
two kinds \ensuremath{{\mathsf{op}}} and \ensuremath{{\mathsf{sc}}} for effect types, such that \ensuremath{\Gamma\vdash\Conid{E}\mathbin{:}{\mathsf{op}}} means
effect type \ensuremath{\Conid{E}} only contains algebraic operations, and \ensuremath{\Gamma\vdash\Conid{E}\mathbin{:}{\mathsf{sc}}}
means effect type \ensuremath{\Conid{E}} may contain some scoped operations. Then, for handlers
of type \ensuremath{\Conid{A}\hspace{0.1em}!\hspace{0.1em}\langle\Conid{E}\rangle\Rightarrow \Conid{M}\;\Conid{A}\hspace{0.1em}!\hspace{0.1em}\langle\Conid{F}\rangle} which lack forwarding clauses, we can
just add the condition \ensuremath{\Gamma\vdash\Conid{E}\mathbin{:}{\mathsf{op}}} to their typing rules.
We leave the full formalisation and implementation of it to future work.

Making scoped effects and handlers into practical languages is a
pretty new research area full of potential.
There is a lot of future work to do.
Directions for future work include: extending \ensuremath{\lambda_{\mathit{sc}}} with shallow
handlers~\cite{KammarLO13,HillerstromL18}, named
handlers~\cite{BiernackiPPS20,XieCIL22,effekt}, and other forms of higher-order
effects~\cite{VANDENBERG2024103086} including latent effects
\cite{DBLP:conf/aplas/BergSPW21} and parallel effects \cite{parallel-effects};
exploring the notion of named scoped effects where scoped effects
themselves generate fresh names for the operations in their scope in a
similar style to named handlers; defining a CPS translation for scoped
effects and handlers~\cite{HillerstromLA20,Leijen17}; developing
control-flow linearity~\cite{TangHLM24} for scoped effects to soundly
extend \ensuremath{\lambda_{\mathit{sc}}} with linear types; exploring guiding principles and
equational theories for writing and reasoning about forwarding
clauses.

\section*{Acknowledgment}
\noindent Part of this work was funded by FWO project G0A9423N.

\bibliography{reference}
\bibliographystyle{alphaurl}

\clearpage
\appendix

\section*{Appendix Contents}

\begin{itemize}
\item \Cref{app:opsem-derivations}: Semantic Derivations \dotfill\ \pageref{app:opsem-derivations}
\item \Cref{app:type-equiv}: Type Equivalence Rules \dotfill\ \pageref{app:type-equiv}
\item \Cref{app:well-scopedness}: Well-scopedness Rules \dotfill\ \pageref{app:well-scopedness}
\item \Cref{app:syntax-directed}: Syntax-directed version of \ensuremath{\lambda_{\mathit{sc}}} \dotfill\ \pageref{app:syntax-directed}
\item \Cref{sec:metatheory-appendix}: Metatheory \dotfill\ \pageref{sec:metatheory-appendix}
\end{itemize}

\section{Semantic Derivations}
\label{app:opsem-derivations}

This Appendix contains semantic derivations of different handler applications
that are used in the examples throughout this paper.

\subsection{Nondeterminism}
\label{subsec:nd-derivation}
\indentbegin \begin{hscode}\SaveRestoreHook
\column{B}{@{}>{\hspre}l<{\hspost}@{}}%
\column{3}{@{}>{\hspre}l<{\hspost}@{}}%
\column{10}{@{}>{\hspre}l<{\hspost}@{}}%
\column{11}{@{}>{\hspre}l<{\hspost}@{}}%
\column{15}{@{}>{\hspre}l<{\hspost}@{}}%
\column{46}{@{}>{\hspre}l<{\hspost}@{}}%
\column{48}{@{}>{\hspre}l<{\hspost}@{}}%
\column{51}{@{}>{\hspre}l<{\hspost}@{}}%
\column{52}{@{}>{\hspre}l<{\hspost}@{}}%
\column{54}{@{}>{\hspre}l<{\hspost}@{}}%
\column{57}{@{}>{\hspre}l<{\hspost}@{}}%
\column{67}{@{}>{\hspre}l<{\hspost}@{}}%
\column{69}{@{}>{\hspre}l<{\hspost}@{}}%
\column{72}{@{}>{\hspre}l<{\hspost}@{}}%
\column{E}{@{}>{\hspre}l<{\hspost}@{}}%
\>[10]{}{\mathbf{with}}\;\Varid{h_{\mathsf{ND}}}\;{\mathbf{handle}}\;\Varid{c_{\mathsf{ND1}}}{}\<[E]%
\\
\>[3]{}\equiv\;{}\<[10]%
\>[10]{}{\mathbf{with}}\;\Varid{h_{\mathsf{ND}}}\;{\mathbf{handle}}\;{\mathbf{op}}\;\mathtt{choose}\;\mathsf{()}\;(\Varid{b}\, .\,\mathbf{if}\;\Varid{b}\;{}\<[51]%
\>[51]{}\mathbf{then}\;{}\<[57]%
\>[57]{}{\mathbf{return}}\;\mathrm{1}\;\mathbf{else}\;{}\<[72]%
\>[72]{}{\mathbf{return}}\;\mathrm{2}){}\<[E]%
\\
\>[3]{}\leadsto{}\<[10]%
\>[10]{}\mbox{\commentbegin ~  \textsc{E-HandOp}   \commentend}{}\<[E]%
\\
\>[10]{}\hsindent{1}{}\<[11]%
\>[11]{}\mathbf{do}\;{}\<[15]%
\>[15]{}\Varid{xs}\leftarrow (\boldsymbol{\lambda}\Varid{y}\, .\,{\mathbf{with}}\;\Varid{h_{\mathsf{ND}}}\;{\mathbf{handle}}\;\mathbf{if}\;\Varid{b}\;{}\<[48]%
\>[48]{}\mathbf{then}\;{}\<[54]%
\>[54]{}{\mathbf{return}}\;\mathrm{1}\;\mathbf{else}\;{}\<[69]%
\>[69]{}{\mathbf{return}}\;\mathrm{2})\;\mathsf{true}{}\<[E]%
\\
\>[10]{}\hsindent{1}{}\<[11]%
\>[11]{}\mathbf{do}\;{}\<[15]%
\>[15]{}\Varid{ys}\leftarrow (\boldsymbol{\lambda}\Varid{y}\, .\,{\mathbf{with}}\;\Varid{h_{\mathsf{ND}}}\;{\mathbf{handle}}\;\mathbf{if}\;\Varid{b}\;{}\<[48]%
\>[48]{}\mathbf{then}\;{}\<[54]%
\>[54]{}{\mathbf{return}}\;\mathrm{1}\;\mathbf{else}\;{}\<[69]%
\>[69]{}{\mathbf{return}}\;\mathrm{2})\;\mathsf{false}{}\<[E]%
\\
\>[15]{}\Varid{xs}+\hspace{-0.4em}+\Varid{ys}{}\<[E]%
\\
\>[3]{}\leadsto{}\<[10]%
\>[10]{}\mbox{\commentbegin ~  \textsc{E-AppAbs}   \commentend}{}\<[E]%
\\
\>[10]{}\hsindent{1}{}\<[11]%
\>[11]{}\mathbf{do}\;{}\<[15]%
\>[15]{}\Varid{xs}\leftarrow {\mathbf{with}}\;\Varid{h_{\mathsf{ND}}}\;{\mathbf{handle}}\;\mathbf{if}\;\mathsf{true}\;{}\<[46]%
\>[46]{}\mathbf{then}\;{}\<[52]%
\>[52]{}{\mathbf{return}}\;\mathrm{1}\;\mathbf{else}\;{}\<[67]%
\>[67]{}{\mathbf{return}}\;\mathrm{2}{}\<[E]%
\\
\>[10]{}\hsindent{1}{}\<[11]%
\>[11]{}\mathbf{do}\;{}\<[15]%
\>[15]{}\Varid{ys}\leftarrow (\boldsymbol{\lambda}\Varid{y}\, .\,{\mathbf{with}}\;\Varid{h_{\mathsf{ND}}}\;{\mathbf{handle}}\;\mathbf{if}\;\Varid{b}\;{}\<[48]%
\>[48]{}\mathbf{then}\;{}\<[54]%
\>[54]{}{\mathbf{return}}\;\mathrm{1}\;\mathbf{else}\;{}\<[69]%
\>[69]{}{\mathbf{return}}\;\mathrm{2})\;\mathsf{false}{}\<[E]%
\\
\>[15]{}\Varid{xs}+\hspace{-0.4em}+\Varid{ys}{}\<[E]%
\\
\>[3]{}\leadsto{}\<[10]%
\>[10]{}\mbox{\commentbegin ~  reducing \ensuremath{\mathbf{if}}   \commentend}{}\<[E]%
\\
\>[10]{}\hsindent{1}{}\<[11]%
\>[11]{}\mathbf{do}\;{}\<[15]%
\>[15]{}\Varid{xs}\leftarrow {\mathbf{with}}\;\Varid{h_{\mathsf{ND}}}\;{\mathbf{handle}}\;{\mathbf{return}}\;\mathrm{1}{}\<[E]%
\\
\>[10]{}\hsindent{1}{}\<[11]%
\>[11]{}\mathbf{do}\;{}\<[15]%
\>[15]{}\Varid{ys}\leftarrow (\boldsymbol{\lambda}\Varid{y}\, .\,{\mathbf{with}}\;\Varid{h_{\mathsf{ND}}}\;{\mathbf{handle}}\;\mathbf{if}\;\Varid{b}\;{}\<[48]%
\>[48]{}\mathbf{then}\;{}\<[54]%
\>[54]{}{\mathbf{return}}\;\mathrm{1}\;\mathbf{else}\;{}\<[69]%
\>[69]{}{\mathbf{return}}\;\mathrm{2})\;\mathsf{false}{}\<[E]%
\\
\>[15]{}\Varid{xs}+\hspace{-0.4em}+\Varid{ys}{}\<[E]%
\\
\>[3]{}\leadsto{}\<[10]%
\>[10]{}\mbox{\commentbegin ~  \textsc{E-HandRet}   \commentend}{}\<[E]%
\\
\>[10]{}\hsindent{1}{}\<[11]%
\>[11]{}\mathbf{do}\;{}\<[15]%
\>[15]{}\Varid{ys}\leftarrow (\boldsymbol{\lambda}\Varid{y}\, .\,{\mathbf{with}}\;\Varid{h_{\mathsf{ND}}}\;{\mathbf{handle}}\;\mathbf{if}\;\Varid{b}\;{}\<[48]%
\>[48]{}\mathbf{then}\;{}\<[54]%
\>[54]{}{\mathbf{return}}\;\mathrm{1}\;\mathbf{else}\;{}\<[69]%
\>[69]{}{\mathbf{return}}\;\mathrm{2})\;\mathsf{false}{}\<[E]%
\\
\>[15]{}[\mskip1.5mu \mathrm{1}\mskip1.5mu]+\hspace{-0.4em}+\Varid{ys}{}\<[E]%
\\
\>[3]{}\leadsto{}\<[10]%
\>[10]{}\mbox{\commentbegin ~  similar to above (the first branch of \ensuremath{\mathbf{if}})   \commentend}{}\<[E]%
\\
\>[10]{}\hsindent{5}{}\<[15]%
\>[15]{}[\mskip1.5mu \mathrm{1}\mskip1.5mu]+\hspace{-0.4em}+[\mskip1.5mu \mathrm{2}\mskip1.5mu]{}\<[E]%
\\
\>[3]{}\leadsto{}\<[10]%
\>[10]{}\mbox{\commentbegin ~  reducing \ensuremath{+\hspace{-0.4em}+}   \commentend}{}\<[E]%
\\
\>[10]{}\hsindent{5}{}\<[15]%
\>[15]{}{\mathbf{return}}\;[\mskip1.5mu \mathrm{1},\mathrm{2}\mskip1.5mu]{}\<[E]%
\ColumnHook
\end{hscode}\resethooks
\indentend 
\subsection{Increment}

\indentbegin \begin{hscode}\SaveRestoreHook
\column{B}{@{}>{\hspre}l<{\hspost}@{}}%
\column{3}{@{}>{\hspre}l<{\hspost}@{}}%
\column{10}{@{}>{\hspre}l<{\hspost}@{}}%
\column{11}{@{}>{\hspre}l<{\hspost}@{}}%
\column{12}{@{}>{\hspre}l<{\hspost}@{}}%
\column{E}{@{}>{\hspre}l<{\hspost}@{}}%
\>[10]{}{\mathbf{with}}\;\Varid{h_{\mathsf{ND}}}\;{\mathbf{handle}}\;\Varid{run_{\mathsf{inc}}}\;\mathrm{0}\;\Varid{c_{\mathsf{inc}}}\;\equiv\;{\mathbf{with}}\;\Varid{h_{\mathsf{ND}}}\;{\mathbf{handle}}\;(\boldsymbol{\lambda}\Varid{c}\;\Varid{p}\, .\,\mathbf{do}\;\Varid{p'}\leftarrow {}\<[E]%
\\
\>[10]{}\hsindent{1}{}\<[11]%
\>[11]{}{\mathbf{with}}\;\Varid{h_{\mathsf{inc}}}\;{\mathbf{handle}}\;\Varid{p}\, ;\Varid{p'}\;\Varid{c})\;\mathrm{0}\;\Varid{c_{\mathsf{inc}}}{}\<[E]%
\\
\>[3]{}\leadsto{}\<[10]%
\>[10]{}\mbox{\commentbegin ~  \textsc{E-AppAbs}   \commentend}{}\<[E]%
\\
\>[10]{}{\mathbf{with}}\;\Varid{h_{\mathsf{ND}}}\;{\mathbf{handle}}\;\mathbf{do}\;\Varid{p'}\leftarrow {\mathbf{with}}\;\Varid{h_{\mathsf{inc}}}\;{\mathbf{handle}}\;\Varid{c_{\mathsf{inc}}}\, ;\Varid{p'}\;\mathrm{0}{}\<[E]%
\\
\>[3]{}\equiv\;{}\<[10]%
\>[10]{}\mbox{\commentbegin ~  definition of \ensuremath{\Varid{c_{\mathsf{inc}}}}   \commentend}{}\<[E]%
\\
\>[10]{}{\mathbf{with}}\;\Varid{h_{\mathsf{ND}}}\;{\mathbf{handle}}\;\mathbf{do}\;\Varid{p'}\leftarrow {\mathbf{with}}\;\Varid{h_{\mathsf{inc}}}\;{\mathbf{handle}}\;{\mathbf{op}}\;\mathtt{choose}\;()\;(\Varid{b}\, .\,\mathbf{if}\;\Varid{b}\;\mathbf{then}{}\<[E]%
\\
\>[10]{}\hsindent{1}{}\<[11]%
\>[11]{}{\mathbf{op}}\;\mathtt{inc}\;()\;(\Varid{x}\, .\,\Varid{x}\mathbin{+}\mathrm{5})\;\mathbf{else}\;{\mathbf{op}}\;\mathtt{inc}\;()\;(\Varid{y}\, .\,\Varid{y}\mathbin{+}\mathrm{2}))\, ;\Varid{p'}\;\mathrm{0}{}\<[E]%
\\
\>[3]{}\leadsto{}\<[10]%
\>[10]{}\mbox{\commentbegin ~  \textsc{E-FwdOp}   \commentend}{}\<[E]%
\\
\>[10]{}{\mathbf{with}}\;\Varid{h_{\mathsf{ND}}}\;{\mathbf{handle}}\;\mathbf{do}\;\Varid{p'}\leftarrow {\mathbf{op}}\;\mathtt{choose}\;()\;(\Varid{b}\, .\,{\mathbf{with}}\;\Varid{h_{\mathsf{inc}}}\;{\mathbf{handle}}\;\mathbf{if}\;\Varid{b}\;\mathbf{then}{}\<[E]%
\\
\>[10]{}\hsindent{1}{}\<[11]%
\>[11]{}{\mathbf{op}}\;\mathtt{inc}\;()\;(\Varid{x}\, .\,\Varid{x}\mathbin{+}\mathrm{5})\;\mathbf{else}\;{\mathbf{op}}\;\mathtt{inc}\;()\;(\Varid{y}\, .\,\Varid{y}\mathbin{+}\mathrm{2}))\, ;\Varid{p'}\;\mathrm{0}{}\<[E]%
\\
\>[3]{}\leadsto{}\<[10]%
\>[10]{}\mbox{\commentbegin ~  \textsc{E-Hand} and \textsc{E-DoOp}   \commentend}{}\<[E]%
\\
\>[10]{}{\mathbf{with}}\;\Varid{h_{\mathsf{ND}}}\;{\mathbf{handle}}\;{\mathbf{op}}\;\mathtt{choose}\;()\;(\Varid{b}\, .\,\mathbf{do}\;\Varid{p'}\leftarrow {\mathbf{with}}\;\Varid{h_{\mathsf{inc}}}\;{\mathbf{handle}}\;\mathbf{if}\;\Varid{b}\;\mathbf{then}{}\<[E]%
\\
\>[10]{}\hsindent{1}{}\<[11]%
\>[11]{}{\mathbf{op}}\;\mathtt{inc}\;()\;(\Varid{x}\, .\,\Varid{x}\mathbin{+}\mathrm{5})\;\mathbf{else}\;{\mathbf{op}}\;\mathtt{inc}\;()\;(\Varid{y}\, .\,\Varid{y}\mathbin{+}\mathrm{2})\, ;\Varid{p'}\;\mathrm{0}){}\<[E]%
\\
\>[3]{}\leadsto{}\<[10]%
\>[10]{}\mbox{\commentbegin ~  \textsc{E-HandOp}   \commentend}{}\<[E]%
\\
\>[10]{}\mathbf{do}\;\Varid{xs}\leftarrow (\boldsymbol{\lambda}\Varid{b}\, .\,{\mathbf{with}}\;\Varid{h_{\mathsf{ND}}}\;{\mathbf{handle}}\;\mathbf{do}\;\Varid{p'}\leftarrow {\mathbf{with}}\;\Varid{h_{\mathsf{inc}}}\;{\mathbf{handle}}\;\mathbf{if}\;\Varid{b}\;\mathbf{then}\;{\mathbf{op}}{}\<[E]%
\\
\>[10]{}\hsindent{1}{}\<[11]%
\>[11]{}\mathtt{inc}\;()\;(\Varid{x}\, .\,\Varid{x}\mathbin{+}\mathrm{5})\;\mathbf{else}\;{\mathbf{op}}\;\mathtt{inc}\;()\;(\Varid{y}\, .\,\Varid{y}\mathbin{+}\mathrm{2})\, ;\Varid{p'}\;\mathrm{0})\;\mathsf{true}\, ;{}\<[E]%
\\
\>[10]{}\mathbf{do}\;\Varid{ys}\leftarrow (\boldsymbol{\lambda}\Varid{b}\, .\,{\mathbf{with}}\;\Varid{h_{\mathsf{ND}}}\;{\mathbf{handle}}\;\mathbf{do}\;\Varid{p'}\leftarrow {\mathbf{with}}\;\Varid{h_{\mathsf{inc}}}\;{\mathbf{handle}}\;\mathbf{if}\;\Varid{b}\;\mathbf{then}\;{\mathbf{op}}{}\<[E]%
\\
\>[10]{}\hsindent{1}{}\<[11]%
\>[11]{}\mathtt{inc}\;()\;(\Varid{x}\, .\,\Varid{x}\mathbin{+}\mathrm{5})\;\mathbf{else}\;{\mathbf{op}}\;\mathtt{inc}\;()\;(\Varid{y}\, .\,\Varid{y}\mathbin{+}\mathrm{2})\, ;\Varid{p'}\;\mathrm{0})\;\mathsf{false}\, ;\Varid{xs}+\hspace{-0.4em}+\Varid{ys}{}\<[E]%
\\
\>[3]{}\leadsto{}\<[10]%
\>[10]{}\mbox{\commentbegin ~  \textsc{E-AppAbs}   \commentend}{}\<[E]%
\\
\>[10]{}\mathbf{do}\;\Varid{xs}\leftarrow {\mathbf{with}}\;\Varid{h_{\mathsf{ND}}}\;{\mathbf{handle}}\;\mathbf{do}\;\Varid{p'}\leftarrow {\mathbf{with}}\;\Varid{h_{\mathsf{inc}}}\;{\mathbf{handle}}\;\mathbf{if}\;\mathsf{true}\;\mathbf{then}\;{\mathbf{op}}{}\<[E]%
\\
\>[10]{}\hsindent{1}{}\<[11]%
\>[11]{}\mathtt{inc}\;()\;(\Varid{x}\, .\,\Varid{x}\mathbin{+}\mathrm{5})\;\mathbf{else}\;{\mathbf{op}}\;\mathtt{inc}\;()\;(\Varid{y}\, .\,\Varid{y}\mathbin{+}\mathrm{2})\, ;\Varid{p'}\;\mathrm{0}{}\<[E]%
\\
\>[10]{}\mathbf{do}\;\Varid{ys}\leftarrow (\boldsymbol{\lambda}\Varid{b}\, .\,{\mathbf{with}}\;\Varid{h_{\mathsf{ND}}}\;{\mathbf{handle}}\;\mathbf{do}\;\Varid{p'}\leftarrow {\mathbf{with}}\;\Varid{h_{\mathsf{inc}}}\;{\mathbf{handle}}\;\mathbf{if}\;\Varid{b}\;\mathbf{then}\;{\mathbf{op}}{}\<[E]%
\\
\>[10]{}\hsindent{1}{}\<[11]%
\>[11]{}\mathtt{inc}\;()\;(\Varid{x}\, .\,\Varid{x}\mathbin{+}\mathrm{5})\;\mathbf{else}\;{\mathbf{op}}\;\mathtt{inc}\;()\;(\Varid{y}\, .\,\Varid{y}\mathbin{+}\mathrm{2})\, ;\Varid{p'}\;\mathrm{0})\;\mathsf{false}\, ;\Varid{xs}+\hspace{-0.4em}+\Varid{ys}{}\<[E]%
\\
\>[3]{}\leadsto{}\<[10]%
\>[10]{}\mbox{\commentbegin ~  reducing \ensuremath{\mathbf{if}}   \commentend}{}\<[E]%
\\
\>[10]{}\mathbf{do}\;\Varid{xs}\leftarrow {\mathbf{with}}\;\Varid{h_{\mathsf{ND}}}\;{\mathbf{handle}}\;(\mathbf{do}\;\Varid{p'}\leftarrow {\mathbf{with}}\;\Varid{h_{\mathsf{inc}}}\;{\mathbf{handle}}\;{\mathbf{op}}\;\mathtt{inc}\;(){}\<[E]%
\\
\>[10]{}\hsindent{1}{}\<[11]%
\>[11]{}(\Varid{x}\, .\,\Varid{x}\mathbin{+}\mathrm{5})\, ;\Varid{p'}\;\mathrm{0}){}\<[E]%
\\
\>[10]{}\mathbf{do}\;\Varid{ys}\leftarrow (\boldsymbol{\lambda}\Varid{b}\, .\,{\mathbf{with}}\;\Varid{h_{\mathsf{ND}}}\;{\mathbf{handle}}\;\mathbf{do}\;\Varid{p'}\leftarrow {\mathbf{with}}\;\Varid{h_{\mathsf{inc}}}\;{\mathbf{handle}}\;\mathbf{if}\;\Varid{b}\;\mathbf{then}\;{\mathbf{op}}{}\<[E]%
\\
\>[10]{}\hsindent{1}{}\<[11]%
\>[11]{}\mathtt{inc}\;()\;(\Varid{x}\, .\,\Varid{x}\mathbin{+}\mathrm{5})\;\mathbf{else}\;{\mathbf{op}}\;\mathtt{inc}\;()\;(\Varid{y}\, .\,\Varid{y}\mathbin{+}\mathrm{2})\, ;\Varid{p'}\;\mathrm{0})\;\mathsf{false}\, ;\Varid{xs}+\hspace{-0.4em}+\Varid{ys}{}\<[E]%
\\
\>[3]{}\leadsto{}\<[10]%
\>[10]{}\mbox{\commentbegin ~  \textsc{E-HandOp}   \commentend}{}\<[E]%
\\
\>[10]{}\mathbf{do}\;\Varid{xs}\leftarrow {\mathbf{with}}\;\Varid{h_{\mathsf{ND}}}\;{\mathbf{handle}}\;(\mathbf{do}\;\Varid{p'}\leftarrow {\mathbf{return}}\;(\boldsymbol{\lambda}\Varid{s}\, .\,\mathbf{do}\;\Varid{s'}\leftarrow \Varid{s}\mathbin{+}\mathrm{1}\, ;\mathbf{do}\;\Varid{k'}\leftarrow {}\<[E]%
\\
\>[10]{}\hsindent{1}{}\<[11]%
\>[11]{}(\boldsymbol{\lambda}\Varid{x}\, .\,{\mathbf{with}}\;\Varid{h_{\mathsf{inc}}}\;{\mathbf{handle}}\;(\Varid{x}\mathbin{+}\mathrm{5}))\;\Varid{s'}\, ;\Varid{k'}\;\Varid{s'})\, ;\Varid{p'}\;\mathrm{0}){}\<[E]%
\\
\>[10]{}\mathbf{do}\;\Varid{ys}\leftarrow (\boldsymbol{\lambda}\Varid{b}\, .\,{\mathbf{with}}\;\Varid{h_{\mathsf{ND}}}\;{\mathbf{handle}}\;\mathbf{do}\;\Varid{p'}\leftarrow {\mathbf{with}}\;\Varid{h_{\mathsf{inc}}}\;{\mathbf{handle}}\;\mathbf{if}\;\Varid{b}\;\mathbf{then}\;{\mathbf{op}}{}\<[E]%
\\
\>[10]{}\hsindent{1}{}\<[11]%
\>[11]{}\mathtt{inc}\;()\;(\Varid{x}\, .\,\Varid{x}\mathbin{+}\mathrm{5})\;\mathbf{else}\;{\mathbf{op}}\;\mathtt{inc}\;()\;(\Varid{y}\, .\,\Varid{y}\mathbin{+}\mathrm{2})\, ;\Varid{p'}\;\mathrm{0})\;\mathsf{false}\, ;\Varid{xs}+\hspace{-0.4em}+\Varid{ys}{}\<[E]%
\\
\>[3]{}\leadsto{}\<[10]%
\>[10]{}\mbox{\commentbegin ~  \textsc{E-DoRet}   \commentend}{}\<[E]%
\\
\>[10]{}\mathbf{do}\;\Varid{xs}\leftarrow {\mathbf{with}}\;\Varid{h_{\mathsf{ND}}}\;{\mathbf{handle}}\;(\boldsymbol{\lambda}\Varid{s}\, .\,\mathbf{do}\;\Varid{s'}\leftarrow \Varid{s}\mathbin{+}\mathrm{1}\, ;\Varid{k'}\leftarrow (\boldsymbol{\lambda}\Varid{x}\, .\,{\mathbf{with}}\;\Varid{h_{\mathsf{inc}}}{}\<[E]%
\\
\>[10]{}\hsindent{1}{}\<[11]%
\>[11]{}{\mathbf{handle}}\;(\Varid{x}\mathbin{+}\mathrm{5}))\Varid{s'}\, ;\Varid{k'}\;\Varid{s'})\;\mathrm{0}{}\<[E]%
\\
\>[10]{}\mathbf{do}\;\Varid{ys}\leftarrow (\boldsymbol{\lambda}\Varid{b}\, .\,{\mathbf{with}}\;\Varid{h_{\mathsf{ND}}}\;{\mathbf{handle}}\;\mathbf{do}\;\Varid{p'}\leftarrow {\mathbf{with}}\;\Varid{h_{\mathsf{inc}}}\;{\mathbf{handle}}\;\mathbf{if}\;\Varid{b}\;\mathbf{then}\;{\mathbf{op}}{}\<[E]%
\\
\>[10]{}\hsindent{1}{}\<[11]%
\>[11]{}\mathtt{inc}\;()\;(\Varid{x}\, .\,\Varid{x}\mathbin{+}\mathrm{5})\;\mathbf{else}\;{\mathbf{op}}\;\mathtt{inc}\;()\;(\Varid{y}\, .\,\Varid{y}\mathbin{+}\mathrm{2})\, ;\Varid{p'}\;\mathrm{0})\;\mathsf{false}\, ;\Varid{xs}+\hspace{-0.4em}+\Varid{ys}{}\<[E]%
\\
\>[3]{}\leadsto^{\ast}{}\<[11]%
\>[11]{}\mbox{\commentbegin ~  \textsc{E-AppAbs} and reducing \ensuremath{\mathbin{+}}  \commentend}{}\<[E]%
\\
\>[3]{}\hsindent{7}{}\<[10]%
\>[10]{}\mathbf{do}\;\Varid{xs}\leftarrow {\mathbf{with}}\;\Varid{h_{\mathsf{ND}}}\;{\mathbf{handle}}\;({\mathbf{with}}\;\Varid{h_{\mathsf{inc}}}\;{\mathbf{handle}}\;({\mathbf{return}}\;\mathrm{6}))\;\mathrm{1}{}\<[E]%
\\
\>[3]{}\hsindent{7}{}\<[10]%
\>[10]{}\mathbf{do}\;\Varid{ys}\leftarrow (\boldsymbol{\lambda}\Varid{b}\, .\,{\mathbf{with}}\;\Varid{h_{\mathsf{ND}}}\;{\mathbf{handle}}\;\mathbf{do}\;\Varid{p'}\leftarrow {\mathbf{with}}\;\Varid{h_{\mathsf{inc}}}\;{\mathbf{handle}}\;\mathbf{if}\;\Varid{b}\;\mathbf{then}\;{\mathbf{op}}{}\<[E]%
\\
\>[10]{}\hsindent{1}{}\<[11]%
\>[11]{}\mathtt{inc}\;()\;(\Varid{x}\, .\,\Varid{x}\mathbin{+}\mathrm{5})\;\mathbf{else}\;{\mathbf{op}}\;\mathtt{inc}\;()\;(\Varid{y}\, .\,\Varid{y}\mathbin{+}\mathrm{2})\, ;\Varid{p'}\;\mathrm{0})\;\mathsf{false}\, ;\Varid{xs}+\hspace{-0.4em}+\Varid{ys}{}\<[E]%
\\
\>[3]{}\leadsto{}\<[10]%
\>[10]{}\mbox{\commentbegin ~  \textsc{E-HandRet}  \commentend}{}\<[E]%
\\
\>[10]{}\mathbf{do}\;\Varid{xs}\leftarrow (\boldsymbol{\lambda}\Varid{s}\, .\,{\mathbf{return}}\;(\mathrm{6},\Varid{s}))\;\mathrm{1}{}\<[E]%
\\
\>[10]{}\mathbf{do}\;\Varid{ys}\leftarrow (\boldsymbol{\lambda}\Varid{b}\, .\,{\mathbf{with}}\;\Varid{h_{\mathsf{ND}}}\;{\mathbf{handle}}\;\mathbf{do}\;\Varid{p'}\leftarrow {\mathbf{with}}\;\Varid{h_{\mathsf{inc}}}\;{\mathbf{handle}}\;\mathbf{if}\;\Varid{b}\;\mathbf{then}\;{\mathbf{op}}{}\<[E]%
\\
\>[10]{}\hsindent{1}{}\<[11]%
\>[11]{}\mathtt{inc}\;()\;(\Varid{x}\, .\,\Varid{x}\mathbin{+}\mathrm{5})\;\mathbf{else}\;{\mathbf{op}}\;\mathtt{inc}\;()\;(\Varid{y}\, .\,\Varid{y}\mathbin{+}\mathrm{2})\, ;\Varid{p'}\;\mathrm{0})\;\mathsf{false}\, ;\Varid{xs}+\hspace{-0.4em}+\Varid{ys}{}\<[E]%
\\
\>[3]{}\leadsto{}\<[10]%
\>[10]{}\mbox{\commentbegin ~  \textsc{E-AppAbs}  \commentend}{}\<[E]%
\\
\>[10]{}\mathbf{do}\;\Varid{xs}\leftarrow {\mathbf{with}}\;\Varid{h_{\mathsf{ND}}}\;{\mathbf{handle}}\;{\mathbf{return}}\;(\mathrm{6},\mathrm{1}){}\<[E]%
\\
\>[10]{}\mathbf{do}\;\Varid{ys}\leftarrow (\boldsymbol{\lambda}\Varid{b}\, .\,{\mathbf{with}}\;\Varid{h_{\mathsf{ND}}}\;{\mathbf{handle}}\;\mathbf{do}\;\Varid{p'}\leftarrow {\mathbf{with}}\;\Varid{h_{\mathsf{inc}}}\;{\mathbf{handle}}\;\mathbf{if}\;\Varid{b}\;\mathbf{then}\;{\mathbf{op}}{}\<[E]%
\\
\>[10]{}\hsindent{1}{}\<[11]%
\>[11]{}\mathtt{inc}\;()\;(\Varid{x}\, .\,\Varid{x}\mathbin{+}\mathrm{5})\;\mathbf{else}\;{\mathbf{op}}\;\mathtt{inc}\;()\;(\Varid{y}\, .\,\Varid{y}\mathbin{+}\mathrm{2})\, ;\Varid{p'}\;\mathrm{0})\;\mathsf{false}\, ;\Varid{xs}+\hspace{-0.4em}+\Varid{ys}{}\<[E]%
\\
\>[3]{}\leadsto{}\<[10]%
\>[10]{}\mbox{\commentbegin ~  \textsc{E-HandOp}  \commentend}{}\<[E]%
\\
\>[10]{}\mathbf{do}\;\Varid{xs}\leftarrow {\mathbf{return}}\;[\mskip1.5mu (\mathrm{6},\mathrm{1})\mskip1.5mu]{}\<[E]%
\\
\>[10]{}\mathbf{do}\;\Varid{ys}\leftarrow (\boldsymbol{\lambda}\Varid{b}\, .\,{\mathbf{with}}\;\Varid{h_{\mathsf{ND}}}\;{\mathbf{handle}}\;\mathbf{do}\;\Varid{p'}\leftarrow {\mathbf{with}}\;\Varid{h_{\mathsf{inc}}}\;{\mathbf{handle}}\;\mathbf{if}\;\Varid{b}\;\mathbf{then}\;{\mathbf{op}}{}\<[E]%
\\
\>[10]{}\hsindent{1}{}\<[11]%
\>[11]{}\mathtt{inc}\;()\;(\Varid{x}\, .\,\Varid{x}\mathbin{+}\mathrm{5})\;\mathbf{else}\;{\mathbf{op}}\;\mathtt{inc}\;()\;(\Varid{y}\, .\,\Varid{y}\mathbin{+}\mathrm{2})\, ;\Varid{p'}\;\mathrm{0})\;\mathsf{false}\, ;\Varid{xs}+\hspace{-0.4em}+\Varid{ys}{}\<[E]%
\\
\>[3]{}\leadsto{}\<[10]%
\>[10]{}\mbox{\commentbegin ~  \textsc{E-DoRet}  \commentend}{}\<[E]%
\\
\>[10]{}\mathbf{do}\;\Varid{ys}\leftarrow (\boldsymbol{\lambda}\Varid{b}\, .\,{\mathbf{with}}\;\Varid{h_{\mathsf{ND}}}\;{\mathbf{handle}}\;\mathbf{do}\;\Varid{p'}\leftarrow \mathbf{if}\;\Varid{b}\;\mathbf{then}\;{\mathbf{with}}\;\Varid{h_{\mathsf{inc}}}\;{\mathbf{handle}}\;{\mathbf{op}}{}\<[E]%
\\
\>[10]{}\hsindent{1}{}\<[11]%
\>[11]{}\mathtt{inc}\;()\;(\Varid{x}\, .\,\Varid{x}\mathbin{+}\mathrm{5})\;\mathbf{else}\;{\mathbf{with}}\;\Varid{h_{\mathsf{inc}}}\;{\mathbf{handle}}\;{\mathbf{op}}\;\mathtt{inc}\;()\;(\Varid{y}\, .\,\Varid{y}\mathbin{+}\mathrm{2})\, ;\Varid{p'}\;\mathrm{0})\;\mathsf{false}\, ;{}\<[E]%
\\
\>[11]{}\hsindent{1}{}\<[12]%
\>[12]{}[\mskip1.5mu (\mathrm{6},\mathrm{1})\mskip1.5mu]+\hspace{-0.4em}+\Varid{ys}{}\<[E]%
\\
\>[3]{}\leadsto^{\ast}{}\<[11]%
\>[11]{}\mbox{\commentbegin ~  similar to above (the first branch of \ensuremath{\mathbf{if}})  \commentend}{}\<[E]%
\\
\>[3]{}\hsindent{7}{}\<[10]%
\>[10]{}\mathbf{do}\;\Varid{ys}\leftarrow {\mathbf{return}}\;[\mskip1.5mu (\mathrm{3},\mathrm{1})\mskip1.5mu]\;[\mskip1.5mu (\mathrm{6},\mathrm{1})\mskip1.5mu]+\hspace{-0.4em}+\Varid{ys}{}\<[E]%
\\
\>[3]{}\leadsto^{\ast}{}\<[11]%
\>[11]{}\mbox{\commentbegin ~  \textsc{E-DoRet}  \commentend}{}\<[E]%
\\
\>[3]{}\hsindent{7}{}\<[10]%
\>[10]{}[\mskip1.5mu (\mathrm{6},\mathrm{1})\mskip1.5mu]+\hspace{-0.4em}+[\mskip1.5mu (\mathrm{3},\mathrm{1})\mskip1.5mu]{}\<[E]%
\\
\>[3]{}\leadsto^{\ast}{}\<[11]%
\>[11]{}\mbox{\commentbegin ~  reducing \ensuremath{+\hspace{-0.4em}+}  \commentend}{}\<[E]%
\\
\>[3]{}\hsindent{7}{}\<[10]%
\>[10]{}{\mathbf{return}}\;[\mskip1.5mu (\mathrm{6},\mathrm{1}),(\mathrm{3},\mathrm{1})\mskip1.5mu]{}\<[E]%
\ColumnHook
\end{hscode}\resethooks
\indentend %
\subsection{Once}
\indentbegin \begin{hscode}\SaveRestoreHook
\column{B}{@{}>{\hspre}l<{\hspost}@{}}%
\column{3}{@{}>{\hspre}l<{\hspost}@{}}%
\column{7}{@{}>{\hspre}l<{\hspost}@{}}%
\column{8}{@{}>{\hspre}l<{\hspost}@{}}%
\column{11}{@{}>{\hspre}l<{\hspost}@{}}%
\column{15}{@{}>{\hspre}l<{\hspost}@{}}%
\column{19}{@{}>{\hspre}l<{\hspost}@{}}%
\column{E}{@{}>{\hspre}l<{\hspost}@{}}%
\>[3]{}{\mathbf{with}}\;\Varid{h_{\mathsf{once}}}\;{\mathbf{handle}}\;\Varid{c_{\mathsf{once}}}{}\<[E]%
\\
\>[3]{}\leadsto{}\<[7]%
\>[7]{}\mbox{\commentbegin ~  \textsc{E-HandSc}   \commentend}{}\<[E]%
\\
\>[7]{}\mathbf{do}\;{}\<[11]%
\>[11]{}\Varid{ts}{}\<[15]%
\>[15]{}\leftarrow (\boldsymbol{\lambda}\Varid{y}\, .\,{\mathbf{with}}\;\Varid{h_{\mathsf{once}}}\;{\mathbf{handle}}\;{\mathbf{op}}\;\mathtt{choose}\;()\;(\Varid{x}\, .\,{\mathbf{return}}\;\Varid{x}))\;()\, ;{}\<[E]%
\\
\>[7]{}\mathbf{do}\;{}\<[11]%
\>[11]{}\Varid{b}{}\<[15]%
\>[15]{}\leftarrow \Varid{ts}=[\mskip1.5mu \mskip1.5mu]\, ;\mathbf{if}\;\Varid{b}\;\mathbf{then}\;{\mathbf{return}}\;[\mskip1.5mu \mskip1.5mu]\;\mathbf{else}\;\mathbf{do}\;\Varid{t}\leftarrow \mathsf{head}\;\Varid{ts}\, ;(\boldsymbol{\lambda}\Varid{p}\, .\,{\mathbf{with}}{}\<[E]%
\\
\>[7]{}\hsindent{1}{}\<[8]%
\>[8]{}\Varid{h_{\mathsf{once}}}\;{\mathbf{handle}}\;(\mathbf{do}\;\Varid{q}\leftarrow {\mathbf{op}}\;\mathtt{choose}\;(\Varid{b}\, .\,{\mathbf{return}}\;\Varid{b})\, ;{\mathbf{return}}\;(\Varid{p},\Varid{q})))\;\Varid{t}{}\<[E]%
\\
\>[3]{}\leadsto{}\<[7]%
\>[7]{}\mbox{\commentbegin ~  \textsc{E-Do} and \textsc{E-AppAbs}   \commentend}{}\<[E]%
\\
\>[7]{}\mathbf{do}\;{}\<[11]%
\>[11]{}\Varid{ts}{}\<[15]%
\>[15]{}\leftarrow {\mathbf{with}}\;\Varid{h_{\mathsf{once}}}\;{\mathbf{handle}}\;{\mathbf{op}}\;\mathtt{choose}\;()\;(\Varid{x}\, .\,{\mathbf{return}}\;\Varid{x})\, ;{}\<[E]%
\\
\>[7]{}\mathbf{do}\;{}\<[11]%
\>[11]{}\Varid{b}{}\<[15]%
\>[15]{}\leftarrow \Varid{ts}=[\mskip1.5mu \mskip1.5mu]\, ;\mathbf{if}\;\Varid{b}\;\mathbf{then}\;{\mathbf{return}}\;[\mskip1.5mu \mskip1.5mu]\;\mathbf{else}\;\mathbf{do}\;\Varid{t}\leftarrow \mathsf{head}\;\Varid{ts}\, ;(\boldsymbol{\lambda}\Varid{p}\, .\,{\mathbf{with}}{}\<[E]%
\\
\>[7]{}\hsindent{1}{}\<[8]%
\>[8]{}\Varid{h_{\mathsf{once}}}\;{\mathbf{handle}}\;(\mathbf{do}\;\Varid{q}\leftarrow {\mathbf{op}}\;\mathtt{choose}\;(\Varid{b}\, .\,{\mathbf{return}}\;\Varid{b})\, ;{\mathbf{return}}\;(\Varid{p},\Varid{q})))\;\Varid{t}{}\<[E]%
\\
\>[3]{}\leadsto{}\<[7]%
\>[7]{}\mbox{\commentbegin ~  \textsc{E-Do} and \textsc{E-HandOp}   \commentend}{}\<[E]%
\\
\>[7]{}\mathbf{do}\;{}\<[11]%
\>[11]{}\Varid{ts}{}\<[15]%
\>[15]{}\leftarrow {}\<[19]%
\>[19]{}\mathbf{do}\;\Varid{xs}\leftarrow (\boldsymbol{\lambda}\Varid{x}\, .\,{\mathbf{with}}\;\Varid{h_{\mathsf{once}}}\;{\mathbf{handle}}\;{\mathbf{return}}\;\Varid{x})\;\mathsf{true}\, ;{}\<[E]%
\\
\>[19]{}\mathbf{do}\;\Varid{ys}\leftarrow (\boldsymbol{\lambda}\Varid{x}\, .\,{\mathbf{with}}\;\Varid{h_{\mathsf{once}}}\;{\mathbf{handle}}\;{\mathbf{return}}\;\Varid{x})\;\mathsf{false}\, ;\Varid{xs}+\hspace{-0.4em}+\Varid{ys}\, ;{}\<[E]%
\\
\>[7]{}\mathbf{do}\;{}\<[11]%
\>[11]{}\Varid{b}{}\<[15]%
\>[15]{}\leftarrow \Varid{ts}=[\mskip1.5mu \mskip1.5mu]\, ;\mathbf{if}\;\Varid{b}\;\mathbf{then}\;{\mathbf{return}}\;[\mskip1.5mu \mskip1.5mu]\;\mathbf{else}\;\mathbf{do}\;\Varid{t}\leftarrow \mathsf{head}\;\Varid{ts}\, ;(\boldsymbol{\lambda}\Varid{p}\, .\,{\mathbf{with}}{}\<[E]%
\\
\>[7]{}\hsindent{1}{}\<[8]%
\>[8]{}\Varid{h_{\mathsf{once}}}\;{\mathbf{handle}}\;(\mathbf{do}\;\Varid{q}\leftarrow {\mathbf{op}}\;\mathtt{choose}\;(\Varid{b}\, .\,{\mathbf{return}}\;\Varid{b})\, ;{\mathbf{return}}\;(\Varid{p},\Varid{q})))\;\Varid{t}{}\<[E]%
\\
\>[3]{}\leadsto{}\<[7]%
\>[7]{}\mbox{\commentbegin ~  \textsc{E-Do} and \textsc{E-AppAbs}   \commentend}{}\<[E]%
\\
\>[7]{}\mathbf{do}\;{}\<[11]%
\>[11]{}\Varid{ts}{}\<[15]%
\>[15]{}\leftarrow {}\<[19]%
\>[19]{}\mathbf{do}\;\Varid{xs}\leftarrow {\mathbf{with}}\;\Varid{h_{\mathsf{once}}}\;{\mathbf{handle}}\;{\mathbf{return}}\;\mathsf{true}\, ;{}\<[E]%
\\
\>[19]{}\mathbf{do}\;\Varid{ys}\leftarrow {\mathbf{with}}\;\Varid{h_{\mathsf{once}}}\;{\mathbf{handle}}\;{\mathbf{return}}\;\mathsf{false}\, ;\Varid{xs}+\hspace{-0.4em}+\Varid{ys}\, ;{}\<[E]%
\\
\>[7]{}\mathbf{do}\;{}\<[11]%
\>[11]{}\Varid{b}{}\<[15]%
\>[15]{}\leftarrow \Varid{ts}=[\mskip1.5mu \mskip1.5mu]\, ;\mathbf{if}\;\Varid{b}\;\mathbf{then}\;{\mathbf{return}}\;[\mskip1.5mu \mskip1.5mu]\;\mathbf{else}\;\mathbf{do}\;\Varid{t}\leftarrow \mathsf{head}\;\Varid{ts}\, ;(\boldsymbol{\lambda}\Varid{p}\, .\,{\mathbf{with}}{}\<[E]%
\\
\>[7]{}\hsindent{1}{}\<[8]%
\>[8]{}\Varid{h_{\mathsf{once}}}\;{\mathbf{handle}}\;(\mathbf{do}\;\Varid{q}\leftarrow {\mathbf{op}}\;\mathtt{choose}\;(\Varid{b}\, .\,{\mathbf{return}}\;\Varid{b})\, ;{\mathbf{return}}\;(\Varid{p},\Varid{q})))\;\Varid{t}{}\<[E]%
\\
\>[3]{}\leadsto{}\<[7]%
\>[7]{}\mbox{\commentbegin ~  \textsc{E-Do} and \textsc{E-HandRet}   \commentend}{}\<[E]%
\\
\>[7]{}\mathbf{do}\;{}\<[11]%
\>[11]{}\Varid{ts}{}\<[15]%
\>[15]{}\leftarrow \mathbf{do}\;\Varid{xs}\leftarrow {\mathbf{return}}\;[\mskip1.5mu \mathsf{true}\mskip1.5mu]\, ;\mathbf{do}\;\Varid{ys}\leftarrow {\mathbf{return}}\;[\mskip1.5mu \mathsf{false}\mskip1.5mu]\, ;\Varid{xs}+\hspace{-0.4em}+\Varid{ys}\, ;{}\<[E]%
\\
\>[7]{}\mathbf{do}\;{}\<[11]%
\>[11]{}\Varid{b}{}\<[15]%
\>[15]{}\leftarrow \Varid{ts}=[\mskip1.5mu \mskip1.5mu]\, ;\mathbf{if}\;\Varid{b}\;\mathbf{then}\;{\mathbf{return}}\;[\mskip1.5mu \mskip1.5mu]\;\mathbf{else}\;\mathbf{do}\;\Varid{t}\leftarrow \mathsf{head}\;\Varid{ts}\, ;(\boldsymbol{\lambda}\Varid{p}\, .\,{\mathbf{with}}{}\<[E]%
\\
\>[7]{}\hsindent{1}{}\<[8]%
\>[8]{}\Varid{h_{\mathsf{once}}}\;{\mathbf{handle}}\;(\mathbf{do}\;\Varid{q}\leftarrow {\mathbf{op}}\;\mathtt{choose}\;(\Varid{b}\, .\,{\mathbf{return}}\;\Varid{b})\, ;{\mathbf{return}}\;(\Varid{p},\Varid{q})))\;\Varid{t}{}\<[E]%
\\
\>[3]{}\leadsto^{\ast}\mbox{\commentbegin ~  \textsc{E-DoRet}   \commentend}{}\<[E]%
\\
\>[3]{}\hsindent{4}{}\<[7]%
\>[7]{}\mathbf{do}\;{}\<[11]%
\>[11]{}\Varid{ts}{}\<[15]%
\>[15]{}\leftarrow [\mskip1.5mu \mathsf{true},\mathsf{false}\mskip1.5mu]\, ;{}\<[E]%
\\
\>[3]{}\hsindent{4}{}\<[7]%
\>[7]{}\mathbf{do}\;{}\<[11]%
\>[11]{}\Varid{b}{}\<[15]%
\>[15]{}\leftarrow \Varid{ts}=[\mskip1.5mu \mskip1.5mu]\, ;\mathbf{if}\;\Varid{b}\;\mathbf{then}\;{\mathbf{return}}\;[\mskip1.5mu \mskip1.5mu]\;\mathbf{else}\;\mathbf{do}\;\Varid{t}\leftarrow \mathsf{head}\;\Varid{ts}\, ;(\boldsymbol{\lambda}\Varid{p}\, .\,{\mathbf{with}}{}\<[E]%
\\
\>[7]{}\hsindent{1}{}\<[8]%
\>[8]{}\Varid{h_{\mathsf{once}}}\;{\mathbf{handle}}\;(\mathbf{do}\;\Varid{q}\leftarrow {\mathbf{op}}\;\mathtt{choose}\;(\Varid{b}\, .\,{\mathbf{return}}\;\Varid{b})\, ;{\mathbf{return}}\;(\Varid{p},\Varid{q})))\;\Varid{t}{}\<[E]%
\\
\>[3]{}\leadsto^{\ast}\mbox{\commentbegin ~  \textsc{E-DoRet}   \commentend}{}\<[E]%
\\
\>[3]{}\hsindent{4}{}\<[7]%
\>[7]{}(\boldsymbol{\lambda}\Varid{p}\, .\,{\mathbf{with}}\;\Varid{h_{\mathsf{once}}}\;{\mathbf{handle}}\;(\mathbf{do}\;\Varid{q}\leftarrow {\mathbf{op}}\;\mathtt{choose}\;(\Varid{b}\, .\,{\mathbf{return}}\;\Varid{b})\, ;{\mathbf{return}}\;(\Varid{p},\Varid{q})))\;{}\<[E]%
\\
\>[7]{}\hsindent{1}{}\<[8]%
\>[8]{}\mathsf{true}{}\<[E]%
\\
\>[3]{}\leadsto{}\<[7]%
\>[7]{}\mbox{\commentbegin ~  \textsc{E-AppAbs}   \commentend}{}\<[E]%
\\
\>[7]{}{\mathbf{with}}\;\Varid{h_{\mathsf{once}}}\;{\mathbf{handle}}\;(\mathbf{do}\;\Varid{q}\leftarrow {\mathbf{op}}\;\mathtt{choose}\;(\Varid{b}\, .\,{\mathbf{return}}\;\Varid{b})\, ;{\mathbf{return}}\;(\mathsf{true},\Varid{q})){}\<[E]%
\\
\>[3]{}\leadsto^{\ast}\mbox{\commentbegin ~  similar to \ref{subsec:nd-derivation} (handling of \ensuremath{\mathtt{choose}})   \commentend}{}\<[E]%
\\
\>[3]{}\hsindent{4}{}\<[7]%
\>[7]{}{\mathbf{return}}\;[\mskip1.5mu (\mathsf{true},\mathsf{true}),(\mathsf{true},\mathsf{false})\mskip1.5mu]{}\<[E]%
\ColumnHook
\end{hscode}\resethooks
\indentend 

\clearpage
\section{Type Equivalence Rules}
\label{app:type-equiv}

This appendix shows the type equivalence rules of \ensuremath{\lambda_{\mathit{sc}}}.
\Cref{fig:typed-equiv,fig:row-equiv} contains the rules. Rules \textsc{Q-AppAbs}
and \textsc{Q-Swap} deserve special attention. The other rules are
straightforward.

\begin{figure}[h]
    \ruleform{\ensuremath{\sigma_{1}\;\equiv\;\sigma_{2}}}
    \quad\text{Type equivalence} \hspace*{\fill}
\begin{mathpar}

  \inferrule*[right=Q-Refl]
  { }
  { \ensuremath{\sigma\;\equiv\;\sigma} }

  \inferrule*[right=Q-Symm]
  { \ensuremath{\sigma_{1}\;\equiv\;\sigma_{2}}
  }
  { \ensuremath{\sigma_{2}\;\equiv\;\sigma_{1}} }

  \inferrule*[right=Q-Trans]
  { \ensuremath{\sigma_{1}\;\equiv\;\sigma_{2}}
    \\ \ensuremath{\sigma_{2}\;\equiv\;\sigma_{3}}
  }
  { \ensuremath{\sigma_{1}\;\equiv\;\sigma_{3}} }

  \inferrule*[right=Q-Pair]
  { \ensuremath{\Conid{A}_{1}\;\equiv\;\Conid{A}_{2}}
    \\ \ensuremath{\Conid{B}_{1}\;\equiv\;\Conid{B}_{2}}
  }
  { \ensuremath{(\Conid{A}_{1},\Conid{B}_{1})\;\equiv\;(\Conid{A}_{2},\Conid{B}_{2})} }

  \inferrule*[right=Q-Fun]
  { \ensuremath{\Conid{A}\;\equiv\;\Conid{B}}
    \\ \ensuremath{ \underline{C} \;\equiv\; \underline{D} }
  }
  { \ensuremath{\Conid{A}\to  \underline{C} \;\equiv\;\Conid{B}\to  \underline{D} } }

  \inferrule*[right=Q-Hand]
  { \ensuremath{\underline{\Conid{C}_{1}}\;\equiv\;\underline{\Conid{D}_{1}}}
    \\ \ensuremath{\underline{\Conid{C}_{2}}\;\equiv\;\underline{\Conid{D}_{2}}}
  }
  { \ensuremath{\underline{\Conid{C}_{1}}\Rightarrow \underline{\Conid{C}_{2}}\;\equiv\;\underline{\Conid{D}_{1}}\Rightarrow \underline{\Conid{D}_{2}}} }

  \inferrule*[right=Q-AllTy]
  { \ensuremath{\sigma_{1}\;\equiv\;\sigma_{2}}
  }
  { \ensuremath{\forall\;\alpha\, .\,\sigma_{1}\;\equiv\;\forall\;\alpha\, .\,\sigma_{2}} }

  \inferrule*[right=Q-AllRow]
  { \ensuremath{\sigma_{1}\;\equiv\;\sigma_{2}}
  }
  { \ensuremath{\forall\;\mu\, .\,\sigma_{1}\;\equiv\;\forall\;\mu\, .\,\sigma_{2}} }

  \inferrule*[right=Q-Abs]
  { \ensuremath{\Conid{A}\;\equiv\;\Conid{B}}
  }
  { \ensuremath{\lambda\;\alpha\, .\,\Conid{A}\;\equiv\;\lambda\;\alpha\, .\,\Conid{B}} }

  \inferrule*[right=Q-App]
  { \ensuremath{\Conid{M}_{1}\;\equiv\;\Conid{M}_{2}}
    \\ \ensuremath{\Conid{A}\;\equiv\;\Conid{B}}
  }
  { \ensuremath{\Conid{M}_{1}\;\Conid{A}\;\equiv\;\Conid{M}_{2}\;\Conid{B}} }

  \inferrule*[right=Q-AppAbs]
  { }
  { \ensuremath{(\lambda\;\alpha\, .\,\Conid{A})\;\Conid{B}\;\equiv\;\Conid{A}\;[\mskip1.5mu \Conid{B}\mathbin{/}\alpha\mskip1.5mu]} }

  \inferrule*[right=Q-Comp]
  { \ensuremath{\Conid{A}\;\equiv\;\Conid{B}}
    \\ \ensuremath{\Conid{E}\;\equiv_{\langle\rangle}\;\Conid{F}}
  }
  { \ensuremath{\Conid{A}\mathbin{!}\langle\Conid{E}\rangle\equiv\;\Conid{B}\mathbin{!}\langle\Conid{F}\rangle} }

\end{mathpar}
\caption{Type equivalence of \ensuremath{\lambda_{\mathit{sc}}}.}
\label{fig:typed-equiv}
\end{figure}

\begin{figure}[h]
    \ruleform{\ensuremath{\Conid{E}\;\equiv_{\langle\rangle}\;\Conid{F}}}
    \quad\text{Row equivalence} \hspace*{\fill}
\begin{mathpar}

  \inferrule*[right=R-Refl]
  { }
  { \ensuremath{\Conid{E}\;\equiv_{\langle\rangle}\;\Conid{E}} }

  \inferrule*[right=R-Symm]
  { \ensuremath{\Conid{E}\;\equiv_{\langle\rangle}\;\Conid{F}}
  }
  { \ensuremath{\Conid{F}\;\equiv_{\langle\rangle}\;\Conid{E}} }

  \inferrule*[right=R-Trans]
  { \ensuremath{\Conid{E}_{1}\;\equiv_{\langle\rangle}\;\Conid{E}_{2}}
    \\ \ensuremath{\Conid{E}_{2}\;\equiv_{\langle\rangle}\;\Conid{E}_{3}}
  }
  { \ensuremath{\Conid{E}_{1}\;\equiv_{\langle\rangle}\;\Conid{E}_{3}} }

  \inferrule*[right=R-Head]
  { \ensuremath{\Conid{E}\;\equiv_{\langle\rangle}\;\Conid{F}}
  }
  { \ensuremath{\ell\, ;\Conid{E}\;\equiv_{\langle\rangle}\;\ell\, ;\Conid{F}} }

  \inferrule*[right=R-Swap]
  { \ensuremath{\ell_{1}\;\not =\;\ell_{2}}
  }
  { \ensuremath{\ell_{1}\, ;\ell_{2}\, ;\Conid{E}\;\equiv_{\langle\rangle}\;\ell_{2}\, ;\ell_{1}\, ;\Conid{E}} }

\end{mathpar}
\caption{Row equivalence of \ensuremath{\lambda_{\mathit{sc}}}.}
\label{fig:row-equiv}
\end{figure}
\clearpage
\section{Well-scopedness Rules}
\label{app:well-scopedness}

This appendix shows the well-scopedness rules of \ensuremath{\lambda_{\mathit{sc}}}.
\Cref{fig:typed-well-scopedness} contains the rules.

\begin{figure}[h]
  \ruleform{ \makebox[0ex][l]{\phantom{\underline{C}}} \ensuremath{\Gamma\vdash\sigma}}
  \ruleform{ \makebox[0ex][l]{\phantom{\underline{C}}} \ensuremath{\Gamma\vdash\Conid{M}}}
  \ruleform{ \makebox[0ex][l]{\phantom{\underline{C}}} \ensuremath{\Gamma\vdash\Conid{E}}}
  \ruleform{ \makebox[0ex][l]{\phantom{\underline{C}}} \ensuremath{\Gamma\vdash \underline{C} }}
  \quad\text{Type well-scopedness} \hspace*{\fill}
\begin{mathpar}

  \inferrule*[right=W-Unit]
  { }
  { \ensuremath{\Gamma\vdash\mathsf{()}} }

  \inferrule*[right=W-Pair]
  { \ensuremath{\Gamma\vdash\Conid{A}}
    \\ \ensuremath{\Gamma\vdash\Conid{B}}
  }
  { \ensuremath{\Gamma\vdash(\Conid{A},\Conid{B})} }

  \inferrule*[right=W-Var]
  { \ensuremath{\alpha\;\in\;\Gamma}
  }
  { \ensuremath{\Gamma\vdash\alpha} }

  \inferrule*[right=W-All]
  { \ensuremath{\Gamma,\alpha\vdash\Conid{A}}
  }
  { \ensuremath{\Gamma\vdash\forall\;\alpha\, .\,\Conid{A}} }

  \inferrule*[right=W-Comp]
  { \ensuremath{\Gamma\vdash\Conid{A}}
    \\ \ensuremath{\Gamma\vdash\Conid{E}}
  }
  { \ensuremath{\Gamma\vdash\Conid{A}\mathbin{!}\langle\Conid{E}\rangle} }

  \inferrule*[right=W-Abs]
  { \ensuremath{\Gamma,\alpha\vdash\Conid{A}}
  }
  { \ensuremath{\Gamma\vdash\lambda\;\alpha\, .\,\Conid{A}} }

  \inferrule*[right=W-App]
  { \ensuremath{\Gamma\vdash\Conid{M}}
    \\ \ensuremath{\Gamma\vdash\Conid{A}}
  }
  { \ensuremath{\Gamma\vdash\Conid{M}\;\Conid{A}} }

  \inferrule*[right=W-Fun]
  { \ensuremath{\Gamma\vdash\Conid{A}}
    \\ \ensuremath{\Gamma\vdash \underline{C} }
  }
  { \ensuremath{\Gamma\vdash\Conid{A}\to  \underline{C} } }

  \inferrule*[right=W-Hand]
  { \ensuremath{\Gamma\vdash \underline{C} }
    \\ \ensuremath{\Gamma\vdash \underline{D} }
  }
  { \ensuremath{\Gamma\vdash \underline{C} \Rightarrow  \underline{D} } }

  \inferrule*[right=W-RowVar]
  { \ensuremath{\mu\;\in\;\Gamma} }
  { \ensuremath{\Gamma\vdash\mu} }

  \inferrule*[right=W-EmptyRow]
  { }
  { \ensuremath{\Gamma\vdash \cdot} }

  \inferrule*[right=W-Extension]
  { \ensuremath{\Gamma\vdash\Conid{E}}
  }
  { \ensuremath{\Gamma\vdash\ell\, ;\Conid{E}} }
\end{mathpar}
\caption{Well-scopedness rules of \ensuremath{\lambda_{\mathit{sc}}}.}
\label{fig:typed-well-scopedness}
\end{figure}
\clearpage
\section{Syntax-directed version of \ensuremath{\lambda_{\mathit{sc}}}}
\label{app:syntax-directed}
This section describes the syntax-direction version of \ensuremath{\lambda_{\mathit{sc}}}.

The syntax-directed rules can be found in
\Cref{fig:syntax-directed-value-typing} for value typing,
\Cref{fig:syntax-directed-computation-typing} for computation and
\Cref{fig:syntax-directed-handler-typing} for handler typing.
\begin{figure}[h]
\ruleform{\ensuremath{\Gamma\vdash_{\mathsf{SD}}\Varid{v}\mathbin{:}\Conid{A}}} \quad\text{Value Typing} \hspace*{\fill}
\begin{mathpar}
  \inferrule*[right=SD-Var]
  { \ensuremath{(\Varid{x}\mathbin{:}\sigma)\;\in\;\Gamma}
    \\ \ensuremath{\colorbox{lightgray}{$\sigma\leq \Conid{A}$}}
    \\ \ensuremath{\Gamma\vdash_{\mathsf{SD}}\Conid{A}}
  }
  { \ensuremath{\Gamma\vdash_{\mathsf{SD}}\Varid{x}\mathbin{:}\colorbox{lightgray}{$\Conid{A}$}} }

  \inferrule*[right=SD-Unit]
  { }
  { \ensuremath{\Gamma\vdash_{\mathsf{SD}}\mathsf{()}\mathbin{:}\mathsf{()}} }

  \inferrule*[right=SD-Pair]
  { \ensuremath{\Gamma\vdash_{\mathsf{SD}}\Varid{v}_{1}\mathbin{:}\Conid{A}}
    \\ \ensuremath{\Gamma\vdash_{\mathsf{SD}}\Varid{v}_{2}\mathbin{:}\Conid{B}}
  }
  { \ensuremath{\Gamma\vdash_{\mathsf{SD}}(\Varid{v}_{1},\Varid{v}_{2})\mathbin{:}(\Conid{A},\Conid{B})} }

  \inferrule*[right=SD-Abs]
  {
    \ensuremath{\Gamma,\Varid{x}\mathbin{:}\Conid{A}\vdash_{\mathsf{SD}}\Varid{c}\mathbin{:} \underline{C} }
  }
  { \ensuremath{\Gamma\vdash_{\mathsf{SD}}\boldsymbol{\lambda}\Varid{x}\, .\,\Varid{c}\mathbin{:}\Conid{A}\to  \underline{C} } }

  \inferrule*[lab=SD-Handler]
  { \ensuremath{\Conid{F}\;\equiv_{\langle\rangle}\;\mathit{labels\!}\;(\Varid{oprs})\, ;\Conid{E}}
    \\ \ensuremath{\alpha\;\notin\;\Gamma}
    \\ \ensuremath{\Gamma,\alpha\;\mathbin{\vdash_{\Conid{M}}}\;{\mathbf{return}}\;\Varid{x}\mapsto\Varid{c}_{\Varid{r}}\mathbin{:}\Conid{M}\;\alpha\hspace{0.1em}!\hspace{0.1em}\langle\Conid{E}\rangle}
    \\ \ensuremath{\Gamma,\alpha\;\mathbin{\vdash_{\Conid{M}}}\;\Varid{oprs}\mathbin{:}\Conid{M}\;\alpha\hspace{0.1em}!\hspace{0.1em}\langle\Conid{E}\rangle}
    \\ \ensuremath{\Gamma,\alpha\;\mathbin{\vdash_{\Conid{M}}}\;{\mathbf{fwd}}\;\Varid{f}\;\Varid{p}\;\Varid{k}\mapsto\Varid{c}_{\Varid{f}}\mathbin{:}\Conid{M}\;\alpha\hspace{0.1em}!\hspace{0.1em}\langle\Conid{E}\rangle}
    \\ \ensuremath{\Gamma\vdash_{\mathsf{SD}}\Conid{A}}
  }
  { \ensuremath{\Gamma\vdash_{\mathsf{SD}}{\mathbf{handler}_M}\;\{\mskip1.5mu {\mathbf{return}}\;\Varid{x}\mapsto\Varid{c}_{\Varid{r}},\Varid{oprs},{\mathbf{fwd}}\;\Varid{f}\;\Varid{p}\;\Varid{k}\mapsto\Varid{c}_{\Varid{f}}\mskip1.5mu\}\mathbin{:}\Conid{A}\hspace{0.1em}!\hspace{0.1em}\langle\Conid{F}\rangle\Rightarrow \Conid{M}\;\Conid{A}\hspace{0.1em}!\hspace{0.1em}\langle\Conid{E}\rangle} }
\end{mathpar}
\caption{Syntax-directed value typing.}
\label{fig:syntax-directed-value-typing}
\end{figure}

\begin{figure}[tp]
\ruleform{\ensuremath{\Gamma\vdash_{\mathsf{SD}}\Varid{c}\mathbin{:} \underline{C} }} \quad\text{Computation Typing} \hspace*{\fill}
\begin{mathpar}
  \inferrule*[right=SD-App]
  { \ensuremath{\Gamma\vdash_{\mathsf{SD}}\Varid{v}_{1}\mathbin{:}\Conid{A}_{1}\to  \underline{C} }
    \\ \ensuremath{\Gamma\vdash_{\mathsf{SD}}\Varid{v}_{2}\mathbin{:}\Conid{A}_{2}}
    \\ \ensuremath{\Conid{A}_{1}\;\equiv\;\Conid{A}_{2}}
  }
  { \ensuremath{\Gamma\vdash_{\mathsf{SD}}\Varid{v}_{1}\;\Varid{v}_{2}\mathbin{:} \underline{C} } }

  \inferrule*[right=SD-Do]
  { \ensuremath{\Gamma\vdash_{\mathsf{SD}}\Varid{c}_{1}\mathbin{:}\Conid{A}\hspace{0.1em}!\hspace{0.1em}\langle\Conid{E}_{1}\rangle}
    \\ \ensuremath{\Gamma,\Varid{x}\mathbin{:}\Conid{A}\vdash_{\mathsf{SD}}\Varid{c}_{2}\mathbin{:}\Conid{B}\hspace{0.1em}!\hspace{0.1em}\langle\Conid{E}_{2}\rangle}
    \\ \ensuremath{\Conid{E}_{1}\;\equiv\;\Conid{E}_{2}}
  }
  { \ensuremath{\Gamma\vdash_{\mathsf{SD}}\mathbf{do}\;\Varid{x}\leftarrow \Varid{c}_{1}\, ;\Varid{c}_{2}\mathbin{:}\Conid{B}\hspace{0.1em}!\hspace{0.1em}\langle\Conid{E}_{2}\rangle} }

  \inferrule*[right=SD-Let]
  { \ensuremath{\Gamma,\overline{\alpha},\overline{\mu}\vdash_{\mathsf{SD}}\Varid{v}\mathbin{:}\colorbox{lightgray}{$\Conid{A}$}}
    \\ \ensuremath{\colorbox{lightgray}{$(\overline{\alpha}\;\notin\;\Gamma)$}}
    \\ \ensuremath{\colorbox{lightgray}{$(\overline{\mu}\;\notin\;\Gamma)$}}
    \\ \ensuremath{\Gamma,\Varid{x}\mathbin{:}\overline{\forall\;\alpha}\, .\,\overline{\forall\;\mu}\, .\,\Conid{A}\vdash_{\mathsf{SD}}\Varid{c}\mathbin{:} \underline{C} }
  }
  { \ensuremath{\Gamma\vdash_{\mathsf{SD}}\mathbf{let}\;\Varid{x}\mathrel{=}\Varid{v}\;\mathbf{in}\;\Varid{c}\mathbin{:} \underline{C} } }

  \inferrule*[right=SD-Ret]
  { \ensuremath{\Gamma\vdash_{\mathsf{SD}}\Varid{v}\mathbin{:}\Conid{A}}
  }
  { \ensuremath{\Gamma\vdash_{\mathsf{SD}}{\mathbf{return}}\;\Varid{v}\mathbin{:}\Conid{A}\hspace{0.1em}!\hspace{0.1em}\langle\Conid{E}\rangle} }

  \inferrule*[right=SD-Hand]
  {    \ensuremath{\Gamma\vdash_{\mathsf{SD}}\Varid{v}\mathbin{:} \underline{C} _{1}\Rightarrow  \underline{D} _{1}}
    \\ \ensuremath{\Gamma\vdash_{\mathsf{SD}}\Varid{c}\mathbin{:} \underline{C} _{2}}
    \\ \ensuremath{ \underline{C} _{1}\;\equiv\; \underline{C} _{2}}
    \\ \ensuremath{ \underline{D} _{1}\;\equiv\; \underline{D} _{2}}
  }
  { \ensuremath{\Gamma\vdash_{\mathsf{SD}}{\mathbf{with}}\;\Varid{v}\;{\mathbf{handle}}\;\Varid{c}\mathbin{:} \underline{D} _{2}} }

  \inferrule*[right=SD-Op]
  { \ensuremath{(\ell^{\textsf{op}}\mathbin{:}{A_{\textsf{op}}}\rightarrowtriangle{B_{\textsf{op}}})\;\in\;\Sigma}
    \\ \ensuremath{\Gamma\vdash_{\mathsf{SD}}\Varid{v}\mathbin{:}\Conid{A}_{1}}
    \\ \ensuremath{{A_{\textsf{op}}}\;\equiv\;\Conid{A}_{1}}
    \\ \ensuremath{\Gamma,\Varid{y}\mathbin{:}{B_{\textsf{op}}}\vdash_{\mathsf{SD}}\Varid{c}\mathbin{:}\Conid{A}\hspace{0.1em}!\hspace{0.1em}\langle\Conid{E}\rangle}
    \\ \ensuremath{\ell^{\textsf{op}}\;\in\;\Conid{E}}
  }
  { \ensuremath{\Gamma\vdash_{\mathsf{SD}}{\mathbf{op}}\;\ell^{\textsf{op}}\;\Varid{v}\;(\Varid{y}\, .\,\Varid{c})\mathbin{:}\Conid{A}\hspace{0.1em}!\hspace{0.1em}\langle\Conid{E}\rangle} }

  \inferrule*[right=SD-Sc]
  { \ensuremath{(\ell^{\textsf{sc}}\mathbin{:}{A_{\textsf{sc}}}\rightarrowtriangle{B_{\textsf{sc}}})\;\in\;\Sigma}
    \\ \ensuremath{\Gamma\vdash_{\mathsf{SD}}\Varid{v}\mathbin{:}\Conid{A}_{1}}
    \\ \ensuremath{{A_{\textsf{sc}}}\;\equiv\;\Conid{A}_{1}}
    \\ \ensuremath{\Gamma,\Varid{y}\mathbin{:}{B_{\textsf{sc}}}\vdash_{\mathsf{SD}}\Varid{c}_{1}\mathbin{:}\Conid{B}\hspace{0.1em}!\hspace{0.1em}\langle\Conid{E}_{1}\rangle}
    \\ \ensuremath{\Gamma,\Varid{z}\mathbin{:}\Conid{B}\vdash_{\mathsf{SD}}\Varid{c}_{2}\mathbin{:}\Conid{A}\hspace{0.1em}!\hspace{0.1em}\langle\Conid{E}_{2}\rangle}
    \\ \ensuremath{\Conid{E}_{1}\;\equiv\;\Conid{E}_{2}}
    \\ \ensuremath{\ell^{\textsf{sc}}\;\in\;\Conid{E}_{2}}
  }
  { \ensuremath{\Gamma\vdash_{\mathsf{SD}}{\mathbf{sc}}\;\ell^{\textsf{sc}}\;\Varid{v}\;(\Varid{y}\, .\,\Varid{c}_{1})\;(\Varid{z}\, .\,\Varid{c}_{2})\mathbin{:}\Conid{A}\hspace{0.1em}!\hspace{0.1em}\langle\Conid{E}_{2}\rangle} }

\end{mathpar}
\caption{Syntax-directed computation typing.}
\label{fig:syntax-directed-computation-typing}
\end{figure}

\begin{figure}[tp]
  \raggedright
    \ruleform{\ensuremath{\Gamma\;\mathbin{\vdash_{\Conid{M}}}\;{\mathbf{return}}\;\Varid{x}\mapsto\Varid{c}_{\Varid{r}}\mathbin{:}\Conid{M}\;\Conid{A}\hspace{0.1em}!\hspace{0.1em}\langle\Conid{E}\rangle}}
    \ruleform{\ensuremath{\Gamma\;\mathbin{\vdash_{\Conid{M}}}\;\Varid{oprs}\mathbin{:}\Conid{M}\;\Conid{A}\hspace{0.1em}!\hspace{0.1em}\langle\Conid{E}\rangle}} \\
    \ruleform{\ensuremath{\Gamma\;\mathbin{\vdash_{\Conid{M}}}\;{\mathbf{fwd}}\;\Varid{f}\;\Varid{p}\;\Varid{k}\mapsto\Varid{c}_{\Varid{f}}\mathbin{:}\Conid{M}\;\Conid{A}\hspace{0.1em}!\hspace{0.1em}\langle\Conid{E}\rangle}} \\[0.07cm]
    \text{Return-, operation-, and forwarding-clause typing} \\
  \begin{mathpar}
    \inferrule*[right=SD-Return]
    {  \ensuremath{\Gamma,\Varid{x}\mathbin{:}\Conid{A}_{1}\vdash_{\mathsf{SD}}\Varid{c}_{\Varid{r}}\mathbin{:}\Conid{M}\;\Conid{A}_{2}\hspace{0.1em}!\hspace{0.1em}\langle\Conid{E}\rangle}
    \\ \ensuremath{\Conid{A}_{1}\;\equiv\;\Conid{A}_{2}}
    }
    { \ensuremath{\Gamma\;\mathbin{\vdash_{\Conid{M}}}\;{\mathbf{return}}\;\Varid{x}\mapsto\Varid{c}_{\Varid{r}}\mathbin{:}\Conid{M}\;\Conid{A}\hspace{0.1em}!\hspace{0.1em}\langle\Conid{E}_{2}\rangle} }

    \inferrule*[right=SD-Empty]
    {
    }
    { \ensuremath{\Gamma\;\mathbin{\vdash_{\Conid{M}}}\; \cdot\mathbin{:}\Conid{M}\;\Conid{A}\hspace{0.1em}!\hspace{0.1em}\langle\Conid{E}\rangle} }

    \inferrule*[right=SD-OprOp]
    { \ensuremath{\Gamma\;\mathbin{\vdash_{\Conid{M}}}\;\Varid{oprs}\mathbin{:}\Conid{M}\;\Conid{A}_{1}\hspace{0.1em}!\hspace{0.1em}\langle\Conid{E}_{1}\rangle}
      \\ \ensuremath{(\ell^{\textsf{op}}\mathbin{:}{A_{\textsf{op}}}\rightarrowtriangle{B_{\textsf{op}}})\;\in\;\Sigma}
      \\\\ \ensuremath{\Gamma,\Varid{x}\mathbin{:}{A_{\textsf{op}}},\Varid{k}\mathbin{:}{B_{\textsf{op}}}\to \Conid{M}\;\Conid{A}_{1}\hspace{0.1em}!\hspace{0.1em}\langle\Conid{E}_{1}\rangle\vdash_{\mathsf{SD}}\Varid{c}\mathbin{:}\Conid{M}\;\Conid{A}_{2}\hspace{0.1em}!\hspace{0.1em}\langle\Conid{E}_{2}\rangle}
      \\ \ensuremath{\Conid{M}\;\Conid{A}_{1}\hspace{0.1em}!\hspace{0.1em}\langle\Conid{E}_{1}\rangle\;\equiv\;\Conid{M}\;\Conid{A}_{2}\hspace{0.1em}!\hspace{0.1em}\langle\Conid{E}_{2}\rangle}
    }
    { \ensuremath{\Gamma\;\mathbin{\vdash_{\Conid{M}}}\;{\mathbf{op}}\;\ell^{\textsf{op}}\;\Varid{x}\;\Varid{k}\mapsto\Varid{c},\Varid{oprs}\mathbin{:}\Conid{M}\;\Conid{A}_{2}\hspace{0.1em}!\hspace{0.1em}\langle\Conid{E}_{2}\rangle} }

    \inferrule*[right=SD-OprSc]
    { \ensuremath{\Gamma\;\mathbin{\vdash_{\Conid{M}}}\;\Varid{oprs}\mathbin{:}\Conid{M}\;\Conid{A}_{1}\hspace{0.1em}!\hspace{0.1em}\langle\Conid{E}_{1}\rangle}
      \\ \ensuremath{(\ell^{\textsf{sc}}\mathbin{:}{A_{\textsf{sc}}}\rightarrowtriangle{B_{\textsf{sc}}})\;\in\;\Sigma}
      \\ \ensuremath{\beta\;\notin\;\Gamma}
      \\ \ensuremath{\Gamma,\beta,\Varid{x}\mathbin{:}{A_{\textsf{sc}}},\Varid{p}\mathbin{:}{B_{\textsf{sc}}}\to \Conid{M}\;\beta\hspace{0.1em}!\hspace{0.1em}\langle\Conid{E}_{1}\rangle,\Varid{k}\mathbin{:}\beta\to \Conid{M}\;\Conid{A}_{1}\hspace{0.1em}!\hspace{0.1em}\langle\Conid{E}_{1}\rangle\vdash_{\mathsf{SD}}\Varid{c}\mathbin{:}\Conid{M}\;\Conid{A}_{2}\hspace{0.1em}!\hspace{0.1em}\langle\Conid{E}_{2}\rangle}
      \\ \ensuremath{\Conid{M}\;\Conid{A}_{1}\hspace{0.1em}!\hspace{0.1em}\langle\Conid{E}_{1}\rangle\;\equiv\;\Conid{M}\;\Conid{A}_{2}\hspace{0.1em}!\hspace{0.1em}\langle\Conid{E}_{2}\rangle}
    }
    { \ensuremath{\Gamma\;\mathbin{\vdash_{\Conid{M}}}\;{\mathbf{sc}}\;\ell^{\textsf{sc}}\;\Varid{x}\;\Varid{p}\;\Varid{k}\mapsto\Varid{c},\Varid{oprs}\mathbin{:}\Conid{M}\;\Conid{A}_{2}\hspace{0.1em}!\hspace{0.1em}\langle\Conid{E}_{2}\rangle} }

  \inferrule*[right=SD-Fwd]
  { \ensuremath{\alpha,\beta,\gamma,\delta\;\notin\;\Gamma}
    \\ \ensuremath{\Conid{A}_{\Varid{p}}\mathrel{=}\alpha\to \Conid{M}\;\beta\hspace{0.1em}!\hspace{0.1em}\langle\Conid{E}_{1}\rangle}
    \\ \ensuremath{\Conid{A}_{\Varid{p}}'\mathrel{=}\alpha\to \gamma\hspace{0.1em}!\hspace{0.1em}\langle\Conid{E}_{1}\rangle}
    \\ \ensuremath{\Conid{A}_{\Varid{k}}\mathrel{=}\beta\to \Conid{M}\;\Conid{A}_{1}\hspace{0.1em}!\hspace{0.1em}\langle\Conid{E}_{1}\rangle}
    \\ \ensuremath{\Conid{A}_{\Varid{k}}'\mathrel{=}\gamma\to \delta\hspace{0.1em}!\hspace{0.1em}\langle\Conid{E}_{1}\rangle}
    \\ \ensuremath{\Gamma,\alpha,\beta,\Varid{p}\mathbin{:}\Conid{A}_{\Varid{p}},\Varid{k}\mathbin{:}\Conid{A}_{\Varid{k}},\Varid{f}\mathbin{:}\forall\;\gamma\;\delta\, .\,(\Conid{A}_{\Varid{p}}',\Conid{A}_{\Varid{k}}')\to \delta\hspace{0.1em}!\hspace{0.1em}\langle\Conid{E}_{1}\rangle\vdash_{\mathsf{SD}}\Varid{c}_{\Varid{f}}\mathbin{:}\Conid{M}\;\Conid{A}_{2}\hspace{0.1em}!\hspace{0.1em}\langle\Conid{E}_{2}\rangle}
    \\ \ensuremath{\Conid{M}\;\Conid{A}_{1}\hspace{0.1em}!\hspace{0.1em}\langle\Conid{E}_{1}\rangle\;\equiv\;\Conid{M}\;\Conid{A}_{2}\hspace{0.1em}!\hspace{0.1em}\langle\Conid{E}_{2}\rangle}
    \\ \ensuremath{\Gamma\vdash_{\mathsf{SD}}\Conid{A}_{2}}
  }
  { \ensuremath{\Gamma\;\mathbin{\vdash_{\Conid{M}}}\;{\mathbf{fwd}}\;\Varid{f}\;\Varid{p}\;\Varid{k}\mapsto\Varid{c}_{\Varid{f}}\mathbin{:}\Conid{M}\;\Conid{A}_{2}\hspace{0.1em}!\hspace{0.1em}\langle\Conid{E}_{2}\rangle} }

  \end{mathpar}
  \caption{Syntax-directed handler typing.}
  \label{fig:syntax-directed-handler-typing}
\end{figure}

The syntax-directed system is obtained by incorporating the non-syntax-directed
rules into the syntax-directed-ones where needed.
In particular, we inline the non-syntax-directed rules for equivalence
(\textsc{T-EqV} and \textsc{T-EqC}) into the syntax-directed rules that mention the same
type or row twice in their assumptions (e.g., \textsc{SD-App}, \textsc{SD-Do}).
Similarly, we inline the rules \textsc{T-Inst}, \textsc{T-InstEff},
\textsc{T-Gen} and \textsc{T-GenEff} for instantiating and generalizing type and
row variables. The generalization is incorporated into the rule for let-bindings
(\textsc{T-Let}). Instantiation is incorporated into the variable rule
(\textsc{T-Var}) using \ensuremath{\sigma\leq \Conid{A}} defined in \Cref{fig:sigma-instantiation}.

\begin{figure}[tp]
\ruleform{\ensuremath{\sigma\leq \Conid{A}}} \ensuremath{\sigma}-instantiation \hspace*{\fill}
\begin{mathpar}
  \inferrule*[right=\ensuremath{\sigma}-Inst-Base]
  { }
  { \ensuremath{\Conid{A}\leq \Conid{A}} }

  \inferrule*[right=\ensuremath{\sigma}-Inst-\ensuremath{\alpha}]
  {    [\ensuremath{\Conid{B}\mathbin{/}\alpha}] \ensuremath{\sigma\leq \Conid{A}}}
  { \ensuremath{\forall\;\alpha\, .\,\sigma\leq \Conid{A}} }

  \inferrule*[right=\ensuremath{\sigma}-Inst-\ensuremath{\mu}]
  {    \ensuremath{[\mskip1.5mu \Conid{E}\mathbin{/}\mu\mskip1.5mu]\;\sigma\leq \Conid{A}}}
  { \ensuremath{\forall\;\mu\, .\,\sigma\leq \Conid{A}} }
\end{mathpar}
\caption{\ensuremath{\sigma}-instantiation.}
\label{fig:sigma-instantiation}
\end{figure}

Instantiation is also incorporated into the handler rule: we implicitly
instantiate \ensuremath{\alpha} with an arbitrary type \ensuremath{\Conid{A}}, which results in a
monomorphically typed handler.
However, since \textsc{SD-Handler} insists on sufficiency polymorphic handler
clauses, we can still handle scoped effects by polymorphic recursion.

\Cref{fig:handler-gen-and-inst} displays declarative and syntax-directed typing
derivations for both inline handler application (\ensuremath{{\mathbf{with}}\;\Varid{h}\;{\mathbf{handle}}\;\Varid{c}}) as well as let-bound
handlers.
As can be seen in the first derivation, in the case of inline handler
application, the declarative system derives a polymorphically typed handler,
which is instantiated.
The syntax-directed system essentially combines these steps, as can be seen in
the second derivation.
In the case of a let-bound handler, the declarative system keeps the polymorphic
handler type as-is (third derivation).
The syntax-directed system however instantiates and then immediately generalizes
handlers, as can be seen in the fourth derivation.

The other rules of the declarative system are syntax-directed and remain unchanged.

\begin{figure}[h]
  \begin{mathpar}
    \inferrule*[right=T-Hand]
    { \inferrule*[right=T-Inst]
      { \inferrule*[right=T-Handler]{
          \ensuremath{\Gamma,\alpha\vdash_{\mathsf{SD}}\Varid{oprs}\mathbin{:}\Conid{M}\;\alpha\hspace{0.1em}!\hspace{0.1em}\langle\Conid{E}\rangle} \\
        }{\ensuremath{\Gamma\vdash_{\mathsf{SD}}\Varid{h}\mathbin{:}\forall\;\alpha\, .\,\alpha\hspace{0.1em}!\hspace{0.1em}\langle\Conid{F}\rangle\Rightarrow \Conid{M}\;\alpha\hspace{0.1em}!\hspace{0.1em}\langle\Conid{E}\rangle}}
        \ensuremath{\Gamma\vdash_{\mathsf{SD}}\Conid{A}}
      }
      { \ensuremath{\Gamma\vdash_{\mathsf{SD}}\Varid{h}\mathbin{:}\Conid{A}\hspace{0.1em}!\hspace{0.1em}\langle\Conid{F}\rangle\Rightarrow \Conid{M}\;\Conid{A}\hspace{0.1em}!\hspace{0.1em}\langle\Conid{E}\rangle} }
      \ensuremath{\Gamma\vdash_{\mathsf{SD}}\Varid{c}\mathbin{:}\Conid{A}\hspace{0.1em}!\hspace{0.1em}\langle\Conid{F}\rangle}
    }
    { \ensuremath{\Gamma\vdash_{\mathsf{SD}}{\mathbf{with}}\;\Varid{h}\;{\mathbf{handle}}\;\Varid{c}\mathbin{:}\Conid{M}\;\Conid{A}\hspace{0.1em}!\hspace{0.1em}\langle\Conid{E}\rangle} }

    \inferrule*[right=SD-Hand]
    { \inferrule*[right=SD-Handler]{
        \ensuremath{\Gamma,\alpha\;\mathbin{\vdash_{\Conid{M}}}\;\Varid{oprs}\mathbin{:}\Conid{M}\;\alpha\hspace{0.1em}!\hspace{0.1em}\langle\Conid{E}\rangle} \\
        \ensuremath{\Gamma\vdash_{\mathsf{SD}}\Conid{A}}
      }{\ensuremath{\Gamma\vdash_{\mathsf{SD}}{h}_{\Conid{M}}\mathbin{:}\Conid{A}\hspace{0.1em}!\hspace{0.1em}\langle\Conid{F}\rangle\Rightarrow \Conid{M}\;\Conid{A}\hspace{0.1em}!\hspace{0.1em}\langle\Conid{E}\rangle}}
      \\ \ensuremath{\Gamma\vdash_{\mathsf{SD}}\Varid{c}\mathbin{:}\Conid{A}\hspace{0.1em}!\hspace{0.1em}\langle\Conid{F}\rangle}
    }
    { \ensuremath{\Gamma\vdash_{\mathsf{SD}}{\mathbf{with}}\;{h}_{\Conid{M}}\;{\mathbf{handle}}\;\Varid{c}\mathbin{:}\Conid{M}\;\Conid{A}\hspace{0.1em}!\hspace{0.1em}\langle\Conid{E}\rangle} }

    \inferrule*[right=T-Let]
    { \inferrule*[right=T-Handler]{
        \ensuremath{\Gamma,\alpha\vdash_{\mathsf{SD}}\Varid{oprs}\mathbin{:}\Conid{M}\;\alpha\hspace{0.1em}!\hspace{0.1em}\langle\Conid{E}\rangle} \\
      }{\ensuremath{\Gamma\vdash_{\mathsf{SD}}\Varid{h}\mathbin{:}\forall\;\alpha\, .\,\alpha\hspace{0.1em}!\hspace{0.1em}\langle\Conid{F}\rangle\Rightarrow \Conid{M}\;\alpha\hspace{0.1em}!\hspace{0.1em}\langle\Conid{E}\rangle}}
      \ensuremath{\Gamma,\Varid{x}\mathbin{:}\forall\;\alpha\, .\,\alpha\hspace{0.1em}!\hspace{0.1em}\langle\Conid{F}\rangle\Rightarrow \Conid{M}\;\alpha\hspace{0.1em}!\hspace{0.1em}\langle\Conid{E}\rangle\vdash_{\mathsf{SD}}\Varid{c}\mathbin{:} \underline{C} }
    }
    { \ensuremath{\Gamma\vdash_{\mathsf{SD}}\mathbf{let}\;\Varid{x}\mathrel{=}\Varid{h}\;\mathbf{in}\;\Varid{c}\mathbin{:} \underline{C} } }

    \inferrule*[right=SD-Let]
    { \inferrule*[right=SD-Handler]{
        \ensuremath{\Gamma,\alpha,\beta\;\mathbin{\vdash_{\Conid{M}}}\;\Varid{oprs}\mathbin{:}\Conid{M}\;\beta\hspace{0.1em}!\hspace{0.1em}\langle\Conid{E}\rangle} \\\\
        \ensuremath{\Gamma,\alpha\vdash_{\mathsf{SD}}\alpha} \\
      }{\ensuremath{\Gamma\vdash_{\mathsf{SD}}{h}_{\Conid{M}}\mathbin{:}\alpha\hspace{0.1em}!\hspace{0.1em}\langle\Conid{F}\rangle\Rightarrow \Conid{M}\;\alpha\hspace{0.1em}!\hspace{0.1em}\langle\Conid{E}\rangle}}
      \ensuremath{\Gamma,\Varid{x}\mathbin{:}\forall\;\alpha\, .\,\alpha\hspace{0.1em}!\hspace{0.1em}\langle\Conid{F}\rangle\Rightarrow \Conid{M}\;\alpha\hspace{0.1em}!\hspace{0.1em}\langle\Conid{E}\rangle\vdash_{\mathsf{SD}}\Varid{c}\mathbin{:} \underline{C} }
    }
    { \ensuremath{\Gamma\vdash_{\mathsf{SD}}\mathbf{let}\;\Varid{x}\mathrel{=}{h}_{\Conid{M}}\;\mathbf{in}\;\Varid{c}\mathbin{:} \underline{C} } }
    \end{mathpar}
  \caption{Handler generalisation and instantiation}
  \label{fig:handler-gen-and-inst}
\end{figure}
\clearpage
\section{Metatheory}
\label{sec:metatheory-appendix}

\subsection{Lemmas}

\begin{lem}[Canonical forms]
  \label{lem:canonicalforms}
  \begin{itemize}
  \item[]
  \item If \ensuremath{ \cdot\vdash_{\mathsf{SD}}\Varid{v}\mathbin{:}\Conid{A}\to  \underline{C} } then \ensuremath{\Varid{v}} is of shape \ensuremath{\boldsymbol{\lambda}\Varid{x}\, .\,\Varid{c}}.
  \item If \ensuremath{ \cdot\vdash_{\mathsf{SD}}\Varid{v}\mathbin{:} \underline{C} \Rightarrow  \underline{D} } then \ensuremath{\Varid{v}} is of shape \ensuremath{\Varid{h}}.
  \end{itemize}
\end{lem}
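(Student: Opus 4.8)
The plan is to prove both items by inversion on the syntax-directed value typing derivation of \Cref{fig:syntax-directed-value-typing}, exploiting that this system is \emph{syntax-directed}: the outermost shape of a value uniquely determines which rule can conclude its typing. Concretely, I would case-split on the five value forms $()$, $(v_1,v_2)$, $x$, $\boldsymbol{\lambda} x.c$, and $h$, and for each identify the unique applicable rule together with the head constructor of the type it assigns.

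First I would eliminate the variable case in both items. The only rule that types a variable is \textsc{SD-Var}, whose premise $(x:\sigma)\in\Gamma$ cannot hold when $\Gamma=\cdot$. Hence under the empty context no value is a variable, and \textsc{SD-Var} never applies.

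For the first item, the remaining non-abstraction forms are ruled out by the head constructor of their assigned types: \textsc{SD-Unit} concludes with type $()$, \textsc{SD-Pair} with a pair type $(A,B)$, and \textsc{SD-Handler} with a handler type headed by $\Rightarrow$. None of these is an arrow type, and---since in the syntax-directed system the equivalence rule \textsc{T-EqV} has been inlined only at elimination sites and is absent from these introduction rules---each such rule assigns its type \emph{exactly}, with no interposed equivalence step. Thus a judgment $\cdot\vdash_{\mathsf{SD}} v : A\to\underline{C}$ cannot be concluded by any of them, leaving \textsc{SD-Abs} as the only possibility and forcing $v=\boldsymbol{\lambda} x.c$. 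The second item is entirely symmetric: $()$, pairs, and abstractions receive the unit, pair, and arrow types respectively, none headed by $\Rightarrow$, so \textsc{SD-Handler} is the only applicable rule and $v=h$.

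I expect no genuine obstacle; this is a routine canonical-forms argument. The only point worth stating explicitly is that the typing of these four value forms fixes the head type constructor, so that $()$, $(A,B)$, $A\to\underline{C}$, and $\underline{C}\Rightarrow\underline{D}$ are pairwise distinguishable at the top level and cannot be identified. Because the relevant introduction rules assign their types directly, without an equivalence premise, this distinctness is immediate and no reasoning about the equivalence relation of \Cref{fig:typed-equiv} is required.
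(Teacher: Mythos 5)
Your proof is correct and is exactly the routine inversion argument the paper intends: the paper states this lemma without proof, and your case analysis (ruling out \textsc{SD-Var} via the empty context and the remaining rules via the head constructor of their syntactically fixed conclusion types) is the standard way to discharge it. Your explicit observation that the syntax-directed introduction rules carry no interposed equivalence step, so no reasoning about $\equiv$ is needed, is the one point worth making and you make it.
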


\begin{lem}[Generalisation-equivalence]
  \label{lem:geneq}
  If \ensuremath{\sigma_{1}\leq \Conid{A}_{1}} and \ensuremath{\sigma_{1}\;\equiv\;\sigma_{2}}, then there exists a \ensuremath{\Conid{A}_{2}} such
  that \ensuremath{\Conid{A}_{1}\;\equiv\;\Conid{A}_{2}} and \ensuremath{\sigma_{2}\leq \Conid{A}_{2}}.
\end{lem}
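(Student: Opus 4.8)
The plan is to proceed by induction on the derivation of $\sigma_1 \leq A_1$, following the three instantiation rules of \Cref{fig:sigma-instantiation}. Before starting I would isolate two auxiliary facts about the equivalence relation of \Cref{fig:typed-equiv}. The first is an \emph{inversion} (structure-preservation) lemma: because $\equiv$ on schemes is the congruence closure of \textsc{Q-AppAbs} and \textsc{R-Swap}, both of which rewrite strictly beneath any leading quantifier prefix, the outermost shape of a scheme is invariant under $\equiv$. Concretely: if a value type $A$ satisfies $A \equiv \sigma_2$ then $\sigma_2$ is again a value type $B$ with $A \equiv B$; if $\forall \alpha.\sigma_1' \equiv \sigma_2$ then $\sigma_2 = \forall \alpha.\sigma_2'$ with $\sigma_1' \equiv \sigma_2'$; and symmetrically for a leading $\forall \mu$. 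The second is a \emph{substitution} lemma: $\equiv$ is stable under instantiation, i.e. $\sigma_1' \equiv \sigma_2'$ implies $[B/\alpha]\sigma_1' \equiv [B/\alpha]\sigma_2'$ and $[E/\mu]\sigma_1' \equiv [E/\mu]\sigma_2'$. This follows because $\equiv$ is a congruence and substitution commutes with every type and row former (capture-avoidance being routine).

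With these in hand the induction is mechanical. In the base case the derivation ends in \textsc{$\sigma$-Inst-Base}, so $\sigma_1 = A_1$ is a value type and $A_1 \equiv \sigma_2$; by the inversion lemma $\sigma_2$ is a value type $A_2$ with $A_1 \equiv A_2$, and \textsc{$\sigma$-Inst-Base} gives $A_2 \leq A_2$, i.e. $\sigma_2 \leq A_2$. For the case of \textsc{$\sigma$-Inst-$\alpha$}, we have $\sigma_1 = \forall \alpha.\sigma_1'$ with premise $[B/\alpha]\sigma_1' \leq A_1$. By inversion $\sigma_2 = \forall \alpha.\sigma_2'$ with $\sigma_1' \equiv \sigma_2'$, and the substitution lemma yields $[B/\alpha]\sigma_1' \equiv [B/\alpha]\sigma_2'$. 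Applying the induction hypothesis to the premise and this equivalence produces some $A_2$ with $A_1 \equiv A_2$ and $[B/\alpha]\sigma_2' \leq A_2$, and one further application of \textsc{$\sigma$-Inst-$\alpha$} (reusing the same witness $B$) gives $\forall \alpha.\sigma_2' \leq A_2$, i.e. $\sigma_2 \leq A_2$. The case of \textsc{$\sigma$-Inst-$\mu$} is identical, replacing the type substitution $[B/\alpha]$ by the row substitution $[E/\mu]$ and invoking the row half of the substitution lemma.

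I expect the only genuine obstacle to be the inversion lemma, since $\equiv$ is presented with explicit symmetry and transitivity (\textsc{Q-Symm}, \textsc{Q-Trans}), so one cannot simply read off a last rule applied. The clean way around this is to exploit the characterization already noted in the main text, that $\equiv$ is the congruence closure of the two rewrites \textsc{Q-AppAbs} and \textsc{R-Swap}: neither can create, remove, or permute a leading $\forall$-binder, as both act only inside value types and rows. An induction on the equivalence derivation then shows that the length and binders of the quantifier prefix are preserved and that the bodies remain related by $\equiv$, which is exactly the inversion statement. Everything else is bookkeeping, and notably no higher-kinded or unification reasoning is required here.
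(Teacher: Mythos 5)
Your proof is correct. Note, however, that the paper itself states this lemma \emph{without} proof (it is one of several unproved auxiliary lemmas in \Cref{sec:metatheory-appendix}, invoked only inside the proof of \Cref{lem:termsubstitutioneliminationgeneq}), so there is no authorial argument to compare against; yours is the natural one. The induction on the derivation of \ensuremath{\sigma_{1}\leq \Conid{A}_{1}} goes through exactly as you describe, and you correctly identify the only non-trivial ingredient: the inversion (shape-preservation) property of \ensuremath{\equiv} on schemes, which cannot be read off the last rule because of \textsc{Q-Symm} and \textsc{Q-Trans}, but does follow by induction on the equivalence derivation since no rule --- in particular neither \textsc{Q-AppAbs} nor \textsc{R-Swap}, which act only inside value types and rows --- can create, delete, or reorder a leading quantifier, and \textsc{Q-AllTy}/\textsc{Q-AllRow} relate schemes with identical prefixes (up to the usual \ensuremath{\alpha}-renaming). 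The stability of \ensuremath{\equiv} under type and row substitution, needed to transport the premise of \textsc{$\sigma$-Inst-$\alpha$}/\textsc{$\sigma$-Inst-$\mu$} across the equivalence before applying the induction hypothesis, is likewise routine.
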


\begin{lem}[Generalisation-instantiation]
  \label{lem:geninst}
  If \ensuremath{\Gamma,\overline{\alpha},\overline{\mu}\vdash_{\mathsf{SD}}\Varid{v}\mathbin{:}\Conid{A}} and \ensuremath{\overline{\forall\;\alpha}\;\overline{\forall\;\mu}\, .\,\Conid{A}\leq \Conid{B}}, then \ensuremath{\Gamma\vdash_{\mathsf{SD}}\Varid{v}\mathbin{:}\Conid{B}}.
\end{lem}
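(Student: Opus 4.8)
The plan is to reduce the lemma to a type-and-row substitution property for the syntax-directed system, and then drive an induction on the $\sigma$-instantiation derivation. The key observation is that $\overline{\forall\alpha}\,\overline{\forall\mu}.A \leq B$ witnesses, by the rules in \Cref{fig:sigma-instantiation}, a sequence of substitutions: each application of $\sigma$-Inst-$\alpha$ picks a witness type and each application of $\sigma$-Inst-$\mu$ picks a witness row, peeling the quantifiers outermost-first until the base rule $\sigma$-Inst-Base forces the fully-instantiated monotype to coincide with $B$. First I would record this as an auxiliary characterisation: $\overline{\forall\alpha}\,\overline{\forall\mu}.A \leq B$ holds exactly when $B = \theta A$ for some substitution $\theta = [\overline{C}/\overline{\alpha},\overline{E}/\overline{\mu}]$ assigning a well-scoped type to each $\alpha$ and a row to each $\mu$, proved by a routine induction on the instantiation derivation.

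The workhorse is a substitution lemma, proved by induction on the typing derivation: if $\Gamma,\alpha \vdash_{\mathsf{SD}} v : A$ and $\Gamma \vdash_{\mathsf{SD}} C$, then $\Gamma \vdash_{\mathsf{SD}} v : [C/\alpha]A$, and dually if $\Gamma,\mu \vdash_{\mathsf{SD}} v : A$ and $\Gamma \vdash E$, then $\Gamma \vdash_{\mathsf{SD}} v : [E/\mu]A$, with the matching statements for computations and for the three clause judgements proved simultaneously. Because $\alpha$ (resp.\ $\mu$) is introduced after $\Gamma$ and $\Gamma$ is well-scoped without it, substitution acts as the identity on $\Gamma$, so the context is left essentially unchanged. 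With these in hand the lemma follows by induction on the instantiation derivation: the base case is immediate from $B = A$; in the $\sigma$-Inst-$\alpha$ case I would apply the type-substitution lemma to eliminate the outermost variable from $\Gamma,\overline{\alpha},\overline{\mu}\vdash_{\mathsf{SD}} v : A$, push the substitution under the remaining binders (up to $\alpha$-renaming to avoid capture), and invoke the induction hypothesis; the $\sigma$-Inst-$\mu$ case is symmetric.

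The main obstacle is the substitution lemma itself, precisely because the syntax-directed system has inlined the equivalence and instantiation machinery. Three points need care. First, the side conditions of the form $A_1 \equiv A_2$ and $E_1 \equiv E_2$ appearing in SD-App, SD-Do, SD-Hand, SD-Op and SD-Sc must be preserved under substitution; this requires a separate lemma that $\equiv$ and $\equiv_{\langle\rangle}$ are stable under type/row substitution, which I would obtain from their congruence-closure presentation together with the observation that substitution commutes with the two generating rules Q-AppAbs and R-Swap. Second, the variable rule SD-Var carries a $\sigma \leq A$ premise, so I need that instantiation is preserved under substitution, i.e.\ $\sigma \leq A$ implies $\theta\sigma \leq \theta A$. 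Third, the handler, forwarding, and scoped-clause rules (SD-Handler, SD-Fwd, SD-OprSc) bind fresh variables under freshness conditions such as $\alpha \notin \Gamma$ and $\beta \notin \Gamma$; substitution must avoid capturing these, which I would handle by choosing the bound variables outside the range of $\theta$. I also expect to thread a well-scopedness substitution lemma ($\Gamma,\alpha \vdash_{\mathsf{SD}} A$ and $\Gamma\vdash_{\mathsf{SD}} C$ yield $\Gamma \vdash_{\mathsf{SD}} [C/\alpha]A$) to discharge the $\Gamma \vdash_{\mathsf{SD}} A$ premises and to ensure the witnesses $\overline{C},\overline{E}$ produced by the instantiation derivation are well-scoped in $\Gamma$, so that the resulting $B$ is itself a $\Gamma$-type.
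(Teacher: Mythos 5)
The paper states this lemma without proof, so there is nothing to compare against directly; judged on its own, your argument is sound and follows what is clearly the intended route: read the derivation of $\overline{\forall\;\alpha}\;\overline{\forall\;\mu}\, .\,A\leq B$ as exhibiting a substitution $\theta$ with $B=\theta A$, and then discharge it by iterating a type/row substitution lemma --- which is exactly the paper's \Cref{lem:typesubstitutionelimination}, modulo the fact that the paper states that lemma only for computations and you (correctly) need the value form as well, proved mutually. Your identification of the delicate points in that substitution lemma (stability of $\equiv$ and $\equiv_{\langle\rangle}$ under substitution, stability of $\sigma\leq A$ for the \textsc{SD-Var} premise, and capture-avoidance for the freshness side conditions in \textsc{SD-Handler}/\textsc{SD-OprSc}/\textsc{SD-Fwd}) is accurate. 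The one genuine wrinkle is the well-scopedness of the witnesses: the $\sigma$-instantiation rules place no scoping constraint on the chosen types and rows, so to supply the $\Gamma\vdash_{\mathsf{SD}}C$ premise of the substitution lemma you must either assume $\Gamma\vdash_{\mathsf{SD}}B$ (which holds at the lemma's only use site, inside the \textsc{SD-Var} case of \Cref{lem:termsubstitutioneliminationgeneq}) and repair witnesses for variables not occurring in $A$, or strengthen the lemma statement accordingly; you flag this but it deserves to be made explicit rather than "threaded".
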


\begin{lem}[Preservation of types under term substitution]
  \label{lem:termsubstitutioneliminationgeneq}
  Given \ensuremath{\Gamma_{1},\overline{\alpha},\overline{\mu}\vdash_{\mathsf{SD}}\Varid{v}\mathbin{:}\Conid{A}_{1}} and \ensuremath{\Conid{A}_{1}\;\equiv\;\Conid{A}_{2}} we have that:
  \begin{itemize}
  \item If \ensuremath{\Gamma_{1},\Varid{x}\mathbin{:}\overline{\forall\;\alpha}\;\overline{\forall\;\mu}\, .\,\Conid{A}_{2},\Gamma_{2}\vdash_{\mathsf{SD}}\Varid{c}\mathbin{:} \underline{C} _{1}}, then there exists a \ensuremath{ \underline{C} _{2}} such that \ensuremath{ \underline{C} _{1}\;\equiv\; \underline{C} _{2}} and \ensuremath{\Gamma_{1},\Gamma_{2}\vdash_{\mathsf{SD}}[\mskip1.5mu \Varid{v}\mathbin{/}\Varid{x}\mskip1.5mu]\;\Varid{c}\mathbin{:} \underline{C} _{2}}.
  \item If \ensuremath{\Gamma_{1},\Varid{x}\mathbin{:}\overline{\forall\;\alpha}\;\overline{\forall\;\mu}\, .\,\Conid{A}_{2},\Gamma_{2}\vdash_{\mathsf{SD}}\Varid{v}\mathbin{:}\Conid{B}_{1}},
    then there exists a \ensuremath{\Conid{B}_{2}} such that \ensuremath{\Conid{B}_{1}\;\equiv\;\Conid{B}_{2}} and \ensuremath{\Gamma_{1},\Gamma_{2}\vdash_{\mathsf{SD}}[\mskip1.5mu \Varid{v}\mathbin{/}\Varid{x}\mskip1.5mu]\;\Varid{v}\mathbin{:}\Conid{B}_{2}}.
  \end{itemize}
\end{lem}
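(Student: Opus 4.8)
The plan is to prove both statements simultaneously by mutual induction on the syntax-directed typing derivations of the computation $c$ and of the value (call it $w$, to separate it from the substituted value $v$) in the two bullets. The statement is deliberately weakened to hold only up to type equivalence, so throughout the induction I would carry, at every node, the equivalence witness relating the freshly derived type to the original one and re-establish whatever equivalence side-conditions the syntax-directed rules demand. Before starting I would record the standard structural facts I rely on, namely weakening and exchange for both term and type/row variables; these are routine for $\vdash_{\mathsf{SD}}$ and are needed to move $v$'s derivation from $\Gamma_1$ into $\Gamma_1, \Gamma_2$ and to enter contexts extended by fresh type variables.

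For the congruence cases (\textsc{SD-Unit}, \textsc{SD-Pair}, \textsc{SD-Abs}, \textsc{SD-App}, \textsc{SD-Do}, \textsc{SD-Ret}, \textsc{SD-Hand}, \textsc{SD-Op}, \textsc{SD-Sc}, and the clause-typing rules) the argument is uniform: push the substitution into each subterm, invoke the induction hypothesis to obtain subterm typings whose types are equivalent to the originals, and reassemble with the same rule. For the rules that impose equivalence side-conditions between premises, for instance $A_1 \equiv A_2$ in \textsc{SD-App}, $E_1 \equiv E_2$ in \textsc{SD-Do} and \textsc{SD-Sc}, or $\underline{C}_1 \equiv \underline{C}_2$ and $\underline{D}_1 \equiv \underline{D}_2$ in \textsc{SD-Hand}, I would combine the equivalences returned by the induction hypotheses with the original side-conditions using symmetry and transitivity (\textsc{Q-Symm}, \textsc{Q-Trans}, \textsc{R-Trans}) together with the congruence rules (\textsc{Q-Fun}, \textsc{Q-Comp}, and the like); this re-establishes the side-condition and simultaneously produces the required output equivalence. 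In binding rules I would silently $\alpha$-rename the bound term variable away from $x$ and from $\mathrm{fv}(v)$ so that the substitution commutes with the binder and the new binding lands in $\Gamma_2$.

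The essential case is \textsc{SD-Var} applied to the substituted variable itself. Here $(x : \sigma) \in \Gamma$ with $\sigma = \overline{\forall \alpha}\,\overline{\forall \mu}.A_2$ and $\sigma \le B_1$, and $[v/x]x = v$. From the hypothesis $\Gamma_1, \overline{\alpha}, \overline{\mu} \vdash_{\mathsf{SD}} v : A_1$ and $A_1 \equiv A_2$ I obtain $\overline{\forall \alpha}\,\overline{\forall \mu}.A_1 \equiv \sigma$; \Cref{lem:geneq} then turns $\sigma \le B_1$ into some $B_2$ with $B_1 \equiv B_2$ and $\overline{\forall \alpha}\,\overline{\forall \mu}.A_1 \le B_2$, and \Cref{lem:geninst} applied to $\Gamma_1, \overline{\alpha}, \overline{\mu} \vdash_{\mathsf{SD}} v : A_1$ yields $\Gamma_1 \vdash_{\mathsf{SD}} v : B_2$, which weakening lifts to $\Gamma_1, \Gamma_2 \vdash_{\mathsf{SD}} v : B_2$. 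For a variable $y \neq x$, substitution is the identity on the term and $x$ plays no role in its instantiation, so the original derivation (with $x$ removed from the context) goes through with $B_2 = B_1$.

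The main obstacle I anticipate is the interaction of polymorphic generalization with substitution in \textsc{SD-Let} and in the handler rules (\textsc{SD-Handler}, \textsc{SD-OprSc}, \textsc{SD-Fwd}), all of which extend the context with freshly generalized type and row variables under side-conditions such as $\overline{\alpha} \notin \Gamma$ or $\beta \notin \Gamma$. Here I must choose the fresh variables to avoid $\overline{\alpha}, \overline{\mu}$, the free variables of $v$, and those of $A_1$ and $A_2$, and then verify that substituting $v$ commutes with the generalization step, i.e.\ that the scheme assigned to the let-bound variable is unaffected by $[v/x]$ because $v$ has none of the generalized variables free. Getting this freshness and commutation bookkeeping exactly right, while simultaneously threading the equivalence witnesses through the generalization, is the delicate part; the remaining cases are mechanical once the structural lemmas and the two generalization lemmas are in place.
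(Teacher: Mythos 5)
Your proposal is correct and follows essentially the same route as the paper's proof: mutual induction on the syntax-directed typing derivations, with \textsc{SD-Var} at the substituted variable as the only interesting case, resolved by combining \Cref{lem:geneq} and \Cref{lem:geninst} exactly as you describe. The additional detail you give on weakening, freshness, and threading equivalence witnesses through the congruence cases is bookkeeping the paper leaves implicit, not a deviation in strategy.
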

\begin{proof}
  By mutual induction on the typing derivations. The only interesting case,
  \textsc{SD-Var}, requires us to show that, given \ensuremath{\Gamma_{1},\Varid{x}\mathbin{:}\overline{\forall\;\alpha}\;\overline{\forall\;\mu}\, .\,\Conid{A}_{2},\Gamma_{2}\vdash_{\mathsf{SD}}\Varid{y}\mathbin{:}\Conid{B}_{1}}, there exists a \ensuremath{\Conid{B}_{2}} such that
  \ensuremath{\Conid{B}_{1}\;\equiv\;\Conid{B}_{2}} and \ensuremath{\Gamma_{1},\Gamma_{2}\vdash_{\mathsf{SD}}[\mskip1.5mu \Varid{v}\mathbin{/}\Varid{x}\mskip1.5mu]\;\Varid{y}\mathbin{:}\Conid{B}_{2}}. If \ensuremath{\Varid{x}\;\not =\;\Varid{y}}, it is
  trivial. If \ensuremath{\Varid{x}\mathrel{=}\Varid{y}}, then \ensuremath{\overline{\forall\;\alpha}\;\overline{\forall\;\mu}\, .\,\Conid{A}_{2}\leq \Conid{B}_{1}}, which
  means by \Cref{lem:geneq} there exists a \ensuremath{\Conid{B}_{2}} such that \ensuremath{\Conid{B}_{1}\;\equiv\;\Conid{B}_{2}} and
  \ensuremath{\overline{\forall\;\alpha}\;\overline{\forall\;\mu}\, .\,\Conid{A}_{1}\leq \Conid{B}_{2}}, which means the result follows
  from \Cref{lem:geninst}.
\end{proof}

\begin{lem}[Preservation of types under type substitution]
  \label{lem:typesubstitutionelimination}
  If \ensuremath{\Gamma_{1},\alpha,\Gamma_{2}\vdash_{\mathsf{SD}}\Varid{c}\mathbin{:} \underline{C} } and \ensuremath{\Gamma_{1}\vdash_{\mathsf{SD}}\Conid{B}}, then \ensuremath{\Gamma_{1},[\mskip1.5mu \Conid{B}\mathbin{/}\alpha\mskip1.5mu]\;\Gamma_{2}\vdash_{\mathsf{SD}}\Varid{c}\mathbin{:}[\mskip1.5mu \Conid{B}\mathbin{/}\alpha\mskip1.5mu]\; \underline{C} }.
\end{lem}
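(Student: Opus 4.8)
The plan is to prove the statement by a single simultaneous induction on the syntax-directed derivation, strengthened to cover all judgment forms at once: value typing $\Gamma \vdash_{\mathsf{SD}} v : A$, computation typing $\Gamma \vdash_{\mathsf{SD}} c : \underline{C}$, and the three clause judgments $\Gamma \vdash_M \cdots$ of \Cref{fig:syntax-directed-handler-typing}. For each form I claim that substituting a well-scoped $B$ for $\alpha$ in both the context tail $\Gamma_2$ and the conclusion type yields another valid derivation; the computation case stated in the lemma is then the relevant instance. Before the induction I would establish three routine auxiliary facts about the type substitution $[B/\alpha]$: (i) it preserves type and row equivalence, i.e.\ $A \equiv A'$ implies $[B/\alpha]A \equiv [B/\alpha]A'$ and likewise for $\equiv_{\langle\rangle}$, proved by induction on the equivalence derivations of \Cref{fig:typed-equiv} and \Cref{fig:row-equiv} (the only non-congruence rules, \textsc{Q-AppAbs} and \textsc{R-Swap}, are manifestly stable under substitution, since labels carry no type information); (ii) it preserves well-scopedness, transporting $\Gamma_1,\alpha,\Gamma_2 \vdash_{\mathsf{SD}} A$ to $\Gamma_1, [B/\alpha]\Gamma_2 \vdash_{\mathsf{SD}} [B/\alpha]A$ (and similarly for $M$, $E$, $\underline{C}$), by induction on \Cref{fig:typed-well-scopedness}; and (iii) it commutes with $\sigma$-instantiation, i.e.\ $\sigma \leq A$ implies $[B/\alpha]\sigma \leq [B/\alpha]A$, provided the variables bound in $\sigma$ are chosen disjoint from $\alpha$ and from the free variables of $B$. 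I would also observe that operation signatures in $\Sigma$ are closed, so $[B/\alpha]$ fixes every $\ell : A_1 \rightarrowtriangle A_2 \in \Sigma$, and that label membership $\ell \in E$, which unfolds to a row equivalence, is preserved by (i).

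The inductive cases then split cleanly. The purely structural rules (\textsc{SD-Unit}, \textsc{SD-Pair}, \textsc{SD-Abs}, \textsc{SD-App}, \textsc{SD-Do}, \textsc{SD-Ret}, \textsc{SD-Hand}, \textsc{SD-Op}, \textsc{SD-Sc}, \textsc{SD-Empty}, \textsc{SD-OprOp}) are handled by pushing $[B/\alpha]$ through the conclusion, invoking the induction hypothesis on each premise, and reassembling the rule; every equivalence side condition (such as $A_1 \equiv A_2$ or $E_1 \equiv E_2$) is re-derived with fact (i), every well-scopedness premise with fact (ii), and every membership premise $\ell \in E$ with (i) together with the closedness of $\Sigma$. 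The variable case \textsc{SD-Var} is the first genuinely interesting one: from $(x : \sigma) \in (\Gamma_1,\alpha,\Gamma_2)$ and $\sigma \leq A$ we obtain $[B/\alpha]\sigma \leq [B/\alpha]A$ by (iii), where $[B/\alpha]\sigma$ is exactly the scheme recorded for $x$ in the substituted context (for $x$ declared in $\Gamma_1$ this is $\sigma$ unchanged, as $\alpha$ cannot occur free there), and the accompanying well-scopedness premise follows from (ii).

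The main difficulty, and where essentially all the care goes, is the family of rules that introduce fresh type or row variables and must let the outer substitution commute under their binders: \textsc{SD-Let} with its generalization over $\overline{\forall \alpha},\overline{\forall \mu}$, \textsc{SD-Handler} and \textsc{SD-OprSc} with their fresh carrier or result variable, and especially \textsc{SD-Fwd} with four fresh variables $\alpha,\beta,\gamma,\delta$ and the nested polymorphic type assigned to $f$. By the Barendregt convention I may assume all these locally bound variables are distinct from $\alpha$ and avoid the free variables of $B$, so that $[B/\alpha]$ passes through the binder without capture; the induction hypothesis then applies to the extended context, and the conclusion is rebuilt using (i) and (ii) for the attendant equivalence and well-scopedness premises. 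The \textsc{SD-Let} case additionally needs substitution to commute with generalization, i.e.\ $[B/\alpha](\overline{\forall \alpha}.\,\overline{\forall \mu}.\,A) = \overline{\forall \alpha}.\,\overline{\forall \mu}.\,[B/\alpha]A$ under the freshness assumption, which is the type-substitution analogue of the reasoning underlying \Cref{lem:geninst}. I expect this bookkeeping around freshness, polymorphic generalization in \textsc{SD-Let}, and the quantified type of $f$ in \textsc{SD-Fwd} to be the only non-mechanical part of the argument; the presence of type operators $M$ and type application poses no separate obstacle, since all $\beta$-like behaviour is absorbed into the equivalence relation and is therefore handled uniformly by fact (i).
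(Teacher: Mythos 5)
The paper states \Cref{lem:typesubstitutionelimination} without any proof, so there is no official argument to compare against; your proposal supplies the standard one, and its overall shape is sound. A single induction over all the syntax-directed judgments, supported by the three auxiliary facts you list (stability of $\equiv$ and $\equiv_{\langle\rangle}$ under $[B/\alpha]$, preservation of well-scopedness, and commutation with $\sigma$-instantiation), together with the observation that the signatures in $\Sigma$ are closed, matches the style of the paper's proof of the neighbouring term-substitution lemma (\Cref{lem:termsubstitutioneliminationgeneq}), and your identification of the binder-introducing rules (\textsc{SD-Let}, \textsc{SD-Handler}, \textsc{SD-OprSc}, \textsc{SD-Fwd}) as the only delicate spots is accurate. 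One point you gloss over: in the syntax-directed system handlers carry a type-operator annotation ($\mathbf{handler}_M$), so the term $c$ is \emph{not} invariant under type substitution. In your \textsc{SD-Handler} case, reassembling the rule with the original annotation $M$ yields a handler type whose carrier is $M\,([B/\alpha]A)$ rather than the required $([B/\alpha]M)\,([B/\alpha]A)$, and these are not equivalent when $\alpha$ occurs free in $M$. You should either read the conclusion as $\Gamma_1,[B/\alpha]\Gamma_2 \vdash_{\mathsf{SD}} [B/\alpha]c : [B/\alpha]\underline{C}$, with the substitution acting on type annotations inside terms, or argue separately that the lemma is only invoked in the subject-reduction proof in situations where the annotation is unaffected. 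This imprecision is arguably inherited from the paper's own statement of the lemma rather than introduced by you, but a complete proof must resolve it.
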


\begin{lem}[Unused binding insertion]
  \label{lem:termweakening}
  If \ensuremath{\Gamma_{1},\Gamma_{2}\vdash_{\mathsf{SD}}\Varid{c}\mathbin{:} \underline{C} } and \ensuremath{\Varid{x}\;\notin\;\Varid{c}} then \ensuremath{\Gamma_{1},\Varid{x}\mathbin{:}\Conid{A},\Gamma_{2}\vdash_{\mathsf{SD}}\Varid{c}\mathbin{:} \underline{C} }.
\end{lem}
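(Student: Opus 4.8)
The plan is to prove this by structural induction on the typing derivation. Because the syntax-directed system is mutually recursive --- value typing invokes computation typing through \textsc{SD-Abs} and \textsc{SD-Handler}, while computation typing invokes value typing through \textsc{SD-App}, \textsc{SD-Ret}, and \textsc{SD-Hand} --- the statement as written is not directly amenable to induction. First I would strengthen it to a simultaneous claim covering all five judgment forms: if $\Gamma_1,\Gamma_2 \vdash_{\mathsf{SD}} v : A$ (respectively $\vdash_{\mathsf{SD}} c : \underline{C}$, respectively any of the three handler-clause judgments) and $x \notin v$ (respectively $x \notin c$), then the same judgment holds with $\Gamma_1, x{:}A', \Gamma_2$ in place of $\Gamma_1,\Gamma_2$, for an arbitrary type $A'$. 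The proof then proceeds by a single mutual induction on the height of the derivation.

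The base cases are where the inserted binding actually matters. The only delicate rule is \textsc{SD-Var}: from $(y : \sigma) \in \Gamma_1,\Gamma_2$ I must conclude $(y : \sigma) \in \Gamma_1, x{:}A', \Gamma_2$. This holds precisely because $x \notin y$ forces $x \neq y$, so inserting $x{:}A'$ leaves the lookup of $y$ untouched, and the instantiation premise $\sigma \leq A$ is unaffected. Several rules (\textsc{SD-Var}, \textsc{SD-Handler}, \textsc{SD-Fwd}) additionally carry a well-scopedness premise of the form $\Gamma_1,\Gamma_2 \vdash_{\mathsf{SD}} A$. I would discharge these with a trivial auxiliary weakening lemma for well-scopedness, observing that the well-scopedness rules inspect only the type- and row-variable bindings of the context and never term bindings, so inserting $x{:}A'$ preserves well-scopedness verbatim.

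For the inductive cases I would, rule by rule, apply the induction hypothesis to every premise and reassemble the derivation. The recurring subtlety is that many rules extend the context in their premises: \textsc{SD-Abs} and \textsc{SD-Do} add a term binding, while \textsc{SD-Let}, \textsc{SD-Handler}, \textsc{SD-OprSc}, and \textsc{SD-Fwd} add type and/or row variables together with freshness side-conditions such as $\alpha \notin \Gamma$. Here I would invoke the induction hypothesis with the freshly introduced bindings appended to $\Gamma_2$, so that the inserted $x{:}A'$ remains at its original split point; since inserting a term binding changes neither the set of type/row variables present nor the free-variable condition on subterms (after the usual alpha-renaming so that every bound name avoids $x$), each freshness premise is preserved and the $A'$-independent structure of the derivation goes through unchanged.

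The main obstacle I anticipate is not conceptual but bookkeeping: keeping the mutual induction and the context-splitting invariant synchronized across the polymorphic rules, and confirming that each freshness condition ($\alpha,\beta,\gamma,\delta \notin \Gamma$ in \textsc{SD-Fwd}, $\alpha \notin \Gamma$ in \textsc{SD-Handler}, $\beta \notin \Gamma$ in \textsc{SD-OprSc}, and the analogous conditions in \textsc{SD-Let}) is genuinely insensitive to the inserted term binding. This amounts to routine capture-avoidance reasoning under the variable convention, but it should be stated carefully enough that this lemma and the companion substitution lemmas (\Cref{lem:termsubstitutioneliminationgeneq} and \Cref{lem:typesubstitutionelimination}) remain mutually consistent when later combined in the Subject Reduction proof.
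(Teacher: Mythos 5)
The paper states this lemma without proof, treating it as routine; your proposal --- a mutual induction over all the syntax-directed judgments, with \textsc{SD-Var} as the only case where the hypothesis $x \notin c$ does real work, and with the freshness and well-scopedness side conditions checked to be insensitive to an inserted term binding --- is the standard argument and is correct. It matches in structure the one proof the paper does spell out for the companion substitution lemma (\Cref{lem:termsubstitutioneliminationgeneq}), which is likewise a mutual induction whose only interesting case is the variable rule.
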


\begin{lem}[Handlers are polymorphic]
  \label{lem:handlersarepolymorphic}
  If \ensuremath{\Gamma\vdash_{\mathsf{SD}}\Varid{h}\mathbin{:}\Conid{A}\hspace{0.1em}!\hspace{0.1em}\langle\Conid{F}\rangle\Rightarrow \Conid{M}\;\Conid{A}\hspace{0.1em}!\hspace{0.1em}\langle\Conid{E}\rangle} and \ensuremath{\Gamma\vdash_{\mathsf{SD}}\Conid{B}}, then \ensuremath{\Gamma\vdash_{\mathsf{SD}}\Varid{h}\mathbin{:}\Conid{B}\hspace{0.1em}!\hspace{0.1em}\langle\Conid{F}\rangle\Rightarrow \Conid{M}\;\Conid{B}\hspace{0.1em}!\hspace{0.1em}\langle\Conid{E}\rangle}.
\end{lem}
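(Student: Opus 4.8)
The plan is to prove this lemma by a single inversion on the typing derivation of $h$, followed by an immediate re-application of the handler rule with the carrier type swapped. The key observation is that in the syntax-directed system the carrier (value) type of a handler is essentially decoupled from the typing of its clauses: in rule \textsc{SD-Handler} the return, operation, and forwarding clauses are all checked in the extended context $\Gamma,\alpha$ against the type $M\,\alpha\mathbin{!}\langle E\rangle$ built from a \emph{fresh} abstracted type variable $\alpha$, and the concrete carrier type $A$ enters the rule only through the conclusion $A\mathbin{!}\langle F\rangle \Rightarrow M\,A\mathbin{!}\langle E\rangle$ and the well-scopedness side condition $\Gamma\vdash_{\mathsf{SD}} A$.

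First I would note that, since the system is syntax-directed and $h$ is literally a handler term $\mathbf{handler}_M\{\ldots\}$, the derivation of $\Gamma\vdash_{\mathsf{SD}} h : A\mathbin{!}\langle F\rangle \Rightarrow M\,A\mathbin{!}\langle E\rangle$ must conclude with \textsc{SD-Handler} (no other value-typing rule applies to this syntactic form). Inverting that rule yields the premises $F \equiv_{\langle\rangle} \mathit{labels}(\mathit{oprs}); E$, the freshness condition $\alpha \notin \Gamma$, the three clause-typing judgments $\Gamma,\alpha \vdash_M \cdots : M\,\alpha\mathbin{!}\langle E\rangle$ for the return, operation, and forwarding clauses, and finally $\Gamma\vdash_{\mathsf{SD}} A$.

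Then I would simply re-apply \textsc{SD-Handler}, reusing verbatim the first five premises (none of which mentions $A$ at all, as they speak only about the abstracted $\alpha$) and replacing the premise $\Gamma\vdash_{\mathsf{SD}} A$ by the hypothesis $\Gamma\vdash_{\mathsf{SD}} B$. This produces exactly $\Gamma\vdash_{\mathsf{SD}} h : B\mathbin{!}\langle F\rangle \Rightarrow M\,B\mathbin{!}\langle E\rangle$, which is the desired conclusion.

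Because the clause-typing premises are carried over unchanged, there is no genuine obstacle here; the only things worth verifying are (i) that \textsc{SD-Handler} is the unique rule that can derive a handler type, so that the inversion is unambiguous, and (ii) that the abstracted variable $\alpha$ may be $\alpha$-renamed if necessary so its freshness condition $\alpha \notin \Gamma$ is preserved—both immediate in the syntax-directed presentation. In essence this lemma merely records the design fact that handlers in $\lambda_{\mathit{sc}}$ are carrier-polymorphic by construction, which is precisely what licenses the polymorphic recursion used when handling scoped effects.
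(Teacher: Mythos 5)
Your proof is correct. The paper states this lemma without proof, and your argument---inverting the unique applicable rule \textsc{SD-Handler}, observing that the carrier type $A$ occurs only in the well-scopedness premise $\Gamma\vdash_{\mathsf{SD}} A$ and the conclusion (the clause-typing premises only mention the abstracted $\alpha$), and re-applying the rule with $\Gamma\vdash_{\mathsf{SD}} B$ in its place---is exactly the immediate justification the authors intend.
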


\begin{lem}[Op membership]
  \label{lem:opmembership}
  If \ensuremath{\Gamma\vdash_{\mathsf{SD}}\Varid{oprs}\mathbin{:} \underline{C} } and \ensuremath{{\mathbf{op}}\;\ell^{\textsf{op}}\;\Varid{x}\;\Varid{k}\mapsto\Varid{c}\;\in\;\Varid{oprs}}, then there exists
  \ensuremath{\Varid{oprs}_{1}} and \ensuremath{\Varid{oprs}_{2}} such that \ensuremath{\Varid{oprs}\mathrel{=}\Varid{oprs}_{1},{\mathbf{op}}\;\ell^{\textsf{op}}\;\Varid{k}\vdash_{\mathsf{SD}}\Varid{c},\Varid{oprs}_{2}} and \ensuremath{\Gamma\vdash_{\mathsf{SD}}{\mathbf{op}}\;\ell^{\textsf{op}}\;\Varid{x}\;\Varid{k}\mapsto\Varid{c},\Varid{oprs}_{2}\mathbin{:} \underline{C} }.
\end{lem}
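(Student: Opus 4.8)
The plan is to proceed by induction on the derivation of $\Gamma \vdash_M \mathit{oprs} : \underline{C}$, equivalently on the length of the clause sequence $\mathit{oprs}$, casing on the last rule applied; recall that a clause type has the shape $M\,A\,!\,\langle E\rangle$. In the base case \textsc{SD-Empty} we have $\mathit{oprs} = \cdot$, and the membership hypothesis $({\mathbf{op}}\ \ell^{\textsf{op}}\ x\ k \mapsto c) \in \cdot$ is impossible, so this case holds vacuously. The inductive cases are \textsc{SD-OprOp} and \textsc{SD-OprSc}, where $\mathit{oprs} = d, \mathit{oprs}'$ for a head clause $d$ and tail $\mathit{oprs}'$; within each I distinguish whether the membership is witnessed by the head $d$ or by an occurrence inside $\mathit{oprs}'$.

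When the head itself is the target clause (only possible under \textsc{SD-OprOp}, with $d = {\mathbf{op}}\ \ell^{\textsf{op}}\ x\ k \mapsto c$), I take $\mathit{oprs}_1 = \cdot$ and $\mathit{oprs}_2 = \mathit{oprs}'$: the equation $\mathit{oprs} = \mathit{oprs}_1, ({\mathbf{op}}\ \ell^{\textsf{op}}\ x\ k \mapsto c), \mathit{oprs}_2$ holds by construction, and the required judgment $\Gamma \vdash_M {\mathbf{op}}\ \ell^{\textsf{op}}\ x\ k \mapsto c, \mathit{oprs}_2 : \underline{C}$ is exactly the derivation we started with. When instead the target occurs in the tail, the last rule (either \textsc{SD-OprOp} or \textsc{SD-OprSc}) supplies, as one of its premises, a derivation $\Gamma \vdash_M \mathit{oprs}' : \underline{D}$ for the tail type $\underline{D}$, together with its equivalence premise $\underline{D} \equiv \underline{C}$. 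Applying the induction hypothesis to this premise yields a split $\mathit{oprs}' = \mathit{oprs}_1', ({\mathbf{op}}\ \ell^{\textsf{op}}\ x\ k \mapsto c), \mathit{oprs}_2$ and a derivation of the suffix at type $\underline{D}$; setting $\mathit{oprs}_1 = d, \mathit{oprs}_1'$ then gives the desired decomposition of $\mathit{oprs}$.

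The main obstacle is the result-type mismatch in this tail case: the induction hypothesis delivers the suffix typed at $\underline{D}$, whereas the statement demands the exact type $\underline{C}$, and we only know $\underline{D} \equiv \underline{C}$. I therefore expect to need an auxiliary lemma stating that the syntax-directed clause-typing judgment is closed under equivalence of its result type, namely that $\Gamma \vdash_M \mathit{oprs} : \underline{C}_1$ and $\underline{C}_1 \equiv \underline{C}_2$ imply $\Gamma \vdash_M \mathit{oprs} : \underline{C}_2$. This closure is the crux: re-deriving at $\underline{C}_2$ via \textsc{SD-OprOp}/\textsc{SD-OprSc} keeps the tail premise unchanged and discharges the inlined equivalence premise by transitivity (\textsc{Q-Trans}/\textsc{R-Trans}) from $\underline{D}\equiv\underline{C}_1\equiv\underline{C}_2$, but it requires retyping the clause body $c$ at $\underline{C}_2$. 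Hence the closure for $\vdash_M$ is entangled with the analogous closure for computation typing $\vdash_{\mathsf{SD}}$ (and, through \textsc{SD-Ret}/\textsc{SD-Var}, for value typing), so I would establish all three simultaneously by mutual induction on the typing derivations, absorbing the supplied equivalence into the rules' own equivalence premises via symmetry and transitivity. With that closure in hand, the tail case closes by coercing the suffix's type from $\underline{D}$ to $\underline{C}$, completing the induction.
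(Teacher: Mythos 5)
The paper states this lemma without proof, so there is nothing to compare against directly; judged on its own terms, your induction skeleton is right and you have correctly located the crux: in the tail case the induction hypothesis delivers the suffix at the tail's type $\underline{D}$, while \textsc{SD-OprOp}/\textsc{SD-OprSc} pin the conclusion type to the \emph{body}'s type, so only $\underline{D}\equiv\underline{C}$ is available. The problem is your proposed repair. The auxiliary closure lemma you want --- that $\vdash_{M}$, $\vdash_{\mathsf{SD}}$ on computations, and $\vdash_{\mathsf{SD}}$ on values are each closed under equivalence of the result type --- is false for the syntax-directed system, and the mutual induction you sketch cannot be completed. The whole point of the syntax-directed presentation is that \textsc{T-EqV}/\textsc{T-EqC} are removed and equivalence is inlined only at specific premises; rules such as \textsc{SD-Unit} ($\Gamma\vdash_{\mathsf{SD}}():()$) or \textsc{SD-Var} with a monomorphic binding (where $\sigma\le A$ bottoms out in the syntactic base case $A\le A$) have no equivalence premise to ``absorb'' anything into, so $\Gamma\vdash_{\mathsf{SD}}():(\lambda\alpha.())\,B$ is underivable even though $(\lambda\alpha.())\,B\equiv()$. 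Since computation-typing closure needs value-typing closure (via \textsc{SD-Ret}, \textsc{SD-App}) and clause-typing closure needs computation-typing closure (via the body premise), the chain breaks at the bottom. Indeed, taken literally the lemma's demand for the suffix at \emph{exactly} $\underline{C}$ is already suspect for the same reason.

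The repair consistent with the rest of the paper's metatheory is not a conversion lemma but a weakening of the statement: conclude that there exists $\underline{C}'$ with $\underline{C}\equiv\underline{C}'$ and $\Gamma\vdash_{M}({\mathbf{op}}\;\ell^{\textsf{op}}\;x\;k\mapsto c),\mathit{oprs}_2:\underline{C}'$. This is exactly the ``up to $\equiv$'' phrasing used in Theorem~\ref{thm:subjectreduction} and Lemma~\ref{lem:termsubstitutioneliminationgeneq}, it closes your tail case immediately by symmetry and transitivity of $\equiv$ with no auxiliary lemma, and it is sufficient at the lemma's only use site (the \textsc{E-HandOp} case of subject reduction), where the resulting judgment is immediately inverted and the accumulated equivalences are carried along anyway. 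With that restatement your induction goes through as written; without it, the proof as proposed does not close.
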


\begin{lem}[Sc membership]
  \label{lem:scmembership}
  If \ensuremath{\Gamma\vdash_{\mathsf{SD}}\Varid{oprs}\mathbin{:} \underline{C} } and \ensuremath{({\mathbf{sc}}\;\ell^{\textsf{sc}}\;\Varid{x}\;\Varid{p}\;\Varid{k}\mapsto\Varid{c})\;\in\;\Varid{h}}, then there exists
  \ensuremath{\Varid{oprs}_{1}} and \ensuremath{\Varid{oprs}_{2}} such that \ensuremath{\Varid{oprs}\mathrel{=}\Varid{oprs}_{1},{\mathbf{sc}}\;\ell^{\textsf{sc}}\;\Varid{x}\;\Varid{p}\;\Varid{k}\mapsto\Varid{c},\Varid{oprs}_{2}} and
  \ensuremath{\Gamma\vdash_{\mathsf{SD}}{\mathbf{sc}}\;\ell^{\textsf{sc}}\;\Varid{x}\;\Varid{p}\;\Varid{k}\mapsto\Varid{c},\Varid{oprs}_{2}\mathbin{:} \underline{C} }.
\end{lem}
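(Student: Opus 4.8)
The statement is the scoped-clause analogue of \Cref{lem:opmembership}, so the plan is to reuse exactly the same strategy: induction on the structure of the operation-clause sequence $\mathit{oprs}$, equivalently on the derivation of the clause-sequence typing $\Gamma \vdash_{M} \mathit{oprs} : \underline{C}$ (the $\vdash_{\mathsf{SD}}$ in the statement being read as the ambient clause judgement $\vdash_{M}$). Such a derivation can only have been built by one of the three clause rules \textsc{SD-Empty}, \textsc{SD-OprOp}, \textsc{SD-OprSc}, and the target clause ${\mathbf{sc}}\;\ell^{\textsf{sc}}\;x\;p\;k\mapsto c$ occurs at some fixed position of the list; the induction simply walks down to that position, accumulating the prefix in $\mathit{oprs}_1$.

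First I would dispatch the base case: if the derivation ends in \textsc{SD-Empty} then $\mathit{oprs} = \cdot$, and the membership hypothesis is vacuously false, so this case cannot arise. For the inductive step the head of $\mathit{oprs}$ is either an algebraic clause (\textsc{SD-OprOp}) or a scoped clause (\textsc{SD-OprSc}). If the head is precisely the target scoped clause, I set $\mathit{oprs}_1 = \cdot$ and take $\mathit{oprs}_2$ to be the tail; the required typing of the suffix ${\mathbf{sc}}\;\ell^{\textsf{sc}}\;x\;p\;k\mapsto c, \mathit{oprs}_2$ is then literally the derivation we began with. Otherwise the target lies in the tail $\mathit{oprs}'$; I apply the induction hypothesis to the sub-derivation typing $\mathit{oprs}'$, obtaining a split $\mathit{oprs}' = \mathit{oprs}_1', {\mathbf{sc}}\;\ell^{\textsf{sc}}\;x\;p\;k\mapsto c, \mathit{oprs}_2$ together with a typing of the suffix, and prepend the head clause to $\mathit{oprs}_1'$ to form $\mathit{oprs}_1$.

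The single delicate point, and the main obstacle, is the equivalence bookkeeping forced by the syntax-directed rules: neither \textsc{SD-OprOp} nor \textsc{SD-OprSc} passes one fixed type unchanged from a clause to its tail. Instead each types the tail at a type $\Conid{M}\,\Conid{A}_1 \mathbin{!} \langle \Conid{E}_1 \rangle$ that is only \emph{equivalent} to the type $\Conid{M}\,\Conid{A}_2 \mathbin{!} \langle \Conid{E}_2 \rangle$ of the whole sequence, via an explicit side condition $\Conid{M}\,\Conid{A}_1 \mathbin{!} \langle \Conid{E}_1 \rangle \equiv \Conid{M}\,\Conid{A}_2 \mathbin{!} \langle \Conid{E}_2 \rangle$. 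Hence the induction hypothesis delivers the suffix at a type $\underline{C}'$ that is only equivalent to $\underline{C}$, whereas the conclusion demands it at $\underline{C}$ itself. I would close this gap by observing that the suffix is headed by the scoped clause and is therefore derived by \textsc{SD-OprSc}, whose concluding type is pinned down only through that equivalence premise; using symmetry and transitivity of $\equiv$ (and, where needed, closure of the syntax-directed judgement under equivalence of the result type, which is routine for a system that inlines equivalence at use sites) I can retune the final premise so that the suffix is re-derived at exactly $\underline{C}$. I should note that in the declarative presentation of \Cref{fig:handler-typing}, where \textsc{T-OprSc} propagates the same $\underline{C}$ to head, tail, and whole, this bookkeeping is vacuous and the lemma reduces to an immediate structural induction.
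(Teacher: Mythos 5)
The paper states this lemma in the appendix without giving a proof, so there is no official argument to compare yours against; I can only assess the proposal on its own terms. Your strategy --- structural induction on the clause sequence (equivalently on the derivation of $\Gamma \vdash_{M} \mathit{oprs} : \underline{C}$), with a vacuous \textsc{SD-Empty} case, an immediate head-match case, and a tail case that prepends the head clause to the prefix returned by the induction hypothesis --- is the natural and essentially forced approach, and it mirrors what one would do for the companion Lemma~\ref{lem:opmembership}. You also correctly spot the one genuinely delicate point: because \textsc{SD-OprOp} and \textsc{SD-OprSc} relate the tail's type to the conclusion's type only through an explicit $\equiv$ premise, the induction hypothesis hands back the suffix at a type merely equivalent to $\underline{C}$, not equal to it.

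The only place where I would push back is your claim that closure of the syntax-directed judgement under result-type equivalence is ``routine.'' In this system that conversion property is not free: \textsc{SD-Var} types a variable via $\sigma \leq A$, whose base case is syntactic reflexivity, so replacing $A$ by an equivalent but syntactically different type is not directly derivable, and the paper's own Subject Reduction theorem is deliberately stated only up to an existential $\underline{C}' \equiv \underline{C}$ rather than on the nose --- a sign the authors did not want to rely on such a conversion lemma. The cleaner repair, which you gesture at but do not adopt, is to weaken the conclusion of the membership lemma itself to ``there exists $\underline{C}'$ with $\underline{C} \equiv \underline{C}'$ and $\Gamma \vdash_{\mathsf{SD}} {\mathbf{sc}}\;\ell^{\textsf{sc}}\;x\;p\;k\mapsto c,\mathit{oprs}_2 : \underline{C}'$''; that version follows by your induction with only symmetry and transitivity of $\equiv$, and it still supports the inversion step in the \textsc{E-HandSc} case of Subject Reduction, where everything is already tracked up to equivalence. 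As written, your proof is correct modulo that one unproved conversion lemma, which you should either prove or engineer away by restating the lemma.
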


\begin{lem}[Syntax-directed to Non-syntax-directed value typing]
  If \ensuremath{\Gamma\vdash\Varid{v}\mathbin{:}\overline{\forall\;\alpha}\, .\,\overline{\forall\;\mu}\, .\,\Conid{A}} then \ensuremath{\Gamma,\overline{\alpha},\overline{\mu}\vdash_{\mathsf{SD}}\Varid{v}\mathbin{:}\Conid{A}}.
\end{lem}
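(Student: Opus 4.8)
The plan is to prove the statement by a simultaneous induction on the declarative typing derivation, treating value typing together with its mutually-recursive companions for computations, operation/forwarding clauses, and handlers (which in the declarative system are themselves values). Because the declarative equivalence rules \textsc{T-EqV} and \textsc{T-EqC} have no syntax-directed counterpart firing at a rule's conclusion, the induction hypothesis must be strengthened to carry equivalence slack: for computations, ``if $\Gamma \vdash c : \underline{C}$ then $\Gamma \vdash_{\mathsf{SD}} c : \underline{C}'$ for some $\underline{C}' \equiv \underline{C}$,'' and correspondingly for values the conclusion should be read as $\Gamma,\overline{\alpha},\overline{\mu} \vdash_{\mathsf{SD}} v : A'$ for some $A' \equiv A$. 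The slack is real rather than cosmetic: it is forced at the \textsc{T-Var} case combined with \textsc{T-EqV}, since \textsc{SD-Var} instantiates a scheme only up to a chosen substitution with an \emph{exact} match at $\sigma$-Inst-Base, so a type differing from the derived one merely by a row reordering cannot in general be produced on the nose. In the pure \textsc{T-Var} base case, however, the conclusion is exact: one obtains $\Gamma,\overline{\alpha},\overline{\mu} \vdash_{\mathsf{SD}} x : A$ by \textsc{SD-Var}, instantiating $\overline{\forall\alpha}.\overline{\forall\mu}.A$ with the identity substitution (so that $\overline{\forall\alpha}.\overline{\forall\mu}.A \leq A$ by repeated $\sigma$-Inst-$\alpha$/$\sigma$-Inst-$\mu$ then $\sigma$-Inst-Base), with well-scopedness of $A$ inherited from that of the scheme.

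The routine cases match each declarative rule to its syntax-directed image and let the equivalence premises built into the syntax-directed rules absorb the slack supplied by the induction hypotheses. Thus \textsc{T-Abs}, \textsc{T-Pair}, \textsc{T-App}, \textsc{T-Do}, \textsc{T-Ret}, \textsc{T-Hand}, \textsc{T-Op}, \textsc{T-Sc} map to \textsc{SD-Abs} through \textsc{SD-Sc}, with residual equivalences discharged by congruence (\textsc{Q-Fun}, \textsc{Q-Hand}, \textsc{Q-Comp}) and transitivity of $\equiv$ and $\equiv_{\langle\rangle}$ — the free effect row in \textsc{SD-Ret} is chosen to match, and the two independent rows in \textsc{SD-Do} are reconciled by its $E_1 \equiv E_2$ check. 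The generalization rules \textsc{T-Gen} and \textsc{T-GenEff} peel their leading quantifier into the context, which is exactly the $\overline{\alpha},\overline{\mu}$ prefix of the statement, so the hypothesis applies verbatim. The instantiation rules \textsc{T-Inst} and \textsc{T-InstEff} are discharged by \Cref{lem:typesubstitutionelimination} (and its straightforward value counterpart): the subderivation gives a syntax-directed typing in the context extended by the quantified variable, and substituting the instantiating type or row both preserves syntax-directed typing and commutes with $\equiv$. Finally \textsc{T-EqV} and \textsc{T-EqC} are immediate once slack is carried, by transitivity.

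The two substantial cases are let-bindings and handlers. For \textsc{T-Let} the declarative derivation binds $x$ at a scheme $\sigma$, whereas \textsc{SD-Let} re-generalizes the principal monotype assigned to $v$ over fresh variables; bridging the two, and in particular every later use of $x$ through \textsc{SD-Var}, is where \Cref{lem:geneq} (to transport instances across equivalent schemes) and \Cref{lem:geninst} (to recover an arbitrary instance of a generalized syntax-directed judgement) do the real work. For \textsc{T-Handler} the declarative conclusion is the polymorphic $\forall\alpha.\,\alpha!\langle F\rangle \Rightarrow M\,\alpha!\langle E\rangle$, while \textsc{SD-Handler} is monomorphic: after peeling $\forall\alpha$ into the context, I would apply \textsc{SD-Handler} at the fresh variable $\alpha$ itself, reading off the clause premises via \Cref{lem:opmembership} and \Cref{lem:scmembership} and adjusting instances with \Cref{lem:handlersarepolymorphic}. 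The hard part will be precisely the equivalence–instantiation interaction underlying these cases: since syntax-directed variable lookup matches only up to a chosen substitution with an exact base, a declaratively derivable type differing from a syntax-directed one by \textsc{R-Swap} or \textsc{Q-AppAbs} need not be syntax-directed-derivable exactly, which is what threads equivalence slack through the induction and confines the exact part of the conclusion to the outermost value-type constructor; pinning this down cleanly in the let case, where \Cref{lem:geneq} and \Cref{lem:geninst} must cooperate, is the step I expect to be the main obstacle.
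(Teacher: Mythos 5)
The paper states this lemma without any proof, so there is nothing of the authors' to compare your argument against; I can only assess your plan on its own terms. The route you choose --- mutual induction over all the declarative judgements, mapping each declarative rule to its syntax-directed image and letting the equivalence side conditions of the syntax-directed rules absorb \textsc{T-EqV} and \textsc{T-EqC} --- is the natural one, and your central diagnosis is correct: since \textsc{SD-Var} instantiates a scheme only through $\sigma\leq A$, whose base case demands syntactic identity, even the one-step derivation ``\textsc{T-Var} followed by \textsc{T-EqV}'' yields a declarative type that the syntax-directed system cannot reproduce on the nose. But draw the consequence explicitly: you are not proving the lemma as stated, you are proving the weakened form ``there exists $A'\equiv A$ with $\Gamma,\overline{\alpha},\overline{\mu}\vdash_{\mathsf{SD}} v : A'$.'' That should be a restatement, not a reading convention, and the weakening must then be propagated into the forward direction of \Cref{lem:nsdiffsd} and checked against its clients (\Cref{thm:subjectreduction}, \Cref{thm:progress}) --- it is harmless there because both are already phrased up to $\equiv$, but the check belongs in the proof.

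Beyond that, the genuine gaps are the unnamed auxiliary lemmas your sketch silently consumes. The \textsc{T-Let} case needs a context-conversion lemma (syntax-directed typing is stable under replacing a binding $x:\sigma$ by $x:\sigma'$ with $\sigma\equiv\sigma'$, at the cost of an $\equiv$ on the result type); neither \Cref{lem:geneq} nor \Cref{lem:geninst} gives you this, since they relate a scheme to its instances rather than two equivalent schemes occurring in $\Gamma$, so it must be proved separately by another induction threaded through \textsc{SD-Var}. The \textsc{T-Inst}/\textsc{T-InstEff} cases need both a value-typing version of \Cref{lem:typesubstitutionelimination} and the fact that type substitution commutes with $\equiv$; state both. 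In the \textsc{T-Handler} case, \Cref{lem:opmembership} and \Cref{lem:scmembership} are not the relevant tools (they extract a clause from an already-typed clause list and are used in subject reduction); what you need is the mutual induction hypothesis on the clause judgements, a renaming of the clause-level type variable to satisfy the freshness premise of \textsc{SD-Handler}, and \Cref{lem:handlersarepolymorphic} to pass from the fresh variable to the $\alpha$ you place in the context --- plus an explicit treatment of the fact that syntax-directed handlers carry a type-operator annotation $\mathbf{handler}_M$ absent from the declarative term, so the two judgements are not literally about the same value. Finally, the statement's prefix $\overline{\forall\alpha}.\overline{\forall\mu}$ assumes type quantifiers precede row quantifiers, while \textsc{T-Gen}/\textsc{T-GenEff} may interleave them; either normalise the prefix or generalise the statement.
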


\begin{lem}[Non-syntax-directed iff Syntax-directed]
  \label{lem:nsdiffsd}
  \ensuremath{\Gamma\vdash\Varid{c}\mathbin{:} \underline{C} \;\iff\;\Gamma\vdash_{\mathsf{SD}}\Varid{c}\mathbin{:} \underline{C} }.
\end{lem}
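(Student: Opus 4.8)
The plan is to prove the two directions of the bi-implication separately, each by induction over the typing derivation, treating values, computations and handler clauses simultaneously where the mutual recursion of the syntax demands it. The value cases of each direction are essentially the already-stated \emph{Syntax-directed to Non-syntax-directed value typing} lemma and its (easier) converse, so the genuinely new content lies in propagating the ``floating'' declarative rules through the structural computation rules. Note that \textsc{T-Inst}, \textsc{T-InstEff}, \textsc{T-Gen} and \textsc{T-GenEff} act on \emph{values} (schemes apply only to values), so at the level of computations the single floating rule to absorb is \textsc{T-EqC}.

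First I would dispatch the backward direction, $\Gamma \vdash_{\mathsf{SD}} c : \underline{C} \Rightarrow \Gamma \vdash c : \underline{C}$, the routine soundness direction. Each syntax-directed rule is simulated by its declarative counterpart plus a bounded number of floating-rule steps that discharge the inlined side-conditions: an equivalence premise such as $A_1 \equiv A_2$ in \textsc{SD-App} or $E_1 \equiv E_2$ in \textsc{SD-Do} is re-introduced by \textsc{T-EqV}/\textsc{T-EqC} on the appropriate subderivation before \textsc{T-App}/\textsc{T-Do} is applied; the instantiation packed into $\sigma \leq A$ in \textsc{SD-Var} is unfolded into a chain of \textsc{T-Inst}/\textsc{T-InstEff} steps; the generalization built into \textsc{SD-Let} is replayed with \textsc{T-Gen}/\textsc{T-GenEff} before \textsc{T-Let}; and the monomorphizing choice of the carrier variable $\alpha$ in \textsc{SD-Handler} is matched by \textsc{T-Handler} followed by \textsc{T-Inst}, exactly as in the first two derivations of \Cref{fig:handler-gen-and-inst}.

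Then I would treat the forward direction, $\Gamma \vdash c : \underline{C} \Rightarrow \Gamma \vdash_{\mathsf{SD}} c : \underline{C}$, the harder completeness direction, where the obstruction is that the floating rules leave no syntactic trace and may be applied anywhere. The key is an auxiliary \emph{closure-under-equivalence} lemma for the syntax-directed judgements, proved by its own mutual induction: if $\Gamma \vdash_{\mathsf{SD}} v : A$ and $A \equiv B$ then $\Gamma \vdash_{\mathsf{SD}} v : B$, and if $\Gamma \vdash_{\mathsf{SD}} c : \underline{C}$ and $\underline{C} \equiv \underline{D}$ then $\Gamma \vdash_{\mathsf{SD}} c : \underline{D}$. (Its \textsc{SD-Var} case is precisely \Cref{lem:geneq}.) With this in hand the floating-rule cases collapse: a trailing \textsc{T-EqC} is absorbed by computation closure, and the value-level floating rules, occurring only in value subderivations, are subsumed by the value lemma together with \Cref{lem:geneq,lem:geninst}. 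The structural cases map directly onto the matching syntax-directed rule, each SD equivalence premise being discharged by reflexivity because in the declarative derivation the two types already coincide, or by the closure lemma when an intervening floating step has perturbed one of them.

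I expect the closure-under-equivalence lemma to be the main obstacle. The syntax-directed output types are rigid---each rule pins the result type precisely---so flexibility up to $\equiv$ must be re-established rule by rule, threading the congruence rules of \Cref{fig:typed-equiv,fig:row-equiv} (in particular \textsc{Q-Fun}, \textsc{Q-App}, \textsc{Q-Comp} and the $\beta$-rule \textsc{Q-AppAbs}) through every premise. The most delicate cases are \textsc{SD-Handler}, \textsc{SD-OprSc} and \textsc{SD-Fwd}, where the carrier $M$ appears applied to a bound variable ($\alpha$ or $\beta$) and the clause bodies carry nested equivalence premises; there I would lean on \Cref{lem:handlersarepolymorphic} to re-select the carrier's argument and on \Cref{lem:typesubstitutionelimination} to push equivalences underneath the freshly introduced type variables, while verifying that the well-scopedness side-conditions of \Cref{app:well-scopedness} are preserved at each step.
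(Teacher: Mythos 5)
The paper states \Cref{lem:nsdiffsd} without any proof, so there is no official argument to measure yours against; I assess the plan on its own terms. Your overall architecture --- two separate inductions, with the backward (soundness) direction simulating each syntax-directed rule by its declarative counterpart plus a bounded number of floating-rule steps --- is the right shape, and that direction goes through essentially as you describe.

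The gap is in the forward direction: your closure-under-equivalence lemma is false as stated, and \Cref{lem:geneq} does not prove its \textsc{SD-Var} case. The instantiation relation $\sigma \leq A$ of \Cref{fig:sigma-instantiation} bottoms out in \emph{syntactic} identity ($A \leq A$), so it is not closed under $\equiv$ on the right: for $\Gamma = x : (\lambda\alpha.\alpha)\,\mathsf{()}$ we have $\Gamma \vdash_{\mathsf{SD}} x : (\lambda\alpha.\alpha)\,\mathsf{()}$ and $(\lambda\alpha.\alpha)\,\mathsf{()} \equiv \mathsf{()}$ by \textsc{Q-AppAbs}, yet $\Gamma \vdash_{\mathsf{SD}} x : \mathsf{()}$ is not derivable, since \textsc{SD-Var} demands $(\lambda\alpha.\alpha)\,\mathsf{()} \leq \mathsf{()}$. \Cref{lem:geneq} is a genuinely different statement: it perturbs the \emph{scheme} and concludes only that \emph{some} equivalent instance exists; it does not let you hit a prescribed equivalent type. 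The failure propagates through \textsc{SD-Ret}, \textsc{SD-App}, and the handler rules, and in fact refutes the lemma read literally: the declarative system derives $\Gamma \vdash \mathbf{return}\;x : \mathsf{()}\,!\,\langle\rangle$ via \textsc{T-EqV}, while the syntax-directed system cannot. The repair is to prove --- and to read the lemma as asserting --- the forward direction only up to equivalence: if $\Gamma \vdash c : \underline{C}$ then there exists $\underline{C}' \equiv \underline{C}$ with $\Gamma \vdash_{\mathsf{SD}} c : \underline{C}'$. The induction then threads an existentially quantified equivalent type through every premise, in the same style as \Cref{lem:termsubstitutioneliminationgeneq} and \Cref{thm:sdsubjectreduction}, and the \textsc{T-EqC}/\textsc{T-EqV} cases become transitivity of $\equiv$ rather than a closure property of $\vdash_{\mathsf{SD}}$. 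This weaker form still suffices for both uses of the lemma: \Cref{thm:subjectreduction} already states its conclusion up to $\equiv$, and \Cref{thm:progress} only inspects the shape of the computation and label membership in $E$, both of which are stable under $\equiv_{\langle\rangle}$.
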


\subsection{Subject reduction}
\subjectreduction*
\begin{proof}
  By \Cref{lem:nsdiffsd} (above) and \Cref{thm:sdsubjectreduction} (below).
\end{proof}

\begin{thm}[Syntax-directed Subject Reduction]
  \label{thm:sdsubjectreduction}
  If \ensuremath{\Gamma\vdash_{\mathsf{SD}}\Varid{c}\mathbin{:} \underline{C} } and \ensuremath{\Varid{c}\leadsto\Varid{c'}}, then there exists a \ensuremath{ \underline{C} '} such that \ensuremath{ \underline{C} \;\equiv\; \underline{C} '} and \ensuremath{\Gamma\vdash_{\mathsf{SD}}\Varid{c'}\mathbin{:} \underline{C} '}.
\end{thm}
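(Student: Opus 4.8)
The plan is to prove the syntax-directed statement by induction on the derivation of $c \leadsto c'$, doing a case analysis on the last operational rule applied. The two congruence rules, \textsc{E-Do} and \textsc{E-Hand}, are handled uniformly: I would invert the corresponding typing rule (\textsc{SD-Do} or \textsc{SD-Hand}), apply the induction hypothesis to the reducing subcomputation to obtain a type equivalent to the original one, and then re-apply the typing rule, using the fact that each syntax-directed rule already admits a $\equiv$ on its shared rows/types (the inlined \textsc{T-EqC}) so that the slightly perturbed subderivation still composes. For \textsc{E-Hand} I additionally need \Cref{lem:handlersarepolymorphic} to re-specialize the handler to whatever value type the IH produces for $c'$. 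The whole argument is stated for $\vdash_{\mathsf{SD}}$ precisely so that inversion is unambiguous; the connection back to the declarative system is already delivered by \Cref{lem:nsdiffsd}.

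The bulk of the computational cases are driven by the substitution lemma. For \textsc{E-AppAbs} I would use \Cref{lem:canonicalforms} to force the value into the shape $\boldsymbol{\lambda} x.\,c$, invert \textsc{SD-App} and \textsc{SD-Abs}, and then invoke \Cref{lem:termsubstitutioneliminationgeneq} to type the substituted body up to $\equiv$. \textsc{E-Let} is analogous but uses the generalization packaged into \textsc{SD-Let}, so here the polymorphic form of \Cref{lem:termsubstitutioneliminationgeneq} (quantifying over $\overline{\alpha},\overline{\mu}$) is exactly what is needed. \textsc{E-DoRet} and \textsc{E-HandRet} likewise reduce to one application of the term-substitution lemma after inverting \textsc{SD-Do}/\textsc{SD-Return}. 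The algebraicity-driven rules \textsc{E-DoOp} and \textsc{E-DoSc}, and the forwarding rule \textsc{E-FwdOp}, require no substitution: I would invert the nested typing derivations, observe that the label membership premises ($\ell \in E$) are preserved when the operation bubbles outward, and rebuild the derivation in the rewritten shape, threading the rows through \textsc{R-Swap} where the collection order changes.

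The genuinely delicate cases are the handled scoped effects \textsc{E-HandSc} and the generalized forwarding \textsc{E-Fwd} (together with the derived \textsc{E-Bind}). Here I would first use \Cref{lem:scmembership} (resp. the structure of \textsc{SD-Handler} together with \textsc{SD-OprSc}/\textsc{SD-Fwd}) to extract a typing of the matched clause body $c$ in a context where $p$, $k$, and—for forwarding—$f$ carry types mentioning the carrier $M$ and the \emph{fresh} scoped-result variable $\beta$. The reduct substitutes for these variables the deeply-handled subcomputations $\boldsymbol{\lambda}y.\,\mathbf{with}\ h\ \mathbf{handle}\ c_1$ and $\boldsymbol{\lambda}z.\,\mathbf{with}\ h\ \mathbf{handle}\ c_2$; typing these requires re-specializing the deep handler at the appropriate value types via \Cref{lem:handlersarepolymorphic} (this is where polymorphic recursion is discharged) and then instantiating $\beta$ to the actual scoped-result type of $c_1$ using \Cref{lem:typesubstitutionelimination}, followed by \Cref{lem:termsubstitutioneliminationgeneq} to substitute into $c$.

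The main obstacle I anticipate is bookkeeping the equivalences in exactly these scoped cases: the substitution lemma only preserves types up to $\equiv$, the type operator $M$ must be reduced through \textsc{Q-AppAbs} to line up with the declared carrier, and the scoped computation and continuation must be shown to still \emph{agree} on their shared scoped-result type after substitution. I would close this by chaining \Cref{lem:geneq} (generalisation-equivalence) with the congruence rules for $\equiv$ so that the instantiation chosen for $\beta$ and the reductions of $M\,\beta$ match on the nose, yielding the single witness $\underline{C}'$ with $\underline{C} \equiv \underline{C}'$ demanded by the statement. Everything else—weakening for inserting the unused binders introduced by $\beta$ and the handler's $\alpha$ via \Cref{lem:termweakening}—is routine.
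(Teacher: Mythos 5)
Your proposal follows essentially the same route as the paper's proof: induction on the reduction derivation, inversion of the syntax-directed rules, \Cref{lem:termsubstitutioneliminationgeneq} for the $\beta$/let/return cases, \Cref{lem:termweakening} for the re-associated \textsc{do}-cases, and \Cref{lem:handlersarepolymorphic} plus \Cref{lem:typesubstitutionelimination} to discharge the polymorphic recursion and the fresh scoped-result variable in \textsc{E-HandSc}/\textsc{E-FwdSc}. The only (harmless) deviation is invoking \Cref{lem:handlersarepolymorphic} in the congruence case \textsc{E-Hand}, where transitivity of $\equiv$ in the premises of \textsc{SD-Hand} already suffices.
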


\begin{proof}
  Assume, without loss of generality, that \ensuremath{ \underline{C} \mathrel{=}\Conid{B}\hspace{0.1em}!\hspace{0.1em}\langle\Conid{F}\rangle} for some \ensuremath{\Conid{B}}, \ensuremath{\Conid{F}}.
  Proceed by induction on the derivation \ensuremath{\Varid{c}\leadsto\Varid{c'}}.
  \begin{itemize}
  \item \textsc{E-AppAbs}: Inversion on \ensuremath{\Gamma\vdash_{\mathsf{SD}}(\boldsymbol{\lambda}\Varid{x}\, .\,\Varid{c})\;\Varid{v}\mathbin{:}\Conid{B}\hspace{0.1em}!\hspace{0.1em}\langle\Conid{F}\rangle}
    (\textsc{SD-App}) gives \ensuremath{\Gamma\vdash_{\mathsf{SD}}\boldsymbol{\lambda}\Varid{x}\, .\,\Varid{c}\mathbin{:}\Conid{A}_{1}\to \Conid{B}\hspace{0.1em}!\hspace{0.1em}\langle\Conid{F}\rangle} (1), \ensuremath{\Gamma\vdash_{\mathsf{SD}}\Varid{v}\mathbin{:}\Conid{A}_{2}} (2), and \ensuremath{\Conid{A}_{1}\;\equiv\;\Conid{A}_{2}} (3). Inversion on fact 1 (\textsc{SD-Abs})
    gives \ensuremath{\Gamma,\Varid{x}\mathbin{:}\Conid{A}_{1}\vdash_{\mathsf{SD}}\Varid{c}\mathbin{:}\Conid{B}\hspace{0.1em}!\hspace{0.1em}\langle\Conid{F}\rangle} (4), which means the goal follows
    from facts 2 and 4 and \Cref{lem:termsubstitutioneliminationgeneq}.
  \item \textsc{E-Let}: Inversion on \ensuremath{\Gamma\vdash_{\mathsf{SD}}\mathbf{let}\;\Varid{x}\mathrel{=}\Varid{v}\;\mathbf{in}\;\Varid{c}\mathbin{:}\Conid{B}\hspace{0.1em}!\hspace{0.1em}\langle\Conid{F}\rangle}
    (\textsc{SD-Let}) gives \ensuremath{\Gamma\vdash_{\mathsf{SD}}\Varid{v}\mathbin{:}\Conid{A}} (1), \ensuremath{\sigma\mathrel{=}\mathit{gen}\;(\Conid{A},\Gamma)} (2),
    and \ensuremath{\Gamma,\Varid{x}\mathbin{:}\sigma\vdash_{\mathsf{SD}}\Varid{c}\mathbin{:}\Conid{B}\hspace{0.1em}!\hspace{0.1em}\langle\Conid{F}\rangle} (3), which means the goal follows
    from facts 1 and 3 and \Cref{lem:termsubstitutioneliminationgeneq}.
  \item \textsc{E-Do}: Follows from the IH.
  \item \textsc{E-DoRet}: Inversion on \ensuremath{\Gamma\vdash_{\mathsf{SD}}\mathbf{do}\;\Varid{x}\leftarrow {\mathbf{return}}\;\Varid{v}\;\mathbf{in}\;\Varid{c}\mathbin{:}\Conid{B}\hspace{0.1em}!\hspace{0.1em}\langle\Conid{F}_{2}\rangle} (\textsc{SD-Do}) gives \ensuremath{\Gamma\vdash_{\mathsf{SD}}{\mathbf{return}}\;\Varid{v}\mathbin{:}\Conid{A}\hspace{0.1em}!\hspace{0.1em}\langle\Conid{F}_{1}\rangle} (1)
    and \ensuremath{\Gamma,\Varid{x}\mathbin{:}\Conid{A}\vdash_{\mathsf{SD}}\Varid{c}\mathbin{:}\Conid{B}\hspace{0.1em}!\hspace{0.1em}\langle\Conid{F}_{2}\rangle} (2). Inversion on (1)
    (\textsc{SD-Ret}) gives \ensuremath{\Gamma\vdash_{\mathsf{SD}}\Varid{v}\mathbin{:}\Conid{A}} (3). The case follows from facts 2
    and 4 and \Cref{lem:termsubstitutioneliminationgeneq}.
  \item \textsc{E-DoOp}: Similar to \textsc{E-DoSc}. By inversion on \ensuremath{\Gamma\vdash_{\mathsf{SD}}\mathbf{do}\;\Varid{x}\leftarrow {\mathbf{op}}\;\ell^{\textsf{op}}\;\Varid{v}\;(\Varid{y}\, .\,\Varid{c}_{1})\;\mathbf{in}\;\Varid{c}_{2}\mathbin{:}\Conid{B}\hspace{0.1em}!\hspace{0.1em}\langle\Conid{F}_{2}\rangle} (\textsc{SD-Do}) we have that
    \ensuremath{\Gamma\vdash_{\mathsf{SD}}{\mathbf{op}}\;\ell^{\textsf{op}}\;\Varid{v}\;(\Varid{y}\, .\,\Varid{c}_{1})\mathbin{:}\Conid{A}\hspace{0.1em}!\hspace{0.1em}\langle\Conid{F}_{1}\rangle} (1), \ensuremath{\Gamma,\Varid{x}\mathbin{:}\Conid{A}\vdash_{\mathsf{SD}}\Varid{c}_{2}\mathbin{:}\Conid{B}\hspace{0.1em}!\hspace{0.1em}\langle\Conid{F}_{2}\rangle} (2), and \ensuremath{\Conid{F}_{1}\;\equiv\;\Conid{F}_{2}} (3). From inversion on fact 1
    (\textsc{SD-Op}) it follows that \ensuremath{\ell^{\textsf{op}}\mathbin{:}{A_{\textsf{op}}}\rightarrowtriangle{B_{\textsf{op}}}\;\in\;\Sigma} (4), \ensuremath{\Gamma\vdash_{\mathsf{SD}}\Varid{v}\mathbin{:}\Conid{A}_{1}} (5), \ensuremath{{A_{\textsf{op}}}\;\equiv\;\Conid{A}_{1}} (6), \ensuremath{\Gamma,\Varid{y}\mathbin{:}{B_{\textsf{op}}}\vdash_{\mathsf{SD}}\Varid{c}_{1}\mathbin{:}\Conid{A}\hspace{0.1em}!\hspace{0.1em}\langle\Conid{F}_{1}\rangle}
    (7), and \ensuremath{\ell^{\textsf{op}}\;\in\;\Conid{F}_{1}} (8). \Cref{lem:termweakening} on (2) gives us \ensuremath{\Gamma,\Varid{y}\mathbin{:}{B_{\textsf{op}}},\Varid{x}\mathbin{:}\Conid{A}\vdash_{\mathsf{SD}}\Varid{c}_{2}\mathbin{:}\Conid{B}\hspace{0.1em}!\hspace{0.1em}\langle\Conid{F}_{2}\rangle} (9). Facts 3, 7 and 9 and rule
    \textsc{SD-Do} give us \ensuremath{\Gamma,\Varid{y}\mathbin{:}{B_{\textsf{op}}}\vdash_{\mathsf{SD}}\mathbf{do}\;\Varid{x}\leftarrow \Varid{c}_{1}\;\mathbf{in}\;\Varid{c}_{2}\mathbin{:}\Conid{B}\hspace{0.1em}!\hspace{0.1em}\langle\Conid{F}_{2}\rangle}
    (10). Our goal then follows from facts 4, 5, 6, 8, and 10 and rule
    \textsc{SD-Op}.
  \item \textsc{E-DoSc}: Similar to \textsc{E-DoOp}. By inversion on \ensuremath{\Gamma\vdash_{\mathsf{SD}}\mathbf{do}\;\Varid{x}\leftarrow {\mathbf{sc}}\;\ell^{\textsf{sc}}\;\Varid{v}\;(\Varid{y}\, .\,\Varid{c}_{1})\;(\Varid{z}\, .\,\Varid{c}_{2})\;\mathbf{in}\;\Varid{c}_{3}\mathbin{:}\Conid{B}\hspace{0.1em}!\hspace{0.1em}\langle\Conid{F}_{3}\rangle} (\textsc{SD-Do}) we
    have that \ensuremath{\Gamma\vdash_{\mathsf{SD}}{\mathbf{sc}}\;\ell^{\textsf{sc}}\;\Varid{v}\;(\Varid{y}\, .\,\Varid{c}_{1})\;(\Varid{z}\, .\,\Varid{c}_{2})\mathbin{:}\Conid{A}\hspace{0.1em}!\hspace{0.1em}\langle\Conid{F}_{2}\rangle} (1), \ensuremath{\Gamma,\Varid{x}\mathbin{:}\Conid{A}\vdash_{\mathsf{SD}}\Varid{c}_{3}\mathbin{:}\Conid{B}\hspace{0.1em}!\hspace{0.1em}\langle\Conid{F}_{3}\rangle} (2), and \ensuremath{\Conid{F}_{2}\;\equiv\;\Conid{F}_{3}} (2.1). From inversion on fact
    1 (\textsc{SD-Sc}) it follows that \ensuremath{\ell^{\textsf{sc}}\mathbin{:}{A_{\textsf{sc}}}\rightarrowtriangle{B_{\textsf{sc}}}\;\in\;\Sigma} (3),
    \ensuremath{\Gamma\vdash_{\mathsf{SD}}\Varid{v}\mathbin{:}\Conid{A}_{1}} (4), \ensuremath{{A_{\textsf{sc}}}\;\equiv\;\Conid{A}_{1}} (5), \ensuremath{\Gamma,\Varid{y}\mathbin{:}{B_{\textsf{sc}}}\vdash_{\mathsf{SD}}\Varid{c}_{1}\mathbin{:}\Conid{B'}\hspace{0.1em}!\hspace{0.1em}\langle\Conid{F}_{1}\rangle} (6), \ensuremath{\Gamma,\Varid{z}\mathbin{:}\Conid{B'}\vdash_{\mathsf{SD}}\Varid{c}_{2}\mathbin{:}\Conid{A}\hspace{0.1em}!\hspace{0.1em}\langle\Conid{F}_{2}\rangle} (7), \ensuremath{\Conid{F}_{1}\;\equiv\;\Conid{F}_{2}}
    (8), and \ensuremath{\ell^{\textsf{sc}}\;\in\;\Conid{F}_{2}} (9). \Cref{lem:termweakening} on (2) gives us \ensuremath{\Gamma,\Varid{z}\mathbin{:}\Conid{B'},\Varid{x}\mathbin{:}\Conid{A}\vdash_{\mathsf{SD}}\Varid{c}_{3}\mathbin{:}\Conid{B}\hspace{0.1em}!\hspace{0.1em}\langle\Conid{F}_{3}\rangle} (10), which means facts 2.1, 7 and 10
    and rule \textsc{SD-Do} give us \ensuremath{\Gamma,\Varid{z}\mathbin{:}\Conid{B'}\vdash_{\mathsf{SD}}\mathbf{do}\;\Varid{x}\leftarrow \Varid{c}_{2}\;\mathbf{in}\;\Varid{c}_{3}\mathbin{:}\Conid{B}\hspace{0.1em}!\hspace{0.1em}\langle\Conid{F}_{3}\rangle} (11). Our goal then follows from facts 3, 4, 5, 6, 8 9, and
    11 and rule \textsc{SD-Sc}.
  \item \textsc{E-Hand}: Follows from the IH.
  \item \textsc{E-HandRet}: By inversion on \ensuremath{\Gamma\vdash_{\mathsf{SD}}{\mathbf{with}}\;\Varid{h}\;{\mathbf{handle}}\;{\mathbf{return}}\;\Varid{v}\mathbin{:}\Conid{B}\hspace{0.1em}!\hspace{0.1em}\langle\Conid{F}_{2}\rangle} (\textsc{SD-Hand}) we have that \ensuremath{\Gamma\vdash_{\mathsf{SD}}\Varid{h}\mathbin{:} \underline{C} _{1}\Rightarrow \Conid{B}\hspace{0.1em}!\hspace{0.1em}\langle\Conid{F}_{2}\rangle} (1), \ensuremath{\Gamma\vdash_{\mathsf{SD}}{\mathbf{return}}\;\Varid{v}\mathbin{:} \underline{C} _{2}} (2), and \ensuremath{ \underline{C} _{1}\;\equiv\; \underline{C} _{2}} (3).
    Inversion on fact 1 (\textsc{SD-Handler}) gives \ensuremath{\Conid{B}\mathrel{=}\Conid{M}\;\Conid{A}_{2}}, \ensuremath{ \underline{C} _{1}\mathrel{=}\Conid{A}_{2}\hspace{0.1em}!\hspace{0.1em}\langle\Conid{E}\rangle}, and \ensuremath{\Gamma,\alpha\;\mathbin{\vdash_{\Conid{M}}}\;{\mathbf{return}}\;\Varid{x}\mapsto\Varid{c}_{\Varid{r}}\mathbin{:}\Conid{M}\;\alpha\hspace{0.1em}!\hspace{0.1em}\langle\Conid{F}_{2}\rangle} (4). Based on fact (3) we get that \ensuremath{ \underline{C} _{2}\mathrel{=}\Conid{A2'}\hspace{0.1em}!\hspace{0.1em}\langle\Conid{E'}\rangle}, \ensuremath{\Conid{A}_{2}\;\equiv\;\Conid{A2'}} (4), and \ensuremath{\Conid{E}\;\equiv\;\Conid{E'}} (5). Inversion on fact 4
    (\textsc{SD-Return}) gives \ensuremath{\Gamma,\alpha,\Varid{x}\mathbin{:}\Conid{A}_{1}\vdash_{\mathsf{SD}}\Varid{c}_{\Varid{r}}\mathbin{:}\Conid{M}\;\alpha\hspace{0.1em}!\hspace{0.1em}\langle\Conid{F}_{2}\rangle} (5) and \ensuremath{\Conid{A}_{1}\;\equiv\;\Conid{A}_{2}} (6). Inversion on fact 2 (\textsc{SD-Ret})
    gives \ensuremath{\Gamma\vdash_{\mathsf{SD}}\Varid{v}\mathbin{:}\Conid{A2'}} (7). From facts 4-8 and
    \Cref{lem:termsubstitutioneliminationgeneq}, we get that \ensuremath{\Gamma,\alpha\vdash_{\mathsf{SD}}[\mskip1.5mu \Varid{v}\mathbin{/}\Varid{x}\mskip1.5mu]\;\Varid{c}_{\Varid{r}}\mathbin{:}\Conid{M}\;\alpha\hspace{0.1em}!\hspace{0.1em}\langle\Conid{F}_{2}\rangle} (8). We obtain our goal from fact 8 and
    \Cref{lem:typesubstitutionelimination}.
  \item \textsc{E-HandOp}: By inversion on \ensuremath{\Gamma\vdash_{\mathsf{SD}}{\mathbf{with}}\;\Varid{h}\;{\mathbf{handle}}\;{\mathbf{op}}\;\ell^{\textsf{op}}\;\Varid{v}\;(\Varid{y}\, .\,\Varid{c}_{1})\mathbin{:}\Conid{B}\hspace{0.1em}!\hspace{0.1em}\langle\Conid{F}_{2}\rangle} (\textsc{SD-Hand}) we have that \ensuremath{\Gamma\vdash_{\mathsf{SD}}\Varid{h}\mathbin{:} \underline{C} _{1}\Rightarrow \Conid{B}\hspace{0.1em}!\hspace{0.1em}\langle\Conid{F}_{2}\rangle} (1), \ensuremath{\Gamma\vdash_{\mathsf{SD}}{\mathbf{op}}\;\ell^{\textsf{op}}\;\Varid{v}\;(\Varid{y}\, .\,\Varid{c}_{1})\mathbin{:} \underline{C} _{2}} (2), and \ensuremath{ \underline{C} _{1}\;\equiv\; \underline{C} _{2}}
    (3). Inversion on fact 1 (\textsc{SD-Handler}) gives \ensuremath{\Conid{B}\mathrel{=}\Conid{M}\;\Conid{A}_{2}}, \ensuremath{ \underline{C} _{1}\mathrel{=}\Conid{A}_{2}\hspace{0.1em}!\hspace{0.1em}\langle\Conid{E}\rangle}, and \ensuremath{\Gamma,\alpha\;\mathbin{\vdash_{\Conid{M}}}\;\Varid{oprs}\mathbin{:}\Conid{M}\;\alpha\hspace{0.1em}!\hspace{0.1em}\langle\Conid{F}_{2}\rangle} (4).
    Based on fact (3) we get that \ensuremath{ \underline{C} _{2}\mathrel{=}\Conid{A2'}\hspace{0.1em}!\hspace{0.1em}\langle\Conid{E'}\rangle}, \ensuremath{\Conid{A}_{2}\;\equiv\;\Conid{A2'}} (4),
    and \ensuremath{\Conid{E}\;\equiv\;\Conid{E'}} (5). Inversion on fact 2 (\textsc{SD-Op}) gives us \ensuremath{\ell^{\textsf{op}}\mathbin{:}{A_{\textsf{op}}}\rightarrowtriangle{B_{\textsf{op}}}\;\in\;\Sigma} (6), \ensuremath{\Gamma\vdash_{\mathsf{SD}}\Varid{v}\mathbin{:}\Conid{A}_{1}} (7), \ensuremath{{A_{\textsf{op}}}\;\equiv\;\Conid{A}_{1}} (8),
    \ensuremath{\Gamma,\Varid{y}\mathbin{:}{B_{\textsf{op}}}\vdash_{\mathsf{SD}}\Varid{c}_{1}\mathbin{:}\Conid{A2'}\hspace{0.1em}!\hspace{0.1em}\langle\Conid{E'}\rangle} (9), and \ensuremath{\ell^{\textsf{op}}\;\in\;\Conid{E'}} (10). By
    \Cref{lem:opmembership} we get that \ensuremath{\Gamma,\alpha\;\mathbin{\vdash_{\Conid{M}}}\;{\mathbf{op}}\;\ell^{\textsf{op}}\;\Varid{x}\;\Varid{k}\mapsto\Varid{c},\Varid{oprs}_{2}\mathbin{:}\Conid{M}\;\alpha\hspace{0.1em}!\hspace{0.1em}\langle\Conid{F}_{2}\rangle} (11). Inversion on fact 11
    (\textsc{SD-OprOp}) gives that \ensuremath{\Gamma,\alpha\;\mathbin{\vdash_{\Conid{M}}}\;\Varid{oprs}\mathbin{:}\Conid{M}\;\alpha\hspace{0.1em}!\hspace{0.1em}\langle\Conid{F}_{1}\rangle} (12), \ensuremath{(\ell^{\textsf{op}}\mathbin{:}{A_{\textsf{op}}}\rightarrowtriangle{B_{\textsf{op}}})\;\in\;\Sigma} (13), \ensuremath{\Gamma,\alpha,\Varid{x}\mathbin{:}{A_{\textsf{op}}},\Varid{k}\mathbin{:}{B_{\textsf{op}}}\to \Conid{M}\;\alpha\hspace{0.1em}!\hspace{0.1em}\langle\Conid{F}_{1}\rangle\vdash_{\mathsf{SD}}\Varid{c}\mathbin{:}\Conid{M}\;\alpha\hspace{0.1em}!\hspace{0.1em}\langle\Conid{F}_{2}\rangle} (14), and \ensuremath{\Conid{F}_{1}\;\equiv\;\Conid{F}_{2}} (15). Facts 1, 4 and 9 in combination with constructors
    \textsc{SD-Abs} and \textsc{ST-Hand} gives us that \ensuremath{\Gamma\vdash_{\mathsf{SD}}\boldsymbol{\lambda}\Varid{y}\, .\,{\mathbf{with}}\;\Varid{h}\;{\mathbf{handle}}\;\Varid{c}_{1}\mathbin{:}{B_{\textsf{op}}}\to \Conid{M}\;\Conid{A}_{2}\hspace{0.1em}!\hspace{0.1em}\langle\Conid{F}_{2}\rangle} (16). The goal follows from facts 7, 8,
    14 and 16 and lemmas
    \Cref{lem:termsubstitutioneliminationgeneq,lem:typesubstitutionelimination}.
  \item \textsc{E-FwdOp} By inversion on \ensuremath{\Gamma\vdash_{\mathsf{SD}}{\mathbf{with}}\;\Varid{h}\;{\mathbf{handle}}\;{\mathbf{op}}\;\ell^{\textsf{op}}\;\Varid{v}\;(\Varid{y}\, .\,\Varid{c}_{1})\mathbin{:}\Conid{B}\hspace{0.1em}!\hspace{0.1em}\langle\Conid{F}_{2}\rangle} (\textsc{SD-Hand}) we have that \ensuremath{\Gamma\vdash_{\mathsf{SD}}\Varid{h}\mathbin{:} \underline{C} _{1}\Rightarrow \Conid{B}\hspace{0.1em}!\hspace{0.1em}\langle\Conid{F}_{2}\rangle} (1), \ensuremath{\Gamma\vdash_{\mathsf{SD}}{\mathbf{op}}\;\ell^{\textsf{op}}\;\Varid{v}\;(\Varid{y}\, .\,\Varid{c}_{1})\mathbin{:} \underline{C} _{2}} (2), and \ensuremath{ \underline{C} _{1}\;\equiv\; \underline{C} _{2}}
    (3). Inversion on fact 1 (\textsc{SD-Handler}) gives \ensuremath{\Conid{B}\mathrel{=}\Conid{M}\;\Conid{A}_{2}}, and \ensuremath{ \underline{C} _{1}\mathrel{=}\Conid{A}_{2}\hspace{0.1em}!\hspace{0.1em}\langle\Conid{E}\rangle}. Based on fact (3) we get that \ensuremath{ \underline{C} _{2}\mathrel{=}\Conid{A2'}\hspace{0.1em}!\hspace{0.1em}\langle\Conid{E'}\rangle}, \ensuremath{\Conid{A}_{2}\;\equiv\;\Conid{A2'}} (4), and \ensuremath{\Conid{E}\;\equiv\;\Conid{E'}} (5). Inversion on fact 2 (\textsc{SD-Op})
    gives us \ensuremath{\ell^{\textsf{op}}\mathbin{:}{A_{\textsf{op}}}\rightarrowtriangle{B_{\textsf{op}}}\;\in\;\Sigma} (6), \ensuremath{\Gamma\vdash_{\mathsf{SD}}\Varid{v}\mathbin{:}\Conid{A}_{1}} (7), \ensuremath{{A_{\textsf{op}}}\;\equiv\;\Conid{A}_{1}} (8), \ensuremath{\Gamma,\Varid{y}\mathbin{:}{B_{\textsf{op}}}\vdash_{\mathsf{SD}}\Varid{c}_{1}\mathbin{:}\Conid{A2'}\hspace{0.1em}!\hspace{0.1em}\langle\Conid{E'}\rangle} (9), and \ensuremath{\ell^{\textsf{op}}\;\in\;\Conid{E'}} (10). The goal follows from facts 1, 4, 5, 6, 7, 8, 10, constructors
    \textsc{SD-Hand} and \textsc{SD-Op}, and \Cref{lem:termweakening}.
  \item \textsc{E-HandSc}: By inversion on \ensuremath{\Gamma\vdash_{\mathsf{SD}}{\mathbf{with}}\;\Varid{h}\;{\mathbf{handle}}\;{\mathbf{op}}\;\ell^{\textsf{op}}\;\Varid{v}\;(\Varid{y}\, .\,\Varid{c}_{1})\mathbin{:}\Conid{B}\hspace{0.1em}!\hspace{0.1em}\langle\Conid{F}_{2}\rangle} (\textsc{SD-Hand}) we have that \ensuremath{\Gamma\vdash_{\mathsf{SD}}\Varid{h}\mathbin{:} \underline{C} _{1}\Rightarrow \Conid{B}\hspace{0.1em}!\hspace{0.1em}\langle\Conid{F}_{2}\rangle} (1), \ensuremath{\Gamma\vdash_{\mathsf{SD}}\ell^{\textsf{sc}}\;\Varid{v}\;(\Varid{y}\, .\,\Varid{c}_{1})\;(\Varid{z}\, .\,\Varid{c}_{2})\mathbin{:} \underline{C} _{2}} (2), and \ensuremath{ \underline{C} _{1}\;\equiv\; \underline{C} _{2}} (3). Inversion on fact 1 (\textsc{SD-Handler}) gives \ensuremath{\Conid{B}\mathrel{=}\Conid{M}\;\Conid{A}_{2}}, \ensuremath{ \underline{C} _{1}\mathrel{=}\Conid{A}_{2}\hspace{0.1em}!\hspace{0.1em}\langle\Conid{E}_{1}\rangle}, and \ensuremath{\Gamma,\alpha\;\mathbin{\vdash_{\Conid{M}}}\;\Varid{oprs}\mathbin{:}\Conid{M}\;\alpha\hspace{0.1em}!\hspace{0.1em}\langle\Conid{F}_{2}\rangle}
    (4). Based on fact (3) we get that \ensuremath{ \underline{C} _{2}\mathrel{=}\Conid{A2'}\hspace{0.1em}!\hspace{0.1em}\langle\Conid{E}_{2}\rangle}, \ensuremath{\Conid{A}_{2}\;\equiv\;\Conid{A2'}}
    (5), and \ensuremath{\Conid{E}_{1}\;\equiv\;\Conid{E}_{2}} (6). Inversion on fact 2 (\textsc{SD-Sc}) gives us
    \ensuremath{\ell^{\textsf{sc}}\mathbin{:}{A_{\textsf{sc}}}\rightarrowtriangle{B_{\textsf{sc}}}\;\in\;\Sigma} (7), \ensuremath{\Gamma\vdash_{\mathsf{SD}}\Varid{v}\mathbin{:}\Conid{A}_{1}} (8), \ensuremath{{A_{\textsf{sc}}}\;\equiv\;\Conid{A}_{1}}
    (9), \ensuremath{\Gamma,\Varid{y}\mathbin{:}{B_{\textsf{sc}}}\vdash_{\mathsf{SD}}\Varid{c}_{1}\mathbin{:}\Conid{A}_{3}\hspace{0.1em}!\hspace{0.1em}\langle\Conid{E}_{3}\rangle} (10), \ensuremath{\Gamma,\Varid{z}\mathbin{:}\Conid{A}_{3}\vdash_{\mathsf{SD}}\Varid{c}_{2}\mathbin{:}\Conid{A2'}\hspace{0.1em}!\hspace{0.1em}\langle\Conid{E}_{2}\rangle} (11), \ensuremath{\Conid{E}_{3}\;\equiv\;\Conid{E}_{2}} (12), and \ensuremath{\ell^{\textsf{sc}}\;\in\;\Conid{E}_{2}} (13).
    \Cref{lem:opmembership} we get that \ensuremath{\Gamma,\alpha\;\mathbin{\vdash_{\Conid{M}}}\;{\mathbf{op}}\;\ell^{\textsf{op}}\;\Varid{x}\;\Varid{k}\mapsto\Varid{c},\Varid{oprs}_{2}\mathbin{:}\Conid{M}\;\alpha\hspace{0.1em}!\hspace{0.1em}\langle\Conid{F}_{2}\rangle} (13.1). Inversion on fact 13.1
    (\textsc{SD-OprSc} gives \ensuremath{\ell^{\textsf{sc}}\;\in\;\Sigma} (14), \ensuremath{\beta\;\Varid{fresh}} (15), \ensuremath{\Gamma,\alpha,\beta,\Varid{x}\mathbin{:}{A_{\textsf{sc}}},\Varid{p}\mathbin{:}{B_{\textsf{sc}}}\to \Conid{M}\;\beta\hspace{0.1em}!\hspace{0.1em}\langle\Conid{F}_{3}\rangle,\Varid{k}\mathbin{:}\beta\to \Conid{M}\;\alpha\hspace{0.1em}!\hspace{0.1em}\langle\Conid{F}_{3}\rangle\vdash_{\mathsf{SD}}\Varid{c}\mathbin{:}\Conid{M}\;\alpha\hspace{0.1em}!\hspace{0.1em}\langle\Conid{F}_{2}\rangle} (16), and \ensuremath{\Conid{F}_{2}\;\equiv\;\Conid{F}_{3}} (17).
    Facts 1, 5, 6, 10 and 12, constructors \textsc{SD-Abs} and \textsc{SD-Hand}
    and \Cref{lem:termweakening,lem:handlersarepolymorphic} give us that \ensuremath{\Gamma,\beta\vdash_{\mathsf{SD}}\boldsymbol{\lambda}\Varid{y}\, .\,{\mathbf{with}}\;\Varid{h}\;{\mathbf{handle}}\;\Varid{c}_{1}\mathbin{:}\Varid{p}\mathbin{:}{B_{\textsf{sc}}}\to \Conid{M}\;\beta\hspace{0.1em}!\hspace{0.1em}\langle\Conid{F}_{2}\rangle} (18).
    Facts 1, 5, 6 and 11 and constructors \textsc{SD-Abs} and \textsc{SD-Hand}
    and \Cref{lem:termweakening} give us that \ensuremath{\Gamma,\beta\vdash_{\mathsf{SD}}\boldsymbol{\lambda}\Varid{z}\, .\,{\mathbf{with}}\;\Varid{h}\;{\mathbf{handle}}\;\Varid{c}_{2}\mathbin{:}\beta\to \Conid{M}\;\Conid{A}_{2}\hspace{0.1em}!\hspace{0.1em}\langle\Conid{F}_{2}\rangle} (19).
    The goal now follows from facts 8, 9, 16, 17 and 18 and
    \Cref{lem:termsubstitutioneliminationgeneq,lem:termsubstitutioneliminationgeneq}.
  \item \textsc{E-FwdSc} By inversion on \ensuremath{\Gamma\vdash_{\mathsf{SD}}{\mathbf{with}}\;\Varid{h}\;{\mathbf{handle}}\;{\mathbf{op}}\;\ell^{\textsf{op}}\;\Varid{v}\;(\Varid{y}\, .\,\Varid{c}_{1})\mathbin{:}\Conid{B}\hspace{0.1em}!\hspace{0.1em}\langle\Conid{F}_{2}\rangle} (\textsc{SD-Hand}) we have that \ensuremath{\Gamma\vdash_{\mathsf{SD}}\Varid{h}\mathbin{:} \underline{C} _{1}\Rightarrow \Conid{B}\hspace{0.1em}!\hspace{0.1em}\langle\Conid{F}_{2}\rangle} (1), \ensuremath{\Gamma\vdash_{\mathsf{SD}}\ell^{\textsf{sc}}\;\Varid{v}\;(\Varid{y}\, .\,\Varid{c}_{1})\;(\Varid{z}\, .\,\Varid{c}_{2})\mathbin{:} \underline{C} _{2}} (2), and \ensuremath{ \underline{C} _{1}\;\equiv\; \underline{C} _{2}} (3). Inversion on fact 1 (\textsc{SD-Handler}) gives \ensuremath{\Conid{B}\mathrel{=}\Conid{M}\;\Conid{A}_{2}}, \ensuremath{ \underline{C} _{1}\mathrel{=}\Conid{A}_{2}\hspace{0.1em}!\hspace{0.1em}\langle\Conid{E}_{1}\rangle}, and \ensuremath{\Gamma,\alpha\;\mathbin{\vdash_{\Conid{M}}}\;{\mathbf{fwd}}\;\Varid{f}\;\Varid{p}\;\Varid{k}\mapsto\Varid{c}_{\Varid{f}}\mathbin{:}\Conid{M}\;\alpha\hspace{0.1em}!\hspace{0.1em}\langle\Conid{F}_{2}\rangle} (4). Based on fact (3) we get that \ensuremath{ \underline{C} _{2}\mathrel{=}\Conid{A2'}\hspace{0.1em}!\hspace{0.1em}\langle\Conid{E}_{2}\rangle}, \ensuremath{\Conid{A}_{2}\;\equiv\;\Conid{A2'}} (5), and \ensuremath{\Conid{E}_{1}\;\equiv\;\Conid{E}_{2}} (6). Inversion on fact 2 (\textsc{SD-Sc})
    gives us \ensuremath{\ell^{\textsf{sc}}\mathbin{:}{A_{\textsf{sc}}}\rightarrowtriangle{B_{\textsf{sc}}}\;\in\;\Sigma} (7), \ensuremath{\Gamma\vdash_{\mathsf{SD}}\Varid{v}\mathbin{:}\Conid{A}_{1}} (8), \ensuremath{{A_{\textsf{sc}}}\;\equiv\;\Conid{A}_{1}} (9), \ensuremath{\Gamma,\Varid{y}\mathbin{:}{B_{\textsf{sc}}}\vdash_{\mathsf{SD}}\Varid{c}_{1}\mathbin{:}\Conid{A}_{3}\hspace{0.1em}!\hspace{0.1em}\langle\Conid{E}_{3}\rangle} (10), \ensuremath{\Gamma,\Varid{z}\mathbin{:}\Conid{A}_{3}\vdash_{\mathsf{SD}}\Varid{c}_{2}\mathbin{:}\Conid{A2'}\hspace{0.1em}!\hspace{0.1em}\langle\Conid{E}_{2}\rangle} (11), \ensuremath{\Conid{E}_{3}\;\equiv\;\Conid{E}_{2}} (12), and \ensuremath{\ell^{\textsf{sc}}\;\in\;\Conid{E}_{2}} (13).
    Inversion on fact 4 (\textsc{SD-Fwd}) gives
    \ensuremath{\Conid{A}_{\Varid{p}}\mathrel{=}\alpha'\to \Conid{M}\;\beta\hspace{0.1em}!\hspace{0.1em}\langle\Conid{F}_{1}\rangle},
    \ensuremath{\Conid{A}_{\Varid{p}}'\mathrel{=}\alpha'\to \gamma\hspace{0.1em}!\hspace{0.1em}\langle\Conid{F}_{1}\rangle},
    \ensuremath{\Conid{A}_{\Varid{k}}\mathrel{=}\beta\to \Conid{M}\;\Conid{A}_{4}\hspace{0.1em}!\hspace{0.1em}\langle\Conid{F}_{1}\rangle},
    \ensuremath{\Conid{A}_{\Varid{k}}'\mathrel{=}\gamma\to \delta\hspace{0.1em}!\hspace{0.1em}\langle\Conid{F}_{1}\rangle},
    \ensuremath{\Gamma,\alpha,\alpha',\beta,\Varid{p}\mathbin{:}\Conid{A}_{\Varid{p}},\Varid{k}\mathbin{:}\Conid{A}_{\Varid{k}},\Varid{f}\mathbin{:}\forall\;\gamma\;\delta\, .\,(\Conid{A}_{\Varid{p}}',\Conid{A}_{\Varid{k}}')\to \delta\hspace{0.1em}!\hspace{0.1em}\langle\Conid{F}_{1}\rangle\vdash_{\mathsf{SD}}\Varid{c}_{\Varid{f}}\mathbin{:}\Conid{M}\;\alpha\hspace{0.1em}!\hspace{0.1em}\langle\Conid{F}_{2}\rangle} (14) and
    \ensuremath{\Conid{M}\;\Conid{A}_{1}\hspace{0.1em}!\hspace{0.1em}\langle\Conid{F}_{1}\rangle\;\equiv\;\Conid{M}\;\alpha\hspace{0.1em}!\hspace{0.1em}\langle\Conid{F}_{2}\rangle} (15).
    Facts 1, 6, 10 and 12, constructors \textsc{SD-Abs} and \textsc{SD-Hand} and
    \Cref{lem:termweakening,lem:handlersarepolymorphic} give us that \ensuremath{\Gamma,\alpha,\Varid{y}\mathbin{:}{B_{\textsf{sc}}}\vdash_{\mathsf{SD}}{\mathbf{with}}\;\Varid{h}\;{\mathbf{handle}}\;\Varid{c}_{1}\mathbin{:}\Conid{M}\;\Conid{A}_{3}\hspace{0.1em}!\hspace{0.1em}\langle\Conid{F}_{2}\rangle} (16)
    Facts 1, 6, 10 and 12, constructors \textsc{SD-Abs} and \textsc{SD-Hand} and
    \Cref{lem:termweakening} give us that \ensuremath{\Gamma,\alpha\vdash_{\mathsf{SD}}\boldsymbol{\lambda}\Varid{z}\, .\,{\mathbf{with}}\;\Varid{h}\;{\mathbf{handle}}\;\Varid{c}_{2}\mathbin{:}\Conid{A}_{3}\to \Conid{M}\;\Conid{A}_{4}\hspace{0.1em}!\hspace{0.1em}\langle\Conid{F}_{2}\rangle} (17)
    Facts 7, 8, 9, 13, the fact that \ensuremath{\ell^{\textsf{sc}}\;\notin\;\mathit{labels\!}\;(\Varid{oprs})}, constructors
    \textsc{SD-Abs}, \textsc{SD-App} and \textsc{SD-Var} and
    \Cref{lem:termweakening} give us that \ensuremath{\Gamma,\alpha,(\Varid{p'},\Varid{k'})\mathbin{:}\forall\;\gamma\;\delta\, .\,({B_{\textsf{sc}}}\to \gamma\hspace{0.1em}!\hspace{0.1em}\langle\Conid{F}_{1}\rangle,\gamma\to \delta\hspace{0.1em}!\hspace{0.1em}\langle\Conid{F}_{1}\rangle)\vdash_{\mathsf{SD}}{\mathbf{sc}}\;\ell^{\textsf{sc}}\;\Varid{v}\;(\Varid{y}\, .\,\Varid{p'}\;\Varid{y})\;(\Varid{z}\, .\,\Varid{k'}\;\Varid{z})\mathbin{:}\delta\hspace{0.1em}!\hspace{0.1em}\langle\Conid{F}_{1}\rangle} (18).
    Our goal then follows from facts 14, 15, 16, 17, 18 and
    \Cref{lem:termsubstitutioneliminationgeneq}.
  \end{itemize}
\end{proof}

\subsection{Progress}
\progress*
\begin{proof}
  By \Cref{lem:nsdiffsd} (above) and \Cref{thm:sdprogress} (below).
\end{proof}

\begin{thm}[Syntax-directed Progress]
  \label{thm:sdprogress}
  If \ensuremath{ \cdot\vdash_{\mathsf{SD}}\Varid{c}\mathbin{:}\Conid{A}\hspace{0.1em}!\hspace{0.1em}\langle\Conid{E}\rangle}, then either:
  \begin{itemize}
  \item there exists a computation \ensuremath{\Varid{c'}} such that \ensuremath{\Varid{c}\leadsto\Varid{c'}}, or
  \item c is in a \emph{normal form}, which means it is in one of the following
    forms: (1) \ensuremath{\Varid{c}\mathrel{=}{\mathbf{return}}\;\Varid{v}}, (2) \ensuremath{\Varid{c}\mathrel{=}{\mathbf{op}}\;\ell^{\textsf{op}}\;\Varid{v}\;(\Varid{y}\, .\,\Varid{c'})} where \ensuremath{\ell^{\textsf{op}}\;\in\;\Conid{E}},
    or (3) \ensuremath{\Varid{c}\mathrel{=}{\mathbf{sc}}\;\ell^{\textsf{sc}}\;\Varid{v}\;(\Varid{y}\, .\,\Varid{c}_{1})\;(\Varid{z}\, .\,\Varid{c}_{2})} where \ensuremath{\ell^{\textsf{sc}}\;\in\;\Conid{E}}.
  \end{itemize}
\end{thm}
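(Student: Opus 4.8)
The plan is to prove the declarative Progress statement (\Cref{thm:progress}) by first reducing it to its syntax-directed counterpart (\Cref{thm:sdprogress}) via \Cref{lem:nsdiffsd}, and then establishing the syntax-directed version by induction on the typing derivation $\cdot \vdash_{\mathsf{SD}} c : A\,!\langle E\rangle$. Because the syntax-directed system has exactly one rule per head constructor, this induction is effectively a case analysis on the shape of $c$, with the induction hypothesis available for each immediate subcomputation that the operational semantics can reduce.

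The routine cases I would dispatch as follows. If $c = \mathbf{return}\ v$ it is already a normal form of kind (1). If $c = \mathbf{op}\ \ell^{\textsf{op}}\ v\ (y.c')$ or $c = \mathbf{sc}\ \ell^{\textsf{sc}}\ v\ (y.c_1)\ (z.c_2)$, it is a normal form of kind (2), resp.\ (3); here the side conditions $\ell^{\textsf{op}} \in E$ and $\ell^{\textsf{sc}} \in E$---which is precisely effect safety---come for free, since rules \textsc{SD-Op} and \textsc{SD-Sc} record membership of the label in the effect row as a premise. A $\mathbf{let}$ always steps by \textsc{E-Let}. For an application $v_1\ v_2$, inversion of \textsc{SD-App} yields $v_1 : A_1 \to \underline{C}$, and Canonical Forms (\Cref{lem:canonicalforms}) forces $v_1 = \boldsymbol{\lambda} x.\,c$, so \textsc{E-AppAbs} fires. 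For a do-statement $\mathbf{do}\ x \leftarrow c_1;\, c_2$, I would apply the induction hypothesis to $c_1$: either $c_1 \leadsto c_1'$ and \textsc{E-Do} applies, or $c_1$ is in one of the three normal forms, triggering \textsc{E-DoRet}, \textsc{E-DoOp}, or \textsc{E-DoSc} respectively.

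The hard part will be the handling case $c = \mathbf{with}\ v\ \mathbf{handle}\ c_0$. Canonical Forms again pins $v$ down to a handler $h$. Applying the induction hypothesis to $c_0$ gives a reduction by \textsc{E-Hand} unless $c_0$ is already in normal form, in which case I must show that one of the five remaining handling rules always applies and that these rules are jointly exhaustive. A returning $c_0$ fires \textsc{E-HandRet}; an algebraic operation fires \textsc{E-HandOp} when $h$ carries a clause for its label and \textsc{E-FwdOp} otherwise; a scoped operation fires \textsc{E-HandSc} when $h$ carries a clause for its label and \textsc{E-FwdSc} otherwise. The crucial observations that make this exhaustive are structural rather than arithmetic: by the handler grammar every handler contains exactly one return clause and exactly one forwarding clause, so \textsc{E-HandRet} and (when the label is unhandled) \textsc{E-FwdSc}/\textsc{E-FwdOp} can always fire, and membership of a label among the operation clauses is decidable. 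Making this dichotomy precise---and thereby confirming that no well-typed normal form can become \emph{stuck} under a handler, which is exactly the role the mandatory forwarding clause plays---is where the real content of Progress lies; the remaining bookkeeping follows immediately from inversion of \textsc{SD-Hand}.
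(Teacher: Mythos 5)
Your proposal matches the paper's proof essentially step for step: induction on the syntax-directed typing derivation, Canonical Forms (\Cref{lem:canonicalforms}) to pin down the applied value in the application and handling cases, the induction hypothesis on the first subcomputation in the \ensuremath{\mathbf{do}} and \ensuremath{{\mathbf{with}}\;\ldots\;{\mathbf{handle}}} cases, and the same case split on the three normal forms with the mandatory return and forwarding clauses guaranteeing exhaustiveness of the handling rules. No gaps; this is the same argument.
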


\begin{proof}
  By induction on the typing derivation \ensuremath{ \cdot\vdash_{\mathsf{SD}}\Varid{c}\mathbin{:} \underline{C} }.

  \begin{itemize}
  \item \textsc{SD-App}: Here, \ensuremath{ \cdot\vdash_{\mathsf{SD}}\Varid{v}_{1}\;\Varid{v}_{2}}. Since \ensuremath{\Varid{v}_{1}} has type \ensuremath{\Conid{A}\to \Conid{B}\hspace{0.1em}!\hspace{0.1em}\langle\Conid{F}\rangle}, by \Cref{lem:canonicalforms} it must be of shape \ensuremath{\boldsymbol{\lambda}\Varid{x}\, .\,\Varid{c}},
    which means we can step by rule \textsc{E-AppAbs}.
  \item \textsc{SD-Do}: Here, \ensuremath{ \cdot\vdash_{\mathsf{SD}}\mathbf{do}\;\Varid{x}\leftarrow \Varid{c}_{1}\;\mathbf{in}\;\Varid{c}_{2}\mathbin{:} \underline{C} }. By the induction
    hypothesis, \ensuremath{\Varid{c}_{1}} can either step (in which case we can step by
    \textsc{E-Do}), or it is a computation result. Every possible form has a
    corresponding reduction: if \ensuremath{\Varid{c}_{1}\mathrel{=}{\mathbf{return}}\;\Varid{v}} we can step by \textsc{E-DoRet},
    if \ensuremath{\Varid{c}_{1}\mathrel{=}{\mathbf{op}}\;\ell^{\textsf{op}}\;\Varid{v}\;(\Varid{y}\, .\,\Varid{c})} we can step by \textsc{E-DoOp}, and if \ensuremath{{\mathbf{sc}}\;\ell^{\textsf{op}}\;\Varid{v}\;(\Varid{y}\, .\,\Varid{c}_{1}')\;(\Varid{z}\, .\,\Varid{c}_{2}')} we can step by \textsc{E-DoSc}.
  \item \textsc{SD-Let}: Here, \ensuremath{ \cdot\vdash_{\mathsf{SD}}\mathbf{let}\;\Varid{x}\mathrel{=}\Varid{v}\;\mathbf{in}\;\Varid{c}\mathbin{:} \underline{C} }, which means we
    can step by \textsc{E-Let}.
  \item \textsc{SD-Ret}, \textsc{SD-Op}, and \textsc{SD-Sc}: all of these are
    computation results (forms (1), (2), and (3), resp.). Since they
    are well-typed, the scoping condition of \ensuremath{\ell^{\textsf{op}}} or \ensuremath{\ell^{\textsf{sc}}} must be satisfied.
  \item \textsc{SD-Hand}: Here \ensuremath{ \cdot\vdash_{\mathsf{SD}}{\mathbf{with}}\;\Varid{v}\;{\mathbf{handle}}\;\Varid{c}\mathbin{:}\Conid{M}\;\Conid{A}\hspace{0.1em}!\hspace{0.1em}\langle\Conid{F}\rangle}. By
    \Cref{lem:canonicalforms}, \ensuremath{\Varid{v}} is of shape \ensuremath{\Varid{h}}. By the induction hypothesis,
    \ensuremath{\Varid{c}} can either step (in which case we can step by \textsc{E-Hand}), or it is
    in a normal form. Proceed by case split on the three forms.
    \begin{enumerate}
    \item Case \ensuremath{\Varid{c}\mathrel{=}{\mathbf{return}}\;\Varid{v}}. Since \ensuremath{ \cdot\vdash_{\mathsf{SD}}\Varid{h}\mathbin{:} \underline{C} \Rightarrow  \underline{D} }, there must be
      some \ensuremath{({\mathbf{return}}\;\Varid{x}\mapsto\Varid{c}_{\Varid{r}})\;\in\;\Varid{h}} which means we can step by rule
      \textsc{E-HandRet}.
    \item Case \ensuremath{\Varid{c}\mathrel{=}{\mathbf{op}}\;\ell^{\textsf{op}}\;\Varid{v}\;(\Varid{y}\, .\,\Varid{c'})}. Depending on \ensuremath{({\mathbf{op}}\;\ell^{\textsf{op}}\;\Varid{x}\;\Varid{k}\mapsto\Varid{c})\;\in\;\Varid{h}} we
      can step by \textsc{E-HandOp} or \textsc{E-FwdOp}.
    \item Case \ensuremath{\Varid{c}\mathrel{=}{\mathbf{sc}}\;\ell^{\textsf{sc}}\;\Varid{v}\;(\Varid{y}\, .\,\Varid{c}_{1})\;(\Varid{z}\, .\,\Varid{c}_{2})}. If \ensuremath{({\mathbf{sc}}\;\ell^{\textsf{sc}}\;\Varid{x}\;\Varid{p}\;\Varid{k}\mapsto\Varid{c})\;\in\;\Varid{h}}, we
      can step by \textsc{E-HandSc}. If not, since \ensuremath{ \cdot\vdash_{\mathsf{SD}}\Varid{h}\mathbin{:} \underline{C} \Rightarrow  \underline{D} },
      there must be some \ensuremath{({\mathbf{fwd}}\;\Varid{f}\;\Varid{p}\;\Varid{k}\mapsto\Varid{c}_{\Varid{f}})\;\in\;\Varid{h}} which means we can step by
      rule \textsc{E-FwdSc}. \qedhere
    \end{enumerate}
  \end{itemize}
\end{proof}
\clearpage

\end{document}